\documentclass[11pt]{article}

\usepackage{amsfonts,amsthm, amsmath, mathtools}
\usepackage{dsfont}
\usepackage{graphicx}
\usepackage{changepage}
\usepackage{enumitem}
\usepackage{scalerel}
\usepackage{accents}
\usepackage[margin=1.in]{geometry} \usepackage{layout} \usepackage{printlen}
\usepackage{booktabs}
\usepackage{float}
\usepackage{xspace}
\usepackage{mathrsfs}
\usepackage{amssymb}

\usepackage{threeparttable}

\usepackage{optidef}

\usepackage{nicefrac,xfrac}

\usepackage[toc,page]{appendix}

\usepackage{makecell}
\usepackage{tablefootnote}
\usepackage{lipsum}
\usepackage{caption}
\usepackage{subcaption}
\usepackage{hyperref}
\mathtoolsset{showonlyrefs}
\usepackage{mfirstuc}
\usepackage{multirow}
\usepackage{array}
\usepackage{complexity}

\usepackage{natbib}

\usepackage{csquotes}
\usepackage{comment}

\usepackage{bbm}

\usepackage[noabbrev]{cleveref}
\crefname{claim}{claim}{claims}
\Crefname{equation}{eq.}{eqs.}
\crefname{enumi}{property}{properties}
\crefname{algocf}{algorithm}{algorithms}
\Crefname{algocf}{Algorithm}{Algorithms}
\crefname{adversary}{adversary}{adversaries}
\Crefname{adversary}{Adversary}{Adversaries}

\usepackage{tikz}
\usepackage{pgfplots}
\pgfplotsset{width=10cm,compat=1.9}
\definecolor{darkpastelgreen}{rgb}{0.01, 0.75, 0.24}
\definecolor{darkorchid}{rgb}{0.6, 0.2, 0.8}
\definecolor{blue(pigment)}{rgb}{0.2, 0.2, 0.6}
\usepackage{algpseudocode}
\usepackage[ruled,vlined,linesnumbered,noend]{algorithm2e}
\SetKw{KwBreak}{break}
\SetKwInOut{KwAux}{Notation}
\SetKw{KwHalt}{halt}

\SetCommentSty{mycommfont}
\Crefname{algorithm}{Algorithm}{Algorithms}
\allowdisplaybreaks

\newcounter{adversarycf}
\newcounter{algocfbackup}

\makeatletter
\newenvironment{adversary}[1][]{\refstepcounter{adversarycf}\setcounter{algocfbackup}{\value{algocf}}

\let\adversary@oldalgorithmcfname\algorithmcfname
  \let\adversary@oldfnum@algocf\fnum@algocf
  \let\adversary@oldthealgocf\thealgocf
  \@ifundefined{theHalgocf}{\let\adversary@oldtheHalgocf\relax
  }{\let\adversary@oldtheHalgocf\theHalgocf
  }\let\adversary@oldalgocf@caption@algo\algocf@caption@algo
  \let\adversary@oldlabel\label

\renewcommand{\thealgocf}{\arabic{adversarycf}}

\renewcommand{\algorithmcfname}{Adversary}\renewcommand{\fnum@algocf}{\AlCapSty{\AlCapFnt\algorithmcfname\nobreakspace\arabic{adversarycf}}}

\long\def\algocf@caption@algo##1[##2]##3{\ifthenelse{\equal{\algocf@algocfref}{\relax}}{}{\algocf@captionref}\@ifundefined{hyper@refstepcounter}{\relax}{\renewcommand{\theHalgocf}{adv.\arabic{adversarycf}}\hyper@refstepcounter{algocf}}\algocf@latexcaption{##1}[{##2}]{{##3}}}

\def\adversary@label##1{\adversary@oldlabel[adversary]{##1}}\def\adversary@labelopt[##1]##2{\adversary@oldlabel[##1]{##2}}\def\label{\@ifnextchar[{\adversary@labelopt}{\adversary@label}}

  \begin{algorithm}[#1]}{\end{algorithm}

\let\label\adversary@oldlabel
  \let\algocf@caption@algo\adversary@oldalgocf@caption@algo
  \renewcommand{\thealgocf}{\adversary@oldthealgocf}\ifx\adversary@oldtheHalgocf\relax\else
    \renewcommand{\theHalgocf}{\adversary@oldtheHalgocf}\fi
  \let\fnum@algocf\adversary@oldfnum@algocf
  \let\algorithmcfname\adversary@oldalgorithmcfname
  \setcounter{algocf}{\value{algocfbackup}}}
\makeatother

\usetikzlibrary{arrows, decorations.markings}
\tikzstyle{vecArrow} = [thick, decoration={markings,mark=at position
   1 with {\arrow[semithick]{open triangle 60}}},
   double distance=1.4pt, shorten >= 5.5pt,
   preaction = {decorate},
   postaction = {draw,line width=1.4pt, white,shorten >= 4.5pt}]
\tikzstyle{innerWhite} = [semithick, white,line width=1.4pt, shorten >= 4.5pt]

\allowdisplaybreaks
\theoremstyle{plain}
\newtheorem{theorem}{Theorem}[section]
\newtheorem{lemma}[theorem]{Lemma}

\newtheorem*{theorem*}{Meta-Theorem}

\theoremstyle{plain}
\newtheorem{definition}{Definition}[section] \newtheorem{example}[definition]{Example}

\usepackage{xfrac}
\usepackage{thm-restate}

\newcommand{\xhdr}[1]{\paragraph{#1}} 
\newcommand{\primed}{^{\dagger}}

\newcommand{\dimension}{d}
\newcommand{\ts}{t}
\newcommand{\tsSet}{[\timeHorizon]}
\newcommand{\tsBreak}{\timeHorizon_{1}}
\newcommand{\timeHorizon}{T}

\newcommand{\context}{x}
\newcommand{\contextAt}[1]{\context_{#1}} 

\newcommand{\ball}{\mathbb{B}}
\renewcommand{\R}{\mathbb{R}}

\newcommand{\hvecs}{\Phi}
\newcommand{\hvec}{\theta}

\newcommand{\width}[2]{\mathtt{Width}\left(#1;#2\right)}
\newcommand{\vol}[1]{\mathtt{Vol}\left(#1\right)}

\newcommand{\feedback}{\sigma}
\newcommand{\feedbackAt}[1]{\feedback_{#1}}
\newcommand{\feedbackSellerAt}[1]{\feedback^{\mathrm{s}}_{#1}}
\newcommand{\feedbackBuyerAt}[1]{\feedback^{\mathrm{b}}_{#1}}
\newcommand{\feedbackSellerAtOf}[2]{\feedback^{\mathrm{s}}_{#1,#2}}
\newcommand{\feedbackBuyerAtOf}[2]{\feedback^{\mathrm{b}}_{#1,#2}}
\newcommand{\feedbackSegSellerAtOf}[2]{\feedback^{\mathrm{s},(#1)}_{#2}}

\newcommand{\feedbackSegbuyerAtOf}[2]{\feedback^{\mathrm{b},(#1)}_{#2}}

\newcommand{\sellerVec}{\theta} \newcommand{\sellerVecAt}[1]{\sellerVec_{#1}} 
\newcommand{\sellerVecs}{\Phi}

\newcommand{\buyerVec}{\beta} \newcommand{\buyerVecAt}[1]{\buyerVec_{#1}} 
\newcommand{\buyerVecs}{\Psi}

\newcommand{\sellerValue}{c} \newcommand{\sellerPrice}{p} \newcommand{\sellerPriceAt}[1]{\sellerPrice_{#1}} \newcommand{\sellerPriceAtOf}[2]{\sellerPrice_{#1,#2}} \newcommand{\sellerPriceAtOfseg}[2]{\sellerPrice_{#1}\super{#2}} \newcommand{\sellerNum}{m} \newcommand{\sellerIndex}{i} \newcommand{\sellerValueAt}[1]{\sellerValue_{#1}}
\newcommand{\sellerSet}{\mathcal{S}}
\newcommand{\sellerSubset}[1]{\sellerSet\super{#1}}
\newcommand{\sellerSubsetAt}[2]{\sellerSubset{#1}_{#2}}

\newcommand{\sellerValueLb}{\ubar{\sellerValue}}
\newcommand{\sellerValueLbAt}[1]{\sellerValueLb_{#1}}
\newcommand{\sellerValueUb}{\bar{\sellerValue}}
\newcommand{\sellerValueUbAt}[1]{\sellerValueUb_{#1}}

\newcommand{\buyerValue}{v} \newcommand{\buyerPrice}{q} \newcommand{\buyerPriceAt}[1]{\buyerPrice_{#1}} \newcommand{\buyerPriceAtOf}[2]{\buyerPrice_{#1,#2}} \newcommand{\buyerPriceAtOfseg}[2]{\buyerPrice_{#1}\super{#2}} \newcommand{\buyerNum}{n} \newcommand{\buyerIndex}{j} \newcommand{\buyerIndexVar}{w} \newcommand{\buyerValueAt}[1]{\buyerValue_{#1}}
\newcommand{\buyerSet}{\mathcal{B}}
\newcommand{\buyerSubset}[1]{\buyerSet\super{#1}}
\newcommand{\buyerSubsetAt}[2]{\buyerSubset{#1}_{#2}}

\newcommand{\buyerValueLb}{\ubar{\buyerValue}}
\newcommand{\buyerValueLbAt}[1]{\buyerValueLb_{#1}}
\newcommand{\buyerValueUb}{\bar{\buyerValue}}
\newcommand{\buyerValueUbAt}[1]{\buyerValueUb_{#1}}

\newcommand{\LB}[1]{\ubar{#1}}
\newcommand{\UB}[1]{\bar{#1}}
\newcommand{\buyerIntGapAt}[1]{\Delta_{#1}^{b}}
\newcommand{\sellerIntGapAt}[1]{\Delta_{#1}^{s}}

\newcommand{\buyerGapLevelAt}[1]{k_{#1}}

\newcommand{\sellerGapLevelAt}[1]{h_{#1}}

\newcommand{\agentSet}{\mathcal{N}}

\newcommand{\traderNum}{n}

\newcommand{\postPrice}{p} \newcommand{\postPriceAt}[1]{\postPrice_{#1}}

\newcommand{\ellipsoid}{\mathcal{E}}
\newcommand{\searchGap}{\delta}

\newcommand{\gft}{\textsc{Gft}}
\newcommand{\gftAt}[1]{\gft_{#1}}
\newcommand{\gftOpt}{\gft^{\star}}

\newcommand{\profit}{\textsc{Profit}}
\newcommand{\profitAt}[1]{\profit_{#1}}
\newcommand{\profitOpt}{\profit^{\star}}
\newcommand{\profitOptAt}[1]{\profitOpt_{#1}}

\newcommand{\ALG}{\texttt{ALG}\xspace}

\newcommand{\reg}{\textsc{Regret}}
\newcommand{\regAt}[1]{\reg_{#1}}

\newcommand{\match}{\mathcal{M}}
\newcommand{\matchAt}[1]{\match_{#1}}

\newcommand{\num}[1]{\#\left(#1\right)}

\newcommand{\priceIncreasingThresholdAt}[1]{\delta_{#1}}
\newcommand{\phaseNum}{k}
\newcommand{\phaseNumVar}{\ell}
\newcommand{\hvecsPhase}[1]{\mathrm{P}_{#1}}
\newcommand{\hvecsAt}[1]{\hvecs_{#1}}
\newcommand{\phaseCount}{\phi}

\newcommand{\buyerValueOrd}[1]{\buyerValue_{(#1)}}
\newcommand{\sellerValueOrd}[1]{\sellerValue_{(#1)}}

\newcommand{\buyerValueOrdUbOfAt}[2]{\buyerValueUb_{(#1),#2}}

\newcommand{\sellerValueOrdLbOfAt}[2]{\sellerValueLb_{(#1),#2}}
\newcommand{\buyerValueOfAt}[2]{\buyerValue_{#1,#2}}
\newcommand{\buyerValueLbOfAt}[2]{\buyerValueLb_{#1,#2}}
\newcommand{\buyerValueUbOfAt}[2]{\buyerValueUb_{#1,#2}}
\newcommand{\sellerValueOfAt}[2]{\sellerValue_{#1,#2}}
\newcommand{\sellerValueUbOfAt}[2]{\sellerValueUb_{#1,#2}}
\newcommand{\sellerValueLbOfAt}[2]{\sellerValueLb_{#1,#2}}

\newcommand{\optProfitUpperBoundAt}[1]{\smash{\overline{\profit}}_{#1}^\star}

\newcommand{\ficTradingBuyersAt}[1]{\bar{\buyerSet}_{#1}}
\newcommand{\ficTradingSellersAt}[1]{\ubar{\sellerSet}_{#1}}

\newcommand{\sellerIntGapOfAt}[2]{\sellerIntGapAt{#1 , #2}}
\newcommand{\buyerIntGapOfAt}[2]{\buyerIntGapAt{#1 , #2}}
\newcommand{\sellerGapLevelOfAt}[2]{\sellerGapLevelAt{#1 , #2}}
\newcommand{\buyerGapLevelOfAt}[2]{\buyerGapLevelAt{#1 , #2}}

\newcommand{\sellerSearchIndex}[1]{\sellerIndex'_{#1}}
\newcommand{\buyerSearchIndex}[1]{\buyerIndex'_{#1}}
\newcommand{\tradeNum}{k}

\newcommand{\traderSearchIndexAt}[1]{h_{#1}}

\newcommand{\padIndex}{\ell}
\newcommand{\padIndexAt}[1]{\padIndex_{#1}}
\newcommand{\padPara}[1]{z_{#1}}
\newcommand{\traderVecsAt}[1]{\Phi_{#1}}

\newcommand{\padMidIndex}{\hat{\ell}}

\newcommand{\padMidPara}[1]{\hat{z}_{#1}}
\newcommand{\padMidIndexAt}[1]{\hat{\ell}_{#1}}

\newcommand{\sellerVecsOfAt}[2]{\sellerVecs_{#1,#2}}
\newcommand{\buyerVecsOfAt}[2]{\buyerVecs_{#1,#2}}
\newcommand{\padIndexSellerOfAt}[2]{\padIndex^{\mathrm{s}}_{#1,#2}}
\newcommand{\padIndexBuyerOfAt}[2]{\padIndex^{\mathrm{b}}_{#1,#2}}

\newcommand{\tempPriceSellerOfAt}[2]{y_{#1,#2}^{\mathrm{s}}}
\newcommand{\tempPriceBuyerOfAt}[2]{y_{#1,#2}^{\mathrm{b}}}
\newcommand{\tempMidPriceAt}[1]{\hat{y}_{#1}}

\newcommand{\ellipsoidMC}[2]{\ellipsoid\left(#1, #2\right)}

\newcommand{\cutValue}{y}

\newcommand{\bilateralGftReg}{\log \log \timeHorizon}
\newcommand{\onemanySingleGftRegLb}{\frac{\log\log \timeHorizon}{\log\log\log\log \timeHorizon}}
\newcommand{\OTight}{O^{\star}}

\newcommand{\SinglePriceMech}{{\sffamily Single-Price Mechanism}\xspace}
\newcommand{\TwoPriceMech}{{\sffamily Two-Price Mechanism}\xspace}
\newcommand{\SegPriceMech}{{\sffamily Segmented-Price Mechanism}\xspace}
\newcommand{\SinglePriceMechs}{{\sffamily Single-Price Mechanisms}\xspace}
\newcommand{\TwoPriceMechs}{{\sffamily Two-Price Mechanisms}\xspace}
\newcommand{\SegPriceMechs}{{\sffamily Segmented-Price Mechanisms}\xspace}

\newcommand{\BTGFT}{\textsc{Optimistic\allowbreak-\allowbreak Binary\allowbreak-\allowbreak Search}\xspace}
\newcommand{\BTPF}{\textsc{Optimistic\allowbreak-\allowbreak then\allowbreak-\allowbreak Conservative\allowbreak-\allowbreak Search}\xspace}
\newcommand{\OneToManyGFT}{\textsc{One\allowbreak-\allowbreak to\allowbreak-\allowbreak Many\allowbreak\  
Optimistic\allowbreak-\allowbreak then\allowbreak-\allowbreak Conservative\allowbreak-\allowbreak Search}\xspace}
\newcommand{\MSSGFTM}{\textsc{Multi-Scale Steiner GFT Maximization}\xspace}
\newcommand{\MSSPM}{\textsc{Multi-Scale Steiner Profit Maximization}\xspace}
\newcommand{\subroutine}{\textsc{Robust Two-Segmented Pricing}\xspace}
\newcommand{\ESTM}{\textsc{Ellipsoid Search in Two-Sided Markets}\xspace}
\newcommand{\TEST}{\textsc{Supply-Demand Test}\xspace}
\newcommand{\BSegmentedPriceLearning}{\textsc{Buyer-Side Balancing Search}\xspace}
\newcommand{\SSegmentedPriceLearning}{\textsc{Seller-Side Balancing Search}\xspace}
\newcommand{\BalaPrice}{\textsc{Balance Two-Segmented Pricing}\xspace}

\newcommand{\Count}[1]{\mathcal{X}^{\mathrm{#1}}}
\newcommand{\CountVar}{\mathcal{Y}}

\newcommand{\SegSellerSet}{\mathcal{K}^{\mathrm{s}}}
\newcommand{\SegbuyerSet}{\mathcal{K}^{\mathrm{b}}}

\newcommand{\type}{\mathcal{C}}

\newcommand{\traderindex}[1]{\mathcal{I}(#1)}

\newcommand{\ubar}[1]{\underaccent{\bar}{#1}}

\newcommand{\set}[1]{\left\{#1\right\}}
\newcommand{\setfix}[1]{\{#1\}}

\newcommand{\lrangle}[1]{\langle #1 \rangle}
\newcommand{\paren}[1]{\left(#1\right)}
\newcommand{\parenfix}[1]{(#1)}

\newcommand{\bracketfix}[1]{[#1]}
\newcommand{\abs}[1]{\left\lvert#1\right\rvert}
\newcommand{\absfix}[1]{\lvert#1\rvert}

\newcommand{\super}[1]{^{(#1)}}

\DeclareMathOperator*{\argmax}{arg\,max}
\DeclareMathOperator*{\argmin}{arg\,min}

\newcommand{\deq}{\triangleq}

\newcommand{\innerproduct}[1]
{\langle#1\rangle}

\newcommand{\condition}{\,\mid\,}

\newcommand{\prob}[2][]{\text{Pr}\ifthenelse{\not\equal{}{#1}}{_{#1}}{}\!\left[{\def\givenn{\middle|}#2}\right]}
\newcommand{\expect}[2][]{\mathbb{E}\ifthenelse{\not\equal{}{#1}}{_{#1}}{}\!\left[{\def\givenn{\middle|}#2}\right]}

\newcommand{\tparen}{\big}
\newcommand{\tprob}[2][]{\text{Pr}\ifthenelse{\not\equal{}{#1}}{_{#1}}{}\tparen[{\def\given{\tparen|}#2}\tparen]}
\newcommand{\texpect}[2][]{\mathbb{E}\ifthenelse{\not\equal{}{#1}}{_{#1}}{}\tparen[{\def\given{\tparen|}#2}\tparen]}

\newcommand{\sprob}[2][]{\text{Pr}\ifthenelse{\not\equal{}{#1}}{_{#1}}{}[#2]}
\newcommand{\sexpect}[2][]{\mathbb{E}\ifthenelse{\not\equal{}{#1}}{_{#1}}{}[#2]}

\newcommand{\ceil}[1]{\left\lceil #1 \right\rceil}
\newcommand{\floor}[1]{\lfloor #1 \rfloor}

\newcommand{\bfloor}[1]{\left\lfloor #1 \right\rfloor}
\newcommand{\indicator}[1]{{\mathbbm{1}\left\{ #1 \right\}}}

\title{Searching for Optimal Prices in Two-Sided Markets}
\author{Yiding Feng\thanks{Hong Kong University of Science and Technology, China. Email: \url{ydfeng@ust.hk}}
\and
Mengfan Ma\thanks{Central China Normal University, China. Email: \url{mengfanma1@gmail.com}}
\and
Bo Peng\thanks{Shanghai University of Finance and Economics, China. Email: \url{ahqspb@163.sufe.edu.cn}}
\and 
Zongqi Wan\thanks{Great Bay University, China. Email: \url{zongqiwan98@gmail.com}
}
}
\date{ }

\begin{document}

\maketitle
\let\thefootnote\relax\footnotetext{Authors are listed in alphabetical order.}

\begin{abstract}
We study online pricing in two-sided markets where a platform repeatedly posts prices to multiple buyers and sellers, observes only binary accept/reject feedback, and aims to maximize either \emph{gains-from-trade} (GFT) or \emph{profit}. We analyze three natural classes of price-posting mechanisms—Single-Price, Two-Price, and Segmented-Price—and characterize the regret achievable under each.

For profit maximization, we design an algorithm using Two-Price Mechanisms that achieves $O(n^2 \log\log T)$ regret, which is optimal in the time horizon $T$, where $n$ is the number of traders.

For GFT maximization, the optimal regret depends critically on both market size and mechanism expressiveness. Constant regret is achievable in bilateral trade, but this guarantee breaks down as the market grows: even in a one-seller, two-buyer market, any algorithm using Single-Price Mechanisms suffers regret at least $\Omega\!\big(\frac{\log\log T}{\log\log\log\log T}\big)$, and we provide a nearly matching $O(\log\log T)$ upper bound for general one-to-many markets. In full many-to-many markets, we prove that Two-Price Mechanisms inevitably incur linear regret $\Omega(T)$ due to a \emph{mismatch phenomenon}, wherein inefficient pairings prevent near-optimal trade. To overcome this barrier, we introduce \emph{Segmented-Price Mechanisms}, which partition traders into groups and assign distinct prices per group. Using this richer mechanism, we design an algorithm achieving $O(n^2 \log\log T + n^3)$ regret for GFT maximization.

Finally, we extend our results to the contextual setting, where traders' costs and values depend linearly on observed $d$-dimensional features that vary across rounds, obtaining regret bounds of $O(n^2 d \log\log T + n^2 d \log d)$ for profit and $O(n^2 d^2 \log T)$ for GFT. Our work delineates sharp boundaries between learnable and unlearnable regimes in two-sided dynamic pricing and demonstrates how modest increases in pricing expressiveness can circumvent fundamental hardness barriers.
 \end{abstract}

\newpage

\section{Introduction}
\label{sec:intro}
Two-sided online marketplaces have become a major force in the global economy, enabling real-time interactions between distinct user groups such as consumers and merchants (e.g., Amazon Marketplace and eBay), riders and drivers (e.g., Uber), guests and hosts (e.g., Airbnb), or clients and freelancers (e.g., Upwork).
Aggregate evidence underscores the scale of platform-mediated exchange across both goods and services.
In goods e-commerce (e.g., Amazon Marketplace and eBay), a recent report estimates that business e-commerce sales across 43 economies (representing around three quarters of global GDP) reached almost 27 trillion U.S.\ dollars in 2022 \citep{link-unctad-b2b-ecommerce-2024}.
In labor and services platforms (e.g., Uber), the gig economy was estimated to have a market size of 557 billion U.S.\ dollars in 2024 and is expected to grow to 1{,}847 billion U.S.\ dollars by 2032 \citep{link-wef-gigeconomy-2024}.
This rapidly expanding economic sector raises numerous complex algorithmic challenges. 

A central operational decision for these platforms is pricing. In many online marketplaces, the platform posts take-it-or-leave-it prices to both sides. These prices may either coincide---resulting in a single transaction price faced by both buyers and sellers---or differ, allowing the platform to capture profit from the price gap. A well-designed pricing policy is crucial to a platform's success, as it balances supply and demand: prices that are too high deter buyers, while prices that are too low discourage sellers. By steering participation on both sides, pricing directly shapes matching efficiency and, consequently, key performance objectives such as gains-from-trade (GFT) and profit. In practice, the platform typically does not know traders' types---i.e., buyers' values and sellers' costs---a priori, especially for new users, new products, or under rapidly changing market conditions. As a result, it must infer the underlying willingness-to-pay and willingness-to-accept from observed accept/reject feedback and adapt its posted prices over time. Below, we provide three concrete
motivating applications.

\begin{example}[Two-sided pricing in online retail marketplaces]
    In some online retail marketplaces (e.g., Temu, The RealReal, and thredUP), the platform centrally controls pricing on both sides of the market.
    Take Temu as an example.
    Temu has been pivoting toward a fully managed model with greater control over product selection, pricing, and logistics \citep{link-reuters-temu-fullymanaged-2025}; under this model, producers deliver goods to a designated warehouse, while the platform handles downstream operations such as promotion, warehousing, and shipping \citep{link-temu-fullymanaged-sfc2024}.
    Crucially, Temu, rather than the producers, sets the consumer-facing posted price for each product and separately determines the producer-side settlement payment to the upstream supplier.
    If weak consumer demand is observed at the current retail price (e.g., slow sell-through), Temu can lower the posted price or run promotions to stimulate purchases; if insufficient supply is observed at the current settlement terms, Temu can raise the settlement payment (e.g., by accepting higher supplier bids) to attract more supply and stabilize availability \citep{link-fashionlaw-temu-rockbottom-2025}.
    An advantage of this mode is that centralized control allows the platform to coordinate pricing with fulfillment and promotion decisions, enabling faster adjustments to shocks and a more consistent consumer experience.
    In this model, the gap between what consumers pay and what producers receive is Temu’s margin, and Temu adaptively adjusts both prices over time.
\end{example}

\begin{example}[Two-sided pricing in ride-hailing platforms]
    Ride-hailing platforms (e.g., Uber, Lyft, DiDi, and Grab) provide a prominent example of two-sided differentiated pricing in online service and labor markets. For example, Uber typically shows riders a fixed upfront price before confirmation; this price is computed from estimated time and distance, local traffic, and a dynamic pricing multiplier that reflects current demand conditions. The driver also sees an upfront fare offer, but it is computed under a different pricing rule rather than as a fixed percentage of the rider’s payment; it is based on base fare components, estimated time and distance, pickup distance, and the same dynamic pricing multiplier. The difference between the rider’s payment and the driver’s earnings constitutes Uber’s variable service fee, in addition to taxes, insurance, and other operating expenses. Uber adopts Surge Pricing as its dynamic pricing mechanism to respond to demand--supply imbalances in real time (e.g., during peak hours or bad weather) \citep{link-uber-surge-works}.
\end{example}

\begin{example}[Uniform pricing in wholesale electricity markets]
    Wholesale electricity day-ahead markets in the U.S.\ and Europe provide a canonical goods-market example of uniform pricing. These markets mediate trade between two distinct groups: on the supply side, electricity generators submit offers to produce power for delivery in specified future intervals (e.g., hourly blocks on the next day), and on the demand side, load-serving entities (such as utilities and retailers) submit bids to purchase power to serve end customers in those intervals. In each day-ahead auction, the market operator aggregates offers and bids and computes a single market-clearing price for each delivery interval. Under a pay-as-cleared rule, every accepted generator is paid this market-clearing price and every accepted buyer pays the same price, so all executed trades share one common price rather than different prices on the two sides \citep{link-entsoe-pay-as-cleared,link-pjm-lmp-payments}. 
    Prices therefore adjust from interval to interval in response to changes in demand and supply conditions and transmission constraints, even though within any given interval the same clearing price applies to both sides.
    These auctions are run by regulated market operators (often not-for-profit ISOs/RTOs in the U.S. and regulated power exchanges/NEMOs in Europe) and are cleared by selecting feasible trades to maximize total surplus subject to network and market constraints.
\end{example}

Motivated by this paradigm, we study an online two-sided market model driven by posted prices. Time is discrete and indexed by rounds. In each round, the platform interacts with a set of sellers and buyers with private types---costs for sellers and values for buyers---and the platform posts prices to both sides.
In the base model, we assume that each trader's private type remains fixed across rounds (though types may differ across traders). We also consider a contextual extension in which traders' types depend linearly on observed features that vary from round to round.
Trades occur only among traders who accept their posted prices, subject to feasibility constraints on matching between the two sides.
The platform's per-round payoff depends on the chosen objective:
\begin{itemize}
    \item Under \emph{profit maximization}, the payoff is the total amount collected from matched buyers minus the total payments made to matched sellers.
    \item Under \emph{GFT maximization}, the payoff is the total surplus generated by executed trades---i.e., the sum of buyers' values minus sellers' costs over all matched pairs.
\end{itemize}
The platform observes only the binary accept/reject decisions of each trader and uses this feedback to adapt its pricing strategy in future rounds.

To model the platform's pricing capability, we consider three natural classes of price-posting mechanisms, ordered by increasing expressiveness:
\begin{itemize}
    \item {\SinglePriceMechs}, where one price is posted to all traders;
    \item {\TwoPriceMechs}, where one price is offered to all sellers and another to all buyers;
    \item {\SegPriceMechs}, where sellers and buyers are each partitioned into \emph{up to two groups}, with a distinct price per group.
\end{itemize}
These mechanisms capture a spectrum of practical pricing policies---from uniform clearing prices (e.g., in electricity markets) to differentiated take-it-or-leave-it offers (e.g., in retail platforms)---and allow us to precisely characterize the interplay between pricing expressiveness and learnability.

In an idealized scenario where the platform knows traders' private types, the repeated pricing problem reduces to an offline optimization task, and one can compute an optimal fixed price-posting mechanism for the chosen objective. This paper focuses on the more realistic---and challenging---setting in which the platform has no prior knowledge of either side's types, so learning must occur online. We evaluate performance via \emph{regret minimization}: for GFT, regret is benchmarked against the first-best GFT, which is achievable under full information using {\SinglePriceMechs}; for profit, it is benchmarked against the optimal profit attainable by any {\TwoPriceMech}, the simplest mechanism class that yields non-trivial profit. The main goal is to design online pricing policies that adapt based on past accept/reject feedback and achieve vanishing regret.

\subsection{Our Contributions and Techniques}
\label{sec:contributions}
In this paper, we conduct a comprehensive study of online pricing for both gains-from-trade (GFT) and profit maximization across a range of two-sided market settings. We present our results and techniques in detail below. A summary of our contributions is also provided in \Cref{tab:our results}.

\begin{table}[t]
    \centering
     \caption{
        A summary of our results on regret bounds in various two-sided market settings.} 
    \resizebox{\textwidth}{!}
    {\begin{tabular}{lcccc}
\toprule
\multirow{2}{*}{\textbf{Settings}} & \multicolumn{2}{c}{\textbf{Gains-from-trade}} & \multicolumn{2}{c}{\textbf{Profit}} 
\\
\cmidrule(lr){2-3} \cmidrule(lr){4-5}
& \textbf{Regret bound} & \textbf{Mechanism} & \textbf{Regret bound} & \textbf{Mechanism} 
\\
\midrule
\textbf{Bilateral Trade} & 
\makecell{$\OTight(1)$\\{[}Thm.~\ref{thm:bilateral trade:GFT}{]}} & 
\SinglePriceMech & 
\makecell{$\OTight(\log \log \timeHorizon)$\\{[}Thm.~\ref{thm:bilateral trade:profit}{]}} & 
\multirow[c]{15}{*}{\TwoPriceMech}
\\
\cmidrule(lr){1-4}
\multirow[c]{3}{*}{\textbf{One-to-Many Market}} & 
\makecell{$\Omega\left(\frac{\log \log \timeHorizon}{\log \log \log \log \timeHorizon}\right)$\\{[}Thm.~\ref{thm: GFT single lower bound two-sided market}{]}} & 
\makecell{\SinglePriceMech\\(against $1$-to-$2$ market)} & 
\multirow[c]{9}{*}{\makecell{$\OTight(\buyerNum^2 \log \log \timeHorizon)$\\{[}Thm.~\ref{thm:two-sided market profit}{]}}} & 
\\
\cmidrule(lr){2-3}
& \makecell{$O(\log \log \timeHorizon)$\\{[}Thm.~\ref{thm:one to many GFT upper bound}{]}} & \SinglePriceMech & & 
\\
\cmidrule(lr){1-3}
\multirow[c]{3}{*}{\textbf{Many-to-Many Market}} & 
\makecell{$\Omega(\timeHorizon)$\\{[}Thm.~\ref{thm: GFT Two Lower Bound Two-Sided Market}{]}} & 
\makecell{\TwoPriceMech\\(against $3$-to-$3$ market)} & & \\
\cmidrule(lr){2-3}
& \makecell{$O(\buyerNum^2 \log \log \timeHorizon + \buyerNum^3)$\\{[}Thm.~\ref{thm:GFT segmented price upper bound}{]}} & \SegPriceMech & & 
\\
\cmidrule(lr){1-4}
\textbf{\makecell[l]{Contextual\\Bilateral Trade}} &
\makecell{$\OTight(\dimension \log \dimension)$\\{[}Thm.~\ref{thm: contextual GFT regret}{]}} &
\SinglePriceMech & 
\multirow[c]{4}{*}{\makecell{$\OTight(\buyerNum^2 \dimension \log \log \timeHorizon)$\\{[}Thm.~\ref{thm: contextual two-sided market profit}{]}}} & 
 
\\
\cmidrule(lr){1-3}
\textbf{\makecell[l]{Contextual\\Many-to-Many Market}} & 
\makecell{$O(\buyerNum^2\dimension^2 \log \timeHorizon)$\\{[}Thm.~\ref{thm:contextual two-sided market GFT}{]}} & 
\SegPriceMech & & 
\\
\bottomrule
\end{tabular}
 }
    \begin{tablenotes}
  \item \emph{\footnotesize\underline{Note}: 
  The symbols $\timeHorizon$, $\buyerNum$, and $\dimension$ denote the time horizon, the number of traders, and the dimension of the contextual feature space (with $\dimension = 1$ in the non-contextual setting), respectively. The notation $\OTight$ indicates that the regret bounds are optimal with respect to the time horizon $\timeHorizon$. All lower bounds expressed using $\Omega(\cdot)$ hold against any algorithm that implements the corresponding mechanism class.}
\end{tablenotes}
    \label{tab:our results}
\end{table}

\subsubsection{Bilateral Trade Model}
We first study the online bilateral trade problem, which is a special case of the online two-sided market problem where there is only one seller and one buyer. In this setting, traders' private type (i.e., seller's cost and buyer's value) remains constant across rounds. This problem generalizes the classic dynamic pricing problem~\citep{KL-03} when the seller's cost is fixed at $0$ and the platform pursues profit maximization. Recently, \citet{GBCCP-25} investigated a contextual learning version of bilateral trade for GFT maximization, proposing an algorithm with $O(\dimension^2\log \timeHorizon)$ regret.

\xhdr{Gains-from-trade maximization.} The approach of \citet{GBCCP-25} maintains two separate uncertainty sets for the seller's cost $\sellerValue$ and the buyer's value $\buyerValue$, utilizing ellipsoid search~\citep{CLP-20} on these two sets independently to shrink these sets until they are both sufficiently small.  In contrast, we propose a different methodology by \emph{reinterpreting the learning target}. Since any price within the interval $[\sellerValue, \buyerValue]$ ensures optimal GFT (and thus zero regret), it is sufficient to locate any feasible price rather than the exact cost $\sellerValue$ and value $\buyerValue$. 

By applying a single binary search process on the union of these two uncertainty sets rather than independently shrink two uncertainty sets, we develop an algorithm achieving an $O(1)$ regret bound (\Cref{thm:bilateral trade:GFT}). 
Furthermore, by integrating this approach with intrinsic volume techniques \citep{LS-18,LLS-21}, we achieve an $O(\dimension\log\dimension)$ regret bound in the contextual setting (\Cref{thm: contextual GFT regret}), which notably remains independent of the time horizon $\timeHorizon$.
This improves upon the $O(\dimension^2 \log \timeHorizon)$ result in \citet{GBCCP-25}.

\xhdr{Profit maximization.} 
While profit maximization may appear similar to dynamic pricing, naively treating bilateral trade as two independent pricing problems often violates the \emph{weak budget balance} constraint, which requires the seller's price to be no greater than the buyer's price to ensure non-negative per-round profit. 
To address this, we first identify a threshold price in $[\sellerValue, \buyerValue]$ that effectively separates the buyer's and seller's uncertainty sets. This separation enables independent pricing on two disjoint intervals while guaranteeing that the budget balance constraint is never violated. 
Our approach can be viewed as a two-phase algorithm: an initial exploration phase inspired by our GFT maximization algorithm to locate a separating price, followed by independent conservative pricing on each side. 
Overall, we design an algorithm that achieves an $O(\log\log \timeHorizon)$ regret bound (\Cref{thm:bilateral trade:profit}), matching the optimal double-logarithmic regret established for dynamic pricing~\citep{KL-03}. Building on this idea, we further generalize both the algorithm and its regret guarantee to contextual two-sided markets (see discussion below).

\subsubsection{GFT Maximization in Two-Sided Market Model}
We now turn to the problem of maximizing gains-from-trade (GFT) in two-sided market models.

\xhdr{Mismatch phenomenon.} GFT objective in general two-sided markets exhibits structural differences from the classical bilateral trade setting. In bilateral trade, where transactions are restricted to a single buyer and a single seller, the optimal mechanism is relatively well understood. By contrast, in many-to-many two-sided markets (i.e., multiple buyers and multiple sellers), the platform must solve a nontrivial matching problem: a high-value buyer may be matched with a high-cost seller instead of a low-cost one (or vice versa), even when mutually beneficial trades with better partners exist. We refer to such inefficient pairings as mismatches, which induce substantial regret.

\xhdr{Analysis of one-to-many markets.} Notably, this mismatch phenomenon persists even in \emph{one-to-many markets}, such as those with one seller and multiple buyers, which serve as a special case of the two-sided framework. Intuitively, the potential for mismatches \emph{precludes the achievement of constant regret in two-sided markets.}

We commence our analysis with the one-to-many setting, where we identify a GFT regret structure analogous to dynamic pricing (for revenue maximizing): consider a single-seller, multiple-buyer scenario. Any price situated between the highest value and $\max\{\text{2nd-highest value}, \allowbreak\text{seller cost}\}$ constitutes a GFT-optimal price. The regret dynamics are twofold: under overpricing (no trade), regret is upper bounded by a constant; under underpricing (trade occurs), regret is upper bounded by the size of the uncertainty set. The critical distinction from standard dynamic pricing lies in the underpricing case: in a one-to-many market, regret corresponds to the value gap of all accepted buyers, whereas in dynamic pricing, it is determined solely by the difference between the buyer's value and the posted price. This structure renders the problem more tractable, allowing us to establish an $O(\log\log\timeHorizon)$ regret upper bound (\Cref{thm:one to many GFT upper bound}). However, establishing the lower bound requires novel techniques to rigorously control the shrinkage of the uncertainty set. We prove a lower bound of $\Omega(\frac{\log\log\timeHorizon}{\log\log\log\log\timeHorizon})$ in \Cref{thm: GFT single lower bound two-sided market} using an one-seller two-buyer instance, which nearly matches the $\Omega(\log\log\timeHorizon)$ bound established for pricing by \citet{KL-03}. 

\xhdr{Impossibility in general two-sided markets.} 
Interestingly, the landscape shifts dramatically when extending the analysis to general two-sided markets with multiple buyers and sellers. In particular, we show in \Cref{thm: GFT Two Lower Bound Two-Sided Market} that \emph{sublinear regret is unattainable under {\SinglePriceMechs} or even {\TwoPriceMechs}}. 
Remarkably, this hardness stems not from computational or incentive constraints, but from the \emph{intrinsic difficulty of learning}: when all traders' types are known, the first-best GFT benchmark can be achieved by a {\SinglePriceMech}. 
The fundamental obstacle is again the \emph{mismatch phenomenon}. By constructing the adversary in \Cref{adv: GFTtwoprice} using a three-seller three-buyer instance, we show that any online algorithm implementing {\TwoPriceMechs} inevitably suffers inefficient pairings, incurring constant regret per round and thus $\Omega(\timeHorizon)$ cumulative regret.

\xhdr{Segmented-price mechanism.} 
To circumvent this fundamental limitation, we study a more expressive posted-price mechanism: the {\SegPriceMech}. This mechanism partitions each side into up to two disjoint subgroups and assigns a distinct price to each group, resulting in at most $2 \times 2 = 4$ prices. This increased granularity enables precise control over mismatched trades.
By strategically constructing the price profile, we ensure that every mismatching round yields quantifiable informational progress: it either identifies a trader belonging to the optimal matching of the first-best GFT benchmark or substantially shrinks the corresponding uncertainty set. This property underpins our analysis via a \emph{multi-dimensional potential function}. Specifically, by tracking both the number of identified optimal traders and the size of the uncertainty sets, we tightly bound the number of mismatching rounds.

Intuitively, this two-group segmented structure allows the algorithm to decouple the exploitation of ``safe'' trades involving known traders from the exploration of uncertain ones, thereby effectively limiting regret accumulation. Consequently, our algorithm achieves an $O(\traderNum^2 \log\log \timeHorizon + \traderNum^3)$ regret bound for general two-sided markets (\Cref{thm:GFT segmented price upper bound}).

Finally, we extend our framework to the contextual setting. By integrating the {\SegPriceMechs} with ellipsoid-based search techniques, we establish a regret bound of $O(\traderNum^2 \dimension^2 \log \timeHorizon)$ (\Cref{thm:contextual two-sided market GFT}).

\subsubsection{Profit Maximization in Two-Sided Market Model}
Building on insights from bilateral trade, we extend profit maximization to general two-sided markets with multiple buyers and multiple sellers. This general setting introduces two new challenges. First, both the optimal trade size and the identities of the most efficient traders are unknown. Second, independent pricing generates many candidate prices, yet {\TwoPriceMechs} are restricted to selecting only two---one for all sellers and another for all buyers. To overcome these challenges, we develop the following two algorithmic ideas.

\xhdr{Optimistic fictitious markets.} We resolve the above challenges by constructing an \emph{optimistic fictitious market}. Our algorithm will maintain an uncertainty set for each trader's cost or value.
In each round, the algorithm assigns the highest possible value to each buyer and the lowest possible cost to each seller based on their current uncertainty sets. This construction provides a monotone upper bound on the optimal profit, allowing the platform to identify a target trade size $\tradeNum^\star$ and a corresponding set of fictitious participating traders. 
Technically, the optimal profit of this fictitious market will be a proxy to the profit of the true market. Our goal shifts to exploiting profit from these fictitious participating traders. At a high level, we will use the dynamic pricing algorithm to obtain some independent conservative prices for each fictitious participating trader. Among these prices, the most conservative price pair plays an important role.
A key finding in our design is, the regret per round can be connected with the uncertainty set of the trader pair which has the most conservative price. This finding enables us to select two from many different candidate prices and employ a potential argument to bound the cumulative regret. 

\xhdr{Switching between \SinglePriceMech and \TwoPriceMech.} To maintain the weak budget balance constraint, recall that we use a two-phase separate-then-independently-pricing approach in the bilateral trade model. However, separating becomes vague in the multi-trader scenario, and the identity of those two representative traders changes over rounds. Our algorithm is no longer a two-phase algorithm, but rather an algorithm that constantly switches between these two phases.

Overall, we propose an algorithm that achieves a cumulative regret of $O(\traderNum^2 \log\log \timeHorizon)$ (\Cref{thm:two-sided market profit}), matching the optimal double-logarithmic regret bound for the bilateral trade model and dynamic pricing problem, in terms of $\timeHorizon$. We further combine our idea with the intrinsic volume techniques to obtain an $O(\dimension \traderNum^2 \log\log\timeHorizon + \traderNum^2 \dimension\log\dimension)$ regret algorithm for the contextual learning setting (\Cref{thm: contextual two-sided market profit}).

\subsection{Related Work}
\label{sec:related work}
Our work connects to several lines of research.
We review the most relevant developments below.

\xhdr{Bilateral trade.} In the offline setting, \citet{MS-83} showed the existence of instances where a fully efficient (i.e., achieving optimal gains-from-trade (GFT)) mechanism that satisfies incentive compatibility, individual rationality, and budget balance (i.e., do not require outside subsidies) does not exist. Subsequent research focused on finding approximately efficient mechanisms in the Bayesian setting \citep[e.g.,][]{BD-14,KPV-22,Mca-08,BM-16,BCWZ-17,DMSW-21,Fei-22}.
Meanwhile, there is also a line of research studying the profit maximization for bilateral trade and general two-sided markets \citep[e.g.,][]{DGTZ-14,zha-22,HHPS-25}.\citet{}

\xhdr{Online learning in bilateral trade.}
In the online setting with stochastic or adversarial types, a series of works provides an essentially complete characterization of regret bounds \citep{BSZ-06,CCCFL-24a,CCCFL-24b,AFF-24,BCCF-24,CJLZ-25,DDFS-25}.
As in our model, these works primarily study fixed-price mechanisms that satisfy budget balance, under partial feedback in which the learner observes only traders' intentions (i.e., whether they accept or reject the posted price).
On the positive side, \citet{CCCFL-24a} design sublinear-regret algorithms for GFT maximization in the stochastic setting with smooth distributions.
\citet{AFF-22} provide an algorithm achieving a tight sublinear $2$-regret for GFT maximization for an oblivious adversary.
\citet{CCCFL-23} show that sublinear regret remains achievable beyond the i.i.d.~stochastic setting under a $\sigma$-smooth adversary model for GFT maximization.
On the negative side, no sublinear regret is possible even in the stochastic setting \citep{CCCFL-24a,AFF-24,CJLZ-25}.
More recently, a line of research studies the problem under \emph{global budget balance} \citep{BCCF-24,CJLZ-25,LCM-26,Jin-26}, which relaxes the standard \emph{per-round} budget-balance requirement by only requiring that the broker does not subsidize the market \emph{in total} over the horizon.
This relaxation enables substantially improved regret guarantees under challenging feedback models and in adversarial environments.
On the other hand, profit maximization in online bilateral trade has only recently been studied: \citet{DDFS-25} show that sublinear regret is achievable in the stochastic setting, while no sublinear regret is possible in the adversarial setting.

\xhdr{Online learning in two-sided market.} 
Learning in two-sided markets beyond bilateral trade has only recently been studied, focusing on GFT maximization and the one-to-many setting.
\citet{BFN-24} consider sample-based design of DSIC and strongly budget-balanced mechanisms that maximize gains from trade, proving an impossibility under correlated values (already with one seller and two buyers) and giving an efficient learning algorithm for the one-seller--two-buyer case under independent values.
\citet{LCM-25} study a repeated one-seller--many-buyer market in a stochastic setting, in which each round runs a second-price auction among buyers and trades only if the winning bid exceeds a posted seller price, and obtain sublinear regret.

\xhdr{Contextual search and pricing.}
\citet{KL-03} initiate the study of dynamic pricing in non-contextual settings. They show that double-logarithmic regret is attainable and establish a foundational $\Omega(\log\log \timeHorizon)$ regret lower bound for adversarial posted-price auctions. Extending this result to contextual scenarios yields a $\Omega(\dimension\log\log \timeHorizon)$ lower bound \citep{LS-18}. 
\citet{CLP-20,CLP-17} show that naive high-dimensional generalizations of the approach in \citet{KL-03} incur exponential regret. To address this, they develop the \textsc{EllipsoidPricing} algorithm, which maintains ellipsoidal uncertainty sets and attains $O(\dimension^2 \log \timeHorizon)$ regret. \citet{LLV-18} then introduce \textsc{ProjectedVolume}, improving the bound to $O(\dimension \log \timeHorizon)$.
\citet{LS-18} and \citet{LLS-21} eventually achieve the near-optimal regret bound $O_{\dimension}(\log\log \timeHorizon)$ by leveraging tools from integral geometry, in particular intrinsic volumes and the Steiner polynomial.
Most recently, \citet{FMPW-26} generalize the above pricing model to the principal-agent game in which the agent chooses from more than two actions (i.e., beyond the binary “buy”/“not buy” case), and they establish a surprisingly polynomial regret lower bound of $\Omega(\timeHorizon^{\frac{1}{2}-\frac{1}{2\dimension}})$. Other variants have also been studied, including contextual search with irrational agents \citep{KLPS-21,Zuo-24}, contextual pricing with noise \citep{LSS-22,TGMP-24}, stochastic contexts \citep{ARS-14,JH-19,LSS-22}, non-myopic agents \citep{GJM-19}, heterogeneous buyers \citep{MLS-18,LNOPZ-25}.
A closely related one is the contextual bilateral trade problem studied by \citet{GBCCP-25}, who propose an algorithm with $O(\dimension^2\log \timeHorizon)$ regret for GFT maximization. Our work extends to more complex two-sided market settings and also considers profit maximization.

\section{Preliminary}
\label{sec:prelim}

We consider an online two-sided market with $\sellerNum$ \emph{sellers} and $\buyerNum$ \emph{buyers} that interact through a principal (or learner) over $\timeHorizon$ rounds. 

\xhdr{Environment.}
The set of sellers is denoted by $\sellerSet$, with $\abs{\sellerSet} = \sellerNum$, and the set of buyers is denoted by $\buyerSet$, with $\abs{\buyerSet} = \buyerNum$. Let $\agentSet \triangleq \sellerSet \cup \buyerSet$ be the set of all traders (or agents). Each seller $\sellerIndex \in \sellerSet$ can produce at most one unit of a homogeneous product and has a fixed private \emph{cost} $\sellerValueAt{\sellerIndex} \in [0, 1]$ for producing that unit, while each buyer $\buyerIndex \in \buyerSet$ demands at most one unit and has a fixed private \emph{value} $\buyerValueAt{\buyerIndex} \in [0, 1]$ for receiving it. These costs and values are traders' private information, which are unknown to the learner, and remain constant throughout the $\timeHorizon$ rounds.

\xhdr{Price-posting mechanisms.}
In each round $\ts \in [\timeHorizon]$, the learner adopts a mechanism to post prices to all traders.  
Given these prices, each trader returns a binary decision: \emph{accept} or \emph{reject}.  
A seller (resp.\ buyer) accepts if and only if the price he faces is at least his cost (resp.\ at most his value), and rejects otherwise.\footnote{When a trader is indifferent between accepting and rejecting, we assume ties are broken in favor of the mechanism.}

Among all possible price-posting mechanisms, we focus on the following three subclasses, ordered by increasing expressive power:
\begin{itemize}
    \item {\SinglePriceMech}: A single price $\postPriceAt{}$ is posted uniformly to all traders. Each trader then decides whether to accept or reject based on this common price $\postPriceAt{}$.
    \item {\TwoPriceMech}: A seller-side price $\sellerPriceAt{}$ is offered to all sellers, and another (possibly different) buyer-side price $\buyerPriceAt{}$ is offered to all buyers. Every trader then decides whether to accept or reject based on their corresponding prices $\sellerPriceAt{}$ and $\buyerPriceAt{}$. 
    \item {\SegPriceMech}: The set of all sellers $\sellerSet$ (resp.\ buyers $\buyerSet$) are partitioned into at most two disjoint segments $(\sellerSubsetAt{1}{},\sellerSubsetAt{2}{})$ (resp.\ $(\buyerSubsetAt{1}{},\buyerSubsetAt{2}{})$). Each seller and buyer segment is offered with a (possibly different) price, denoted by $\sellerPriceAtOfseg{}{1}$, $\sellerPriceAtOfseg{}{2}$, $\buyerPriceAtOfseg{}{1}$, $\buyerPriceAtOfseg{}{2}$, respectively. Traders then decide whether to accept or reject based on the price assigned to their respective segment.
\end{itemize}
Upon the accept/reject decisions of all traders, a maximum bipartite matching is formed between the buyers and sellers who accepted their posted prices. A unit of the product is transferred between each matched buyer-seller pair, with payments charged and paid according to their respective prices. All other traders receive zero allocation and zero payment. Note that, regardless of which unweighted maximum matching is selected,\footnote{We consider a worst-case (adversarial) unweighted maximum matching rule for the principal, as specified later in the definition of regret. All our hardness results (\Cref{thm: GFT single lower bound two-sided market,thm: GFT Two Lower Bound Two-Sided Market}) also hold when the unweighted maximum matching is generated uniformly at random.} all three price-posting mechanisms following this allocation and payment rule are \emph{dominant strategy incentive compatible} and \emph{ex post individually rational} when traders are myopic. 
In addition, we impose a \emph{weak budget balance} constraint, requiring that the price offered to every seller be weakly less than the price offered to every buyer.

As mentioned earlier, the principal (learner) does not know the traders' costs and values and interacts with them over $T$ rounds. We study different settings for the principal, depending on the class of price-posting mechanisms used (namely, {\SinglePriceMech}, {\TwoPriceMech}, or {\SegPriceMech}). In each setting, for every round $t \in [T]$, the principal selects prices within the corresponding mechanism class and offers them to all traders. She then observes the accept/reject decisions from all traders, as well as the resulting trading pairs (i.e., the allocation-payment outcome).

\xhdr{Principal's objective: GFT and profit.} 
We consider the two most fundamental economic objectives in two-sided markets for the principal: the \emph{gains-from-trade} (\emph{GFT}) and the \emph{profit}.

We begin by introducing the gains-from-trade (GFT), which depends only on the allocation outcome (i.e., the matching between buyers and sellers). Specifically, given any feasible matching $\match \subseteq \sellerSet \times \buyerSet$, the induced GFT is defined as the difference between the total values of all matched buyers and the total costs of all matched sellers, i.e.,
\begin{align*}
    \gftAt{}[\match] \triangleq \sum\nolimits_{(\sellerIndex,\buyerIndex) \in \match}
    \paren{\buyerValueAt{\buyerIndex} - \sellerValueAt{\sellerIndex}}.
\end{align*}
We define the \emph{optimum GFT benchmark} $\gftOpt$ as the highest GFT that can be achieved among all feasible matchings between sellers and buyers. By definition, it can be computed as 
\begin{align*}
    \gftOpt = \sum\nolimits_{\ell\in [k]} \paren{\buyerValueAt{(\ell)} - \sellerValueAt{(\ell)}},
\end{align*}
where $\buyerValueAt{(\ell)}$ (resp.\ $\sellerValueAt{(\ell)}$) denotes the $\ell$-th largest (resp.\ $\ell$-th smallest) order statistic among all buyers' values (resp.\ sellers' costs), and the \emph{efficient trade size} $k \in [\min\{\buyerNum,\sellerNum\}]$ is defined as the largest index such that the buyer's order statistic is no less than the seller's order statistic, i.e., $\buyerValueAt{(k)} \ge \sellerValueAt{(k)}$.\footnote{We remark that the optimum GFT benchmark $\gftOpt$ can be achieved by a {\SinglePriceMech}.}

Profit depends on both the allocation outcome (i.e., the matching between buyers and sellers) and the payment outcome (i.e., the prices charged to and paid to all matched traders). For the profit objective, we only focus on the class of {\TwoPriceMechs}. Specifically, given any {\TwoPriceMech} with seller-side price $\sellerPrice$ and buyer-side price $\buyerPrice$, the induced profit is defined as the difference between the total prices charged to all matched buyers and the total prices paid to all matched sellers,\footnote{Note that in any {\TwoPriceMech}, every unweighted maximum matching induces the same profit.} i.e.,
\begin{align*}
    \profitAt{}[\sellerPrice,\buyerPrice] = \underbrace{\min\left\{
    |\{\sellerIndex\in\sellerSet:\sellerValueAt{\sellerIndex}\leq \sellerPrice\}|,
    |\{\buyerIndex\in\buyerSet:\buyerValueAt{\buyerIndex}\geq \buyerPrice\}|
    \right\}}_{\text{number of trading pairs}}
    \underbrace{\cdot\paren{\buyerPrice - \sellerPrice}}_{\text{profit from each trading pair}}
\end{align*}
We define the \emph{optimum profit benchmark} $\profitOpt$ (within the class of {\TwoPriceMechs}) as the highest profit achievable by any {\TwoPriceMechs}. By definition, it can be computed as
\begin{align*}
    \profitOpt \triangleq \max\nolimits_{k\in[\min\{\buyerNum,\sellerNum\}]} 
    k\cdot (\buyerValueAt{(k)} - \sellerValueAt{(k)}).
\end{align*}
Namely, it is achieved by selecting the best index $k\in[\min\{\buyerNum,\sellerNum\}]$ (which we refer to as the \emph{profit-maximizing trade size}) and setting the seller-side price to $\sellerPriceAt{} = \sellerValueAt{(k)}$ and the buyer-side price to $\buyerPriceAt{} = \buyerValueAt{(k)}$.

\xhdr{Regret minimization.}
We evaluate an online learning algorithm by its \emph{regret} relative to the aforementioned optimum GFT (resp.\ profit) benchmark $\gftOpt$ (resp.\ $\profitOpt$).

For the GFT objective, given any online learning algorithm $\ALG$, its immediate regret in each round $\ts \in [T]$ and its cumulative regret over all $T$ rounds are defined as\footnote{Throughout the paper, we focus on deterministic online learning algorithms.}
\begin{align*}
    \regAt{\ts}[\ALG] \triangleq 
    \gftOpt - 
    \gft[\matchAt{\ts}^{\ALG}]
    \;\;
    \mbox{and}
    \;\;
    \reg[\ALG] \triangleq \sum\nolimits_{\ts\in[T]} \regAt{\ts}[\ALG],
\end{align*}
where $\matchAt{\ts}^{\ALG}$ is the unweighted maximum matching between all buyers and sellers who accepted their prices in round $\ts$ under algorithm $\ALG$, chosen to minimize the induced GFT.

Similarly, for the profit objective, given any online learning algorithm $\ALG$ (that implements {\TwoPriceMechs}), its immediate regret in each round $\ts \in [T]$ and its cumulative regret over all $T$ rounds are defined as
\begin{align*}
    \regAt{\ts}[\ALG] \triangleq 
    \profitOpt - 
    \profit[\sellerPriceAt{\ts}^{\ALG}, \buyerPriceAt{\ts}^{\ALG}]
    \;\;
    \mbox{and}
    \;\;
    \reg[\ALG] \triangleq \sum\nolimits_{\ts\in[T]} \regAt{\ts}[\ALG],
\end{align*}
where $\sellerPriceAt{\ts}^{\ALG}$ and $\buyerPriceAt{\ts}^{\ALG}$ are the seller-side and buyer-side prices in the {\TwoPriceMech} implemented in round $\ts\in[T]$ under algorithm $\ALG$.

When the online algorithm $\ALG$ is clear from the context, we sometimes simplify notations and omit them, e.g., writing $\reg$ instead of $\reg[\ALG]$. This includes the notations $\reg$, $\regAt{\ts}$, $\matchAt{\ts}$, $\buyerPriceAt{\ts}$ and $\sellerPriceAt{\ts}$, among others.

\xhdr{Variants.}
Besides the base model, we also consider the following three closely related variants.

\begin{itemize}[leftmargin=*]
    \item \emph{{(Special case: Bilateral trade)}} This is the classic setting with a single seller and a single buyer. As a sanity check, the optimum profit benchmark and optimum GFT benchmark in the regret definition coincide, and {\SegPriceMech} collapses to {\TwoPriceMech}.
    \item \emph{{(Special case: One-to-many market)}} This setting consists of either a single seller and multiple buyers or a single buyer and multiple sellers. As a sanity check, the optimum profit benchmark and optimum GFT benchmark in the regret definition coincide, and {\SegPriceMech} collapses to {\TwoPriceMech}.
    \item \emph{{(Extension: Products with contextual information)}} In this extension, each seller $\sellerIndex\in\sellerSet$ (resp.\ buyer $\buyerIndex\in\buyerSet$) is associated with a private \emph{$\dimension$-dimensional feature vector} $\sellerVecAt{\sellerIndex}\in\ball_\dimension$ (resp.\ $\buyerVecAt{\buyerIndex}\in\ball_\dimension$) in the unit ball $\ball_\dimension$. The principal does not know the traders' feature vectors. In each round $t\in[T]$, an adversarially chosen \emph{product context} $\contextAt{\ts}\in \ball_\dimension$ is revealed to the principal, and each trader's cost (for sellers) or value (for buyers) is given by the inner product of their private feature vector and the product context, i.e., $\sellerValueAt{\sellerIndex,\ts} \triangleq  \innerproduct{\sellerVecAt{\sellerIndex},\contextAt{\ts}}$ and $\buyerValueAt{\buyerIndex,\ts} \triangleq  \innerproduct{\buyerVecAt{\buyerIndex},\contextAt{\ts}}$. Both the optimum profit and GFT benchmarks are defined based on round-dependent profiles $\{(\sellerValueAt{1,\ts}, \dots, \sellerValueAt{\sellerNum,\ts},\buyerValueAt{1,\ts}, \dots, \buyerValueAt{\buyerNum,\ts})\}_{t\in[T]}$.

\end{itemize}

\section{Bilateral Trade Model}
\label{sec:bilateral trade}
In this section, we begin with the bilateral trade model, considering both gains-from-trade (GFT) (\Cref{sec:bilateral trade:GFT}) and profit maximization (\Cref{sec:bilateral trade:profit}).

Since there is only one seller and one buyer in the bilateral trade setting, throughout this section, we drop the trader indexes (subscripts) $\sellerIndex$ and $\buyerIndex$ for notational simplicity and denote the seller's cost and the buyer’s value by $\sellerValue$ and $\buyerValue$, respectively. Moreover, we assume that the buyer's value strictly exceeds the seller's cost, i.e., $\buyerValue > \sellerValue$. Otherwise, both the optimum GFT and profit benchmarks equal zero, and the corresponding regret minimization problem becomes trivial.

\subsection{Gains-from-Trade Maximization}
\label{sec:bilateral trade:GFT}

In this section, we study the gains-from-trade (GFT) maximization in the bilateral trade model. As our first main result, we design an online learning algorithm ({\BTGFT}) that implements {\SinglePriceMechs} and achieves optimal regret of $O(1)$, independent of the time horizon $T$.

\xhdr{Algorithm overview.}
As its name suggests, {\BTGFT} is built upon a simple binary search procedure and consists of two phases: an \emph{exploration phase} followed by an \emph{exploitation phase}.
In the exploration phase, the algorithm maintains \emph{uncertainty intervals} $[\sellerValueLbAt{\ts}, \sellerValueUbAt{\ts}]$ and $[\buyerValueLbAt{\ts}, \buyerValueUbAt{\ts}]$ for the seller's cost and buyer's value, both initialized to $[0,1]$.  
In each round $\ts$ of this exploration phase,  
a {\SinglePriceMech} with uniform trading price  
$\postPrice_\ts = (\sellerValueLbAt{\ts} + \buyerValueUbAt{\ts})/2$ is implemented.\footnote{The algorithm is ``optimistic'', since the trading price is computed based on the lowest possible cost of the seller and highest possible value of the buyer.}
If the seller rejects, it updates $\sellerValueLbAt{\ts+1} = \postPrice_\ts$;  
if the buyer rejects, it updates $\buyerValueUbAt{\ts+1} = \postPrice_\ts$.  
By the assumption that $\sellerValue < \buyerValue$, both traders cannot reject simultaneously. Once a trade occurs (i.e., both traders accept), the algorithm switches to the exploitation phase and uses the same price $\postPrice_\ts$ for all remaining rounds. A formal description of {\BTGFT} is given in \Cref{alg:bilateral trade:GFT}.

\begin{algorithm}
\caption{{\BTGFT}}
\label[algorithm]{alg:bilateral trade:GFT}
\SetAlgoLined
\SetNoFillComment

\KwIn{time horizon $\timeHorizon$}

\KwOut{{\SinglePriceMechs} in bilateral trade for each round $t\in[T]$}

\vspace{2mm}

Initialize uncertainty intervals $[\sellerValueLbAt{1}, \sellerValueUbAt{1}] \gets [0, 1]$, $[\buyerValueLbAt{1}, \buyerValueUbAt{1}]\gets [0, 1]$

\vspace{2mm}

\tcc{exploration phase}

\For{each round $\ts=1$ \KwTo $\timeHorizon$}{

       Implement a {\SinglePriceMech} with price $\postPriceAt{\ts}\gets (\sellerValueLbAt{\ts} + \buyerValueUbAt{\ts})/2$ in round $\ts$

        \If{both traders accept price $\postPriceAt{\ts}$}{
            Set exploitation price $\postPrice\primed\gets\postPriceAt{\ts}$

            \KwBreak
        }\ElseIf{seller rejects price $\postPriceAt{\ts}$
        }{
            Update uncertainty intervals 
            $[\sellerValueLbAt{\ts+1}, \sellerValueUbAt{\ts+1}] \gets [\postPriceAt{\ts}, \sellerValueUbAt{\ts}]$, $[\buyerValueLbAt{\ts+1}, \buyerValueUbAt{\ts+1}]\gets[\buyerValueLbAt{\ts}, \buyerValueUbAt{\ts}]$
        }
        \Else{
            Update uncertainty intervals 
            $[\sellerValueLbAt{\ts+1}, \sellerValueUbAt{\ts+1}] \gets [\sellerValueLbAt{\ts}, \sellerValueUbAt{\ts}]$, $[\buyerValueLbAt{\ts+1}, \buyerValueUbAt{\ts+1}]\gets[\buyerValueLbAt{\ts}, \postPriceAt{\ts}]$
        }
        }

    \vspace{2mm}
        
    \tcc{exploitation phase}
    Implement a {\SinglePriceMech} with price $\postPrice\primed$ for all remaining rounds
\end{algorithm}
 
We now present the optimal regret guarantee of {\BTGFT}, followed by its analysis. Intuitively, in the (contextual) bilateral trade model, GFT maximization shares similarity with the symmetric loss minimization in the (contextual) binary search problem \citep{LS-18}.

\begin{restatable}{theorem}{thmBilateralTradeGFT}
    \label{thm:bilateral trade:GFT}
    In the bilateral trade model, for gains-from-trade (GFT) maximization, the regret of {\BTGFT} (\Cref{alg:bilateral trade:GFT}) is at most 1, independent of time horizon $T$.
\end{restatable}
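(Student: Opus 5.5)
The plan is to split the regret into exploration and exploitation rounds. We bound the exploration regret by a geometric series in the gap $\buyerValueUbAt{\ts}-\sellerValueLbAt{\ts}$, and show the exploitation phase contributes nothing.

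First, I would establish the invariant $\sellerValueLbAt{\ts}\le \sellerValue<\buyerValue\le \buyerValueUbAt{\ts}$ for every exploration round. This follows by induction: a seller rejection at $\postPriceAt{\ts}$ means $\sellerValue>\postPriceAt{\ts}$, and a buyer rejection means $\buyerValue<\postPriceAt{\ts}$. Because $\sellerValue<\buyerValue$, both traders cannot reject in the same round, so exactly one of the three branches occurs.

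Next, define the width $W_\ts \deq \buyerValueUbAt{\ts}-\sellerValueLbAt{\ts}$, with $W_1=1$. In any non-trade exploration round, exactly one endpoint moves to the midpoint $\postPriceAt{\ts}$. Hence $W_{\ts+1}=W_\ts/2$, and so $W_\ts = 2^{-(\ts-1)}$ as long as no trade has occurred. In a non-trade round the realized GFT is $0$, so the immediate regret is $\buyerValue-\sellerValue\le W_\ts$ by the invariant. Summing over non-trade rounds gives at most $\sum_{\ts\ge 1}2^{-\ts}$-type terms. The natural bound $\sum_{\ts\ge1}2^{-(\ts-1)}=2$ is too weak, so the bookkeeping needs sharpening.

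The sharpening uses the fact that exploration continues only while $\buyerValue-\sellerValue\le W_\ts$. At round $\ts$ the price is the midpoint, and the trader who rejects certifies that $\sellerValue$ or $\buyerValue$ lies on the far side of $\postPriceAt{\ts}$. In particular, if the very first round has no trade, then $\buyerValue-\sellerValue\le 1/2$. I would therefore bound the regret of the $r$-th non-trade round by $W_{r+1}=2^{-r}$ rather than $W_r$. This works because after round $r$ the interval $[\sellerValueLbAt{r+1},\buyerValueUbAt{r+1}]$ still contains $[\sellerValue,\buyerValue]$, and the regret $\buyerValue-\sellerValue$ is the same in every round. Concretely, if there are $N$ non-trade rounds, the total regret is $N(\buyerValue-\sellerValue)\le N\,W_{N+1}=N2^{-N}\le 1/2$. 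This is stronger than summing a geometric series and gives a bound of at most $1$ (indeed $1/2$).

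Finally, once a trade occurs at price $\postPrice\primed$, we have $\sellerValue\le\postPrice\primed\le\buyerValue$. The single pair then trades in every later round, achieving GFT $\buyerValue-\sellerValue=\gftOpt$, so every exploitation round, and the trading round itself, has zero regret. If no trade ever occurs within $\timeHorizon$ rounds, the same bound $N2^{-N}$ applies with $N=\timeHorizon$. The main point to get right is the middle step: realizing that the regret per round is the constant $\buyerValue-\sellerValue$, which must be at most the final width, instead of bounding each round by its current width.
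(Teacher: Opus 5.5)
Your proposal is correct and follows the paper's argument. It uses the same invariant $\underline{c}_t\le c<v\le\bar v_t$, the same halving of $\bar v_t-\underline{c}_t$ in each non-trade round, and zero regret from the first trading round onward. The only difference is the final accounting. The paper bounds the regret of the $t$-th non-trade round by the post-update width $2^{-t}$ and sums a geometric series to get at most $1$; this is exactly your intermediate step. You additionally use that the per-round regret $v-c$ is constant, so it is bounded by the final width $2^{-N}$, which gives the sharper bound $N2^{-N}\le 1/2$.
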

\begin{proof}
Let $\timeHorizon_1$ be the number of rounds in  exploration phase. In the last round of the exploration phase and all rounds in the exploitation phase, a trade always occurs, so there is no regret.

For each round $\ts \in [\timeHorizon_1 - 1]$, no trade occurs, and the immediate regret is  $\buyerValue - \sellerValue$, which is upper bounded by the corresponding endpoints in the uncertainty intervals $\buyerValueUbAt{\ts+1} - \sellerValueLbAt{\ts+1}$. It suffices to show that this upper bound satisfies $\buyerValueUbAt{\ts} - \sellerValueLbAt{\ts} = 2^{-\ts+1}$, since the cumulative regret can be upper bounded as
\begin{align*}
    \sum\nolimits_{\ts\in [\timeHorizon]}\regAt{\ts} & = \sum\nolimits_{\ts\in [\timeHorizon_1-1]} \left(\buyerValueUbAt{\ts+1}-\sellerValueLbAt{\ts+1}\right)  = \sum\nolimits_{\ts\in [\timeHorizon_1-1]} 2^{-t} \leq 1.
\end{align*}
We show $\buyerValueUbAt{\ts} - \sellerValueLbAt{\ts} = 2^{-\ts+1}$ by inductively showing that $\buyerValueUbAt{\ts+1} - \sellerValueLbAt{\ts+1} = (\buyerValueUbAt{\ts} - \sellerValueLbAt{\ts}) / 2$. Note that in each round $\ts\in[\timeHorizon_1-1]$, either (i) only the seller rejects the price, or (ii) only the buyer rejects the price. (Due to the assumption imposed at the beginning of this section that the seller's cost is at most the buyer's value, it is impossible for both traders to reject the price.) If the seller rejects price $\postPriceAt{\ts}$, the algorithm updates $\sellerValueLbAt{\ts+1} \gets \postPriceAt{\ts} = (\buyerValueUbAt{\ts}+\sellerValueLbAt{\ts})/2$ and maintains $\buyerValueUbAt{\ts+1} \gets \buyerValueUbAt{\ts}$ unchanged. Thus, 
\begin{align*}
    \buyerValueUbAt{\ts+1} - \sellerValueLbAt{\ts+1} = 
    \buyerValueUbAt{\ts} -  (\buyerValueUbAt{\ts}+\sellerValueLbAt{\ts})/2
    =
    (\buyerValueUbAt{\ts} - \sellerValueLbAt{\ts})/2
\end{align*}
The other case (the buyer rejects) follows by a symmetric argument, and we omit it to avoid repetition. This completes the proof of \Cref{thm:bilateral trade:GFT}.
\end{proof}

\newcommand{\threshold}{\xi}

\subsection{Profit Maximization}
\label{sec:bilateral trade:profit}

In this section, we study the profit maximization in the bilateral trade model. As our second main result, we design an online learning algorithm ({\BTPF}) that implements {\TwoPriceMechs} and achieves optimal regret of $O(\bilateralGftReg)$.

\xhdr{Algorithm overview.} {\BTPF} combines the idea of {\BTGFT} for GFT maximization with the conservative binary search initially developed by \citet{KL-03} for the dynamic pricing problem (special case of the bilateral trade model where the seller's cost is known to be zero). The algorithm consists of two exploration phases, followed by an exploitation phase. 

In the first exploration phase, it implements the same learning procedure as {\BTGFT}, which identifies a threshold $\threshold$ separating the seller's cost and the buyer's value, i.e., $\sellerValue \leq \threshold \leq \buyerValue$. It then moves to the second exploration phase, whose goal is to identify a {\TwoPriceMech} whose profit is within $O(1/\timeHorizon)$ of the optimum profit benchmark. Specifically, we separately search for near-optimal trading prices for the seller and the buyer using the \emph{conservative binary search}:
if the buyer’s uncertainty interval $[\buyerValueLbAt{\ts}, \buyerValueUbAt{\ts}]$ is smaller than $1/\timeHorizon$, we set the buyer-side price to $\buyerPriceAt{\ts} = \buyerValueLbAt{\ts}$. Otherwise, we set  
\begin{align*}
    \buyerPriceAt{\ts}= \buyerValueLbAt{\ts} + 2^{-2^{\buyerGapLevelAt{\ts}}}
    \;\;
    \mbox{where}
    \;\;
    \buyerGapLevelAt{\ts} = \bfloor{1+\log \log (\buyerValueUbAt{\ts} - \buyerValueLbAt{\ts})^{-1}}
\end{align*}
This conservative pricing balances the improvement (shrinkage) of the uncertainty interval against the immediate regret under either possible buyer decision (accept or reject). The seller-side price is selected symmetrically based on his uncertainty interval $[\sellerValueLbAt{\ts}, \sellerValueUbAt{\ts}]$.
Finally, the algorithm implements the learned near-optimal {\TwoPriceMech} for all remaining rounds in the exploitation phase. A formal description of {\BTPF} is given in \Cref{alg:bilateral trade:profit}.\footnote{Due to the first exploration phase and the prices construction in the second exploration phase, all {\TwoPriceMechs} implemented by {\BTPF} are weak budget balanced.}

\begin{algorithm}
\caption{{\BTPF}}
\label[algorithm]{alg:bilateral trade:profit}
\SetAlgoLined
\SetNoFillComment
\KwIn{time horizon $\timeHorizon$}
\KwOut{{\TwoPriceMechs} in bilateral trade for each round $t\in[T]$}

\vspace{2mm}
Initialize uncertainty intervals $[\sellerValueLbAt{1}, \sellerValueUbAt{1}] \gets [0, 1]$, $[\buyerValueLbAt{1}, \buyerValueUbAt{1}]\gets [0, 1]$

\vspace{2mm}
\tcc{exploration phase I - identify threshold $\threshold$ such that $\sellerValue \leq \threshold \leq \buyerValue$}

\For{each round $\ts=1$ \KwTo $\timeHorizon$}{

       Implement {\TwoPriceMech} with identical prices $\sellerPriceAt{\ts} = \buyerPriceAt{\ts}\gets (\sellerValueLbAt{\ts} + \buyerValueUbAt{\ts})/2$
       in round $\ts$

        \If{both traders accept price $\sellerPriceAt{\ts}\equiv\buyerPriceAt{\ts}$}{
            Update uncertainty intervals 
            $[\sellerValueLbAt{\ts+1}, \sellerValueUbAt{\ts+1}] \gets [\sellerValueLbAt{\ts}, \postPriceAt{\ts}]$, $[\buyerValueLbAt{\ts+1}, \buyerValueUbAt{\ts+1}]\gets[\postPriceAt{\ts}, \buyerValueUbAt{\ts}]$

            \KwBreak \tcp{let  $\tsBreak$ be the current round}
        }\ElseIf{seller rejects price $\sellerPriceAt{\ts}$
        }{
            Update uncertainty intervals 
            $[\sellerValueLbAt{\ts+1}, \sellerValueUbAt{\ts+1}] \gets [\postPriceAt{\ts}, \sellerValueUbAt{\ts}]$, $[\buyerValueLbAt{\ts+1}, \buyerValueUbAt{\ts+1}]\gets[\buyerValueLbAt{\ts}, \buyerValueUbAt{\ts}]$
        }
        \Else{
            Update uncertainty intervals 
            $[\sellerValueLbAt{\ts+1}, \sellerValueUbAt{\ts+1}] \gets [\sellerValueLbAt{\ts}, \sellerValueUbAt{\ts}]$, $[\buyerValueLbAt{\ts+1}, \buyerValueUbAt{\ts+1}]\gets[\buyerValueLbAt{\ts}, \postPriceAt{\ts}]$
        }
        }

    \vspace{2mm}
    \tcc{exploration phase II - identify near-optimal {\TwoPriceMech}}        
    \For{each round $\ts=\tsBreak +1$ \KwTo $\timeHorizon$}{

        Set indexes $\sellerGapLevelAt{\ts}\gets\lfloor 1+\log \log (\sellerValueUbAt{\ts} - \sellerValueLbAt{\ts})^{-1}\rfloor$ and $\buyerGapLevelAt{\ts}\gets\lfloor 1+\log \log (\buyerValueUbAt{\ts} - \buyerValueLbAt{\ts})^{-1}\rfloor$

        \If{
$(\sellerGapLevelAt{\ts} > \log \log \timeHorizon)$ $\land$ $(\buyerGapLevelAt{\ts} > \log \log \timeHorizon)$
            }{
        Set exploitation prices $\sellerPrice\primed\gets \sellerValueUbAt{\ts}$ and $\buyerPrice\primed\gets \buyerValueLbAt{\ts}$
        
        \KwBreak
        }

        Set seller-side price $\sellerPriceAt{\ts} \gets \sellerValueUbAt{\ts} - 2^{-2^{\sellerGapLevelAt{\ts}}}
        \cdot \indicator{\sellerGapLevelAt{\ts} \leq \log \log\timeHorizon}$

        Set buyer-side price $\buyerPriceAt{\ts}\gets \buyerValueLbAt{\ts} + 2^{-2^{\buyerGapLevelAt{\ts}}}\cdot \indicator{\buyerGapLevelAt{\ts} \leq \log \log\timeHorizon}$

       Implement {\TwoPriceMech} with prices $\sellerPriceAt{\ts}$ and $\buyerPriceAt{\ts}$ in round $\ts$

        \If{seller accepts price $\sellerPriceAt{\ts}$}{
            Update uncertainty interval 
            $[\sellerValueLbAt{\ts+1}, \sellerValueUbAt{\ts+1}] \gets [\sellerValueLbAt{\ts}, \sellerPriceAt{\ts}]$
            }
        \Else{
            Update uncertainty interval 
            $[\sellerValueLbAt{\ts+1}, \sellerValueUbAt{\ts+1}] \gets [\sellerPriceAt{\ts}, \sellerValueUbAt{\ts}]$
        }

        \If{buyer accepts price $\buyerPriceAt{\ts}$}{
            Update uncertainty interval 
            $[\buyerValueLbAt{\ts+1}, \buyerValueUbAt{\ts+1}] \gets [\buyerPriceAt{\ts}, \buyerValueUbAt{\ts}]$
            }
        \Else{
            Update uncertainty interval 
            $[\buyerValueLbAt{\ts+1}, \buyerValueUbAt{\ts+1}] \gets [\buyerValueLbAt{\ts}, \buyerPriceAt{\ts}]$
        }
        
        }

    \vspace{2mm}
    \tcc{exploitation phase}
    Implement {\TwoPriceMech} with prices $\sellerPrice\primed$ and $\buyerPrice\primed$ for all remaining rounds
\end{algorithm} 
We now present the optimal regret guarantee of {\BTPF}, followed by its analysis.
Parts of the analysis can be viewed as a refined version of the pricing-loss minimization analysis developed in \cite{KL-03}.

\begin{restatable}{theorem}{thmBilateralTradeProfit}
    \label{thm:bilateral trade:profit}
    In the bilateral trade model, for profit maximization, the regret of {\BTPF} (\Cref{alg:bilateral trade:profit}) is $O(\bilateralGftReg)$, 
    which is optimal in terms of time horizon $\timeHorizon$.
\end{restatable}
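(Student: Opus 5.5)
The plan is to split the regret of \Cref{alg:bilateral trade:profit} across its three phases and bound each one separately. The optimality claim in $T$ follows directly from the known $\Omega(\log\log T)$ lower bound for dynamic pricing \citep{KL-03}. That problem is the special case of our model in which the seller's cost is $0$: posting seller price $0$ is then without loss, and profit coincides with revenue.

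\textbf{Phase I.} This phase runs exactly as \BTGFT{} until the first round with a trade. In every round of Phase I except $\tsBreak$ there is no trade, so profit is $0$. The immediate regret is then $\profitOpt=\buyerValue-\sellerValue\le\buyerValueUbAt{\ts}-\sellerValueLbAt{\ts}=2^{-\ts+1}$, using the halving invariant from the proof of \Cref{thm:bilateral trade:GFT}. Summing gives $O(1)$, and round $\tsBreak$ itself contributes at most $1$. The final update gives $\sellerValue\le\threshold\le\buyerValue$ with $\threshold=\postPriceAt{\tsBreak}$, and both uncertainty intervals are now contained in the respective sides of $\threshold$.

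\textbf{Phase II.} By construction, $\sellerPriceAt{\ts}\le\sellerValueUbAt{\ts}\le\threshold\le\buyerValueLbAt{\ts}\le\buyerPriceAt{\ts}$, so weak budget balance holds and the prices are well defined. The immediate regret decomposes as follows. If both accept, the regret is $(\buyerValue-\buyerPriceAt{\ts})+(\sellerPriceAt{\ts}-\sellerValue)$, and each term is at most the width of the updated interval on its side. If someone rejects, the regret is $\buyerValue-\sellerValue\le 1$.

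So each side is charged in the KL-03 fashion. A rejection on one side costs $O(1)$, while an acceptance costs at most the current interval width. I will track the level indices $\buyerGapLevelAt{\ts},\sellerGapLevelAt{\ts}$, which lie in $\{1,\dots,\lceil\log\log T\rceil\}$ while the phase is active, and adapt the argument of \citet{KL-03}.

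Consider a buyer at level $k$, where the width $w$ satisfies $2^{-2^{k-1}}\ge w>2^{-2^{k}}$, and the buyer rejects. The new width is $2^{-2^k}$, so the level strictly increases. Hence rejections happen at most $O(\log\log T)$ times per side, each costing $O(1)$.

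Now consider an acceptance at level $k$. The width shrinks by $2^{-2^k}$, so at most $w/2^{-2^k}\le 2^{2^{k-1}}$ acceptances occur at a fixed level. Each costs at most $w\le 2^{-2^{k-1}}$, so the level contributes $O(1)$ in total. Once a side exceeds level $\log\log T$, its price is frozen at the endpoint, which gives width $<1/T$ and no further learning. The remaining side continues, and Phase II ends once both sides are beyond $\log\log T$. Summing over at most $O(\log\log T)$ levels and two sides yields $O(\log\log T)$.

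\textbf{Phase III and the main obstacle.} In Phase III the prices $\sellerValueUbAt{}$ and $\buyerValueLbAt{}$ are always accepted, and each differs from the true value by less than $1/T$. The per-round regret is therefore $<2/T$, for $O(1)$ in total.

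The main obstacle is accounting in Phase II when the two sides interact. A rejection on one side zeroes the whole profit, and that round's regret $\le1$ must be charged to a level increase on the rejecting side. Meanwhile the other side's acceptance or rejection must not be double-counted, and the already-frozen side has to be handled. I will resolve this by charging each round to a potential equal to the sum over both sides of $(\text{level index}) + (\text{width})\cdot 2^{2^{k}}$-style terms per level. I will then verify that every round either increases some level (at most $2\lceil\log\log T\rceil$ times) or has regret bounded by the widths, which telescope within a level to $O(1)$.
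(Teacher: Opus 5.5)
Your proposal is correct and follows essentially the same route as the paper. Phase I is bounded via the halving invariant of \BTGFT. Phase II is bounded by at most one rejection per level per side, each costing at most $1$, plus at most $2^{2^{k-1}}$ acceptances per level, each costing at most $2^{-2^{k-1}}$. The exploitation phase costs $O(1)$, and optimality follows from \citet{KL-03}. The ``main obstacle'' you flag is not actually one, and no extra potential function is needed: the paper, like your own Phase II paragraph, bounds each round's regret by the rejection indicators plus, in trading rounds, the two widths, and counts rejections and acceptances separately per side, so a round where one side accepts and the other rejects is simply over-counted.
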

\begin{proof}
    Since the optimum profit benchmark is equal to the optimum GFT benchmark in the bilateral trade model, \Cref{thm:bilateral trade:GFT} shows that the total regret in the first exploration phase of {\BTPF} is at most $1$. By construction, the total regret in the final exploitation phase of the algorithm is at most $2$. (To see this, note that the algorithm enters the exploitation phase only when the lengths of both uncertainty intervals are at most $1/\timeHorizon$.) It suffices to show that the total regret of the second exploration phase is $O(\bilateralGftReg)$. Below, we analyze the second exploration phase by dividing all rounds in this phase into two types: those in which a trade does not occur and those in which a trade occurs.

    In the second exploration phase, we first claim that each trader rejects his corresponding price at most $\log\log\timeHorizon$ times. To see this, consider a rejection by the buyer in round $\ts$ and the resulting update of the uncertainty interval from round $\ts$ to $\ts + 1$. By construction, the index $\buyerGapLevelAt{\ts + 1}$ satisfies $\buyerGapLevelAt{\ts + 1} = \buyerGapLevelAt{\ts} + 1$. Thus, the buyer rejects at most once for each value of the index $\buyerGapLevelAt{\ts}$ less than or equal to $\log\log\timeHorizon$ (and always accepts once $\buyerGapLevelAt{\ts} = \log\log\timeHorizon + 1$). Therefore, the total regret due to traders' rejections in the second exploration phase is at most $2\log\log\timeHorizon$.

    For each round $\ts$ in the second exploration phase where both traders accept their prices, the immediate regret can be decomposed to the seller and buyer and upper bounded by
    \begin{align*}
        \regAt{\ts} = (\buyerValue - \sellerValue) - (\buyerPriceAt{\ts} - \sellerPriceAt{\ts})
        =
        (\buyerValue - \buyerPriceAt{\ts}) + (\sellerPriceAt{\ts} - \sellerValue)
        \leq
        \left(\buyerValueUbAt{\ts} - \buyerValueLbAt{\ts}\right)
        +
        \left(\sellerValueUbAt{\ts} - \sellerValueLbAt{\ts}\right),
    \end{align*}
    where the last inequality holds due to the fact that $\buyerValue,\buyerPriceAt{\ts} \in[\buyerValueLbAt{\ts}, \buyerValueUbAt{\ts}]$ and $\sellerValue,\sellerPriceAt{\ts}\in[\sellerValueLbAt{\ts},\sellerValueUbAt{\ts}]$. 
    Next, we upper bound the total regret incurred by the buyer in all trading rounds. (The upper bound on the regret incurred by the seller follows symmetrically.)
    For each value of the index $\buyerGapLevelAt{\ts}$ less than or equal to $\log\log\timeHorizon$, note that the buyer accepts at most $2^{2^{\buyerGapLevelAt{\ts}-1}}$ times, since the uncertainty interval shrinks by $2^{-2^{\buyerGapLevelAt{\ts}}}$ upon each acceptance. Thus, the total regret incurred by the buyer for each such index value is at most $2^{2^{\buyerGapLevelAt{\ts}-1}}\cdot 2^{-2^{\buyerGapLevelAt{\ts}-1}} = 1$.
    Additionally, when $\buyerGapLevelAt{\ts} = \log\log\timeHorizon$, the per-round regret is at most $1/\timeHorizon$, and since there are at most $\timeHorizon$ rounds, the total regret in this case is at most $\timeHorizon \cdot \timeHorizon^{-1} = 1$.
    Combining these cases, we conclude that the total regret incurred by the buyer in all trading rounds is at most $O(\log\log\timeHorizon)$.
    
    Putting all the pieces together, we complete the regret analysis of {\BTPF}. Its optimality follows from \citet{KL-03}, which establishes a lower bound of $\Omega(\log\log\timeHorizon)$ on the optimal regret in the special case where the seller's cost is fixed to zero. This completes the proof of \Cref{thm:bilateral trade:profit}.
\end{proof}

\section{Two-Sided Market Model}
\label{sec:two-sided market}
In this section, we study the two-sided market setting where there are $\buyerNum$ buyers and $\sellerNum$ sellers, both gains-from-trade and profit regret will be considered. 
We first study the gains-from-trade maximization in a two-sided market by implementing {\SinglePriceMech}, {\TwoPriceMech}, and {\SegPriceMech} in \Cref{sec: GFT Two-Sided Market}. We then turn to profit maximization in \Cref{sec:profit-two-sided market}, focusing on {\TwoPriceMech}.

\subsection{Gains-from-Trade Maximization}
\label{sec: GFT Two-Sided Market}

In this section, we focus on the gains-from-trade (GFT) maximization.
Recall that in the bilateral trade model, we obtained a constant-regret algorithm (\Cref{thm:bilateral trade:GFT}).  
A natural question is whether this constant regret guarantee extends to two-sided markets with multiple buyers and sellers.

Perhaps surprisingly, we answer this in the negative in \Cref{sec: GFT single Lower Bound Two-Sided Market}: even in a minimal extension beyond bilateral trade---a market with one seller and two buyers---any algorithm implementing {\SinglePriceMechs} suffers regret at least $\Omega\parenfix{\frac{\log\log \timeHorizon}{\log\log\log\log \timeHorizon}}$.
We then complement this lower bound by designing a nearly optimal algorithm that implements {\SinglePriceMechs} and achieves $O(\log\log \timeHorizon)$ regret for general one-to-many markets in \Cref{sec:GFT single Upper Bound Two-Sided Market}.

However, this upper bound does not extend to the general $\sellerNum$-seller, $\buyerNum$-buyer setting: even with just three sellers and three buyers, we show in \Cref{sec:GFT Two Lower Bound Two-Sided Market} that no algorithm implementing {\TwoPriceMechs} (that generalizes {\SinglePriceMechs}) can achieve sublinear regret.  
We therefore turn to the even more expressive {\SegPriceMechs} and design an algorithm that achieves $O_{\buyerNum,\sellerNum}(\log \log \timeHorizon)$ regret for the general $\sellerNum$-seller, $\buyerNum$-buyer market in \Cref{sec:GFT Seg Upper Bound Two-Sided Market}.

\subsubsection{Regret Hardness in One-to-Many Markets with {\SinglePriceMechs}}
\label{sec: GFT single Lower Bound Two-Sided Market}

We begin by proving an $\Omega\parenfix{\frac{\log\log \timeHorizon}{\log\log\log\log \timeHorizon}}$ lower bound on regret for algorithms implementing {\SinglePriceMech} in general two-sided markets.

Our construction uses the minimal extension beyond bilateral trade: a one-seller, two-buyer market with zero seller cost and adversarially chosen buyer values $\buyerValueAt{1}$ and $\buyerValueAt{2}$.  
The zero-cost seller accepts every posted price, and the two buyers can generate three possible feedback patterns: both accept, exactly one accepts, or both reject.
Assume without loss of generality that $\buyerValueAt{1} \leq \buyerValueAt{2}$. The optimal GFT is $\buyerValueAt{2}$, achieved by posting any price in $(\buyerValueAt{1}, \buyerValueAt{2}]$.
If exactly one buyer accepts, the algorithm has identified an optimal price in $(\buyerValueAt{1}, \buyerValueAt{2}]$ and incurs zero regret thereafter.  
If both buyers accept, the trade occurs between the seller and the lower-value buyer under our adversarial matching rule, resulting in immediate regret $\buyerValueAt{2} - \buyerValueAt{1}$.  
If both buyers reject, no trade occurs, and the immediate regret is $\buyerValueAt{2}$.

Given the single-seller, two-buyer market setup above, it remains to specify how the adversary chooses buyer values and feedback in response to any online algorithm.  
Our adversary construction builds on the phase-based hard instance of \citet{KL-03} for the (zero-cost-seller single-buyer profit-maximizing) pricing problem, but is necessarily more involved to account for the differences in our setting.

Intuitively, GFT maximization appears easier than the pricing problem in two respects. First, any price in $(\buyerValueAt{1}, \buyerValueAt{2}]$ is GFT-optimal, whereas pricing has a unique optimal price equal to the (single) buyer's value $\buyerValue$. Second, the regret structure is coarser: under underpricing (both buyers accept), regret is $\buyerValueAt{2} - \buyerValueAt{1}$ (vs.\ $\buyerValue-\postPriceAt{\ts}$ in pricing); under overpricing, regret is $\buyerValueAt{2}$ (vs.\ $\buyerValue$ in pricing).
Nevertheless, we prove a lower bound of $\Omega(\onemanySingleGftRegLb)$, which nearly matches the $\Omega(\log\log \timeHorizon)$ lower bound for pricing in \cite{KL-03}.  
The key ingredient is \emph{an uncertainty interval whose shrinkage is carefully controlled.}

Specifically, the adversary maintains an \emph{uncertainty interval} $\hvecsAt{\ts}$ at each round $\ts$, containing all pairs of buyer values that are consistent with the posted prices and feedback observed up to round $\ts$.
Offering a price outside $\hvecsAt{\ts}$ is counterproductive, so we may assume that the algorithm proceeds as follows: it posts an ascending sequence of prices until one is declined by both buyers; it then restricts attention to the updated uncertainty interval and again posts an ascending sequence of prices within that interval until a decline occurs; and so on. 

The rounds are divided into phases as follows: phase $\phaseNum$ begins immediately after phase $\phaseNum-1$ ends (with phase $1$ starting at round $1$), and ends either when the algorithm increases the price too aggressively or when the uncertainty interval becomes sufficiently small.
Specifically, let $\hvecsPhase{\phaseNum}$ denote the \emph{phase-start interval}, namely the uncertainty interval at the beginning of phase $\phaseNum$. 
Initially, we set $\hvecsAt{1} = \hvecsPhase{1} = [{15}/{16},1]$. 
Suppose that round $\ts$ lies in phase $\phaseNum$.

After receiving the posted price $\postPriceAt{\ts}$ from the algorithm at round $\ts$, the adversary updates the uncertainty interval $\hvecsAt{\ts}$ and the phase index $\phaseNum$ according to the two following rules
\begin{itemize}
\item If $\postPriceAt{\ts}-\postPriceAt{\ts-1} > \priceIncreasingThresholdAt{\phaseNum}$, where $\priceIncreasingThresholdAt{\phaseNum} \deq 2^{-2^{\phaseNum+2}}$ is the phase-dependent threshold, then the adversary makes both buyers reject, shrinks the uncertainty interval to $\hvecsAt{\ts+1}=\bigl[\postPriceAt{\ts-1},\postPriceAt{\ts-1}+\priceIncreasingThresholdAt{\phaseNum}\bigr]$, and advances to phase $\phaseNum+1$;
\item Otherwise, $\postPriceAt{\ts}-\postPriceAt{\ts-1} \le \priceIncreasingThresholdAt{\phaseNum}$, the adversary makes both buyers accept and shrinks the uncertainty interval to $\hvecsAt{\ts+1}=\hvecsAt{\ts}\cap[\postPriceAt{\ts},1]$. Once the length of the uncertainty interval shrinks to at most ${1}/{4}$ of the phase-start interval $\hvecsPhase{\phaseNum}$, the adversary advances to phase $\phaseNum+1$. We call this an \emph{acceptance-triggered phase transition}.
\end{itemize}

The adversary stops updating the uncertainty interval either after $\log\log\log\log(\timeHorizon)$ consecutive acceptance-triggered phase transitions or upon reaching the final round $\timeHorizon$. In either case, it fixes the two buyers' values to be the two endpoints of the current uncertainty interval.
The complete description of the adversary is given in \Cref{adv: GFTfixed}. The main theorem of this section is stated as follows. 

\setcounter{adversarycf}{0}
\begin{adversary}[t]
\caption{{Hard instance for
{\SinglePriceMech} in one-to–two market}}
\label[adversary]{adv: GFTfixed}
\SetAlgoLined
\SetNoFillComment
\KwIn{{online algorithm {\ALG} implementing {\SinglePriceMechs}}, time horizon~\timeHorizon}
\KwOut{seller cost $\sellerValueAt{1}$, buyers values $\buyerValueAt{1},\buyerValueAt{2}$, and traders' per-round feedback}

\tcc{Counter $\phaseCount$ records the number of consecutive acceptance-triggered phase transitions}
\vspace{2mm}
Initialize $\hvecsAt{1} \gets [{15}/{16},1]$, $\postPriceAt{0}\gets {15}/{16}$,  $\phaseNum\gets 1$,  $\hvecsPhase{\phaseNum} \gets [{15}/{16},1]$,  $\phaseCount\gets 0$

\For{each round $\ts=1$ \KwTo $\timeHorizon$}{
    Receive price $\postPriceAt{\ts}$ from algorithm $\ALG$
    
    Set $\priceIncreasingThresholdAt{\phaseNum}\gets 2^{-2^{\phaseNum+2}}$
    
    \If{$\postPriceAt{\ts}-\postPriceAt{\ts-1} > \priceIncreasingThresholdAt{\phaseNum}$}{
    Both buyers reject the price
    
    Update $\hvecsPhase{\phaseNum+1} \gets [\postPriceAt{\ts-1},\postPriceAt{\ts-1}+\priceIncreasingThresholdAt{\phaseNum}]$ and $\hvecsAt{\ts+1} \gets [\postPriceAt{\ts-1},\postPriceAt{\ts-1}+\priceIncreasingThresholdAt{\phaseNum}]$
    
    $\phaseCount \gets 0$, $\phaseNum\gets \phaseNum+1$ \tcp{phase transition by rejection}
    }
     \Else{
     Both buyers accept the price
     
     Update $\hvecsAt{\ts+1} \gets [\postPriceAt{\ts},1] \cap \hvecsAt{\ts}$
     
     \If{$|\hvecsAt{\ts+1}| \leq |\hvecsPhase{\phaseNum}|/4$}{
    Update $\hvecsPhase{\phaseNum+1} \gets  \hvecsAt{\ts+1}$ 

    $\phaseCount \gets \phaseCount + 1$, $\phaseNum\gets \phaseNum+1$ \tcp{acceptance-triggered phase transitions}
     }
     }
     \If{$\phaseCount \geq \log\log\log\log\timeHorizon$}{
     \KwBreak
     }
}
Set 
$\sellerValueAt{1}\gets 0$,
$\buyerValueAt{1}\gets \min~\hvecsAt{\ts+1}$ and $\buyerValueAt{2}\gets \max~\hvecsAt{\ts+1}$
\end{adversary} 
\begin{restatable}{theorem}{thmGFTSingleLowerBoundTwoSidedMarket}
    \label{thm: GFT single lower bound two-sided market}
In the two-sided market model, for gains-from-trade maximization, given any online learning algorithm \ALG implementing {\SinglePriceMech}, 
    \Cref{adv: GFTfixed} can construct a one-seller two-buyer instance such that {\ALG} suffers $\Omega\parenfix{\onemanySingleGftRegLb}$ regret.
\end{restatable}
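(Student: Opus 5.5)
The plan is to show that the adversary's responses are always consistent with some pair of buyer values, and then to split the algorithm's regret into \emph{rejection rounds} and \emph{acceptance rounds}. Consistency comes first. I will check by induction that $\hvecsAt{\ts}$ is a nonempty interval, that every price accepted so far is at most $\min \hvecsAt{\ts}$, and that every rejected price is above $\max\hvecsAt{\ts}$. The rejection rule shrinks the interval to $[\postPriceAt{\ts-1},\postPriceAt{\ts-1}+\priceIncreasingThresholdAt{\phaseNum}]$, and $\postPriceAt{\ts}>\postPriceAt{\ts-1}+\priceIncreasingThresholdAt{\phaseNum}$, so the rejected price lies above the new interval. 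The previous price $\postPriceAt{\ts-1}$ was accepted, or is the initial $15/16$. Hence fixing $\buyerValueAt{1}=\min\hvecsAt{\ts+1}$ and $\buyerValueAt{2}=\max\hvecsAt{\ts+1}$ reproduces all feedback. As the paper notes, prices below $\min\hvecsAt{\ts}$ are never useful, so the ascending-price normalization is without loss of generality.

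The cost accounting is as follows. Every rejection round costs $\buyerValueAt{2}\ge 15/16$. Every acceptance round costs $\buyerValueAt{2}-\buyerValueAt{1}=\abs{\hvecsAt{\mathrm{final}}}$, which I denote $L^\star$. It therefore suffices to show the following dichotomy: either there are $\Omega\parenfix{\onemanySingleGftRegLb}$ rejections, or the acceptance rounds alone contribute at least that much regret. Write $\psi\deq\log\log\log\log\timeHorizon$.

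The core estimate is a \emph{block lemma}. Suppose phase $\phaseNum$ ends by rejection, so $\abs{\hvecsPhase{\phaseNum+1}}=\priceIncreasingThresholdAt{\phaseNum}$, and the next $j\ge 2$ phases end by acceptance. In phase $\phaseNum+j$ the start length is at least $\priceIncreasingThresholdAt{\phaseNum}4^{-(j-1)}$. Each accepted step moves the price by at most $\priceIncreasingThresholdAt{\phaseNum+j}=\priceIncreasingThresholdAt{\phaseNum}^{2^{j}}$. So reaching one quarter of the start length takes at least $\tfrac34\priceIncreasingThresholdAt{\phaseNum}4^{-(j-1)}/\priceIncreasingThresholdAt{\phaseNum}^{2^j}$ acceptances. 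Any later rejection resets the interval to length $\priceIncreasingThresholdAt{\phaseNum'}$ with $\phaseNum'>\phaseNum$, which is tiny, so I must bound $L^\star$ from below carefully. The cleanest route is to argue about the \emph{terminal} block. If the adversary stops after $\psi$ consecutive acceptance-triggered transitions that follow a rejection at phase $\phaseNum$, then $L^\star\ge\priceIncreasingThresholdAt{\phaseNum}4^{-\psi}$. The acceptances in phase $\phaseNum+2$ alone then give regret at least
\[
\tfrac{3}{16}\,\priceIncreasingThresholdAt{\phaseNum}^{2-4}\,\priceIncreasingThresholdAt{\phaseNum}4^{-\psi}\;=\;\tfrac{3}{16}\,4^{-\psi}\,\priceIncreasingThresholdAt{\phaseNum}^{-1},
\]
and $\priceIncreasingThresholdAt{\phaseNum}^{-1}=2^{2^{\phaseNum+2}}$ dominates $4^{\psi}=(\log\log\log\timeHorizon)^2$ unless $\phaseNum$ is very small. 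The same computation handles the case where the horizon ends inside a long acceptance stretch. In that case a single phase with $\phaseNum\le c\log\log\timeHorizon$ absorbs $\Omega(\timeHorizon/\text{poly}\log)$ rounds, each costing at least $\priceIncreasingThresholdAt{\phaseNum-1}4^{-\psi}\ge\timeHorizon^{-1/2}$, for a suitable constant $c$.

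Putting these together: if the adversary terminates early or $\timeHorizon$ runs out while $\phaseNum\le c\log\log\timeHorizon$, the acceptance regret is already large (at least a constant, and in fact much larger). Otherwise the phase index reaches $\Omega(\log\log\timeHorizon)$. Any run of more than $\psi$ consecutive phases contains a rejection, since otherwise the adversary would have stopped. Hence the number of rejection-triggered transitions is at least $\Omega(\log\log\timeHorizon)/(\psi+1)$, each costing at least $15/16$, which gives $\Omega\parenfix{\onemanySingleGftRegLb}$.

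The main obstacle I expect is making the lower bound on $L^\star$ and the counting of acceptances within a phase rigorous simultaneously, together with the small-$\phaseNum$ edge cases. The acceptances counted in a block must actually be charged at the final interval length, which can be much smaller than the interval length at that time. For this I would first try to charge only the terminal block and the phase in which $\timeHorizon$ expires, and to count all other blocks through their rejection rounds only. This is exactly why the $\psi$ denominator appears and why I do not expect to recover the full $\log\log\timeHorizon$.
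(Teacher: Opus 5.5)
Your architecture matches the paper's: you charge $\buyerValueAt{2}\ge 15/16$ per rejection and $L^\star$ per acceptance, and you split on whether the counter $\phaseCount$ reaches $\psi$. The gap is in the early-termination case. You charge only the acceptances of phase $k+2$ of the terminal block, which gives $\tfrac{3}{16}4^{-\psi}\priceIncreasingThresholdAt{k}^{-1}$ (or $\priceIncreasingThresholdAt{k}^{-2}$ with your own counts). For small $k$ this is not ``at least a constant.'' For $k\in\{0,1\}$ it is $O\big((\log\log\log\timeHorizon)^{-2}\big)$, which tends to zero. It stays below $\log\log\timeHorizon/\psi$ for every $k$ up to roughly $\psi$. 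This case is reachable. Since $|\hvecsPhase{1}|=1/16=\priceIncreasingThresholdAt{0}$, a terminal block with no preceding rejection is your block with $k=0$. Consider an algorithm that raises its price by exactly $\priceIncreasingThresholdAt{\phaseNum}$ every round from round $1$. It never triggers a rejection, so the adversary stops after $\psi$ acceptance-triggered transitions, within $O\big((\log\log\timeHorizon)^4\big)$ rounds. On that run your argument certifies only $o(1)$ regret. You flag the small-$k$ edge cases at the end but never close them.

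The missing observation is that the block's regret sits in its \emph{last} phase, and $\psi=\log\log\log\log\timeHorizon$ is calibrated exactly for this.
\begin{itemize}
\item Phase $k+\psi$ starts with length at least $4^{-\psi}\priceIncreasingThresholdAt{k}$ and has threshold $\priceIncreasingThresholdAt{k+\psi}=\priceIncreasingThresholdAt{k}^{2^{\psi}}$.
\item So it contains at least $\tfrac34 4^{-\psi}\priceIncreasingThresholdAt{k}^{1-2^{\psi}}$ acceptances, each charged $L^\star\ge 4^{-\psi-1}\priceIncreasingThresholdAt{k}$.
\item Using $\priceIncreasingThresholdAt{k}\le\priceIncreasingThresholdAt{0}=2^{-4}$ and $2^{\psi}=\log\log\log\timeHorizon$, the product is at least $\tfrac{3}{16}4^{-2\psi}2^{4(2^{\psi}-2)}=\Omega\big((\log\log\timeHorizon)^4/(\log\log\log\timeHorizon)^4\big)$, uniformly in $k\ge 0$.
\end{itemize}
This is the paper's case (1), and with this substitution the rest of your plan goes through.

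Your separate case for the horizon expiring at phase index at most $c\log\log\timeHorizon$, where you charge all acceptances at $L^\star$, is more careful than the paper. The paper's \Cref{lem: phase number lower bound} bounds phase lengths only by assuming the algorithm raises its price by exactly $\priceIncreasingThresholdAt{\phaseNum}$ each round. An algorithm that stalls at or below $\min\hvecsAt{\ts}$ violates that assumption, and your case handles it.

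One caveat you share with the paper: your start-length bound $|\hvecsPhase{k+j}|\gtrsim 4^{-j}\priceIncreasingThresholdAt{k}$ requires the increment test to be measured from the last accepted price $\min\hvecsAt{\ts}$. Under the literal test $\postPriceAt{\ts}-\postPriceAt{\ts-1}$, the round after a rejection is compared against the rejected price. A price at the top of the new interval is then accepted and collapses the interval to a point, after which no further regret accrues.
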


To prove the theorem, we first show that the total number of phases over $\timeHorizon$ rounds is at least $\Omega(\log\log \timeHorizon)$ in \Cref{lem: phase number lower bound}. 
Then we bound the regret in two cases based on whether an acceptance-triggered phase transition happens.

\begin{restatable}{lemma}{lemPhaseNumberLowerBound}
    \label{lem: phase number lower bound}
    If there is no acceptance-triggered phase transition, then the total number of phases over $\timeHorizon$ rounds is at least $\Omega(\log\log \timeHorizon)$.
\end{restatable}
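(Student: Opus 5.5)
The idea is to follow the Kleinberg--Leighton counting argument. Every phase ends by a rejection, and each phase can only last a bounded number of rounds. Since the horizon has $\timeHorizon$ rounds, this forces many phases.

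\textbf{Step 1: phase-start intervals are tiny.} Assume no acceptance-triggered phase transition occurs. Then every phase transition is triggered by a rejection. By the adversary's rule, the phase-start interval of phase $\phaseNum+1$ is $\hvecsPhase{\phaseNum+1}=[\postPriceAt{\ts-1},\postPriceAt{\ts-1}+\priceIncreasingThresholdAt{\phaseNum}]$. Hence $|\hvecsPhase{\phaseNum+1}| = \priceIncreasingThresholdAt{\phaseNum}=2^{-2^{\phaseNum+2}}$ for every $\phaseNum\ge 1$. For phase $1$ we have $|\hvecsPhase{1}|=1/16=2^{-2^{2}}$.

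\textbf{Step 2: each phase is short.} Consider phase $\phaseNum$. Within it, every accepted price satisfies $\postPriceAt{\ts}-\postPriceAt{\ts-1}\le\priceIncreasingThresholdAt{\phaseNum}$. I will assume, as the paper does, that the algorithm never posts outside the current uncertainty interval. Posting below the lower end is wasted and can be treated as nonprogress; I would argue WLOG it can be removed or that it only helps the count. Then each accepted round raises the lower endpoint of the interval by at least the price increment. Note that the phase ends by rejection before the interval shrinks to $1/4$ of $|\hvecsPhase{\phaseNum}|$, since no acceptance-triggered transition occurs.

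I need a bound on the number of rounds, not on the amount of progress. The clean route is to bound the number of rounds in phase $\phaseNum$ by the number of increments of size at most $\priceIncreasingThresholdAt{\phaseNum}$ needed to traverse the interval. This gives
\[
|\hvecsPhase{\phaseNum}|/\priceIncreasingThresholdAt{\phaseNum}+1 \;\approx\; 2^{-2^{\phaseNum+1}}/2^{-2^{\phaseNum+2}} \;=\; 2^{2^{\phaseNum+1}}.
\]
This is where I expect the main obstacle. Small or repeated price steps with no progress could make the phase arbitrarily long, and such steps are not excluded by the interval length alone. The fix I would try is to count rounds differently. If the algorithm stalls with many accepted rounds inside a phase, those rounds are part of the phase regardless of progress. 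So the lemma should instead use that the total number of rounds is $\timeHorizon$ while phases advance only when a rejection occurs.

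\textbf{Step 3: the contrapositive via regret.} I would make the argument work by reading the lemma in its intended use. If the algorithm stays in a phase, it must post prices in the uncertainty interval. An acceptance at price $\postPriceAt{\ts}$ with interval $\hvecsAt{\ts}$ costs regret about $|\hvecsAt{\ts}|\ge\frac34|\hvecsPhase{\phaseNum}|$, since the buyer values will be the endpoints. So rounds in a phase beyond $2^{2^{\phaseNum+1}}$ are either accounted for as progress or charged as regret. For the lemma itself, I conclude as follows. If the number of phases is $K$, the total number of rounds is at most $\sum_{\phaseNum\le K}2^{2^{\phaseNum+1}}\le 2\cdot 2^{2^{K+1}}$ in the progress-counting regime. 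Setting this at least $\timeHorizon$ gives $K\ge\log\log\timeHorizon-O(1)=\Omega(\log\log\timeHorizon)$.

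Unavoidable stalling rounds are handled by the WLOG reduction that the algorithm strictly increases the price within a phase, with an increment comparable to the allowed threshold. Without such steps, regret charges of $\Omega(|\hvecsPhase{\phaseNum}|)$ per wasted round would already dominate.
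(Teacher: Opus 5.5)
Your skeleton matches the paper's: compute the phase-start lengths, bound the length of phase $k$ by roughly $|P_k|/\delta_k=2^{2^{k+1}}$ (the paper gets $3\cdot 2^{2^{k+2}}$), and invert the doubly exponential sum. Like the paper, you can only get the per-phase bound by assuming the algorithm raises its price by about $\delta_k$ in every accepted round. You correctly spotted that stalling is the obstacle, and that is where the gap is. For a statement about the \emph{number of phases}, this assumption cannot be a genuine WLOG. That applies equally to the paper's sentence ``the phase transition is only delayed,'' because the statement read literally is false. An algorithm that posts $15/16$ in every round is accepted every round and never shrinks $\Phi_t$. It never triggers any transition and stays in phase $1$ for all $T$ rounds. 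Your Step~3 instinct, trading phases against regret, is the right one, but the charging is wrong. You charge a stalled accepted round in phase $k$ about $|\Phi_t|$, which you claim is $\ge\frac34|P_k|$; the phase condition actually only gives $|\Phi_t|>\frac14|P_k|$. More importantly, the buyer values are the endpoints of the \emph{final} interval. So every accepted round, in whichever phase it occurs, costs exactly $v_2-v_1=|\Phi_{\mathrm{final}}|$. If many phases follow a stall, this cost is of order $\delta_{K-1}$ for the last phase $K$, far below $|P_k|$. Hence ``$\Omega(|P_k|)$ per wasted round'' is not available.

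The repair is to charge globally against the final interval and prove a regret dichotomy rather than a phase count. Under the literal hypothesis, let $K$ be the number of phases. There are exactly $K-1$ rejection rounds, so at least $T-K$ accepted rounds, and $|\Phi_{\mathrm{final}}|>|P_K|/4=2^{-2^{K+1}-2}$. If $K\le\frac12\log\log T$, the accepted rounds alone cost at least $(T-K)\,2^{-2\sqrt{\log T}-2}=T^{1-o(1)}$. Otherwise the $K-1=\Omega(\log\log T)$ rejections each cost $v_2\ge 15/16$. This regret statement, not a phase count, is what Case~(2) of \Cref{thm: GFT single lower bound two-sided market} needs, and it makes no assumption on step sizes. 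Note also that the paper's proof is written for the weaker hypothesis actually used in that case: acceptance-triggered transitions may occur, just never $\log\log\log\log T$ of them in a row. That is why it only has the upper bound $|P_k|\le\delta_{\ell-1}/4^{k-\ell}$. Your equality $|P_{k+1}|=\delta_k$ covers only the case with no such transitions at all, so that extension would still be owed.
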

\begin{proof}
Fix any phase $\phaseNum$ that \Cref{adv: GFTfixed} enters, and let $\phaseNumVar$ be the most recent phase before $\phaseNum$ that was entered via rejection (i.e., by the first rule).
    We can upper bound the length of phase-start interval $\hvecsPhase{\phaseNum}$ as
    \begin{align}\label{eq:phase-start interval UB}
        |\hvecsPhase{\phaseNum}|
        \overset{(a)}{\le} 
        \frac{\priceIncreasingThresholdAt{\phaseNumVar-1}}{4^{\phaseNum-\phaseNumVar}}
        \overset{(b)}{\le} 
        \frac{\priceIncreasingThresholdAt{\max\{\phaseNum-\log\log\log\log \timeHorizon, 1\} - 1}}
        {4^{\phaseNum-\max\{\phaseNum-\log\log\log\log \timeHorizon,1\}}},
    \end{align}
    where (a) follows from the fact that the phase-start interval shrinks by a factor of $4$ for each acceptance-triggered phase transition, and (b) holds since $\phaseNumVar \ge \max\set{\phaseNum-\log\log\log\log \timeHorizon,1}$ by the definition of $\phaseNumVar$ and our assumption that there is no acceptance-triggered phase transition, and the function $x \mapsto \priceIncreasingThresholdAt{x}/4^{\phaseNum-x}$ is non-increasing.

    When upper bounding the number of rounds needed to an acceptance-triggered phase transition, we may assume without loss of generality that the algorithm increases its price by exactly $\priceIncreasingThresholdAt{\phaseNum}$ in each such round; otherwise the phase transition is only delayed (with the same accept/reject feedback), which cannot weaken the lower-bound argument below.
    An acceptance-triggered phase transition occurs when the uncertainty interval shrinks by at least ${3}/{4}$ of the length of the phase-start interval.
    Therefore, the number of rounds required to transition from phase $\phaseNum$ to phase $\phaseNum+1$ is at most $3|\hvecsPhase{\phaseNum}|/4\priceIncreasingThresholdAt{\phaseNum}$, which can be further upper bounded using \cref{eq:phase-start interval UB} as
    \begin{align*}
        \frac{3\cdot
            \frac{\priceIncreasingThresholdAt{\max\{\phaseNum-\log\log\log\log \timeHorizon, 1\}}}
                {4^{\phaseNum-\max\{\phaseNum-\log\log\log\log \timeHorizon,1\}}}}
            {4\priceIncreasingThresholdAt{\phaseNum}}
        =
        \begin{cases}
        3 \cdot 2^{2^{\phaseNum+2} -2\phaseNum - 4}
        & \text{if } \phaseNum \le \log\log\log\log \timeHorizon + 1,\\
        3 \cdot 2^{2^{\phaseNum+2} - 2^{\phaseNum-\log\log\log\log \timeHorizon+1}
                - 2\log\log\log\log \timeHorizon - 2}
        & \text{if } \phaseNum > \log\log\log\log \timeHorizon + 1.
        \end{cases}
    \end{align*}
    In both cases, the number of rounds in phase $\phaseNum$ is upper bounded by $3 \cdot 2^{2^{\phaseNum+2}}$.
    Thus, the total number of phases over $\timeHorizon$ rounds is at least $\Omega(\log\log \timeHorizon)$ if there is no acceptance-triggered phase transition.
\end{proof}

Then we could prove \Cref{thm: GFT single lower bound two-sided market} as follows.

\begin{proof} [Proof of \Cref{thm: GFT single lower bound two-sided market}]
We divide the analysis into two cases based on the counter $\phaseCount$, which tracks the number of consecutive acceptance-triggered phase transitions: $(1)$ counter $\phaseCount$ reaches $\log\log\log\log \timeHorizon$, which implies there are consecutive $\log\log\log\log \timeHorizon$ phases which is entered into through accepting the price, $(2)$ counter $\phaseCount$ never reaches $\log\log\log\log \timeHorizon$. 

For case $(2)$, \Cref{lem: phase number lower bound} implies that there are at least $\Omega(\frac{\log\log \timeHorizon}{\log\log\log\log \timeHorizon})$ phases that end with a rejection, which yields $\Omega(\frac{\log\log \timeHorizon}{\log\log\log\log \timeHorizon})$ regret, since each rejection incurs regret at least ${15}/{16}$. This holds because the uncertainty interval is initialized to $[15/16,1]$ and only shrinks thereafter. 

\newcommand{\phaseNumVarTwo}{\ell\primed}
For case $(1)$, we assume phase $\phaseNumVarTwo$ is entered by rejection and starting from phase $\phaseNumVarTwo$, there are consecutive $\log\log\log\log \timeHorizon$ acceptance-triggered phases. We have $|\hvecsPhase{\phaseNumVarTwo}|=\priceIncreasingThresholdAt{\phaseNumVarTwo-1}$ and $|\hvecsPhase{\phaseNumVarTwo+i}| = |\hvecsPhase{\phaseNumVarTwo+i-1}| - \ceil{\frac{3 \cdot |\hvecsPhase{\phaseNumVarTwo+i-1}|}{ 4 \cdot \priceIncreasingThresholdAt{\phaseNumVarTwo+i-1} }} \cdot \priceIncreasingThresholdAt{\phaseNumVarTwo+i-1}  \geq   \frac{1}{4} \cdot |\hvecsPhase{\phaseNumVarTwo+i-1}| - \priceIncreasingThresholdAt{\phaseNumVarTwo+i-1} $ for any $i \in [\log\log\log\log \timeHorizon]$, since there are $\ceil{\frac{3 \cdot |\hvecsPhase{\phaseNumVarTwo+i-1}|}{ 4 \cdot \priceIncreasingThresholdAt{\phaseNumVarTwo+i-1} }}$ rounds from phase $\phaseNumVarTwo+i-1$ to phase $\phaseNumVarTwo+i$. Thus, there holds
    \begin{align*}
    |\hvecsPhase{\phaseNumVarTwo+i}| \geq \left(\frac{1}{4} \right)^{i} \cdot \left( \priceIncreasingThresholdAt{\phaseNumVarTwo-1} - \sum\nolimits_{j\in[i]} \frac{\priceIncreasingThresholdAt{\phaseNumVarTwo+j-1}}{\left(\frac{1}{4} \right)^{j}} \right) \geq \left(\frac{1}{4} \right)^{i+1} \cdot  \priceIncreasingThresholdAt{\phaseNumVarTwo-1}.
    \end{align*}
    The regret suffered in these consecutive $\log\log\log\log \timeHorizon$ phases is
    \begin{align*}
    & \left( \sum\nolimits_{i\in[\log\log\log\log \timeHorizon]} \ceil{\frac{3 \cdot |\hvecsPhase{\phaseNumVarTwo+i-1}|}{ 4 \cdot \priceIncreasingThresholdAt{\phaseNumVarTwo+i-1} }} \right) |\hvecsPhase{\phaseNumVarTwo+\log\log\log\log \timeHorizon}|\\ \geq & \left( \sum\nolimits_{i\in[\log\log\log\log \timeHorizon]} \frac{3}{4} \cdot \left(\frac{1}{4}\right)^{i} \frac{\priceIncreasingThresholdAt{\phaseNumVarTwo-1}}{ \priceIncreasingThresholdAt{\phaseNumVarTwo+i-1}} \right) \cdot \left(\frac{1}{4}\right)^{\log\log\log\log \timeHorizon+1}  \cdot \priceIncreasingThresholdAt{\phaseNumVarTwo-1} \\
     \geq &  \frac{3}{4} \cdot \left(\frac{1}{4}\right)^{\log\log\log\log \timeHorizon} \frac{\priceIncreasingThresholdAt{\phaseNumVarTwo-1}}{ \priceIncreasingThresholdAt{\phaseNumVarTwo+\log\log\log\log \timeHorizon-1}} \cdot \left(\frac{1}{4}\right)^{\log\log\log\log \timeHorizon+1}  \cdot \priceIncreasingThresholdAt{\phaseNumVarTwo-1} \\
     \geq & \frac{3}{4} \cdot 2^{2^{\log\log\log\log \timeHorizon+2}-4\log\log\log\log \timeHorizon -2} = \Omega(\log\log \timeHorizon),
    \end{align*}
    here the third inequality follows from the value as a function of $\phaseNumVarTwo$ is nondecreasing. Combining the two cases above, we obtain that the total regret of \Cref{adv: GFTfixed} is at least $\Omega(\frac{\log\log \timeHorizon}{\log\log\log\log \timeHorizon})$.
\end{proof}

\subsubsection{No-Regret Algorithm in One-to-Many Markets with {\SinglePriceMechs}}
\label{sec:GFT single Upper Bound Two-Sided Market}

Next, we consider GFT maximization in one-to-many markets and design an online learning algorithm (\OneToManyGFT) that implements \SinglePriceMech and achieves nearly optimal regret of $O(\log\log\timeHorizon)$. Without loss of generality, we consider a market with a single seller and multiple buyers; the case with a single buyer and multiple sellers is symmetric.

\xhdr{Algorithm overview.}
\OneToManyGFT builds on similar ideas as the profit-optimization algorithm (\BTPF in \Cref{sec:bilateral trade:profit}) for the bilateral trade model, while addressing additional challenges arising from multiple buyers.
The algorithm consists of two exploration phases, followed by an exploitation phase.
At each round $\ts$, it maintains an uncertainty interval $\hvecsAt{\ts}=[\LB{\phi}_{\ts},\UB{\phi}_{\ts}]$ to include the GFT maximization price, i.e., any price between $\max\{\text{second-highest buyer value},\ \text{seller cost}\}$ and the highest buyer value. The uncertainty interval is initialized as $[0,1]$ and shrinks over time based on the feedback.

In the first exploration phase, it takes binary search until the seller and at least one buyer simultaneously accept the posted price, which indicates that this price is below the highest buyer value $\max_{\buyerIndex\in \buyerSet} \buyerValueAt{\buyerIndex}$ but above the seller value $\sellerValueAt{1}$.
It then moves to the second exploration phase, aiming to pin down a {\SinglePriceMech} whose GFT is within $O(1/\timeHorizon)$ of the optimum GFT benchmark.
Specifically, it uses the conservative binary search as in \Cref{sec:bilateral trade:profit} to balance the improvement (shrinkage) of the uncertainty interval against
the immediate regret under either possible buyers' decision (all buyers reject or at least one buyer accepts).
Finally, the algorithm enters into the exploitation phase and implements the learned near-optimal \SinglePriceMech for the remaining rounds.
A formal description of the \OneToManyGFT is in \Cref{alg:GFTfixed}. 
\begin{algorithm}
\caption{\OneToManyGFT}
\label[algorithm]{alg:GFTfixed}
\SetAlgoLined
\SetNoFillComment
\KwIn{time horizon $\timeHorizon$}
\KwOut{{\SinglePriceMechs} in one-to-many market for each $\ts \in [\timeHorizon]$}
\KwAux{$\feedbackSellerAt{\ts} = (\feedbackSellerAtOf{\ts}{\sellerIndex})_{\sellerIndex \in \sellerSet}$ and $\feedbackBuyerAt{\ts} = (\feedbackBuyerAtOf{\ts}{\buyerIndex})_{\buyerIndex \in \buyerSet}$: the seller-side and buyer-side indicator vectors representing each trader’s accept/reject decision, $\num{\feedbackSellerAt{\ts}}$ and $\num{\feedbackBuyerAt{\ts}}$: the number of accepting sellers and buyers}

\vspace{2mm}
\vspace{2mm}
Initialize uncertainty interval {$\hvecs_1 = [\LB{\phi}_1,\UB{\phi}_1] \gets [0,1]$ }

\vspace{2mm}
\tcc{exploration phase I - identify threshold between $\sellerValueAt{1}$ and $\max_{\buyerIndex\in \buyerSet} \buyerValueAt{\buyerIndex}$}
\For{each round $\ts=1$ to $\timeHorizon$}{

Implement \SinglePriceMech with  $\postPriceAt{\ts}\gets(\LB{\phi}_{\ts}+\UB{\phi}_{\ts})/2$
    
    Observe feedback $\feedbackAt{\ts} = (\feedbackSellerAt{\ts}, \feedbackBuyerAt{\ts})$

    \If{$\num{\feedbackSellerAt{\ts}} = \num{\feedbackBuyerAt{\ts}}$ }{
        Set exploitation price $\sellerPrice\primed\gets \postPriceAt{\ts}$ and go to the exploitation phase 
    }
    \ElseIf{$\num{\feedbackSellerAt{\ts}} = 0$ and $\num{\feedbackBuyerAt{\ts}} > 0$ }{
        Update $\hvecsAt{\ts+1} \gets [({\LB{\phi}_{\ts}+\UB{\phi}_{\ts}})/{2},\UB{\phi}_{\ts}]$
    }
    \ElseIf{$\num{\feedbackSellerAt{\ts}} = 1$ and $\num{\feedbackBuyerAt{\ts}}=0$ }{
        Update $\hvecsAt{\ts+1} \gets [\LB{\phi}_{\ts},({\LB{\phi}_{\ts}+\UB{\phi}_{\ts}})/{2}]$
    }
    \ElseIf{$\num{\feedbackSellerAt{\ts}} = 1$ and $\num{\feedbackBuyerAt{\ts}} > 1$}{
         Update $\hvecsAt{\ts+1} \gets [({\LB{\phi}_{\ts}+\UB{\phi}_{\ts}})/{2},\UB{\phi}_{\ts}]$

        \KwBreak \tcp{let  $\tsBreak$ be the current round}
    }
}

\vspace{2mm}
\tcc{exploration phase II - identify near-optimal \SinglePriceMech}
\For{each round $\ts=\tsBreak +1$ to $\timeHorizon$}{
    Set index $\buyerGapLevelAt{\ts}\gets \lfloor 1+\log\log(\abs{\hvecs_\ts})^{-1}\rfloor$ 

    \If{$\buyerGapLevelAt{\ts} > \log \log \timeHorizon$}{
        Set exploitation price $\sellerPrice\primed\gets \LB{\phi}_{\ts}$

        \KwBreak
    }

    Implement \SinglePriceMech with price $\postPriceAt{\ts}\gets\LB{\phi}_{\ts}+2^{-2^{\buyerGapLevelAt{\ts}}}$ in round $\ts$

    Observe feedback $\feedbackAt{\ts} = (\feedbackSellerAt{\ts}, {\feedbackBuyerAt{\ts}})$
      
    \If{$\num{\feedbackBuyerAt{\ts}}=0$ }{
    Update $\hvecsAt{\ts+1} \gets [\LB{\phi}_{\ts},\LB{\phi}_{\ts}+2^{-2^{\buyerGapLevelAt{\ts}}}]$
    }
    
    \ElseIf{$\num{\feedbackBuyerAt{\ts}}=1$}{
    Set exploitation price $\sellerPrice\primed\gets \postPriceAt{\ts}$
    
    \KwBreak    
    }
    
     \ElseIf{$\num{\feedbackBuyerAt{\ts}} > 1$}{
         Update $\hvecsAt{\ts+1} \gets [\LB{\phi}_{\ts}+2^{-2^{\buyerGapLevelAt{\ts}}},\UB{\phi}_{\ts}]$
}
    }
\vspace{2mm}
\tcc{exploitation phase}
Implementing \SinglePriceMech with price $\sellerPrice\primed$ for all remaining rounds

\end{algorithm}

We first verify that the uncertainty interval maintained by \Cref{alg:GFTfixed} always contains an optimal GFT-maximizing price in \Cref{lem: GFTfixed-consistency}. We then bound the regret incurred in each phase to complete the proof.
\begin{restatable}{lemma}{lemGFTfixedConsistency}
\label{lem: GFTfixed-consistency}
Let $\buyerValueAt{(1)}$ denote the largest value among $\{\buyerValueAt{\buyerIndex}\}_{\buyerIndex\in[\traderNum]}$ and $\sellerValueAt{1}$ denote the seller's cost.
For each $\ts\in\tsSet$ and the uncertainty interval $\hvecsAt{\ts}$ maintained by \Cref{alg:GFTfixed}:
(i) in exploration phase I, $[\sellerValueAt{1}, \buyerValueAt{(1)}]\subseteq \hvecsAt{\ts}$;
(ii) in exploration phase II,  $\buyerValueAt{(1)}\in \hvecsAt{\ts}$; moreover, the seller's cost satisfies $\sellerValueAt{1}\le \LB{\phi}_\ts$, so the seller always accepts the posted price.
\end{restatable}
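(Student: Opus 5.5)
We argue by induction on $\ts$, following the case structure of \Cref{alg:GFTfixed}. Throughout, we use that the single seller accepts price $\postPrice$ iff $\sellerValueAt{1}\le\postPrice$, and that at least one buyer accepts iff $\buyerValueAt{(1)}\ge\postPrice$. In the nontrivial case we may assume $\sellerValueAt{1}\le\buyerValueAt{(1)}$, as in the bilateral setting; otherwise every price yields zero GFT and the claim is vacuous.

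\textbf{Phase I.} The base case holds since $\hvecsAt{1}=[0,1]$. Suppose $[\sellerValueAt{1},\buyerValueAt{(1)}]\subseteq\hvecsAt{\ts}$ and let $\postPriceAt{\ts}$ be the midpoint. If the exploitation branch ($\num{\feedbackSellerAt{\ts}}=\num{\feedbackBuyerAt{\ts}}$) fires, phase I ends and nothing needs to be checked. Otherwise there are three cases.
\begin{itemize}
\item The seller rejects and some buyer accepts. Then $\sellerValueAt{1}>\postPriceAt{\ts}$, so $[\sellerValueAt{1},\buyerValueAt{(1)}]\subseteq[\postPriceAt{\ts},\UB{\phi}_\ts]$.
\item The seller accepts and all buyers reject. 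Then $\buyerValueAt{(1)}<\postPriceAt{\ts}$, so $[\sellerValueAt{1},\buyerValueAt{(1)}]\subseteq[\LB{\phi}_\ts,\postPriceAt{\ts}]$.
\item The seller accepts and at least two buyers accept. This is the break case, which updates to $[\postPriceAt{\ts},\UB{\phi}_\ts]$. Here we do not need containment of $\sellerValueAt{1}$; we only need $\sellerValueAt{1}\le\postPriceAt{\ts}=\LB{\phi}_{\tsBreak+1}$ and $\buyerValueAt{(1)}\ge\postPriceAt{\ts}$. This case is exactly the transition into phase II and starts the second invariant.
\end{itemize}
The remaining feedback patterns (seller rejects and no buyer accepts; seller accepts and exactly one buyer accepts) are impossible or covered by the equality branch. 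Seller-rejects-with-no-buyer is impossible because it would force $\buyerValueAt{(1)}<\postPriceAt{\ts}<\sellerValueAt{1}$.

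\textbf{Phase II.} The invariant is $\sellerValueAt{1}\le\LB{\phi}_\ts\le\buyerValueAt{(1)}\le\UB{\phi}_\ts$. The left end $\LB{\phi}_\ts$ never decreases, so $\sellerValueAt{1}\le\LB{\phi}_\ts\le\postPriceAt{\ts}$ gives seller acceptance every round. Inductively, with $\postPriceAt{\ts}=\LB{\phi}_\ts+2^{-2^{\buyerGapLevelAt{\ts}}}$:
\begin{itemize}
\item If no buyer accepts, then $\buyerValueAt{(1)}<\postPriceAt{\ts}$, and the new interval $[\LB{\phi}_\ts,\postPriceAt{\ts}]$ still contains $\buyerValueAt{(1)}$.
\item If at least two buyers accept, then $\buyerValueAt{(1)}\ge\postPriceAt{\ts}$, so $[\postPriceAt{\ts},\UB{\phi}_\ts]$ contains $\buyerValueAt{(1)}$, and the lower end still dominates $\sellerValueAt{1}$.
\item If exactly one buyer accepts, the phase ends.
\end{itemize}

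\textbf{Main obstacle.} The one point requiring care is checking that the new price stays inside the current interval, i.e.\ $2^{-2^{\buyerGapLevelAt{\ts}}}\le|\hvecsAt{\ts}|$, so that the update does not enlarge the interval. This follows from the definition of $\buyerGapLevelAt{\ts}$: we have $2^{\buyerGapLevelAt{\ts}-1}\le\log|\hvecsAt{\ts}|^{-1}$, which gives $2^{-2^{\buyerGapLevelAt{\ts}}}\le|\hvecsAt{\ts}|^2\le|\hvecsAt{\ts}|$.
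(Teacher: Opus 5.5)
Your proof is correct and follows the paper's own argument. It is the same induction with a case split on the feedback pattern in phase I. The break round gives $c_1\le p_{T_1}=\underline{\phi}_{T_1+1}\le v_{(1)}$. Phase II then uses monotonicity of $\underline{\phi}_t$ together with the two update cases.

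The one thing to fix is your ``main obstacle'' paragraph, which is both unnecessary and derived incorrectly.

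\emph{Unnecessary:} containment never depends on where $p_t$ sits relative to $\bar{\phi}_t$. In the no-buyer case you already have $\underline{\phi}_t\le v_{(1)}<p_t$. In the at-least-two-buyers case you have $p_t\le v_{(1)}\le\bar{\phi}_t$.

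\emph{Wrong direction:} from $2^{k_t-1}\le\log|\Phi_t|^{-1}$ you get $|\Phi_t|\le 2^{-2^{k_t-1}}$, i.e.\ $2^{-2^{k_t}}\ge|\Phi_t|^2$, not $\le$. For example, $|\Phi_t|=1/3$ gives $k_t=1$ and $1/4>1/9$. The true bound $2^{-2^{k_t}}<|\Phi_t|$ comes from the other side of the floor, $k_t>\log\log|\Phi_t|^{-1}$.
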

\begin{proof}
For the first part in exploration phase I, if $\num{\feedbackSellerAt{\ts}} = 0$ and $\num{\feedbackBuyerAt{\ts}} > 0$, this implies $\sellerValueAt{1}  \geq \frac{\LB{\phi}_{\ts}+\UB{\phi}_{\ts}}{2} $ and $\buyerValueAt{(1)}  \geq \frac{\LB{\phi}_{\ts}+\UB{\phi}_{\ts}}{2} $. Therefore, the updated $\hvecsAt{\ts+1}=[\frac{\LB{\phi}_{\ts}+\UB{\phi}_{\ts}}{2}, \UB{\phi
}_\ts]$ contains $\sellerValueAt{1}$ and $\buyerValueAt{(1)}$. If $\num{\feedbackSellerAt{\ts}} = 1$ and $\num{\feedbackBuyerAt{\ts}}=0$, we can prove the statement similarly. 

For the second part about exploration phase II, according to \Cref{alg:GFTfixed}, the condition entering into exploration phase II is when taking binary search, the feedback $\feedbackSellerAt{\ts}  = 1$ and $\num{\feedbackBuyerAt{\ts}} > 1$ is observed and the uncertainty interval $\hvecsAt{\ts+1}$ is updated as $[\frac{\LB{\phi}_{\ts}+\UB{\phi}_{\ts}}{2}, \UB{\phi
}_\ts]$. Therefore, we have $\buyerValueAt{(1)}\geq \postPriceAt{\ts} = \LB{\phi}_{\ts+1}$ and $\sellerValueAt{1} \leq \postPriceAt{\ts} = \LB{\phi}_{\ts+1}$. After entering exploration phase II, note that $\LB{\phi}_\ts$ is non-decreasing, so $\sellerValue_1 \leq \LB{\phi}_\ts$ for any round $\ts$ in exploration phase II. Since the posted price is no less than $\LB{\phi}_\ts$, the seller will always accept the posted price in exploration phase II.
For any exploration phase II round $\ts$ with $\num{\feedbackBuyerAt{\ts}}=0$, we have $\buyerValueAt{(1)} < \postPriceAt{\ts} = \UB{\phi}_{\ts+1} $, since $\LB{\phi}_{\ts+1} = \LB{\phi}_{\ts}$, we have $ \buyerValueAt{(1)} \in \hvecsAt{\ts+1}$ by induction. And if $\num{\feedbackBuyerAt{\ts}}>1$, we have $\buyerValueAt{(1)} \geq \postPriceAt{\ts} =\LB{\phi}_{\ts+1} $ and $\UB{\phi}_{\ts+1} = \UB{\phi}_\ts$, so we have $ \buyerValueAt{(1)} \in \hvecsAt{\ts+1}$ by induction. Thus, we complete the proof.
\end{proof}

The next theorem states the regret guarantee of \OneToManyGFT.

\begin{restatable}{theorem}{thmOneToManyGFTUpperBound}
    \label{thm:one to many GFT upper bound}
In the one-to-many market model, for gains-from-trade maximization, the regret of {\OneToManyGFT} (\Cref{alg:GFTfixed}) is $O(\log\log \timeHorizon)$.
\end{restatable}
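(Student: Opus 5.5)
Following the template of \Cref{thm:bilateral trade:profit}, I split the horizon into the three phases of \Cref{alg:GFTfixed} and bound the regret of each separately, using \Cref{lem: GFTfixed-consistency} to keep every bound in terms of the current uncertainty interval $\hvecsAt{\ts}$. I assume $\buyerValueAt{(1)} > \sellerValueAt{1}$; otherwise $\gftOpt = 0$ and there is nothing to prove. The single fact driving everything is a regret characterization for a posted price $\postPrice$ in a one-seller market. If the seller accepts and exactly one buyer accepts, the regret is $0$. If no trade occurs, the regret is $\gftOpt = \buyerValueAt{(1)} - \sellerValueAt{1}$. If the seller and at least two buyers accept, the adversarial matching may pick an accepting buyer of value at least $\postPrice$, so the regret is at most $\buyerValueAt{(1)} - \postPrice$.

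\textbf{Exploration phase I.} Here I mimic the proof of \Cref{thm:bilateral trade:GFT}. Every non-final round of this phase is a no-trade round, with regret $\buyerValueAt{(1)} - \sellerValueAt{1} \le |\hvecsAt{\ts+1}|$ by part (i) of \Cref{lem: GFTfixed-consistency}. Each such round halves the interval, so these rounds contribute a geometric sum of at most $1$. The phase can end in two ways. If the seller and exactly one buyer accept, this is the exploitation-trigger branch; its round and all later rounds have zero regret (I read the branch condition $\num{\feedbackSellerAt{\ts}}=\num{\feedbackBuyerAt{\ts}}$ in this sense). Otherwise the seller and more than one buyer accept, and that single round costs at most $1$.

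\textbf{Exploration phase II.} By part (ii) of \Cref{lem: GFTfixed-consistency}, the seller always accepts and $\buyerValueAt{(1)} \in \hvecsAt{\ts}$. I split the rounds by the feedback.
\begin{itemize}
\item \emph{No buyer accepts.} The regret is at most $1$, and the interval shrinks to length $2^{-2^{\buyerGapLevelAt{\ts}}}$, so the level $\buyerGapLevelAt{\ts}$ increases by at least one. Hence there are at most $\log\log\timeHorizon$ such rounds.
\item \emph{Exactly one buyer accepts.} This is optimal and we move to exploitation at zero cost.
\item \emph{Several buyers accept.} The regret is at most $\buyerValueAt{(1)} - \postPriceAt{\ts} \le |\hvecsAt{\ts}|$. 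The lower endpoint rises by $2^{-2^{\buyerGapLevelAt{\ts}}}$ each time, so at fixed level $\buyerGapLevelAt{\ts}=k$ there are at most $2^{-2^{k-1}}/2^{-2^{k}} = 2^{2^{k-1}}$ such rounds, each costing at most $2^{-2^{k-1}}$. That is $O(1)$ per level and $O(\log\log\timeHorizon)$ overall, which is the same counting as in \Cref{thm:bilateral trade:profit}.
\end{itemize}

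\textbf{Exploitation phase.} If we arrive via an "exactly one buyer accepts" event, the regret is zero forever. If we arrive because $\buyerGapLevelAt{\ts} > \log\log\timeHorizon$, the posted price $\LB{\phi}_{\ts}$ satisfies $\sellerValueAt{1} \le \LB{\phi}_{\ts} \le \buyerValueAt{(1)}$. So the seller and the top buyer both accept, a trade occurs, and the regret per round is at most $\buyerValueAt{(1)} - \LB{\phi}_{\ts} \le |\hvecsAt{\ts}| \le 1/\timeHorizon$, giving total regret at most $1$.

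\textbf{Main obstacle.} The delicate step is the regret bound $\buyerValueAt{(1)} - \postPrice$ whenever several buyers accept. It requires $\postPrice$ to lie above the seller cost, which is why the invariant $\sellerValueAt{1} \le \LB{\phi}_{\ts}$ from \Cref{lem: GFTfixed-consistency} is essential. A second subtlety is checking that the level-counting argument tolerates intervals whose length is not exactly $2^{-2^{k-1}}$: only the upper bound $|\hvecsAt{\ts}| \le 2^{-2^{k-1}}$ implied by the definition of $\buyerGapLevelAt{\ts}$ is needed. Summing the three phases then gives $O(\log\log\timeHorizon)$.
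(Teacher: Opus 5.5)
Your proposal is correct and follows the paper's proof essentially step for step. It uses the same three-phase decomposition, the same invariants from \Cref{lem: GFTfixed-consistency}, the same ``at most one all-reject round per level'' count, and the same $2^{2^{k-1}}\cdot 2^{-2^{k-1}}=1$ per-level charge for multi-accept rounds. The only difference is cosmetic and lies in phase I: you use the bilateral geometric sum plus a unit charge for the transition round, whereas the paper bounds every phase-I round by $|\hvecsAt{\tsBreak}|$ and sums to $\tsBreak\, 2^{-(\tsBreak-1)}\le 1$. One shared caveat: like the paper, your exploitation bound $|\hvecsAt{\ts}|\le 1/\timeHorizon$ tacitly assumes $\log\log\timeHorizon$ is an integer; otherwise, using its ceiling in the stopping rule restores the bound at the cost of one extra level.
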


\begin{proof}[Proof of \Cref{thm:one to many GFT upper bound}]
Assume that \Cref{alg:GFTfixed} enters into exploration phase II after $\tsBreak$ rounds; if algorithm never enters into exploration phase II, let $\tsBreak = \timeHorizon$. 
Since the optimal single round GFT is $\buyerValueAt{(1)} -  \sellerValueAt{1}$, then the regret from all rounds in exploration phase I is upper bounded by
\begin{align*}
    \tsBreak \cdot (\buyerValueAt{(1)} -  \sellerValueAt{1}) 
    \overset{(a)}{\leq}
    \tsBreak \cdot \abs{\hvecsAt{\tsBreak}}  
    \overset{(b)}{\leq}
    \frac{\tsBreak}{2^{\tsBreak-1}} 
    \overset{}{\leq}
    1,
\end{align*}
where (a) follows from \Cref{lem: GFTfixed-consistency} that $[\sellerValueAt{1}, \buyerValueAt{(1)}] \subseteq \hvecsAt{\ts}$ for any $\ts\in [\tsBreak]$ and (b) holds since in exploration phase I, the uncertainty interval shrinks by half in each round and its initial length is $1$.
Next, we bound the regret incurred in exploration phase II.

For exploration phase II, we first focus on the rounds where all buyers reject ($\num{\feedbackBuyerAt{\ts}}=0$) and claim that such rounds can happen at most $\log\log \timeHorizon$ times. To see this, consider such a round $\ts$ and the resulting update of the uncertainty interval from round $\ts$ to $\ts+1$. By construction, the index $\buyerGapLevelAt{\ts+1}=\buyerGapLevelAt{\ts}+1$. Thus, the feedback where all buyers reject can happen at most once for each value of the index $\buyerGapLevelAt{\ts}$ less than or equal to $\log\log \timeHorizon$. Therefore, the total regret due to rounds where all buyers reject in exploration phase II is at most $\log\log \timeHorizon$.

For rounds in exploration phase II where at least two buyers accept ($\num{\feedbackBuyerAt{\ts}}>1$), the immediate regret at each such round $\ts$ is at most $\abs{\hvecsAt{\ts}}\le 2^{-{2^{\buyerGapLevelAt{\ts}-1}}}$. For each value of the index $\buyerGapLevelAt{\ts}$ less than or equal to $\log\log \timeHorizon$, note that the buyer accepts at most $2^{2^{\buyerGapLevelAt{\ts}-1}}$ times, since the uncertainty interval shrinks by $2^{-2^{\buyerGapLevelAt{\ts}}}$ after each such round. Thus, the total regret incurred for each such index value is at most $2^{2^{\buyerGapLevelAt{\ts}-1}} \cdot 2^{-2^{\buyerGapLevelAt{\ts}-1}} = 1$. And we have the index is at most $\log\log \timeHorizon$. Therefore, the total regret due to rounds where at least two buyers accept in exploration phase II is at most $\log\log \timeHorizon$. For rounds in exploration phase II where exactly one buyer accepts ($\num{\feedbackBuyerAt{\ts}}=1$), the regret is zero since the optimal price is found. Thus, the total regret in exploration phase II is at most $O(\log\log \timeHorizon)$.

For exploitation phase, the price is either the optimal price that make exactly one buyer accept or the price $\postPriceAt{\ts}=\LB{\phi}_{\ts}$ with $\abs{\hvecsAt{\ts}} \leq 1/\timeHorizon$. In the first case, there is no regret.
In the second case, the total regret is at most $1/\timeHorizon \cdot \timeHorizon =1$. By combining the regret from exploration phase I, II and exploitation phase, we finish our proof.
\end{proof}

\subsubsection{Regret Hardness in General Two-Sided Markets with {\TwoPriceMechs}}
\label{sec:GFT Two Lower Bound Two-Sided Market}

In the previous sections, we showed that in the two-sided market model with a single seller and multiple buyers, there exists an online learning algorithm implementing \SinglePriceMechs that achieves a nearly optimal $O(\log\log \timeHorizon)$ regret bound for GFT maximization.
In this section, we show that this guarantee is fragile: even in a slightly larger market with a constant number of sellers and buyers, sublinear regret may no longer be achievable. In particular, we prove an $\Omega(\timeHorizon)$ lower bound on GFT regret, even when the algorithm is allowed to use \TwoPriceMechs.\footnote{Recall that when all traders' costs and values are known, the optimum GFT benchmark can be achieved by a {\TwoPriceMech}.}

Our adversarial construction is based on the \emph{mismatch phenomenon}, in which a relatively low-value buyer is matched with a relatively high-cost seller. Specifically, the adversary constructs an instance with three sellers and three buyers. One seller has fixed cost $\sellerValueAt{1} = 0$, and two buyers have fixed values $\buyerValueAt{1} = \buyerValueAt{2} = 1$. The remaining two sellers share the same cost $\sellerValueAt{2} = \sellerValueAt{3}$, and the remaining buyer has value $\buyerValueAt{3}$ slightly above $\sellerValueAt{2}$, with  
$
\frac{1}{4} \leq \sellerValueAt{2} = \sellerValueAt{3} < \buyerValueAt{3} \leq \frac{3}{4}.
$

The adversary maintains two running thresholds: the highest seller-side price posted so far, denoted $\UB{\sellerPrice}$, and the lowest buyer-side price posted so far, denoted $\LB{\buyerPrice}$. In each round, if the mechanism proposes a seller-side price $\sellerPriceAt{\ts}<\LB{\buyerPrice}$, then only seller $1$ accepts and the adversary updates $\UB{\sellerPrice}$; otherwise, all three sellers accept. Symmetrically, if the mechanism proposes a buyer-side price $\buyerPriceAt{\ts}>\UB{\sellerPrice}$, then only buyers $1$ and $2$ accept and the adversary updates $\LB{\buyerPrice}$; otherwise, all three buyers accept.

Under this construction, there exists a \emph{tiny critical interval} $[\sellerValueAt{2}, \buyerValueAt{3}]$. The regret in each round is constant due to mismatched trades unless the proposed price pair $(\sellerPriceAt{t}, \buyerPriceAt{t})$ lies within this critical interval, i.e., 
$
\sellerValueAt{2} \leq \sellerPriceAt{t} \leq \buyerPriceAt{t} \leq \buyerValueAt{3},
$
in which case, all traders accept the posted prices and optimal GFT is achieved. Our adversary construction ensures that the online algorithm never identifies such a price pair. A formal description of the adversary is given in \Cref{adv: GFTtwoprice}. 

\setcounter{adversarycf}{1}
\begin{adversary}[t]
\caption{{Hard instance for
{\TwoPriceMechs} in two-sided market}}
\label[adversary]{adv: GFTtwoprice}
\SetAlgoLined
\SetNoFillComment
\KwIn{{online algorithm {\ALG} implementing {\TwoPriceMechs}}, time horizon~\timeHorizon}
\KwOut{seller costs $\sellerValueAt{1}, \sellerValueAt{2},\sellerValueAt{3}$, buyers values $\buyerValueAt{1},\buyerValueAt{2},\buyerValueAt{3}$, and traders' per-round feedback}

\vspace{2mm}
Initialize {$\UB{\sellerPrice} \gets {1}/{4}$, $\LB{\buyerPrice}\gets {3}/{4}$}

\For{each round $\ts=1$ to $\timeHorizon$}{
    Receive {\TwoPriceMech} with price pair $(\sellerPriceAt{\ts}, \buyerPriceAt{\ts})$ from algorithm $\ALG$
    
    \If{$\sellerPriceAt{\ts}< \LB{\buyerPrice}$}{
    Only the seller $\sellerValueAt{1}$ accepts the price and update $\UB{\sellerPrice} =\max\{\UB{\sellerPrice}, \sellerPriceAt{\ts}\}$
    }
     \Else{
     All three sellers accept the price
     }
     
     \If{$\buyerPriceAt{\ts} > \UB{\sellerPrice}$}{
    Only two buyers $\buyerValueAt{1}$ and $\buyerValueAt{2}$ accept the price and update $\LB{\buyerPrice} =\min\{\LB{\buyerPrice}, \buyerPriceAt{\ts}\}$
     }
     \Else{
     All three buyers accept the price
     }
}
Set 
$\sellerValueAt{1}\gets 0$, $\buyerValueAt{1}=\buyerValueAt{2}\gets 1$, 
$\sellerValueAt{2}=\sellerValueAt{3} \gets {(2\UB{\sellerPrice}+\LB{\buyerPrice})}/{3}$ and $ \buyerValueAt{3}\gets{(\UB{\sellerPrice}+2\LB{\buyerPrice})}/{3}$.  
\end{adversary} 
\begin{restatable}{lemma}{lemAdvGFTtwoConsistency}
    \label{adv:GFTtwo-consistency}
    The constructed instance according to \Cref{adv: GFTtwoprice} is consistent with the observed feedback of $\ALG$.
\end{restatable}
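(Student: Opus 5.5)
The plan is to verify, round by round, that every accept/reject decision produced by \Cref{adv: GFTtwoprice} agrees with the final instance. We take $\UB{\sellerPrice}$ and $\LB{\buyerPrice}$ to be their values at the end of the interaction. First we record two monotonicity facts. During the run, $\UB{\sellerPrice}$ is non-decreasing and $\LB{\buyerPrice}$ is non-increasing. Moreover, $\UB{\sellerPrice} < \LB{\buyerPrice}$ holds throughout. Indeed, $\UB{\sellerPrice}$ is raised only to some $\sellerPriceAt{\ts} < \LB{\buyerPrice}$ (the current value). Likewise $\LB{\buyerPrice}$ is lowered only to some $\buyerPriceAt{\ts} > \UB{\sellerPrice}$, where $\UB{\sellerPrice}$ is the value after the seller-side update of the same round. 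The invariant then follows by induction from the initialization $1/4 < 3/4$. Consequently the final costs and values satisfy
\[
\tfrac14 \le \UB{\sellerPrice} < \sellerValueAt{2} = \sellerValueAt{3} = \tfrac{2\UB{\sellerPrice}+\LB{\buyerPrice}}{3} < \buyerValueAt{3} = \tfrac{\UB{\sellerPrice}+2\LB{\buyerPrice}}{3} < \LB{\buyerPrice} \le \tfrac34 ,
\]
which also gives the claimed range for the instance.

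Next we check the seller side in each round $\ts$. Seller $1$ has cost $0$ and always accepts, which matches every branch. Suppose the adversary declared that only seller $1$ accepts, i.e.\ $\sellerPriceAt{\ts} < \LB{\buyerPrice}^{(\ts)}$. After the update, $\sellerPriceAt{\ts} \le \UB{\sellerPrice}^{(\ts)} \le \UB{\sellerPrice}$ (final), so $\sellerPriceAt{\ts} < \sellerValueAt{2}$ and sellers $2,3$ indeed reject. Suppose instead all sellers accept, i.e.\ $\sellerPriceAt{\ts} \ge \LB{\buyerPrice}^{(\ts)} \ge \LB{\buyerPrice}$ (final). Then $\sellerPriceAt{\ts} > \sellerValueAt{2}$, so sellers $2,3$ accept.

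The buyer side is handled symmetrically. Buyers $1,2$ have value $1$ and always accept, since prices lie in $[0,1]$. If the adversary declared that only buyers $1,2$ accept, then $\buyerPriceAt{\ts} \ge \LB{\buyerPrice}$ (final) $> \buyerValueAt{3}$, so buyer $3$ rejects. Otherwise $\buyerPriceAt{\ts} \le \UB{\sellerPrice}^{(\ts)} \le \UB{\sellerPrice}$ (final) $< \buyerValueAt{3}$, so buyer $3$ accepts.

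The only delicate point is the ordering of updates within a single round. The buyer test uses $\UB{\sellerPrice}$ after the seller-side update of that round, and both the strict inequalities and the invariant $\UB{\sellerPrice} < \LB{\buyerPrice}$ must survive this. We handle it by stating the invariant for the value after each individual update step. Since both comparisons are strict and $\sellerValueAt{2}$, $\buyerValueAt{3}$ lie strictly inside the final interval $(\UB{\sellerPrice}, \LB{\buyerPrice})$, tie-breaking plays no role.
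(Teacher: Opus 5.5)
Your proposal is correct and follows essentially the same route as the paper: you establish the invariant $\UB{\sellerPrice}<\LB{\buyerPrice}$ and the monotonicity of both thresholds, place $\sellerValueAt{2}=\sellerValueAt{3}<\buyerValueAt{3}$ strictly inside the final interval, and then check each seller-side and buyer-side branch against the final values. Your explicit handling of the within-round update order (the buyer test uses the already-updated $\UB{\sellerPrice}$) is just a more careful write-up of the same argument.
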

\begin{proof}
First we show that $\UB{\sellerPrice}<\LB{\buyerPrice}$ by our construction.
Initially, $\UB{\sellerPrice}=\tfrac14<\tfrac34=\LB{\buyerPrice}$.
During the interaction, $\UB{\sellerPrice}$ is updated only when $\sellerPriceAt{\ts}<\LB{\buyerPrice}$, by
$\UB{\sellerPrice}\gets \max\{\UB{\sellerPrice},\sellerPriceAt{\ts}\}$, so it holds $\UB{\sellerPrice}<\LB{\buyerPrice}$.
$\LB{\buyerPrice}$ is updated only when $\buyerPriceAt{\ts}>\UB{\sellerPrice}$, by $\LB{\buyerPrice}\gets \min\{\LB{\buyerPrice},\buyerPriceAt{\ts}\}$, so it holds $\LB{\buyerPrice}>\UB{\sellerPrice}$.
Hence $\UB{\sellerPrice}<\LB{\buyerPrice}$ holds at the end.
At the end of the interaction, the adversary sets
$ \sellerValueAt{2}=\sellerValueAt{3}=\frac{2\UB{\sellerPrice}+\LB{\buyerPrice}}{3},
\buyerValueAt{3}=\frac{\UB{\sellerPrice}+2\LB{\buyerPrice}}{3}.$
Since $\UB{\sellerPrice}<\LB{\buyerPrice}$, we have $\UB{\sellerPrice}
<\sellerValueAt{2}=\sellerValueAt{3}<\buyerValueAt{3}<\LB{\buyerPrice}$.

Next we show that seller-side feedback is consistent.
Fix any round $\ts$.
If \Cref{adv: GFTtwoprice} lets only $\sellerValueAt{1}$ accepts, then it used the branch $\sellerPriceAt{\ts}<\LB{\buyerPrice}$ and thus updated
$\UB{\sellerPrice}\gets \max\{\UB{\sellerPrice},\sellerPriceAt{\ts}\}$. Therefore
$\sellerPriceAt{\ts}\le \UB{\sellerPrice}<\sellerValueAt{2}=\sellerValueAt{3}$,
so sellers $2,3$ reject, while $\sellerValueAt{1}=0$ accepts.
If it lets all three sellers accept, then $\sellerPriceAt{\ts}\ge \LB{\buyerPrice}$ at that time.
Because $\LB{\buyerPrice}$ only decreases over time, its value at that time is at least its final value,
which is greater than $\sellerValueAt{2}$. Hence $\sellerPriceAt{\ts}>\sellerValueAt{2}=\sellerValueAt{3}$,
so sellers $2,3$ accept.

Next we show that buyer-side feedback is consistent.
Fix any round $\ts$.
If \Cref{adv: GFTtwoprice} let only $\buyerValueAt{1},\buyerValueAt{2}$ accept, then it used the branch $\buyerPriceAt{\ts}>\UB{\sellerPrice}$ and thus updated
$\LB{\buyerPrice}\gets \min\{\LB{\buyerPrice},\buyerPriceAt{\ts}\}$. Therefore $\buyerPriceAt{\ts}\ge \LB{\buyerPrice}>\buyerValueAt{3}$,
so buyer $3$ rejects while buyers $1,2$ (value $1$) accept.
If it says all three buyers accept, then $\buyerPriceAt{\ts}\le \UB{\sellerPrice}$ at that time.
Because $\UB{\sellerPrice}$ only increases over time, its value at that time is at most its final value,
which is less than $\buyerValueAt{3}$. Hence $\buyerPriceAt{\ts}<\buyerValueAt{3}$, so buyer $3$ accepts as well.

Thus the final instance is consistent with all observed feedback of $\ALG$.
\end{proof}

We now present the main theorem of this section, which shows that any online learning algorithm implementing {\TwoPriceMechs} suffers linear regret against \Cref{adv: GFTtwoprice}.

\begin{restatable}{theorem}{thmGFTTwoLowerBoundTwoSidedMarket}
    \label{thm: GFT Two Lower Bound Two-Sided Market}
In the two-sided market model, for gains-from-trade maximization, given any online learning algorithm \ALG implementing \TwoPriceMechs, \Cref{adv: GFTtwoprice} can construct a three-seller three-buyer instance such that \ALG suffers $\Omega(\timeHorizon)$ regret.
\end{restatable}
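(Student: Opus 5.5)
The plan is to use \Cref{adv:GFTtwo-consistency}, which says the final instance is consistent with every feedback $\ALG$ observed, and then show that every round incurs regret at least a fixed constant (say $1/4$) against that instance. Summing over $\timeHorizon$ rounds then gives $\Omega(\timeHorizon)$.

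First, compute the benchmark for the final instance. Write $s\deq\sellerValueAt{2}=\sellerValueAt{3}$. The lemma's proof shows $\tfrac14\le\UB{\sellerPrice}<s<\buyerValueAt{3}<\LB{\buyerPrice}\le\tfrac34$. The buyer values in decreasing order are $1,1,\buyerValueAt{3}$, and the seller costs in increasing order are $0,s,s$. Since $\buyerValueAt{3}>s$, the efficient trade size is $3$, and
\begin{align*}
\gftOpt = 2+\buyerValueAt{3}-2s .
\end{align*}

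Next, fix a round $\ts$ and split into two cases according to the adversary's seller-side branch. Throughout, use the invariant from the lemma that the running thresholds satisfy $\UB{\sellerPrice}<\LB{\buyerPrice}$ at every time.

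\emph{Case (a): $\sellerPriceAt{\ts}<\LB{\buyerPrice}$.} Only seller $1$ accepts, so at most one trade occurs and its GFT is at most $1-0=1$. The regret is therefore at least
\begin{align*}
1+\buyerValueAt{3}-2s > 1-s \ge \tfrac14,
\end{align*}
using $\buyerValueAt{3}>s$ and $s\le\tfrac34$.

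\emph{Case (b): $\sellerPriceAt{\ts}\ge\LB{\buyerPrice}$.} By weak budget balance, $\buyerPriceAt{\ts}\ge\sellerPriceAt{\ts}\ge\LB{\buyerPrice}>\UB{\sellerPrice}$, where the thresholds are evaluated at the time of the round. So the buyer-side branch makes only buyers $1$ and $2$ accept, while all three sellers accept. The maximum matching has size $2$. The adversarial (GFT-minimizing) matching pairs buyers $1,2$ with sellers $2,3$, giving GFT $2-2s$. The regret is then $\buyerValueAt{3}>\tfrac14$.

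Both cases give immediate regret at least $\tfrac14$, so the cumulative regret is at least $\timeHorizon/4=\Omega(\timeHorizon)$.

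The step needing the most care is Case (b). It must be checked that the adversary's buyer branch is triggered using the thresholds as they stand within the same round, before $\LB{\buyerPrice}$ is updated. It must also be checked that the worst-case matching rule, rather than a benevolent one, is what forces seller $1$ out of the trade.
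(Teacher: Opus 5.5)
Your proof is correct and takes essentially the same route as the paper. The paper likewise uses weak budget balance to rule out the outcome where all three sellers and all three buyers accept, then bounds the regret by $1/4$ both in the one-seller cases (too few trades) and in the three-seller/two-buyer case (sellers $2,3$ mismatched with buyers $1,2$); you condition on the seller-side branch instead and make explicit the benchmark $\gftOpt = 2+\buyerValueAt{3}-2s$ and the per-case regret computations, which the paper only asserts.
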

\begin{proof}[Proof of \Cref{thm: GFT Two Lower Bound Two-Sided Market}]
    According to \Cref{adv: GFTtwoprice}, there is one seller $\sellerValueAt{1}$ or all three sellers accepting the price $\sellerPriceAt{\ts}$, and two buyers $\buyerValueAt{1}, \buyerValueAt{2}$ or all three buyers accepting the price $\buyerPriceAt{\ts}$ for each round $\ts$. In addition, the situation where all three sellers and three buyers are willing to trade cannot happen because of the budget balance constraint $\sellerPriceAt{\ts} \leq \buyerPriceAt{\ts}$. Therefore, the number of sellers accepting $\sellerPriceAt{\ts}$ never equal to the number of buyers accepting $\buyerPriceAt{\ts}$. Concretely, we have three cases: $(1)$ one seller $\sellerValueAt{1}$ and two buyers $\buyerValueAt{1},\buyerValueAt{2}$, $(2)$ one seller $\sellerValueAt{1}$ and all three buyers, $(3)$ all three sellers and two buyers $\buyerValueAt{1},\buyerValueAt{2}$. For case $(1)$ and case $(2)$, there is only one trading pair while the optimal matching consists of three pairs, making $\ALG$ suffer regret at least ${1}/{4}$. For case $(3)$, there are two trading pairs whereas the trades can happen between sellers $\sellerValueAt{2},\sellerValueAt{3}$ and buyers $\buyerValueAt{1},\buyerValueAt{2}$. This mismatch leads to regret at least ${1}/{4}$. Thus, in all three cases, the regret in each round has $\regAt{t} \geq {1}/{4}.$
The total regret in $\timeHorizon$ rounds is $\Omega(\timeHorizon)$.
\end{proof}

\subsubsection{No-Regret Algorithm in Two-Sided Markets with {\SegPriceMechs}}
\label{sec:GFT Seg Upper Bound Two-Sided Market}

As we have discussed in \Cref{sec:GFT Two Lower Bound Two-Sided Market}, any algorithm implementing {\TwoPriceMech} fails to achieve sublinear regret. A key reason is that one can not avoid the mismatches of traders unless the algorithm posts the prices exactly in the interval between $\sellerValue_3$ and $\buyerValue_3$, which could be arbitrarily small, thus hard to find. In general, mismatching will incur a constant regret, leading to linear regret over the time horizon.

This section introduces a more powerful {\SegPriceMech} that bypasses the lower bound. Using the construction in \Cref{sec:GFT Two Lower Bound Two-Sided Market} as an example, if the gap between $\sellerValue_3$ and $\buyerValue_3$ is sufficiently small, specifically $O(1/\timeHorizon)$, the algorithm can employ $O(\log \timeHorizon)$ rounds of binary search to identify an uncertainty interval of length at most $O(1/\timeHorizon)$ containing a seller with cost $\sellerValue_2 = \sellerValue_3$ and a buyer with value $\buyerValue_3$. Subsequently, in each round where the algorithm ceases exploration of new information, it can implement {\SegPriceMech} by posting a seller price at the left endpoint of the uncertainty interval for two sellers: one with $\sellerValue_1=0$ and another chosen arbitrarily. This is feasible because only one seller remains outside the uncertainty interval, enabling its identification. Regarding the buyers, the algorithm similarly posts the buyer price at the right endpoint of the uncertainty interval for buyers with values $\buyerValue_1 = \buyerValue_2=1$. For the remaining traders, the algorithm sets the lowest seller price to $0$ and the highest buyer price to $1$ to exclude them from trading. Consequently, the immediate regret is verified to be at most $O(1/\timeHorizon)$, and the total regret following the search rounds is thus at most $O(1)$. This approach suffices to bypass the lower bound in \Cref{sec:GFT Two Lower Bound Two-Sided Market} and achieve an $O(\log T)$ regret algorithm utilizing {\SegPriceMech}.

In fact, {\SegPriceMech} exhibits even greater utility. Through a more sophisticated construction, we can ensure that each mismatching round either identifies at least one additional trader in the optimal matching or significantly reduces the uncertainty interval. This allows for a potential argument to bound the number of mismatching rounds, using either the number of identified optimal traders or the length of the uncertainty interval as the potential. Intuitively, this is because segmented pricing permits trading with ``known'' traders while simultaneously exploring new ones, which helps bound the number of mismatching rounds to a very low order.

As a result, we are able to design an online learning algorithm (\BalaPrice) which implements {\SegPriceMech} to achieve $O(\traderNum^2 \log\log\timeHorizon + \traderNum^3)$ regret. (In the remainder of this section, without loss of generality, we assume the numbers of buyers and sellers are equal, i.e., $\sellerNum=\buyerNum$, since we can always add dummy buyers or sellers with value $0$ or cost $1$ to make the numbers equal.)

\begin{algorithm}
\caption{\BSegmentedPriceLearning}
\label[algorithm]{alg:GFTbatchSegprice-sub}
\SetAlgoLined
\SetNoFillComment
\KwIn{current round $\ts$}
\KwOut{{\SegPriceMech} in round $\ts$}
\KwAux{$\feedbackSegSellerAtOf{1}{\ts}, \feedbackSegSellerAtOf{2}{\ts},\feedbackSegbuyerAtOf{1}{\ts},  \feedbackSegbuyerAtOf{2}{\ts}$: the two-segmented seller-side and buyer-side indicator vectors representing each trader's accept/reject decision, $\traderindex{{\feedbackSegbuyerAtOf{2}{\ts}}}$: accepting buyers in the segment with price $\buyerPriceAtOfseg{\ts}{2}$}

\vspace{4mm}

\If{$\SegbuyerSet_\ts \neq \emptyset$}{
    
    Propose price $\sellerPriceAtOfseg{\ts}{1}=\sellerPriceAtOfseg{\ts}{2}= \LB{\phi}_\ts $ for all sellers

    Choose arbitrarily $\min\{\abs{\SegbuyerSet_\ts},\abs{\buyerSet_\ts}-\abs{\Count{s}_\ts}\}$ buyers from $\SegbuyerSet_\ts$ and propose price $\buyerPriceAtOfseg{\ts}{2}= \UB{\phi}_\ts - 2^{-2^{k_\ts}}$ for them, where $k_\ts = 1+\lfloor \log\log \abs{\hvecs_\ts}^{-1} \rfloor$. Then remove these chosen buyers from $\SegbuyerSet_\ts$

    Propose price $\buyerPriceAtOfseg{\ts}{1}= \LB{\phi}_\ts $ for the remaining buyers in $\buyerSet_\ts$

    Observe feedback $\feedbackAt{\ts} = (\feedbackSegSellerAtOf{1}{\ts}, \feedbackSegSellerAtOf{2}{\ts},\feedbackSegbuyerAtOf{1}{\ts},  \feedbackSegbuyerAtOf{2}{\ts})$ and $\CountVar_\ts \gets \CountVar_{\ts-1} \cup \traderindex{{\feedbackSegbuyerAtOf{2}{\ts}}} \setminus \Count{b}_\ts$

    $\Count{s}_{\ts+1} \gets \Count{s}_\ts $, $\sellerSet_{\ts+1}\gets \sellerSet_\ts$, $\buyerSet_{\ts+1} \gets \buyerSet_\ts$
    
    \If{$\SegbuyerSet_\ts = \emptyset$}{\tcc{All buyers has been explored}
    \If{$\abs{\Count{s}_\ts} \geq \abs{ \Count{b}_\ts}+\abs{\CountVar_\ts}$}{\tcc{Failed exploration}
    Update $\hvecsAt{\ts+1} \gets [\LB{\phi}_\ts,\UB{\phi}_\ts - 2^{-2^{k_\ts}}]$, where $ k_\ts = 1+\lfloor \log\log \abs{\hvecs_\ts}^{-1} \rfloor$, and $\Count{b}_{\ts+1} \gets \Count{b}_\ts \cup \CountVar_\ts$

    Update $\SegbuyerSet_{\ts+1} \gets \buyerSet_{\ts+1} \setminus \Count{b}_{\ts+1}$ and reset $\CountVar_\ts \gets \emptyset$ \tcp{Initialize variables for new exploration block}
    }
    }
    }
    \ElseIf{$\SegbuyerSet_\ts=\emptyset$}{
    \tcc{Successful exploration}

    Invoke {\TEST} with input $\ts$ and $\postPrice = \UB{\phi}_\ts - 2^{-2^{k_\ts}}$
    
     Set $\SegbuyerSet_{\ts+1} \gets \buyerSet_{\ts+1} \setminus \Count{b}_{\ts+1}$ and $\CountVar_\ts \gets \emptyset$
    }

    $\SegSellerSet_{\ts+1} \gets \SegSellerSet_\ts$
\end{algorithm} 
\begin{algorithm}
\caption{\SSegmentedPriceLearning}
\label[algorithm]{alg:GFTbatchSegprice-sub-seller}
\SetAlgoLined
\SetNoFillComment
\KwIn{current round $\ts$}
\KwOut{{\SegPriceMech} in round $\ts$}
\KwAux{$\feedbackSegSellerAtOf{1}{\ts}, \feedbackSegSellerAtOf{2}{\ts},\feedbackSegbuyerAtOf{1}{\ts},  \feedbackSegbuyerAtOf{2}{\ts}$: the two-segmented seller-side and buyer-side indicator vectors representing each trader's accept/reject decision, $\traderindex{{\feedbackSegSellerAtOf{2}{\ts}}}$: accepting sellers in the segment with price $\sellerPriceAtOfseg{\ts}{2}$}
\vspace{4mm}

\If{$\SegSellerSet_\ts \neq \emptyset$}{
    
    Propose price $\buyerPriceAtOfseg{\ts}{1}=\buyerPriceAtOfseg{\ts}{2}= \UB{\phi}_\ts $ for all buyers

    Choose arbitrarily $\min\{\abs{\SegSellerSet_\ts},\abs{\sellerSet_\ts}- \abs{\Count{b}_\ts}\}$ sellers from $\SegSellerSet_\ts$ and propose price $\sellerPriceAtOfseg{\ts}{2}= \LB{\phi}_\ts + 2^{-2^{k_\ts}}$ for them, where $k_\ts = 1+\lfloor \log\log \abs{\hvecs_\ts}^{-1} \rfloor$. Then remove these chosen sellers from $\SegSellerSet_\ts$
    
    Propose price $\sellerPriceAtOfseg{\ts}{1}= \UB{\phi}_\ts $ for the remaining sellers in $\sellerSet_\ts$

    Observe feedback $\feedbackAt{\ts} = (\feedbackSegSellerAtOf{1}{\ts}, \feedbackSegSellerAtOf{2}{\ts},\feedbackSegbuyerAtOf{1}{\ts},  \feedbackSegbuyerAtOf{2}{\ts})$ and $\CountVar_\ts \gets \CountVar_{\ts-1} \cup \traderindex{{\feedbackSegSellerAtOf{2}{\ts}}} \setminus \Count{s}_\ts$

    $\Count{b}_{\ts+1} \gets \Count{b}_\ts $, $\buyerSet_{\ts+1}\gets \buyerSet_\ts$, $\sellerSet_{\ts+1} \gets \sellerSet_\ts$
    
    \If{$\SegSellerSet_\ts = \emptyset$}{
    \If{$\abs{\Count{b}_\ts} \geq \abs{ \Count{s}_\ts}+\abs{\CountVar_\ts}$}{
    Update $\hvecsAt{\ts+1} \gets [\LB{\phi}_\ts + 2^{-2^{k_\ts}}, \UB{\phi}_\ts]$, where $ k_\ts = 1+\lfloor \log\log \abs{\hvecs_\ts}^{-1} \rfloor$, and $\Count{s}_{\ts+1} \gets \Count{s}_\ts \cup \CountVar_\ts$

    Update $\SegSellerSet_{\ts+1} \gets \sellerSet_{\ts+1} \setminus \Count{s}_{\ts+1}$ and reset $\CountVar_\ts \gets \emptyset$ 
    }
    }
    }
    \ElseIf{$\SegSellerSet_\ts=\emptyset$}{

    Invoke {\TEST} with input $\ts$ and $\postPrice = \LB{\phi}_\ts + 2^{-2^{k_\ts}}$
    
     Set $\SegSellerSet_{\ts+1} \gets \sellerSet_{\ts+1} \setminus \Count{s}_{\ts+1}$ and $\CountVar_\ts \gets \emptyset$
    }

    $\SegbuyerSet_{\ts+1} \gets \SegbuyerSet_\ts$
\end{algorithm} 
\begin{algorithm}[!t]
\caption{\TEST}
\label[algorithm]{alg:GFTSegprice-sub}
\SetAlgoLined
\SetNoFillComment
\KwIn{current round $\ts$, test price $\postPrice$}
\KwOut{{\SinglePriceMech} in round $\ts$}
\KwAux{$\feedbackSellerAt{\ts} = (\feedbackSellerAtOf{\ts}{\sellerIndex})_{\sellerIndex \in \sellerSet}$ and $\feedbackBuyerAt{\ts} = (\feedbackBuyerAtOf{\ts}{\buyerIndex})_{\buyerIndex \in \buyerSet}$: the seller-side and buyer-side indicator vectors representing each trader’s accept/reject decision, $\num{\feedbackSellerAt{\ts}}$ and $\num{\feedbackBuyerAt{\ts}}$: the number of accepting sellers and buyers, $\traderindex{{\feedbackSellerAt{\ts}}}$ and $\traderindex{{\feedbackBuyerAt{\ts}}}$: accepting sellers and buyers}

\vspace{4mm}

Implement a {\SinglePriceMech} with price $\postPrice$

Observe feedback $\feedbackAt{\ts} = (\feedbackSellerAt{\ts},\feedbackBuyerAt{\ts})$

        \If{$\num{\feedbackSellerAt{\ts}} = \num{\feedbackBuyerAt{\ts}}$}{
        \KwHalt and post $\postPriceAt{\ts}$ for all remaining rounds
        }

        \ElseIf{$\num{\feedbackSellerAt{\ts}} < \num{\feedbackBuyerAt{\ts}}$}{
        Update $\hvecsAt{\ts+1} \gets [\postPrice,\UB{\phi}_\ts]$ and $\Count{s}_{\ts+1} \gets \traderindex{{\feedbackSellerAt{\ts}}}, \Count{b}_{\ts+1} \gets \Count{b}_\ts$
        
        $\buyerSet_{\ts+1} \gets \traderindex{{\feedbackBuyerAt{\ts}}}$, $\sellerSet_{\ts+1} \gets \sellerSet_\ts$
        }

        \Else {
        Update $\hvecsAt{\ts+1}=[\LB{\phi}_\ts,\postPrice]$ and $\Count{s}_{\ts+1} \gets \Count{s}_\ts , \Count{b}_{\ts+1} \gets \traderindex{{\feedbackBuyerAt{\ts}}}$

        $\buyerSet_{\ts+1}\gets \buyerSet_\ts$, $ \sellerSet_{\ts+1} \gets \traderindex{{\feedbackSellerAt{\ts}}}$
        }

\end{algorithm} 
\begin{algorithm}[!t]
\caption{\BalaPrice}
\label[algorithm]{alg:GFTSegprice}
\SetAlgoLined
\SetNoFillComment
\KwIn{ time horizon $\timeHorizon$}
\KwOut{{\SegPriceMechs} in two-sided market for $\ts \in \timeHorizon$}

\vspace{2mm}
Initialize interval $\hvecsAt{1} = [\LB{\phi}_{1},\UB{\phi}_{1}] \gets [0,1]$, type $\type \in \{seller,buyer\}$, seller counter set $\Count{s}_\ts \gets \emptyset$ and buyer counter set $\Count{b}_\ts \gets \emptyset$

\vspace{2mm}
Initialize $\CountVar_0 \gets \emptyset$, seller set $\SegSellerSet_1 \gets \sellerSet$ and buyer set $\SegbuyerSet_1 \gets \buyerSet$

\tcc{set $\SegSellerSet$ and $\SegbuyerSet$ are used to implement the consecutive call of {\BSegmentedPriceLearning} (\Cref{alg:GFTbatchSegprice-sub})}

\For{$\ts \in [\timeHorizon]$}{
    
    \If{$\abs{\hvecsAt{\ts}} > 1/\timeHorizon$}{

    \If{$\abs{\Count{s}_\ts} \geq \abs{\Count{b}_\ts}$ }{

    Invoke {\BSegmentedPriceLearning} with input $\ts$

}

    \Else{
Invoke {\SSegmentedPriceLearning} with input $\ts$
    }

}
\ElseIf{$\abs{\hvecsAt{\ts}} \leq 1/\timeHorizon$}{
\If{$\abs{\Count{s}_\ts} \geq \abs{\Count{b}_\ts}$}{
Propose $\sellerPriceAtOfseg{\ts}{1}=\LB{\phi}_{\ts} $ to the seller set $\Count{s}$ and $\sellerPriceAtOfseg{\ts}{2}=0$ to the remaining sellers

Choose arbitrarily $\abs{\Count{s}}-\abs{\Count{b}}$ buyers from buyer set $\buyerSet \setminus \Count{b} $ and propose $\buyerPriceAtOfseg{\ts}{1}=\LB{\phi}_{\ts} $ to these buyers and $\Count{b}$. For remaining buyers, propose $\buyerPriceAtOfseg{\ts}{2}=1$
}
\ElseIf{$\abs{\Count{s}_\ts} < \abs{\Count{b}_\ts}$}{
Propose $\buyerPriceAtOfseg{\ts}{1}=\UB{\phi}_{\ts} $ to the buyer set $\Count{b}$ and $\buyerPriceAtOfseg{\ts}{2}=1$ to the remaining buyers

Choose arbitrarily $\abs{\Count{b}}-\abs{\Count{s}}$ sellers from seller set $\sellerSet \setminus \Count{s} $ and propose $\sellerPriceAtOfseg{\ts}{1}=\UB{\phi}_{\ts} $ to these sellers and $\Count{s}$. For remaining sellers, propose $\sellerPriceAtOfseg{\ts}{2}=0$
}
}
}
\end{algorithm} 
\xhdr{Algorithm overview.}
In each round, \BalaPrice maintains an uncertainty interval $\hvecs_\ts=[\LB{\phi}_\ts,\UB{\phi}_\ts]$, a seller set $\Count{s}_\ts$ recording the sellers who accept the price $\LB{\phi}_\ts$, and a buyer set $\Count{b}_\ts$ recording the buyers who accept the price $\UB{\phi}_\ts$.
We also maintain a buyer subset $\buyerSet_\ts$ and a seller subset $\sellerSet_\ts$. We aim to ensure $\buyerSet_\ts$ is composed of all buyers whose values are larger than $\LB{\phi}_\ts$ and $\sellerSet_\ts$ is composed of all sellers whose costs are smaller than $\UB{\phi}_\ts$.
\footnote{In fact, we will see that our algorithm can only satisfy this condition for some special rounds (\Cref{lem:invariant}), but it is also enough.} As a sanity check, the following relation holds between these subsets: $\Count{s}_\ts\subseteq \sellerSet_\ts \subseteq \sellerSet$ and $\Count{b}_\ts\subseteq \buyerSet_\ts \subseteq \buyerSet$.

At a high level, we will ensure that $\Count{s}_\ts$ and $\Count{b}_\ts$ are traders belonging to the optimal matching, our goal is to identify all such traders. Whenever $\absfix{\hvecs_\ts} \geq 1/\timeHorizon$, our algorithm tries to find new traders in the optimal matching, increasing either $\absfix{\Count{s}_\ts}$ or $\absfix{\Count{b}_\ts}$.  To facilitate this search, the algorithm invokes two symmetric subroutines 
{\BSegmentedPriceLearning}  (\Cref{alg:GFTbatchSegprice-sub}) and {\SSegmentedPriceLearning} (\Cref{alg:GFTbatchSegprice-sub-seller}), depending on the comparative cardinality of $\Count{s}_\ts$ and $\Count{b}_\ts$.

We now focus on the case $\absfix{\Count{s}_\ts} \geq \absfix{\Count{b}_\ts}$ where our goal is to find new buyers, as the other case is symmetric. In this case, we post a conservative price (e.g., $\LB{\phi}_\ts$) to exactly $\absfix{\Count{s}_\ts}$ buyers to match with sellers in $\Count{s}_\ts$. Simultaneously, we offer an exploration price (e.g., $\UB{\phi}_\ts-2^{-2^{k_\ts}}$) to the remaining buyers to see if there are buyers of value in the interval $[\UB{\phi}_\ts-2^{-2^{k_\ts}}, \UB{\phi}_\ts]$. Ideally, we should post the exploration price to all buyers who are in $\buyerSet_\ts\setminus\Count{b}_\ts$ to maximize the exploration effect. 
However, this means we can only post the conservative price to buyers in $\Count{b}_\ts$ and many sellers in $\Count{s}_\ts$ may not join the trade, which would lead to a large immediate regret. 
Therefore, $\BSegmentedPriceLearning$ ensures that all sellers in $\Count{s}_\ts$ can be matched by posting the conservative price to additional $\left(\absfix{\Count{s}_\ts} - \absfix{\Count{b}_\ts}\right)$ buyers in $\buyerSet_\ts\setminus\Count{b}_\ts$ so that we have a total number of $\absfix{\Count{s}_\ts}$ buyers who are offered with the conservative price,  guaranteeing that all sellers in $\Count{s}_\ts$ can join into trade. For the remaining buyers, we offer the exploration price to them. This way, we encounter another problem, that is, we can not cover each possible buyer during our search process since there are some buyers who are offered a conservative price. 
To address this issue, we simply repeat the above process for $\left\lceil 
\frac{\absfix{\buyerSet_\ts\setminus\Count{b}_\ts}}{\absfix{\buyerSet_\ts\setminus\Count{b}_\ts}- (\absfix{\Count{s}_\ts} - \absfix{\Count{b}_\ts})}
\right\rceil =\left\lceil 
\frac{\absfix{\buyerSet_\ts\setminus\Count{b}_\ts}}{\absfix{\buyerSet_\ts}- \absfix{\Count{s}_\ts}}
\right\rceil $ consecutive rounds, 
each round we select different buyers to offer the conservative price, ensuring that all buyers in $\buyerSet_\ts\setminus\Count{b}_\ts$ are explored at least once. 

After all these consecutive rounds, the algorithm obtains a buyer subset $\CountVar\ts$, which contains all the buyers whose values are in $[\UB{\phi}_\ts-2^{-2^{k_\ts}}, \UB{\phi}_\ts]$. The behavior of our algorithm then depends on whether $\CountVar\ts \neq\emptyset$.
\begin{enumerate}

\item \textbf{Successful Exploration:} if $\CountVar\ts \neq \emptyset$, we consider two sub-cases: $(a) \absfix{\Count{s}_\ts} \geq \absfix{\Count{b}_\ts} + \absfix{\CountVar\ts}$ and $(b) \absfix{\Count{s}_\ts} < \absfix{\Count{b}_\ts} + \absfix{\CountVar\ts}$. In case $(a)$, where the number of buyers accepting the exploration price is insufficient, we update $\Count{b}_\ts \gets \Count{b}_\ts \cup \CountVar\ts$ and contract the interval to $[\LB{\phi}_\ts, \UB{\phi}_\ts - 2^{-2^{k_\ts}}]$. 
In case $(b)$, where a sufficient number of buyers accept the exploration price, the algorithm uses an additional round invoking the {\TEST} (Algorithm~\ref{alg:GFTSegprice-sub}) to test the supply-demand relationship at the single price $\UB{\phi}_\ts - 2^{-2^{k_\ts}}$. 
If more sellers accept the price, the uncertainty interval is shifted leftward, and the algorithm updates $\Count{b}_{\ts+1} = \Count{b}_{\ts}\cup \CountVar\ts$. Conversely, if more buyers accept the price, the uncertainty interval is shifted rightward, significantly reducing the interval length.
If an equal number of sellers and buyers accept the price, then we will show that the optimal price is found, which is proved in \Cref{lem:fixed-opt}.

\item \textbf{Failed Exploration:} if $\CountVar\ts = \emptyset$, the algorithm updates the uncertainty interval to $\hvecsAt{\ts+1} = [\LB{\phi}_\ts, \UB{\phi}_\ts - 2^{-2^{k_\ts}}]$ and continue searching for new buyers.
\end{enumerate}

Our algorithm repeats the above process until the $\absfix{\hvecsAt{\ts}}\leq 1/\timeHorizon$. Then we could implement {\SegPriceMech} to exploit GFT with only $O(1/\timeHorizon)$ per round regret, like what we have discussed at the beginning of this subsection.
Concretely, if $\absfix{\Count{s}_\ts} \geq \absfix{\Count{b}_\ts}$, we offer price $\sellerPriceAtOfseg{\ts}{1}=\LB{\phi}_{\ts}$ to the seller set $\Count{s}_\ts$ and $\sellerPriceAtOfseg{\ts}{2}=0$ to the remaining sellers. 
Then we choose arbitrarily $\absfix{\Count{s}_\ts}-\absfix{\Count{b}_\ts}$ buyers from buyer set $\buyerSet_\ts \setminus \Count{b}_\ts $ and propose $\buyerPriceAtOfseg{\ts}{1}=\LB{\phi}_{\ts}$ to these buyers and $\Count{b}_\ts$. For remaining buyers, propose $\buyerPriceAtOfseg{\ts}{2}=1$. 
If $\absfix{\Count{s}_\ts} < \absfix{\Count{b}_\ts}$, we can implement the {\SegPriceMech} symmetrically.
A formal description of the algorithm is shown in \Cref{alg:GFTSegprice}.

\xhdr{Regret analysis.} Before diving into rigorous proof, we give some intuition on how we control the cumulative regret. According to our algorithmic design, all rounds whose uncertainty interval width is larger than $1/\timeHorizon$ are logically grouped into many blocks, each may contain several consecutive rounds. Each block either falls into the successful exploration case or the failed exploration case. 

For a block with successful exploration, it can be verified that either $\absfix{\Count{s}_\ts \cup \Count{b}_\ts}$ is strictly increased, or the width of the uncertainty interval is largely reduced. Our proof utilizes two potential functions that are related to it, one is $\absfix{\Count{s}_\ts \cup \Count{b}_\ts}$, another is $\absfix{\hvecs_\ts}$. Since $\absfix{\Count{s}_\ts \cup \Count{b}_\ts}\leq 2 \traderNum$ and $\absfix{\hvecs_\ts}\geq 1/\timeHorizon$, successful exploration could only happen a limited times.
For a block with failed exploration, only the interval width potential is slightly decreased. Even though this may happen many times, we will show that the regret per such block is upper bounded by $O(\traderNum \absfix{\hvecs_\ts})$ in \Cref{lem: failed exploration}, which is a small quantity.
For those rounds with $\absfix{\hvecsAt{\ts}}\leq 1/\timeHorizon$, we show that our pricing mechanism incurs only $O(\traderNum/\timeHorizon)$ immediate regret. Thus, the cumulative regret from such rounds is $O(\traderNum)$. 
The total regret of~\Cref{alg:GFTSegprice} is provided in the following~\Cref{thm:GFT segmented price upper bound}. 

\begin{restatable}{theorem}{thmGFTSegmentedPriceUpperBound}
    \label{thm:GFT segmented price upper bound}
    In the two-sided market model, for the gains-from-trade maximization, the regret of \BalaPrice (\Cref{alg:GFTSegprice}) implementing {\SegPriceMech} is at most $O(\traderNum^2\log\log\timeHorizon+ \traderNum^3)$.
\end{restatable}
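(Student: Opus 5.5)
We will organize the rounds of \BalaPrice (\Cref{alg:GFTSegprice}) into three groups and bound the regret of each separately. The first group is the \emph{exploitation rounds}, with $\absfix{\hvecsAt{\ts}}\le 1/\timeHorizon$ (or after {\TEST} halts). The second is the \emph{successful-exploration blocks}. The third is the \emph{failed-exploration blocks}.

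The first step is an invariant lemma, in the spirit of \Cref{lem: GFTfixed-consistency}. At the start of every block, the following hold:
\begin{itemize}
\item $\hvecsAt{\ts}$ contains some GFT-optimal single price, i.e.\ it intersects $[\max\{\sellerValueAt{(k+1)},\buyerValueAt{(k+1)}\},\min\{\sellerValueAt{(k)},\buyerValueAt{(k)}\}]$ in the order-statistic sense. (Here, abusing notation for the tie structure, this interval is $[\max\{\sellerValue_{(k)},\buyerValue_{(k+1)}\},\min\{\buyerValue_{(k)},\sellerValue_{(k+1)}\}]$.)
\item Every seller in $\Count{s}_\ts$ has cost at most $\LB{\phi}_\ts$, and every buyer in $\Count{b}_\ts$ has value at least $\UB{\phi}_\ts$, so all of them belong to the optimal matching.
\item $\buyerSet_\ts$ and $\sellerSet_\ts$ contain every buyer (resp.\ seller) that could still be in the optimal matching.
\end{itemize}
The proof goes by induction over the three update types: the failed-exploration cut, the case-(a) success cut, and the three outcomes of {\TEST}. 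For {\TEST}, the key point is that equal numbers of accepting sellers and buyers at a single price means that price clears the market, hence is GFT-optimal; this is the \Cref{lem:fixed-opt} referred to in the overview. With this lemma in hand, per-round regret can be bounded by counting.

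If $m'$ sellers in $\Count{s}_\ts$ trade with buyers offered $\LB{\phi}_\ts$, each such pair either is an optimal pair or differs from one by a value or cost lying in $\hvecsAt{\ts}$. So a conservative, non-exploring round loses at most $O(\traderNum\absfix{\hvecsAt{\ts}})$, plus whatever is lost from optimal traders not yet in $\Count{s}\cup\Count{b}$.

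This gives the exploitation bound directly: $O(\traderNum/\timeHorizon)$ per round, hence $O(\traderNum)$ in total.

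For failed blocks, the plan is to prove \Cref{lem: failed exploration}: a failed block (all explored buyers rejected $\UB{\phi}_\ts-2^{-2^{k_\ts}}$) incurs regret $O(\traderNum\absfix{\hvecsAt{\ts}})$ in each of its at most $\lceil \absfix{\buyerSet_\ts\setminus\Count{b}_\ts}/(\absfix{\buyerSet_\ts}-\absfix{\Count{s}_\ts})\rceil\le \traderNum$ rounds. Here I must argue that missing optimal buyers all have values in $\hvecsAt{\ts}$ and so contribute at most $\absfix{\hvecsAt{\ts}}$ each. Then the conservative-search accounting of \Cref{thm:bilateral trade:profit} applies. Within a fixed level $k$, each failed cut shrinks the interval by $2^{-2^{k}}$ while the interval has length at most $2^{-2^{k-1}}$. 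So the summed $\absfix{\hvecsAt{\ts}}$ over failed blocks at level $k$ is $O(1)$, and there are $O(\log\log\timeHorizon)$ levels. This yields $O(\traderNum^2\log\log\timeHorizon)$ overall.

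The same level argument would bound case-(a) successes and rightward {\TEST} outcomes, which cost at most $O(\traderNum^2)$ each. Successes that strictly enlarge $\Count{s}\cup\Count{b}$ occur at most $2\traderNum$ times. However, a success does not always enlarge $\Count{s}\cup\Count{b}$; the {\TEST} outcome that resets $\Count{s}_{\ts+1}$ or $\Count{b}_{\ts+1}$ to the accepting set can shrink it. I would therefore use a lexicographic potential $(\absfix{\Count{s}\cup\Count{b}},\,k_\ts)$ and show that each success either raises the first coordinate or raises $k_\ts$ by at least one. That gives at most $O(\traderNum+\log\log\timeHorizon)$ successes per level change. Combined with a per-success regret of $O(\traderNum)$ rounds times $O(\traderNum)$ loss, this produces the $\traderNum^3$ term together with a further $\traderNum^2\log\log\timeHorizon$.

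I expect this last step to be the main obstacle: showing that a {\TEST} round where sellers outnumber buyers, although it resets $\Count{b}$, makes net progress without allowing the identified-trader count to oscillate. I would try to show that the reset sets are supersets of the previous $\Count{s}$ or $\Count{b}$ restricted to traders that remain optimal, using the invariant on $\sellerSet_\ts,\buyerSet_\ts$.
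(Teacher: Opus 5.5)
Your outline is the paper's proof. It uses the invariant of \Cref{lem:invariant}, \Cref{lem:fixed-opt} for the equal-count outcome of \TEST, the bound $\traderNum\abs{\hvecsAt{\ts}}$ per failed round summed level by level (at most $2^{2^{k-1}}$ cuts at level $k$, hence $O(\traderNum^2)$ per level), $O(\traderNum^2)$ per successful block, and $O(\traderNum)$ for exploitation.

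The obstacle you single out is not one. \TEST posts one price $p\in[\LB{\phi}_\ts,\UB{\phi}_\ts]$, and its reset sets are acceptance sets at $p$. By your own second invariant they therefore contain the old $\Count{s}_\ts$ (costs at most $\LB{\phi}_\ts\le p$), resp.\ the old $\Count{b}_\ts\cup\CountVar_\ts$ (values at least $p$). Hence $\Count{s}$, $\Count{b}$ and $k_\ts$ are all non-decreasing, and the counts simply add. A case-(a) success or a leftward \TEST outcome strictly enlarges $\Count{b}$, since there $\CountVar_\ts\neq\emptyset$. A rightward outcome leaves an interval of length exactly $2^{-2^{k_\ts}}$, so $k$ rises by one. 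This is the paper's case (1)/case (2) split in \Cref{lem:segpricereg-t2}. You do need this monotonicity: a merely lexicographic potential would only give $O(\traderNum\log\log\timeHorizon)$ successes, not $O(\traderNum+\log\log\timeHorizon)$ in total. Your phrase ``per level change'' should read ``in total''.

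The step that genuinely fails, in your sketch and equally in the paper's proof of \Cref{lem: failed exploration}, is the claim that every round of a failed block loses only $O(\traderNum\abs{\hvecsAt{\ts}})$.

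\textbf{Where it breaks.} The rule explores $\min\{\absfix{\SegbuyerSet_\ts},\absfix{\buyerSet_\ts}-\absfix{\Count{s}_\ts}\}$ buyers per round. So the last round of a block explores fewer than $\absfix{\buyerSet_\ts}-\absfix{\Count{s}_\ts}$ buyers whenever $\absfix{\buyerSet_\ts\setminus\Count{b}_\ts}$ is not a multiple of $\absfix{\buyerSet_\ts}-\absfix{\Count{s}_\ts}$, which can happen once $\absfix{\Count{s}_\ts}>\absfix{\Count{b}_\ts}$. Then more than $\absfix{\Count{s}_\ts}$ buyers are offered $\LB{\phi}_\ts$ and all of them accept, against only $\absfix{\Count{s}_\ts}$ accepting sellers. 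The GFT-minimizing maximum matching may drop some $\buyerIndex\in\Count{b}_\ts$ in favour of a buyer whose value lies in $\hvecsAt{\ts}$. That round loses at least $\buyerValueAt{\buyerIndex}-\UB{\phi}_\ts$, which need not be small. Your claim that each realized pair differs from an optimal pair only by a type in $\hvecsAt{\ts}$ breaks exactly here. Failed blocks at level $k$ can number on the order of $2^{2^{k-1}}$, so nothing in the argument absorbs these losses.

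\textbf{Repair.} Pad the exploration segment of the final round with already-explored buyers; they rejected $\UB{\phi}_\ts-2^{-2^{k_\ts}}$ before and will reject it again. Then in every round exactly $\absfix{\Count{s}_\ts}$ buyers, including all of $\Count{b}_\ts$, receive $\LB{\phi}_\ts$. This fits the overview's ``explored at least once'', and with it your argument goes through.

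Separately, the block-length bound $\le\traderNum$ needs $\absfix{\buyerSet_\ts}>\absfix{\Count{s}_\ts}$. In the equality case $\LB{\phi}_\ts$ is an optimal single price by \Cref{lem:fixed-opt}, so those rounds cost nothing.
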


We start our regret analysis with a lemma regarding the invariant during the algorithmic iteration.

\begin{restatable}{lemma}{lemInvariant}
\label{lem:invariant}
For any round $\ts$ in {\BalaPrice},  the following holds:
\begin{itemize}
	\item[(1)] $\Count{b}_\ts$ is composed of all buyers whose values are larger than $\UB{\phi}_\ts$; $\Count{s}_\ts$ is composed of all sellers whose cost are smaller than $\LB{\phi}_\ts$.
    \item[(2)] If $\abs{\Count{s}_\ts} \geq \abs{\Count{b}_\ts}$, then $\buyerSet_\ts$ is composed of all buyers whose values are no smaller than $\LB{\phi}_\ts$ and $\abs{\buyerSet_\ts}\geq \abs{\Count{s}_\ts}$; If $\abs{\Count{s}_\ts} < \abs{\Count{b}_\ts}$, then $\sellerSet_\ts$ is composed of all sellers whose costs are no larger than $\UB{\phi}_\ts$ and $\abs{\sellerSet_\ts}\geq \abs{\Count{b}_\ts}$.
\end{itemize}    
\end{restatable}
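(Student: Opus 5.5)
We will prove both parts simultaneously by induction on the round index $\ts$, checking that every update performed by {\BalaPrice} and its subroutines preserves the invariant. (We read ``larger than $\UB{\phi}_\ts$'' and ``smaller than $\LB{\phi}_\ts$'' in the weak sense, since ties are broken in favor of acceptance.) At $\ts=1$ we have $\hvecsAt{1}=[0,1]$, $\Count{s}_1=\Count{b}_1=\emptyset$, $\buyerSet_1=\buyerSet$ and $\sellerSet_1=\sellerSet$. Item (1) is vacuous up to ties at the endpoints $0$ and $1$, which we treat as a degenerate boundary case. Item (2) holds because every buyer has value at least $0$ and $\abs{\buyerSet}\ge 0$.

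For the inductive step we distinguish the kinds of rounds. First, in a \emph{mid-block} round of {\BSegmentedPriceLearning} (with $\SegbuyerSet_\ts$ still nonempty after removal), none of $\hvecs$, $\Count{s}$, $\Count{b}$, $\buyerSet$, $\sellerSet$ changes, so the invariant transfers trivially. The symmetric statement holds for {\SSegmentedPriceLearning}. The same is true for exploitation rounds with $\abs{\hvecsAt{\ts}}\le 1/\timeHorizon$.

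Second, consider the end of a block with a \emph{failed or insufficient} exploration, i.e., the branch $\abs{\Count{s}_\ts}\ge\abs{\Count{b}_\ts}+\abs{\CountVar_\ts}$. Here $\UB{\phi}$ drops to $y=\UB{\phi}_\ts-2^{-2^{k_\ts}}$, and $\Count{b}$ becomes $\Count{b}_\ts\cup\CountVar_\ts$. The key claim is that, during the block, every buyer in $\buyerSet_\ts\setminus\Count{b}_\ts$ was offered $y$ at least once. This follows from the choice of $\min\{\abs{\SegbuyerSet_\ts},\abs{\buyerSet_\ts}-\abs{\Count{s}_\ts}\}$ fresh buyers per round, together with the inductive bound $\abs{\buyerSet_\ts}\ge\abs{\Count{s}_\ts}$. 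Strictly speaking, this count must be positive; when $\abs{\buyerSet_\ts}=\abs{\Count{s}_\ts}$ the block never terminates, and we need to argue separately that this cannot occur while $\abs{\hvecs_\ts}>1/\timeHorizon$. We will try to derive this from the equal-count halting condition of {\TEST}. Given the claim, $\CountVar_\ts$ is exactly the set of buyers outside $\Count{b}_\ts$ with value in $[y,\UB{\phi}_\ts]$. Using item (1) for $\Count{b}_\ts$ and item (2) for $\buyerSet_\ts$, which contains every buyer with value at least $\LB{\phi}_\ts\le y$, we conclude that $\Count{b}_\ts\cup\CountVar_\ts$ is exactly the set of buyers with value at least $y$. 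The seller-side sets and $\LB{\phi}$ are unchanged, so $\Count{s}$ is still correct and $\buyerSet$ still contains all buyers with value at least $\LB{\phi}$. The cardinality part of (2) persists because $\abs{\Count{s}}$ is unchanged; and if the comparison of $\abs{\Count{s}}$ and $\abs{\Count{b}}$ flips, the branch condition guarantees it cannot, since $\abs{\Count{b}_{\ts+1}}\le\abs{\Count{s}_\ts}$.

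Third, consider a round invoking {\TEST} at price $y$. A single price reveals the exact sets of sellers with cost at most $y$ and buyers with value at least $y$. In the branch $\num{\feedbackSellerAt{\ts}}<\num{\feedbackBuyerAt{\ts}}$, the new interval is $[y,\UB{\phi}_\ts]$. We set $\Count{s}_{\ts+1}$ to the accepting sellers, which are exactly those with cost at most $y$, and $\buyerSet_{\ts+1}$ to the accepting buyers, which are exactly those with value at least $y$. Then $\abs{\buyerSet_{\ts+1}}>\abs{\Count{s}_{\ts+1}}$, which gives (2). Item (1) for $\Count{b}$ is unaffected because $\UB{\phi}$ is unchanged. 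The other branch is symmetric: there $\Count{b}_{\ts+1}$ is exactly the set of buyers with value at least $y$, and $\sellerSet_{\ts+1}$ has size at least $\abs{\Count{b}_{\ts+1}}$.

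The main obstacle is the second case. It requires verifying the coverage claim, namely that the block exhausts $\SegbuyerSet$ within a finite number of rounds and visits all of $\buyerSet_\ts\setminus\Count{b}_\ts$. It also requires tracking item (2) when the branch in {\BalaPrice} switches between the buyer and seller subroutines, since $\sellerSet$ and $\buyerSet$ are only refreshed at {\TEST} rounds. We will handle this by strengthening the induction to state that the set relevant to the currently active side is always accurate. The other side's set may be stale, but only ever in a way that is never used before the next {\TEST} refreshes it.
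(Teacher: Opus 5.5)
Your plan matches the paper's: induction over rounds, with separate cases for mid-block rounds, the block-ending rounds of the two balancing searches, and \TEST rounds. The paper mostly says \emph{it can be checked}, and your buyer-side block-end and \TEST cases go through. The step that fails is the one you defer: the claim that the inactive side's set is never used before the next \TEST refreshes it. Control can pass from the seller side to the buyer side with no \TEST round in between. Suppose a \SSegmentedPriceLearning block ends in the branch $\abs{\Count{b}_\ts}\ge\abs{\Count{s}_\ts}+\abs{\CountVar_\ts}$ with equality. Then $\abs{\Count{s}_{\ts+1}}=\abs{\Count{b}_{\ts+1}}$, and since \BalaPrice sends ties to \BSegmentedPriceLearning, round $\ts+1$ is a buyer-side round. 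But $\LB{\phi}$ has just been raised by $2^{-2^{k_\ts}}$, while $\buyerSet_{\ts+1}=\buyerSet_\ts$ was fixed at an earlier, smaller value of $\LB{\phi}$. Your argument that the comparison cannot flip works on the buyer side, but its mirror image fails on the seller side because of this asymmetric tie-breaking. The paper's proof has the same hole, hidden behind its \emph{w.l.o.g.} $\abs{\Count{s}_\ts}\ge\abs{\Count{b}_\ts}$.

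No sharper argument can close this, because item (2) is false there. Take sellers with costs $0.2$ and $0.3$ and buyers with values $\buyerValueAt{1}=0.6$ and $\buyerValueAt{2}=0.1$. Let the first exploration (and hence \TEST) price be $1/2$; the formula for $k_\ts$ is undefined at $\abs{\hvecs_1}=1$.
In round $1$ only buyer $1$ accepts $1/2$, so the block proceeds to \TEST.
In round $2$, two sellers but only one buyer accept $1/2$. This gives $\hvecs_3=[0,1/2]$, $\Count{b}_3=\{1\}$, $\Count{s}_3=\emptyset$ and $\buyerSet_3=\{1,2\}$.
In rounds $3$--$4$ (seller side, $k=1$), each seller is offered $1/4$ once and only seller $1$ accepts. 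Since $\abs{\Count{b}_4}=1=0+1$, we get $\hvecs_5=[1/4,1/2]$ and $\Count{s}_5=\{1\}$.
Round $5$ is therefore a buyer-side round ($\abs{\Count{s}_5}=\abs{\Count{b}_5}=1$), yet $\buyerSet_5=\{1,2\}$ although $\buyerValueAt{2}<\LB{\phi}_5$.

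What does survive, up to the tie conventions you already flagged, is a one-sided version. On buyer-side rounds, $\buyerSet_\ts\supseteq\{\buyerIndex:\buyerValueAt{\buyerIndex}\ge\LB{\phi}_\ts\}$ and $\abs{\buyerSet_\ts}\ge\abs{\Count{s}_\ts}$, and symmetrically for sellers. That inclusion is all your coverage step for item (1) actually uses. The exact equality is what \Cref{lem: failed exploration} relies on when it asserts that every buyer offered $\LB{\phi}_\ts$ accepts. Restoring it requires changing the algorithm, e.g., refreshing $\buyerSet$ whenever control passes to the buyer side, or routing the tie case through \TEST.

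Separately, your worry about $\abs{\buyerSet_\ts}=\abs{\Count{s}_\ts}$ is harmless for the invariant. In that case no buyer is ever explored, the block never ends, and none of the tracked sets changes.
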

\begin{proof} We prove two properties sequentially.

\xhdr{Property (1)} It can be verified from our algorithmic design. Each time the $\hvecs_\ts$ is updated, $\Count{s}_\ts$ and $\Count{b}_\ts$ will be updated accordingly.

 \xhdr{Property (2)} We prove the claim by induction. Firstly, due to the initialization of $\hvecs_1$ and $\buyerSet_1, \sellerSet_1$, the claim holds for $\ts =1 $. Now suppose the claim holds for $\ts$, w.l.o.g. we assume $\Count{s}_\ts \geq \Count{b}_\ts$. Notice that only when this round invokes {\TEST}, and $\hvecsAt{\ts+1}$ is updated to $[\LB{\phi}_\ts,\postPrice]$, round $\ts+1$ will be the case where $\abs{\Count{s}_{\ts+1}} < \abs{\Count{b}_{\ts+1}}$. 
It can be checked that our claim holds for round $\ts+1$ in this case. 
        Then we consider the case where $\abs{\Count{s}_{\ts+1}} \geq \abs{\Count{b}_{\ts+1}}$. Notice that, only when the $\ts$-th round invoke {\TEST} and $\hvecsAt{\ts+1}$ is updated to $[\postPrice, \UB{\phi}_\ts]$, $\LB{\phi}_{\ts+1}$ and $\buyerSet_{\ts+1}$ may be different with the $\ts$-th round. In this case, it can also be checked that our claim holds for the $(\ts+1)$-th round. 
\end{proof}

\begin{restatable}{lemma}{lemFixedOpt}
    \label{lem:fixed-opt}
 In the two-sided market model, if an equal number of sellers and buyers accept a single price $\sellerPrice$, then this single price achieves optimal GFT.

\end{restatable}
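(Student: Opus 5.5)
Fix a single price $\sellerPrice$ and suppose the number of sellers with cost at most $\sellerPrice$ equals the number of buyers with value at least $\sellerPrice$; call this common number $k'$. The plan is to show that $k'$ equals the efficient trade size $k$ and that the accepting traders are exactly the $k$ lowest-cost sellers and the $k$ highest-value buyers, up to ties that do not change GFT. Then any maximum matching among them, even the adversarial one, produces GFT equal to $\gftOpt$.

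First, the accepting sellers are $\{\sellerIndex:\sellerValueAt{\sellerIndex}\le\sellerPrice\}$. This set consists of the $k'$ smallest costs, so $\sellerValueAt{(k')}\le\sellerPrice<\sellerValueAt{(k'+1)}$, with the convention $\sellerValueAt{(\sellerNum+1)}=+\infty$. Symmetrically, the accepting buyers are the $k'$ highest values, so $\buyerValueAt{(k')}\ge\sellerPrice>\buyerValueAt{(k'+1)}$. Chaining these gives $\buyerValueAt{(k')}\ge\sellerPrice\ge\sellerValueAt{(k')}$, hence $k'\le k$ by the definition of $k$ as the largest index with $\buyerValueAt{(k)}\ge\sellerValueAt{(k)}$. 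For $k'+1$ we have $\buyerValueAt{(k'+1)}<\sellerPrice<\sellerValueAt{(k'+1)}$. Since buyer order statistics are non-increasing and seller order statistics are non-decreasing, every $\ell>k'$ satisfies $\buyerValueAt{(\ell)}<\sellerValueAt{(\ell)}$. Hence $k=k'$. (If $k'=0$, the same inequalities show $\buyerValueAt{(1)}<\sellerValueAt{(1)}$, so $\gftOpt=0$.)

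Next, any maximum matching among the accepting traders is a perfect matching of size $k$ between the two accepting sets, because both sides have exactly $k$ members and all pairs are allowed. Its GFT is therefore $\sum_{\ell\le k}\buyerValueAt{(\ell)}-\sum_{\ell\le k}\sellerValueAt{(\ell)}=\gftOpt$. This holds regardless of which matching is chosen, since GFT depends only on the sets of matched traders. Ties are handled by the tie-breaking rule: when several traders share the boundary value, the accepting set is still a valid choice of order statistics, so the sums agree.

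The only delicate step is the careful handling of ties and the boundary conventions, namely an index beyond $\min\{\sellerNum,\buyerNum\}$ and the case $k'=0$. Everything else is a direct monotonicity argument.
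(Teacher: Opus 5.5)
Your proposal is correct and follows essentially the same route as the paper: both use $\sellerValueAt{(k')}\le\sellerPrice\le\buyerValueAt{(k')}$ to get $k'\le k$, use $\buyerValueAt{(k'+1)}<\sellerPrice<\sellerValueAt{(k'+1)}$ to get $k\le k'$, and then identify the realized GFT of the accepting traders with $\gftOpt$. You also make explicit several points the paper leaves implicit, namely the monotonicity of the order statistics over all $\ell>k'$, the boundary cases $k'=0$ and $k'=\min\{\sellerNum,\buyerNum\}$, and the fact that any maximum matching among the accepting traders is perfect.
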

\begin{proof}
Suppose the market has $\sellerNum$ sellers and $\buyerNum$ buyers 
and exactly $k$ sellers and $k$ buyers accept the posted price $\postPrice$.
By the acceptance rule, exactly $k$ sellers satisfy $\sellerValueAt{\sellerIndex}\le \postPrice$ and exactly $k$ buyers satisfy $\buyerValueAt{\buyerIndex}\ge \postPrice$.
Thus $\sellerValueAt{(k)}\le \postPrice \le \buyerValueAt{(k)}$. 
Recall that definition of efficient trade size $\tradeNum^\star$ is the largest index such that $\sellerValueAt{(k^\star)} \leq \buyerValueAt{(k^\star)}$, so we have $k \leq \tradeNum^\star$.
Since $\tradeNum^\star \leq \min\{\sellerNum,\buyerNum\}$, if $k=\min\{\sellerNum,\buyerNum\}$, then $k=\tradeNum^\star$.
If $k<\min\{\sellerNum,\buyerNum\}$, then we have $\sellerValueAt{(k+1)}>\postPriceAt{\ts}>\buyerValueAt{(k+1)}$, so $\buyerValueAt{(k+1)}-\sellerValueAt{(k+1)}<0$. By the definition of $\tradeNum^\star$, we have $\tradeNum^\star = k$.
Therefore $\gftOpt=\sum_{\ell\in[k]}(\buyerValueAt{(\ell)}-\sellerValueAt{(\ell)})$, which is exactly the GFT achieved by posting single price $\postPrice$.
\end{proof}

\begin{restatable}{lemma}{lemFailedExploration}
\label{lem: failed exploration}
    For any round $\ts$ with $\abs{\hvecsAt{\ts}}>1/\timeHorizon$, if no new trader is found in this round (i.e. $\CountVar_\ts = \CountVar_{\ts-1}$), then the immediate regret of this round is at most $\traderNum \cdot \abs{\hvecsAt{\ts}}$.
\end{restatable}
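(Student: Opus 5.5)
By symmetry between \BSegmentedPriceLearning and \SSegmentedPriceLearning, it suffices to treat a round $\ts$ in which $\abs{\Count{s}_\ts}\ge\abs{\Count{b}_\ts}$ and \BSegmentedPriceLearning posts prices without finding a new trader. The plan is to compare the realized (worst-case) matching against the optimal matching using the invariant of \Cref{lem:invariant}, and to charge every lost unit of GFT to a trader whose type lies inside $\hvecs_\ts$.

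\textbf{Step 1: who accepts.} All sellers receive price $\LB{\phi}_\ts$. By \Cref{lem:invariant}(1), sellers with cost below $\LB{\phi}_\ts$ are exactly $\Count{s}_\ts$, so the accepting sellers are $\Count{s}_\ts$ together with possibly some sellers whose cost equals $\LB{\phi}_\ts$. On the buyer side, the conservative price $\LB{\phi}_\ts$ is offered to $\abs{\buyerSet_\ts}-\min\{\cdot\}\ge\abs{\Count{s}_\ts}$ buyers of $\buyerSet_\ts$. By \Cref{lem:invariant}(2), each of these buyers has value at least $\LB{\phi}_\ts$, so at least $\abs{\Count{s}_\ts}$ buyers accept. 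The explored buyers got $\UB{\phi}_\ts-2^{-2^{k_\ts}}$. Since $\CountVar_\ts=\CountVar_{\ts-1}$, none outside $\Count{b}_\ts\cup\CountVar_{\ts-1}$ accepted, so every newly explored buyer has value below the exploration price, in particular inside $\hvecs_\ts$ or below it. Hence the maximum matching has size at least $\abs{\Count{s}_\ts}$, and every seller in it has cost at most $\LB{\phi}_\ts$.

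\textbf{Step 2: structure of the optimum.} By the interpretation of $\hvecs_\ts$ as containing the optimal single-price threshold (the consistency of the updates, as in \Cref{lem: GFTfixed-consistency}), the optimal matching consists of $\Count{s}_\ts$, $\Count{b}_\ts$, and traders whose types lie in $\hvecs_\ts$. I will therefore write $\gftOpt$ as the sum of $\sum(\buyerValue-\sellerValue)$ over $k^\star$ pairs, and use a reference price $\postPrice\in\hvecs_\ts$ with $\gftOpt=\sum_{\text{opt buyers}}(\buyerValue-\postPrice)+\sum_{\text{opt sellers}}(\postPrice-\sellerValue)$.

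\textbf{Step 3: charging.} The realized GFT under an adversarial matching of size $m\ge\abs{\Count{s}_\ts}$ is $\sum(\buyerValue-\postPrice)+\sum(\postPrice-\sellerValue)$ over the matched traders. The matched sellers are $\Count{s}_\ts$ plus sellers at cost $\LB{\phi}_\ts$. Every optimal trader omitted from the realized matching, and every realized trader whose term is negative, has type in $\hvecs_\ts$. The latter holds because conservative-price buyers have value at least $\LB{\phi}_\ts$. Each such trader therefore contributes at most $\abs{\hvecs_\ts}$ to the difference. The main obstacle is this step: I must verify that no high-value optimal buyer in $\Count{b}_\ts$ or $\CountVar_{\ts-1}$ is lost in a way that costs more than $\abs{\hvecs_\ts}$. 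If such a buyer was offered the conservative price, it still trades. If it was offered the exploration price, it accepts. So the only loss per trader is a substitution of a buyer with value in $\hvecs_\ts$ for a buyer with value in $\hvecs_\ts$ or above, at most $\abs{\hvecs_\ts}$ each, possibly using the matching-size bound $m\ge\abs{\Count{s}_\ts}$ to ensure all of $\Count{s}_\ts$ is matched. Summing over at most $\traderNum$ affected pairs gives regret at most $\traderNum\abs{\hvecs_\ts}$.
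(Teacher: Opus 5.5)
You have a genuine gap in Step~3, at the sentence ``If such a buyer was offered the conservative price, it still trades.'' Acceptance does not imply trading. The regret is measured against the GFT-minimizing \emph{maximum} matching, so an accepting buyer is guaranteed to be matched only when accepting buyers do not outnumber accepting sellers. Your Step~1 explicitly allows more than $\abs{\Count{s}_\ts}$ buyers to receive and accept $\LB{\phi}_\ts$, while only the $\abs{\Count{s}_\ts}$ sellers of $\Count{s}_\ts$ accept. In that situation the adversarial matching pairs $\Count{s}_\ts$ with the lowest-valued accepting buyers, whose values lie in $\hvecs_\ts$. If $\Count{b}_\ts\neq\emptyset$, this leaves a buyer of $\Count{b}_\ts$ unmatched. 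That buyer's value can be far above $\UB{\phi}_\ts$, so the loss is not of order $\abs{\hvecs_\ts}$, and no reference-price charging can recover the bound. The same objection applies on the seller side to ``the matched sellers are $\Count{s}_\ts$ plus sellers at cost $\LB{\phi}_\ts$'': if such tie sellers accepted, the adversary could drop a low-cost seller of $\Count{s}_\ts$. In fact $\Count{s}_\ts$ is built from acceptances of $\LB{\phi}_\ts$ itself, so no extra sellers accept.

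The missing ingredient is the exact count the paper uses. The exploration buyers are drawn from $\SegbuyerSet_\ts\subseteq\buyerSet_\ts\setminus\Count{b}_\ts$, and there are $\abs{\buyerSet_\ts}-\abs{\Count{s}_\ts}$ of them. Hence:
\begin{itemize}
\item exactly $\abs{\Count{s}_\ts}$ buyers, including all of $\Count{b}_\ts$, receive $\LB{\phi}_\ts$, and all of them accept;
\item no explored buyer accepts;
\item exactly the sellers of $\Count{s}_\ts$ accept.
\end{itemize}
So the accepting traders are perfectly matched and the adversary has no choice. Let $\mathcal{Q}$ be the accepting buyers outside $\Count{b}_\ts$; by \Cref{lem:invariant}(2), $\tradeNum^\star\ge\abs{\Count{s}_\ts}\ge\abs{\Count{b}_\ts}$. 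The regret is then
\[
\sum_{k=\abs{\Count{s}_\ts}+1}^{\tradeNum^\star}\bigl(\buyerValueOrd{k}-\sellerValueOrd{k}\bigr)+\sum_{k=\abs{\Count{b}_\ts}+1}^{\abs{\Count{s}_\ts}}\buyerValueOrd{k}-\sum_{\buyerIndex\in\mathcal{Q}}\buyerValueAt{\buyerIndex}.
\]
This pairs values at most $\UB{\phi}_\ts$ with costs or values at least $\LB{\phi}_\ts$, giving at most $(\tradeNum^\star-\abs{\Count{b}_\ts})\abs{\hvecs_\ts}\le\traderNum\abs{\hvecs_\ts}$. No reference price is needed. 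The structural facts you want in Step~2 follow directly from \Cref{lem:invariant}, not from \Cref{lem: GFTfixed-consistency}, which concerns a different algorithm.

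Be aware that the exact count requires $\min\{\abs{\SegbuyerSet_\ts},\abs{\buyerSet_\ts}-\abs{\Count{s}_\ts}\}=\abs{\buyerSet_\ts}-\abs{\Count{s}_\ts}$. This can fail in the last round of an exploration block, where more than $\abs{\Count{s}_\ts}$ buyers receive $\LB{\phi}_\ts$ and the dropped-$\Count{b}_\ts$ phenomenon above can occur. Neither your argument nor the paper's covers that round unless it is priced differently.
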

\begin{proof}
    Suppose $\BSegmentedPriceLearning$ is invoked at round $\ts$. 
$\CountVar_{\ts} = \CountVar_{\ts-1}$ means that no buyer accepts $\buyerPriceAtOfseg{\ts}{2}$. There are $\abs{\buyerSet_\ts}-(\abs{\buyerSet_\ts}- \abs{\Count{s}_\ts}) = \abs{\Count{s}_\ts}$ buyers who are offered the price $\buyerPriceAtOfseg{\ts}{1}=\LB{\phi}_\ts$.
    Since $ \SegbuyerSet_\ts$ does not contain those buyers in $\Count{b}_\ts$, all of which will be offered the price $\buyerPriceAtOfseg{\ts}{1}$.
    By \Cref{lem:invariant}, they will all accept the price and all buyers who are not in $\buyerSet_\ts$ will reject the price. 
    Meanwhile, it can be checked only that sellers in $\Count{s}_\ts$ will accept the offered price.
    
    Thus, there are exactly $\abs{\Count{s}_\ts}$ trading pairs and all traders in $\Count{s}_\ts$ and $\Count{b}_\ts$ successfully get matched. 
    Let $\mathcal{Q}$ be the set of buyers who accept the price $\LB{\phi}_\ts$ and not in $\Count{b}_\ts$.
    Then the GFT in this round is $\sum_{\buyerIndex\in \Count{b}_\ts\cup \mathcal{Q}}\buyerValueAt{\buyerIndex} - \sum_{\sellerIndex\in\Count{s}_\ts} \sellerValueAt{\sellerIndex}= \left(\sum_{k=1}^{|\Count{b}_\ts|} \buyerValueOrd{k} + \sum_{\buyerIndex\in\mathcal{Q}} \buyerValueAt{\buyerIndex}-\sum_{k=1}^{|\Count{s}_\ts|} \sellerValueOrd{k}\right)$. By \Cref{lem:invariant} (2), efficient trade size $\tradeNum^\star\geq \abs{\Count{s}_\ts} \geq \abs{\Count{b}_\ts}$. Then
    \begin{align*}\regAt{\ts} = \gftOpt-\gft_\ts = \sum\nolimits_{k = \abs{\Count{s}_\ts}+1}^{\tradeNum^\star} \left(\buyerValueOrd{k}-\sellerValueOrd{k}\right) + \sum\nolimits_{k=\abs{\Count{b}_\ts}+1}^{\abs{\Count{s}_\ts}} \buyerValueOrd{k} -\sum\nolimits_{\buyerIndex\in\mathcal{Q}} \buyerValueAt{\buyerIndex}. \end{align*}
    Notice that all sellers' costs and buyers' values in RHS of the above equation are in $\hvecsAt{\ts}$, so $\regAt{\ts}\leq \tradeNum^\star \abs{\hvecsAt{\ts}} \leq \traderNum \cdot  \abs{\hvecsAt{\ts}} $.
\end{proof}

We partition the corresponding time rounds into three disjoint sets based on the relationship between $\Count{s}$ and $\Count{b}$: $\mathcal{T}_{1}$ for rounds where 
$\abs{\Count{s}} \geq \abs{\Count{b}}$ and $\mathcal{T}_{2}$ for rounds where 
$\abs{\Count{s}} < \abs{\Count{b}}$. 
Both $\mathcal{T}_{1}$ and $\mathcal{T}_{2}$ only contain rounds with $\abs{\hvecsAt{\ts}}>1/\timeHorizon$.
All remaining rounds satisfy $\abs{\hvecsAt{\ts}} \leq 1/\timeHorizon$ are collected in $\mathcal{T}_3$.
\begin{restatable}{lemma}{lemSegpriceregTTwo}
    \label{lem:segpricereg-t2}
    The cumulative regret of rounds in $\mathcal{T}_1$ is at most $O(\traderNum^2\log\log\timeHorizon+\traderNum^3)$.
\end{restatable}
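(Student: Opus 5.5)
Rounds in $\mathcal{T}_1$ invoke {\BSegmentedPriceLearning} (or a {\TEST} round inside it). The plan is to group these rounds into \emph{blocks}. A block is a maximal run of consecutive calls sharing the same exploration level $k_\ts$ that ends either with the reset of $\SegbuyerSet$ or with a {\TEST} call. Within a block, the exploration price $\UB{\phi}_\ts-2^{-2^{k_\ts}}$ is offered to fresh buyers of $\buyerSet_\ts\setminus\Count{b}_\ts$. Since at least $\abs{\buyerSet_\ts}-\abs{\Count{s}_\ts}\ge 1$ new buyers are explored per round (using \Cref{lem:invariant}(2)), a block lasts at most $\traderNum+1$ rounds. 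I would then classify the rounds of a block into two kinds and bound each kind separately.

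\emph{Quiet rounds.} These are rounds in which no new buyer accepts the exploration price. By \Cref{lem: failed exploration}, each costs at most $\traderNum\abs{\hvecsAt{\ts}}$. A block with failed exploration (all rounds quiet, or $\abs{\Count{s}_\ts}\ge\abs{\Count{b}_\ts}+\abs{\CountVar_\ts}$) costs at most $\traderNum(\traderNum+1)\abs{\hvecs_\ts}$. It then shrinks the interval from length $L$ to $L-2^{-2^{k}}$ with $L\le 2^{-2^{k-1}}$. Reproducing the \citet{KL-03} accounting from \Cref{thm:bilateral trade:profit}, at a fixed level $k$ there are at most $2^{2^{k-1}}$ such shrinks, each costing $O(\traderNum^2)\cdot 2^{-2^{k-1}}$. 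Hence each level contributes $O(\traderNum^2)$. There are at most $\log\log\timeHorizon$ levels before $\abs{\hvecs_\ts}\le 1/\timeHorizon$, for a total of $O(\traderNum^2\log\log\timeHorizon)$.

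\emph{Noisy rounds.} These are rounds in which some buyer newly joins $\CountVar_\ts$, or the {\TEST} round itself. Each such round costs at most $O(\traderNum)$, since per-round regret is bounded by the number of traders. I would bound their number with the potential $\abs{\Count{s}_\ts\cup\Count{b}_\ts}\le 2\traderNum$. In case (a), $\Count{b}$ strictly grows by $\CountVar_\ts\neq\emptyset$. When {\TEST} yields more sellers, $\Count{b}$ grows as well. When it yields equality, \Cref{lem:fixed-opt} gives zero regret forever. When it yields more buyers, the interval jumps to $[\postPrice,\UB{\phi}_\ts]$ of length $2^{-2^{k_\ts}}$, so $k$ increases by at least one. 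This last event happens at most $\log\log\timeHorizon$ times.

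The main obstacle is that $\Count{s}$ and $\Count{b}$ are overwritten (not merely unioned) inside {\TEST}, so the potential $\abs{\Count{s}\cup\Count{b}}$ may drop. I would handle this by a lexicographic potential $(k_\ts,\abs{\Count{s}_\ts\cup\Count{b}_\ts})$. Each drop coincides with a level increase or a side switch. Between level increases, the counter can only increase, so at most $O(\traderNum)$ successful blocks occur per level. Charging each noisy round $O(\traderNum)$ and each successful block at most $\traderNum+1$ noisy rounds gives $O(\traderNum^2)$ per level, plus the $O(\traderNum^3)$ additive term from the at most $2\traderNum$ counter increments at the final level. Combining both kinds over $\mathcal{T}_1$ gives $O(\traderNum^2\log\log\timeHorizon+\traderNum^3)$. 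If the per-level charge turns out to be $\traderNum^3$, I would instead argue that counter increments across all levels total $O(\traderNum)$ plus $O(\traderNum)$ per forced jump.
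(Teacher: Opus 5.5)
You split rounds into quiet and noisy rounds, whereas the paper accounts block by block. That split is a reasonable variant, and the quiet-round half works. You do need to include the quiet rounds of blocks that end in a \TEST call: the more-sellers outcome shrinks the interval by $2^{-2^{k}}$ exactly like a failed block, and the more-buyers outcome happens at most once per level.

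The gap is in the noisy-round tally. You have $O(n)$ successful blocks per level, $n+1$ noisy rounds charged to each, and $O(n)$ regret per noisy round. That gives $O(n^3)$ per level, hence $O(n^3\log\log T)$ in total, not $O(n^2)$ per level. Your fallback (``$O(n)$ increments plus $O(n)$ per forced jump'') gives the same $O(n^3\log\log T)$: there can be $\log\log T$ forced jumps, and each successful block is still charged $O(n^2)$. As written, the proposal does not reach $O(n^2\log\log T+n^3)$.

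The missing observation is that the obstacle you worry about does not occur: the counter never drops.
\begin{itemize}
\item By \Cref{lem:invariant}(1), $\Count{s}_t$ consists of the sellers with cost at most $\LB{\phi}_t$, and $\Count{b}_t$ of the buyers with value at least $\UB{\phi}_t$. The uncertainty intervals are nested.
\item The overwrite inside \TEST replaces $\Count{s}_t$ by $\{i : c_i\le p\}$ with $p>\LB{\phi}_t$ (when more buyers accept). It replaces $\Count{b}_t$ by $\{j : v_j\ge p\}$ with $p<\UB{\phi}_t$ (when more sellers accept). Both new sets are supersets of the old ones.
\item Hence $|\Count{s}_t|+|\Count{b}_t|$ is non-decreasing and at most $2n$ over the whole horizon. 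So there are at most $2n$ counter-increasing blocks in total, not per level.
\item Charging each such block $(n+1)n$ gives the $O(n^3)$ term. The at most one level-jump block per level gives $O(n^2\log\log T)$.
\end{itemize}
This global monotonicity is exactly what the paper's count relies on (at most $n$ such blocks over the entire horizon).

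Alternatively, you can stay in your per-round framework. Charge each noisy round to the fresh buyer it adds to $\CountVar_t$, which $\Count{b}$ absorbs at the end of every counter-increasing block, instead of charging $n+1$ rounds per block. Even under your pessimistic resets, this gives $O(n^2)$ per level.

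A minor point: \Cref{lem:invariant}(2) only gives $|\mathcal{B}_t|\ge|\Count{s}_t|$. The strict inequality needed for your block-length bound comes from \Cref{lem:fixed-opt}, as in the paper.
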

\begin{proof}
    Consider rounds in $\mathcal{T}_1$ with $(\abs{\Count{s}_\ts},\abs{\Count{b}_\ts})=(\sellerIndex,\buyerIndex)$. \Cref{alg:GFTSegprice} repeatedly calls the subroutine \Cref{alg:GFTbatchSegprice-sub}. The process begins with a batch of $\ceil{\frac{\abs{\buyerSet_\ts \setminus \Count{b}_\ts}}{\abs{\buyerSet_\ts}-\abs{\Count{s}_\ts}}}$ consecutive rounds where \Cref{alg:GFTbatchSegprice-sub} proposes price $\sellerPriceAtOfseg{\ts}{1}=\sellerPriceAtOfseg{\ts}{2}= \LB{\phi}$ for all sellers. Notice that $\abs{\buyerSet_\ts}>\abs{\Count{s}_\ts}$, otherwise $\abs{\buyerSet_\ts}=\abs{\Count{s}_\ts}$ indicates that $\LB{\phi}_\ts$ is a single price which achieves optimal GFT by \Cref{lem:fixed-opt}, and the algorithm will halt on the last time $\LB{\phi}_\ts$ gets updated.
    
    For buyers, in each round,  \Cref{alg:GFTbatchSegprice-sub} arbitrarily chooses $\abs{\buyerSet_\ts}-\sellerIndex$ buyers from $\buyerSet_\ts\setminus \Count{b}_\ts$ (until exhausted) along with the set $\Count{b}_\ts$, proposing price $\buyerPriceAtOfseg{\ts}{2}= \UB{\phi}_\ts-2^{-2^{k}}$, where $ k = \floor{1+\log\log(\abs{\hvecs_\ts})^{-1}}$. For the remaining buyers in $\buyerSet_\ts$,  the proposed price is $\buyerPriceAtOfseg{\ts}{1}= \LB{\phi}_\ts$. Note that the uncertainty interval $\hvecsAt{\ts}$ remains unchanged in these rounds. 
    After $\ceil{\frac{\abs{\buyerSet_\ts \setminus \Count{b}_\ts}}{\abs{\buyerSet_\ts}-\abs{\Count{s}_\ts}}}$ consecutive rounds, \Cref{alg:GFTSegprice} identifies the subset of buyers $\CountVar_\ts \subseteq \buyerSet$ who accepted $\buyerPriceAtOfseg{\ts}{2}$ but rejected $\UB{\phi}_\ts$. 
    If $i\geq \abs{\CountVar_\ts} + j$, the algorithm updates uncertainty interval $ \hvecsAt{\ts}$ to $[\LB{\phi}_\ts,\buyerPriceAtOfseg{\ts}{2}]$ and updates $\Count{b}_\ts$ to $\Count{b}_\ts \cup \CountVar_\ts$. Note that if $\CountVar_\ts \neq \emptyset$, the algorithm turns to a new set pair $(\Count{s}_\ts,\Count{b}_\ts\cup \CountVar_\ts)$. Otherwise (if $\CountVar_\ts = \emptyset$), the algorithm repeats the process for the current pair $(\Count{s}_\ts,\Count{b}_\ts)$ with the updated interval $\hvecsAt{\ts}=[\LB{\phi}_\ts,\buyerPriceAtOfseg{\ts}{2}]$. If $\sellerIndex< \abs{\CountVar_\ts} + \buyerIndex$, the algorithm executes one additional round using the subroutine \Cref{alg:GFTSegprice-sub}, proposing a single price $\sellerPriceAtOfseg{\ts}{1}=\sellerPriceAtOfseg{\ts}{2}=\buyerPriceAtOfseg{\ts}{1}=\buyerPriceAtOfseg{\ts}{2}= \UB{\phi}_\ts-2^{-2^{k}}$ (with $ k = \floor{1+\log\log(\abs{\hvecsAt{\ts}})^{-1}}$) for all sellers and buyers. Depending on the feedback:
    \begin{itemize}
        \item if the number of sellers accepting the price is strictly less than the number of accepting buyers accepting, the algorithm significantly reduces the uncertainty interval to $[\sellerPriceAtOfseg{\ts}{1},\UB{\phi}_\ts]$ and updates $\Count{s}_\ts$ to the set of accepting sellers.
        \item Otherwise, the algorithm updates uncertainty interval to $[\LB{\phi}_\ts,\sellerPriceAtOfseg{\ts}{1}]$ and updates $\Count{b}_\ts$ to the set of accepting buyers, i.e. $\Count{b}_\ts \cup \CountVar_\ts$. In this case, the algorithm transitions to the pair $(\Count{s}_\ts, \Count{b}_\ts\cup \CountVar_\ts)$.
    \end{itemize}
    Now, we analyze the regret of rounds in $\mathcal{T}_1$ corresponding to a fixed index $k = \floor{1+\log\log(\abs{\hvecs_\ts})^{-1}}$. We distinguish between two cases: 
    \begin{itemize}
    \item Case $(1)$: the interval length $\abs{\hvecs_t}$ decreases by $2^{-2^k}$ after $\ceil{\frac{\abs{\buyerSet_\ts \setminus \Count{b}_\ts}}{\abs{\buyerSet_\ts}-\abs{\Count{s}_\ts}}}$ consecutive rounds and (potentially) one additional single-price round. In this case, 
we essentially encounter either: $(a) $ $\sellerIndex \geq \buyerIndex+\abs{\ell}$, or $(b)$ $ \sellerIndex < \buyerIndex+\abs{\ell}$ with $\num{\feedbackSegSellerAtOf{1}{\ts}} > \num{\feedbackSegbuyerAtOf{1}{\ts}}$ in the additional round. 
    
    If $\sellerIndex < \buyerIndex+\abs{\ell}$ and $\num{\feedbackSegSellerAtOf{1}{\ts}} > \num{\feedbackSegbuyerAtOf{1}{\ts}}$, the regret in this exploration block with an additional single-price round is bounded by
    $
    \ceil{\frac{\abs{\buyerSet_\ts \setminus \Count{b}_\ts}}{\abs{\buyerSet_\ts}-\abs{\Count{s}_\ts}}} \cdot \traderNum+\traderNum \leq \traderNum^2+\traderNum.
    $
    Here the last inequality is due to $\abs{\buyerSet_\ts} > \abs{\Count{s}_\ts}$.
    If $\sellerIndex \geq \buyerIndex+\abs{\CountVar_\ts}$ and $\CountVar_\ts \neq \emptyset$, the regret in this block is at most
    $
    \ceil{\frac{\abs{\buyerSet_\ts \setminus \Count{b}_\ts}}{\abs{\buyerSet_\ts}-\abs{\Count{s}_\ts}}}  \cdot \traderNum \leq \traderNum^2.
    $
    Both scenarios result in $\abs{\Count{b}_\ts}$ increasing by $\abs{\ell}\geq 1$. 
    
    Otherwise, if $\sellerIndex \geq \buyerIndex+\abs{\ell}$ and $\ell = \emptyset$, any round during this exploration block fails to find new traders. By \Cref{lem: failed exploration}, regret in this block is bounded by
    $
    \ceil{\frac{\abs{\buyerSet_\ts \setminus \Count{b}_\ts}}{\abs{\buyerSet_\ts}-\abs{\Count{s}_\ts}}} \cdot \traderNum \cdot \abs{\hvecs_\ts} \leq \traderNum^2 \cdot 2^{-2^{k-1}}.
    $
    Since the interval length shrinks by $2^{-2^k}$, we have at most $\frac{2^{-2^{k-1}}}{2^{-2^k}}=2^{2^{k-1}}$ such block with failed exploration for any given $k$. Therefore, for given $k$, the cumulative regret from such blocks with $k_\ts = k$ is at most $\traderNum^2 \cdot 2^{-2^{k-1}} \cdot 2^{2^{k-1}}=\traderNum^2$. (Note that, in each block $\hvecsAt{\ts}$ and $k_\ts$ remains unchanged.)
    
    \item Case $(2)$: the interval length $\abs{\hvecs_t}$ decreases significantly such that the index increments to $k_{\ts+1}=k+1 $ after $\ceil{\frac{\abs{\buyerSet_\ts \setminus \Count{b}_\ts}}{\abs{\buyerSet_\ts}-\abs{\Count{s}_\ts}}}$ consecutive rounds and one additional single-price round. This occurs only when $\sellerIndex < \buyerIndex+\abs{\ell}$ and $\num{\feedbackSegSellerAtOf{1}{\ts}} < \num{\feedbackSegbuyerAtOf{1}{\ts}}$ in the additional single-price round. The regret in this block with one additional single-price round is at most
        $\ceil{\frac{\abs{\buyerSet_\ts \setminus \Count{b}_\ts}}{\abs{\buyerSet_\ts}-\abs{\Count{s}_\ts}}} \cdot \traderNum+\traderNum \leq \traderNum^2+\traderNum.$
        The case $(2)$ happens at most once for a given index $k$ since it will strictly increase the index. 
    \end{itemize}
    Finally, since the uncertainty interval length in round $\ts \in \mathcal{T}_1$ satisfies $\hvecsAt{\ts} \geq 1/\timeHorizon$,  we have $ k_{\ts} = \floor{1+\log\log(\abs{\hvecsAt{\ts}})^{-1}} \leq \log\log\timeHorizon$. Over the entire horizon $\timeHorizon$, occurrences of case $(1)$ that strictly increase $\abs{\Count{b}}$ 
happens at most $\traderNum$ times. Thus, the total regret of such rounds is at most
    $
    (\traderNum^2+\traderNum) \cdot \traderNum = \traderNum^3+\traderNum^2.
    $
    For the remaining case (2) rounds with each given $k$, the regret 
is at most $\traderNum^2+\traderNum$. Summation over all $k \in [\log\log\timeHorizon]$, we have the regret is at most
    $
    (\traderNum^2+\traderNum)\log\log\timeHorizon.
    $
    Therefore, the total regret from all rounds in $\mathcal{T}_1$ is at most
    \begin{align*}
    \sum\nolimits_{\ts\in \mathcal{T}_1}\regAt{\ts} \leq (\traderNum^2+\traderNum)\log\log\timeHorizon+ \traderNum^3+\traderNum^2 = O(\traderNum^2 \log\log\timeHorizon+\traderNum^3).
    \end{align*}
    This completes the proof.
\end{proof}

\begin{restatable}{lemma}{lemSegpriceregTThree}
    \label{lem:segpricereg-t3}
    The cumulative regret of rounds in $\mathcal{T}_2$ is at most $O(\traderNum^2\log\log\timeHorizon+\traderNum^3)$.
\end{restatable}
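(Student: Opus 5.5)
The plan is to obtain \Cref{lem:segpricereg-t3} from \Cref{lem:segpricereg-t2} by the seller--buyer symmetry of \BalaPrice. Rounds in $\mathcal{T}_2$ are those with $\abs{\hvecsAt{\ts}}>1/\timeHorizon$ and $\abs{\Count{s}_\ts}<\abs{\Count{b}_\ts}$. In exactly these rounds \Cref{alg:GFTSegprice} invokes \SSegmentedPriceLearning (\Cref{alg:GFTbatchSegprice-sub-seller}). That subroutine is the mirror image of \BSegmentedPriceLearning under the reflection $x\mapsto 1-x$ of prices, costs and values, which swaps the roles of sellers and buyers and of $\LB{\phi}_\ts$ and $\UB{\phi}_\ts$. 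The seller-side exploration price $\LB{\phi}_\ts+2^{-2^{k_\ts}}$ corresponds to the buyer-side one $\UB{\phi}_\ts-2^{-2^{k_\ts}}$, and the interval update $[\LB{\phi}_\ts+2^{-2^{k_\ts}},\UB{\phi}_\ts]$ corresponds to $[\LB{\phi}_\ts,\UB{\phi}_\ts-2^{-2^{k_\ts}}]$. The \TEST step is symmetric as well.

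I will therefore rerun the proof of \Cref{lem:segpricereg-t2} with the roles exchanged, checking three ingredients in turn.

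\textbf{Invariant.} Part (2) of \Cref{lem:invariant} in the branch $\abs{\Count{s}_\ts}<\abs{\Count{b}_\ts}$ gives that $\sellerSet_\ts$ consists of all sellers with cost at most $\UB{\phi}_\ts$ and that $\abs{\sellerSet_\ts}\ge\abs{\Count{b}_\ts}$. If equality holds, $\UB{\phi}_\ts$ is an optimal single price by \Cref{lem:fixed-opt}, so the algorithm would already have halted. Hence strict inequality holds, and each batch has length $\ceil{\abs{\sellerSet_\ts\setminus\Count{s}_\ts}/(\abs{\sellerSet_\ts}-\abs{\Count{b}_\ts})}\le\traderNum$.

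\textbf{Failed-exploration bound.} I need the seller analogue of \Cref{lem: failed exploration}: if no seller accepts $\sellerPriceAtOfseg{\ts}{2}$, every buyer in $\Count{b}_\ts$ trades with a seller quoted $\UB{\phi}_\ts$. The per-round regret is then a sum of at most $\traderNum$ terms, each a cost or value lying in $\hvecsAt{\ts}$, so it is at most $\traderNum\abs{\hvecsAt{\ts}}$.

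\textbf{Block accounting.} Blocks fall into three types, as in \Cref{lem:segpricereg-t2}.
\begin{itemize}
\item Successful blocks that enlarge $\Count{s}$: each costs at most $\traderNum^2+\traderNum$, and there are at most $\traderNum$ of them.
\item Failed blocks at a fixed level $k$: at most $2^{2^{k-1}}$ of them occur, each costing at most $\traderNum^2 2^{-2^{k-1}}$, so they contribute $\traderNum^2$ per level.
\item Blocks where \TEST finds more sellers than buyers accepting: the interval shrinks to $[\LB{\phi}_\ts,\postPrice]$ and the level increments, so this happens once per level at cost at most $\traderNum^2+\traderNum$.
\end{itemize}
Since $k_\ts\le\log\log\timeHorizon$, summing over levels gives $O(\traderNum^2\log\log\timeHorizon+\traderNum^3)$.

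\textbf{Main obstacle.} The delicate point is that rounds of $\mathcal{T}_1$ and $\mathcal{T}_2$ interleave: a \TEST outcome can flip which of $\abs{\Count{s}}$ and $\abs{\Count{b}}$ is larger and reset $\Count{s}$ or $\Count{b}$. I must make sure the potentials still bound the counts within $\mathcal{T}_2$.

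I would handle this by noting two things. First, every \TEST outcome that moves the algorithm between $\mathcal{T}_1$ and $\mathcal{T}_2$ strictly increases the level $k$, since the interval shrinks to length at most $2^{-2^{k}}$; so such switches happen $O(\log\log\timeHorizon)$ times in total. Second, between consecutive level increases, $\abs{\Count{s}}$ is monotone within $\mathcal{T}_2$, so it can grow at most $\traderNum$ times per stretch.

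That second count would give an extra $\traderNum^3\log\log\timeHorizon$ term. If needed, I would tighten it, as in \Cref{lem:segpricereg-t2}, by charging increases of $\abs{\Count{s}_\ts\cup\Count{b}_\ts}$ against a global potential that is at most $2\traderNum$. This step is the one I expect to need the most care, and I would mirror whatever accounting the buyer-side proof uses so that the two lemmas together give \Cref{thm:GFT segmented price upper bound}.
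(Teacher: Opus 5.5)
Your route is the same as the paper's, whose proof of this lemma is just ``mirror \Cref{lem:segpricereg-t2}''. Your invariant check, seller-side failed-exploration bound and three-way block accounting are the right mirrored ingredients. The gap is in your ``main obstacle'' paragraph: it contains a false claim and leaves unjustified the step that gives $n^3$ rather than $n^3\log\log T$.

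\textbf{The false claim.} It is not true that every \TEST outcome moving the algorithm between $\mathcal{T}_1$ and $\mathcal{T}_2$ increments the level. A buyer-side \TEST round is invoked only when $\abs{\Count{s}_t}<\abs{\Count{b}_t}+\abs{\CountVar_t}$. It sends the algorithm into $\mathcal{T}_2$ exactly in the branch where more sellers accept. That branch sets $\hvecsAt{t+1}=[\LB{\phi}_t,p]$, which shrinks the interval by only $2^{-2^{k_t}}$ and need not change the level. It also sets $\Count{b}_{t+1}=\Count{b}_t\cup\CountVar_t$. Symmetrically, the algorithm leaves $\mathcal{T}_2$ only through the seller-side \TEST branch where more buyers accept, and that branch enlarges $\Count{s}$. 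So switches are paid for by growth of $\Count{s}$ or $\Count{b}$, not by level increments.

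\textbf{The missing fact.} What you need, and did not verify, is that $\Count{s}_t$ and $\Count{b}_t$ are nondecreasing as sets over the entire run. \TEST never resets them. The test price $p$ always lies in $(\LB{\phi}_t,\UB{\phi}_t)$, so the sellers accepting $p$ include $\Count{s}_t$ and the buyers accepting $p$ include $\Count{b}_t$. Every other update is a union with $\CountVar_t$.

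\textbf{What this gives.}
\begin{itemize}
\item The $\Count{s}$-enlarging blocks in $\mathcal{T}_2$ number at most $n$ over the whole horizon, each costing at most $n^2+n$.
\item Failed blocks and level-incrementing \TEST blocks are counted per level globally, because $\abs{\hvecsAt{t}}$ is nonincreasing.
\end{itemize}
So interleaving with $\mathcal{T}_1$ costs nothing, and no $n^3\log\log T$ term arises.

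Your ``global potential'' fix is correct, but it holds only because of this monotonicity. Your remark that \TEST can ``reset'' $\Count{s}$ or $\Count{b}$ points the wrong way. Without the monotonicity argument, the $n^3$ term in your bound is not supported.
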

\begin{proof}
    The proof mirrors that of \Cref{lem:segpricereg-t2}.  
\end{proof}

\begin{restatable}{lemma}{lemSegpriceregTFour}
    \label{lem:segpricereg-t4}
    The cumulative regret of rounds in $\mathcal{T}_3$ is  at most $O(n)$.
\end{restatable}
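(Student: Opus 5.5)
The plan is to show that every round $\ts\in\mathcal{T}_3$ incurs immediate regret at most $O(\traderNum\cdot\abs{\hvecsAt{\ts}})\le O(\traderNum/\timeHorizon)$. Summing over at most $\timeHorizon$ rounds then gives $O(\traderNum)$. Two preliminary remarks simplify things. First, rounds after a \TEST halt incur zero regret by \Cref{lem:fixed-opt}. Second, once $\abs{\hvecsAt{\ts}}\le 1/\timeHorizon$ the interval, $\Count{s}$ and $\Count{b}$ are never updated again, so all of $\mathcal{T}_3$ uses one fixed mechanism. By symmetry, I treat only the case $\abs{\Count{s}_\ts}\ge\abs{\Count{b}_\ts}$.

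\textbf{Acceptance pattern.} By \Cref{lem:invariant}(1), every seller in $\Count{s}_\ts$ has cost below $\LB{\phi}_\ts$, so all of them accept $\LB{\phi}_\ts$. Every remaining seller is offered price $0$ and, having cost above $\LB{\phi}_\ts\ge 0$, rejects; I will ignore the zero-cost tie, or handle it by a tie convention. On the buyer side, all of $\Count{b}_\ts$ have values above $\UB{\phi}_\ts$ and accept $\LB{\phi}_\ts$. Buyers offered $1$ reject, except value-$1$ ties, which I treat the same way. The extra $\abs{\Count{s}_\ts}-\abs{\Count{b}_\ts}$ buyers must be drawn from $\buyerSet_\ts\setminus\Count{b}_\ts$, as the algorithm's overview describes. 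By \Cref{lem:invariant}(2), those buyers have value at least $\LB{\phi}_\ts$ and so accept, and $\abs{\buyerSet_\ts}\ge\abs{\Count{s}_\ts}$ guarantees that enough of them exist.

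\textbf{Regret bound.} Exactly $\abs{\Count{s}_\ts}$ sellers and $\abs{\Count{s}_\ts}$ buyers accept, so any maximum matching pairs all of them. The realized GFT is therefore independent of the adversarial matching: it equals the sum of the accepting buyers' values minus the sum of the costs in $\Count{s}_\ts$. By the invariants, $\Count{s}_\ts$ consists of exactly the $\abs{\Count{s}_\ts}$ lowest-cost sellers and $\Count{b}_\ts$ of the $\abs{\Count{b}_\ts}$ highest-value buyers. Also $\tradeNum^\star\ge\abs{\Count{s}_\ts}$, which follows as in \Cref{lem: failed exploration}. Reusing the decomposition from that lemma, the regret becomes a sum of three groups of terms: $\sum_{k=\abs{\Count{s}_\ts}+1}^{\tradeNum^\star}(\buyerValueOrd{k}-\sellerValueOrd{k})$, plus $\sum_{k=\abs{\Count{b}_\ts}+1}^{\abs{\Count{s}_\ts}}\buyerValueOrd{k}$, minus the values of the extra chosen buyers.

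Each term of the first sum involves a buyer outside $\Count{b}_\ts$ and a seller outside $\Count{s}_\ts$ with value at least cost. Both therefore lie in $[\LB{\phi}_\ts,\UB{\phi}_\ts]$, so each term is at most $\abs{\hvecsAt{\ts}}$. For the remaining terms, pair each order statistic $\buyerValueOrd{k}$ with one extra chosen buyer. Both values lie in $[\LB{\phi}_\ts,\UB{\phi}_\ts]$, so each pairwise difference is at most $\abs{\hvecsAt{\ts}}$. In total, $\regAt{\ts}\le\traderNum\abs{\hvecsAt{\ts}}\le\traderNum/\timeHorizon$.

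\textbf{Main obstacle.} The hard step is justifying that the extra buyers have values at least $\LB{\phi}_\ts$, i.e.\ that they lie in $\buyerSet_\ts$ rather than being arbitrary members of $\buyerSet\setminus\Count{b}$. The pseudocode is loose on this point, so I will rely on the overview's specification together with \Cref{lem:invariant}(2). Should that fail, I would have to argue directly that $\LB{\phi}_\ts$ is a lower bound certified by some earlier demand-exceeds-supply test.
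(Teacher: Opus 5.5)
Your proposal is correct and follows the paper's route: bound each round of $\mathcal{T}_3$ by $\traderNum\cdot\abs{\hvecsAt{\ts}}\le \traderNum/\timeHorizon$ via the decomposition from \Cref{lem: failed exploration} and both parts of \Cref{lem:invariant}, then sum over at most $\timeHorizon$ rounds. Your acceptance accounting and your pairing of $\buyerValueOrd{k}$ for $k\in(\abs{\Count{b}_\ts},\abs{\Count{s}_\ts}]$ with the extra buyers just spell out what the paper leaves as ``it can be verified.'' Your reading that the extra buyers are drawn from $\buyerSet_\ts\setminus\Count{b}_\ts$ is exactly what the paper's own proof of this lemma states.
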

\begin{proof}
    For rounds $\ts \in \mathcal{T}_3$, 
\Cref{alg:GFTSegprice} proposes price based on the following two cases:
    \begin{itemize}
        \item if $\abs{\Count{s}_\ts} \geq \abs{\Count{b}_\ts}$, propose $\LB{\phi}_\ts$ to sellers in $\Count{s}_\ts$ and $0$ to the remaining sellers. Choose arbitrarily $\abs{\Count{s}_\ts} - \abs{\Count{b}_\ts}$ buyers from $\buyerSet_\ts \setminus \Count{b}$ and propose $\LB{\phi}_\ts$ to them and the buyers in $\Count{b}_\ts$. Propose $1$ to the remaining buyers.
        \item if $\abs{\Count{s}_\ts} < \abs{\Count{b}_\ts}$, propose $\UB{\phi}_\ts$ to buyers in $\Count{b}_\ts$ and $1$ to the remaining buyers. Choose arbitrarily $\abs{\Count{b}_\ts} - \abs{\Count{s}_\ts}$ sellers from $\sellerSet_\ts \setminus \Count{s}_\ts$ and propose $\UB{\phi}_\ts$ to them and sellers in $\Count{s}_\ts$. Propose price $0$ to the remaining sellers.
    \end{itemize}
    Similar as \Cref{lem: failed exploration}, it can be verified for both two cases, the regret is at most
    $
    \regAt{\ts} \leq \traderNum \cdot \abs{\hvecsAt{\ts}} \leq \frac{\traderNum}{\timeHorizon},
    $
Therefore, the total regret in rounds $\mathcal{T}_3$ is at most
    \begin{align*}
    \sum\nolimits_{\ts \in \mathcal{T}_3}\regAt{\ts} \leq \frac{\traderNum}{\timeHorizon} \cdot \timeHorizon = O(n).
    \hfill .
    \end{align*}
\end{proof}

With the lemmas above, \Cref{thm:GFT segmented price upper bound} follows directly.

\begin{proof}[Proof of \Cref{thm:GFT segmented price upper bound}]
    Combining \cref{lem:segpricereg-t2}, \cref{lem:segpricereg-t3}, and \cref{lem:segpricereg-t4}, we complete the proof directly.
\end{proof}

\subsection{Profit Maximization}
\label{sec:profit-two-sided market}

This section studies the profit maximization in a two-sided market. 
Our main result is an algorithm implementing the \TwoPriceMech which achieves $O(\traderNum^2 \log\log \timeHorizon)$ regret, optimal in time horizon $\timeHorizon$.
This result is the no-context special case (or fixed-context case) of the more general contextual two-sided market studied in \Cref{sec: contextual two-sided market profit}.

Next, we provide a high-level sketch of the algorithm, deferring the detailed algorithmic description and rigorous analysis to \Cref{sec: contextual two-sided market profit}.
Without loss of generality, we assume the number of buyers and sellers is equal, i.e., $\buyerNum=\sellerNum$, since we can always add dummy buyers with value $0$ or dummy sellers with cost $1$ to balance the market.

\xhdr{Algorithm overview.} Our algorithm maintains an uncertainty interval for the cost or value of each trader. For buyer $\buyerIndex\in \buyerSet$ (resp.\ seller $\sellerIndex \in \sellerSet$), the uncertainty interval at round $\ts$ is denoted by $[\buyerValueLbOfAt{\buyerIndex}{\ts}, \buyerValueUbOfAt{\buyerIndex}{\ts}]$ (resp.\ $[\sellerValueLbOfAt{\sellerIndex}{\ts}, \sellerValueUbOfAt{\sellerIndex}{\ts}]$). All uncertainty intervals are initialized to $[0,1]$ at the beginning of the algorithm. 
The algorithm implements the \TwoPriceMech~as introduced in \Cref{sec:bilateral trade:profit}.

At each round $\ts$, our algorithm considers the following optimistic fictitious market: for each buyer $\buyerIndex\in \buyerSet$, its value is set to be $\buyerValueUbOfAt{\buyerIndex}{\ts}$; for each seller $\sellerIndex \in \sellerSet$, its cost is set to be $\sellerValueLbOfAt{\sellerIndex}{\ts}$. We denote the optimal profit in this fictitious market as $\optProfitUpperBoundAt{\ts}\triangleq \max\nolimits_{k\in[\traderNum]}  k\cdot \parenfix{\buyerValueOrdUbOfAt{k}{\ts} - \sellerValueOrdLbOfAt{k}{\ts}}$,
where $\buyerValueOrdUbOfAt{k}{\ts}$ is the $k$-th largest value in $\{\buyerValueUbOfAt{\buyerIndex}{\ts}\}_{\buyerIndex\in\buyerSet}$ and $\sellerValueOrdLbOfAt{k}{\ts}$ is the $k$-th smallest cost in $\{\sellerValueLbOfAt{\sellerIndex}{\ts}\}_{\sellerIndex\in\sellerSet}$. The following lemma establishes that the optimal profit is monotone with respect to the traders' types and costs.

\begin{restatable}{lemma}{lemProfitTwoSidedOptimalProfitUpperBound}
\label{lem: profit two-sided optimal profit upper bound}
As a function of trades' values and costs, the optimal profit $\profitOpt\bracketfix{\setfix{\sellerValueAt{\sellerIndex}}_{\sellerIndex\in \sellerSet},\setfix{\buyerValueAt{\buyerIndex}}_{\buyerIndex\in \buyerSet}}$
is non-decreasing in each buyer's value and non-increasing in each seller's cost. 
\end{restatable}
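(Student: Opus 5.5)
We will use the explicit formula $\profitOpt = \max_{k\in[\min\{\buyerNum,\sellerNum\}]} k\cdot(\buyerValueAt{(k)} - \sellerValueAt{(k)})$, where $\buyerValueAt{(k)}$ is the $k$-th largest buyer value and $\sellerValueAt{(k)}$ is the $k$-th smallest seller cost. Because a pointwise maximum of non-decreasing functions is non-decreasing, the lemma reduces to showing that each term $k\cdot(\buyerValueAt{(k)} - \sellerValueAt{(k)})$, for fixed $k$, is non-decreasing in every buyer value and non-increasing in every seller cost. Since the coefficient $k$ is positive and fixed, this in turn reduces to two claims about order statistics. First, for each $k$, the $k$-th largest of $\{\buyerValueAt{\buyerIndex}\}_{\buyerIndex\in\buyerSet}$ is non-decreasing in each coordinate $\buyerValueAt{\buyerIndex}$. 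Second, for each $k$, the $k$-th smallest of $\{\sellerValueAt{\sellerIndex}\}_{\sellerIndex\in\sellerSet}$ is non-decreasing in each coordinate $\sellerValueAt{\sellerIndex}$. The second claim makes $-\sellerValueAt{(k)}$ non-increasing in each cost. Seller costs do not affect $\buyerValueAt{(k)}$, and buyer values do not affect $\sellerValueAt{(k)}$, so the two sides decouple.

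For the order-statistic monotonicity, we will use a counting characterization: $\buyerValueAt{(k)} = \max\{x : |\{\buyerIndex : \buyerValueAt{\buyerIndex}\ge x\}| \ge k\}$. Raising one coordinate $\buyerValueAt{\buyerIndex}$ can only enlarge the set $\{\buyerIndex : \buyerValueAt{\buyerIndex} \ge x\}$ for every $x$, so the feasible set of $x$ only grows and the maximum cannot decrease. Symmetrically, $\sellerValueAt{(k)} = \min\{x : |\{\sellerIndex : \sellerValueAt{\sellerIndex}\le x\}|\ge k\}$. Raising one cost can only shrink each set $\{\sellerIndex : \sellerValueAt{\sellerIndex}\le x\}$, so the feasible set of $x$ shrinks and the minimum cannot decrease.

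An alternative route, which we can mention as a cross-check, works directly from the definition $\profit[\sellerPrice,\buyerPrice] = \min\{|\{\sellerIndex:\sellerValueAt{\sellerIndex}\le\sellerPrice\}|,|\{\buyerIndex:\buyerValueAt{\buyerIndex}\ge\buyerPrice\}|\}\cdot(\buyerPrice-\sellerPrice)$. For fixed prices, the number of trading pairs is monotone in the types in the required directions, so the profit of every fixed price pair is monotone. Taking the supremum over all weakly budget-balanced price pairs then preserves monotonicity. This avoids order statistics entirely, except that one must confirm that $\profitOpt$ equals this supremum; the paper already asserts this equivalence. No step is a real obstacle. The only care needed is the well-definedness of order statistics under ties, which the counting characterization handles. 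We also need to note that the constraint set $k\in[\min\{\buyerNum,\sellerNum\}]$ does not depend on the types.
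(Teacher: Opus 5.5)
Your argument is correct, but your main route differs from the paper's. The paper's proof is essentially your ``cross-check.''

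The paper considers two markets that differ only in one buyer's value. It fixes an optimal price pair $(p^\star,q^\star)$ of the original market and observes that raising that buyer's value cannot decrease the profit of this fixed pair: the accepting-buyer count can only grow, and $q^\star-p^\star\ge 0$. It then bounds the result by the profit of the new market's optimal pair. Sellers are handled symmetrically.

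Your primary route instead works with the closed form $\max_{k} k\,(v_{(k)}-c_{(k)})$. It reduces everything to coordinatewise monotonicity of order statistics, which your counting characterization establishes cleanly, ties included.

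What your route buys:
\begin{itemize}
\item It proves monotonicity of exactly the expression the paper later uses for the fictitious benchmark, $\max_k k\,(\bar v_{(k),t}-\underline c_{(k),t})$.
\item It needs neither a nonnegative per-pair margin nor the identification of the formula with the supremum over weakly budget-balanced mechanisms. That identification actually fails when every $v_{(k)}<c_{(k)}$: the formula is then negative, while the best weakly budget-balanced mechanism earns $0$. Monotonicity survives in either reading, since $\max\{0,\cdot\}$ is monotone, but your primary route sidesteps the issue entirely.
\end{itemize}

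What the paper's route buys: it never touches order statistics. It applies verbatim to any benchmark defined as a maximum over a fixed mechanism class whose per-mechanism payoff is monotone in the types.
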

\begin{proof}
    The first part of the lemma follows directly from the definition of optimal profit. 
Consider two markets that differ only in the value of a single buyer $\buyerIndexVar$. 
    Let the value of buyer $\buyerIndexVar$ be $\buyerValueAt{\buyerIndexVar}$ in the first market  and $\tilde{\buyerValue}_{\buyerIndexVar} \geq \buyerValueAt{\buyerIndexVar}$ in the second. 
    Let $(\sellerPrice^\star, \buyerPrice^\star)$ and $(\tilde{\sellerPrice}^\star, \tilde{\buyerPrice}^\star)$ be the optimal prices in the first and the second markets, respectively.
    We extend the definition of function $\profit$ to include seller costs and buyer values as inputs, i.e., $\profit\left[\set{\sellerValueAt{\sellerIndex}}_{\sellerIndex\in \sellerSet},\set{\buyerValueAt{\buyerIndex}}_{\buyerIndex\in \buyerSet}, \sellerPrice, \buyerPrice\right]$ represents the profit achieved under prices $(\sellerPrice, \buyerPrice)$ when the sellers' costs and buyers' values are given by $\set{\sellerValueAt{\sellerIndex}}_{\sellerIndex\in \sellerSet}$ and $\set{\buyerValueAt{\buyerIndex}}_{\buyerIndex\in \buyerSet}$ respectively.
Then we have
    \begin{align*}
        \profitOpt\left[\set{\sellerValueAt{\sellerIndex}}_{\sellerIndex\in \sellerSet},\set{\buyerValueAt{\buyerIndex}}_{\buyerIndex\in \buyerSet}\right] & = \profit\left[\set{\sellerValueAt{\sellerIndex}}_{\sellerIndex\in \sellerSet},\set{\buyerValueAt{\buyerIndex}}_{\buyerIndex\in \buyerSet}, \sellerPrice^\star, \buyerPrice^\star\right] \\
        & \overset{(a)}{\leq} \profit\left[\set{\sellerValueAt{\sellerIndex}}_{\sellerIndex\in \sellerSet},\set{\buyerValueAt{\buyerIndex}}_{\buyerIndex\in \buyerSet, \buyerIndex \neq \buyerIndexVar}\cup \setfix{\tilde{\buyerValue}_{\buyerIndexVar}}, \sellerPrice^\star, \buyerPrice^\star\right] \\
        & \overset{(b)}{\leq} \profit\left[\set{\sellerValueAt{\sellerIndex}}_{\sellerIndex\in \sellerSet},\set{\buyerValueAt{\buyerIndex}}_{\buyerIndex\in \buyerSet, \buyerIndex \neq \buyerIndexVar}\cup \setfix{\tilde{\buyerValue}_{\buyerIndexVar}}, \tilde{\sellerPrice}^\star, \tilde{\buyerPrice}^\star\right] \\
        & = \profitOpt\left[\set{\sellerValueAt{\sellerIndex}}_{\sellerIndex\in \sellerSet},\set{\tilde{\buyerValue}_{\buyerIndex}}_{\buyerIndex\in \buyerSet}\right],
    \end{align*}
    where $(a)$ follows from the fact that increasing the value of buyer $\buyerIndexVar$ cannot decrease the profit under any fixed prices, $(b)$ follows from the optimality of $(\sellerPrice'^\star, \buyerPrice'^\star)$ in the second market. A symmetric argument applies to the sellers' costs. \end{proof}

As a result, the optimal profit in this optimistic fictitious market $\optProfitUpperBoundAt{\ts}$ serves as an upper bound of the optimal profit in the real market.
Our algorithm first calculates the profit-maximizing trade size $\tradeNum^\star$ in the optimistic fictitious market at round $\ts$, denoted as $\tradeNum^\star \triangleq \argmax\nolimits_{k\in[\traderNum]}  k\cdot \parenfix{\buyerValueOrdUbOfAt{k}{\ts} - \sellerValueOrdLbOfAt{k}{\ts}}$.
Let 
$\ficTradingBuyersAt{\ts}\triangleq \setfix{\buyerIndex \in \buyerSet \condition \UB{\buyerValue}_{\buyerIndex,\ts}\geq \buyerValueOrdUbOfAt{k^\star}{\ts}}$  and  $\ficTradingSellersAt{\ts}\triangleq \setfix{\sellerIndex \in \sellerSet \condition \LB{\sellerValue}_{\sellerIndex,\ts}\leq \sellerValueOrdLbOfAt{k^\star}{\ts}}$

be the set of buyers and sellers involved in this trading respectively.
Note that the size of both $\ficTradingBuyersAt{\ts}$ and $\ficTradingSellersAt{\ts}$ may exceed $\tradeNum^\star$ due to the same values or costs, our algorithm arbitrarily breaks ties to make the sizes of both sets equal to $\tradeNum^\star$ if necessary.
Next, our algorithm determines the \emph{KL search prices}~\citep{KL-03} of each trade in $\ficTradingBuyersAt{\ts}$ and $\ficTradingSellersAt{\ts}$ respectively. 
Let $\sellerIntGapOfAt{\sellerIndex}{\ts}\triangleq \sellerValueUbOfAt{\sellerIndex}{\ts} - \sellerValueLbOfAt{\sellerIndex}{\ts}$ and $\buyerIntGapOfAt{\buyerIndex}{\ts}\triangleq \buyerValueUbOfAt{\buyerIndex}{\ts}-\buyerValueLbOfAt{\buyerIndex}{\ts} $ be the width of the uncertainty interval of seller $\sellerIndex$ and buyer $\buyerIndex$ respectively. 
Let $\sellerGapLevelOfAt{\sellerIndex}{\ts}\triangleq \lfloor 1 + \log\log (\sellerIntGapOfAt{\sellerIndex}{\ts})^{-1} \rfloor$, $\buyerGapLevelOfAt{\buyerIndex}{\ts}\triangleq \lfloor 1 + \log\log (\buyerIntGapOfAt{\buyerIndex}{\ts})^{-1} \rfloor$.
The KL search price for trades in $\ficTradingBuyersAt{\ts} \cup \ficTradingSellersAt{\ts}$ are defined as
\begin{align}
    \label{eq: kl search price two-sided}
    \buyerPriceAtOf{\ts}{\buyerIndex} \triangleq \buyerValueLbOfAt{\buyerIndex}{\ts} + 2^{-2^{\buyerGapLevelOfAt{\buyerIndex}{\ts}}} \quad \mbox{and} \quad \sellerPriceAtOf{\ts}{\sellerIndex} \triangleq \sellerValueUbOfAt{\sellerIndex}{\ts} - 2^{-2^{\sellerGapLevelOfAt{\sellerIndex}{\ts}}} \quad \mbox{for all}\quad \buyerIndex,\sellerIndex \in \ficTradingBuyersAt{\ts} \cup \ficTradingSellersAt{\ts}.
\end{align}

Let $\sellerSearchIndex{\ts} \triangleq \argmax\nolimits_{\sellerIndex\in \ficTradingSellersAt{\ts}} \sellerPriceAtOf{\ts}{\sellerIndex}$ and $\buyerSearchIndex{\ts} \triangleq \argmin\nolimits_{\buyerIndex\in \ficTradingBuyersAt{\ts}} \buyerPriceAtOf{\ts}{\buyerIndex}$.
If $\sellerPriceAtOf{\ts}{\sellerSearchIndex{\ts}} \leq \buyerPriceAtOf{\ts}{\buyerSearchIndex{\ts}}$, the algorithm posts distinct prices $\sellerPriceAt{\ts}= \sellerPriceAtOf{\ts}{\sellerSearchIndex{\ts}}$ for sellers and $\buyerPriceAt{\ts} = \buyerPriceAtOf{\ts}{\buyerSearchIndex{\ts}}$ for buyers. Otherwise, let $\traderSearchIndexAt{\ts}$ be the trader with the largest uncertainty interval width among $\sellerSearchIndex{\ts}$ and $\buyerSearchIndex{\ts}$. In this case, a single price is posted to the market, set to the midpoint of $\traderSearchIndexAt{\ts}$'s uncertainty interval. Obviously, the pricing mechanism described above satisfies the weakly budget balance constraint. At the end of each round, the algorithm updates the uncertainty interval of each trader according to their binary feedbacks.
We give the complete description of the more general algorithm for the contextual setting in \Cref{alg: contextual two-sided market profit}.

Now we present the regret guarantee of our algorithm in the following theorem.  The proof of this theorem is a special case of the 
proof of \Cref{thm: contextual two-sided market profit}. Some minor changes are made to adapt to the contextual setting, we will discuss them in \Cref{sec: contextual two-sided market profit}.

\begin{restatable}{theorem}{thmTwoSidedMarketProfit}
\label{thm:two-sided market profit}
In the two-sided market model, for profit maximization, the regret of \MSSPM~(\Cref{alg: contextual two-sided market profit}) is $O(\traderNum^2 \log\log \timeHorizon)$, which is optimal in terms of time horizon $\timeHorizon$.
\end{restatable}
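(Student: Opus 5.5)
We would prove \Cref{thm:two-sided market profit} by a charging argument. Every round's regret is charged to a single trader's uncertainty interval, namely the ``bottleneck'' trader $\sellerSearchIndex{\ts}$ or $\buyerSearchIndex{\ts}$. We then bound the total charge per trader by $O(\traderNum\log\log\timeHorizon)$ using the Kleinberg--Leighton potential from the proof of \Cref{thm:bilateral trade:profit}. Summing over the $2\traderNum$ traders gives $O(\traderNum^2\log\log\timeHorizon)$. The lower bound needs no new work: the bilateral case with a zero-cost seller is dynamic pricing, so \citet{KL-03} gives $\Omega(\log\log\timeHorizon)$.

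The first step is an invariant: each true cost and value stays inside its maintained interval, which follows from the binary feedback updates. Combined with \Cref{lem: profit two-sided optimal profit upper bound}, this gives $\profitOpt\le\optProfitUpperBoundAt{\ts}=\tradeNum^\star(\buyerValueOrdUbOfAt{\tradeNum^\star}{\ts}-\sellerValueOrdLbOfAt{\tradeNum^\star}{\ts})$ in every round.

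We then split rounds into two types.

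\emph{Two-price rounds.} Here $\sellerPriceAtOf{\ts}{\sellerSearchIndex{\ts}}\le\buyerPriceAtOf{\ts}{\buyerSearchIndex{\ts}}$, and the algorithm posts $\sellerPriceAt{\ts}=\max_{\sellerIndex\in\ficTradingSellersAt{\ts}}\sellerPriceAtOf{\ts}{\sellerIndex}$ and $\buyerPriceAt{\ts}=\min_{\buyerIndex\in\ficTradingBuyersAt{\ts}}\buyerPriceAtOf{\ts}{\buyerIndex}$.

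Suppose every fictitious trader accepts. Then at least $\tradeNum^\star$ pairs trade, and the per-pair loss against the fictitious optimum satisfies
\[
(\buyerValueOrdUbOfAt{\tradeNum^\star}{\ts}-\buyerPriceAt{\ts})+(\sellerPriceAt{\ts}-\sellerValueOrdLbOfAt{\tradeNum^\star}{\ts})\le \buyerIntGapOfAt{\buyerSearchIndex{\ts}}{\ts}+\sellerIntGapOfAt{\sellerSearchIndex{\ts}}{\ts}.
\]
To see this, note that $\buyerPriceAt{\ts}$ is attained by buyer $\buyerSearchIndex{\ts}$ and $\buyerValueUbOfAt{\buyerSearchIndex{\ts}}{\ts}\ge\buyerValueOrdUbOfAt{\tradeNum^\star}{\ts}$ since $\buyerSearchIndex{\ts}\in\ficTradingBuyersAt{\ts}$; the seller side is symmetric. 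The regret is therefore at most $\traderNum(\buyerIntGapOfAt{\buyerSearchIndex{\ts}}{\ts}+\sellerIntGapOfAt{\sellerSearchIndex{\ts}}{\ts})$. Each fictitious trader faces a price no more aggressive than his own KL price. So acceptance by $\buyerSearchIndex{\ts}$ is an acceptance at exactly his KL price, and it shrinks his interval by $2^{-2^{\buyerGapLevelOfAt{\buyerSearchIndex{\ts}}{\ts}}}$. As in \Cref{thm:bilateral trade:profit}, the total width charged to one trader over all accepting rounds is $O(\log\log\timeHorizon)$ after including the $1/\timeHorizon$ tail. Multiplying by $\traderNum$ gives $O(\traderNum\log\log\timeHorizon)$ per trader.

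Suppose instead some fictitious trader rejects. The regret is at most $\traderNum$. The key claim is that such a trader rejected his own KL price, or a more aggressive one, so his level $\buyerGapLevelOfAt{\buyerIndex}{\ts}$ or $\sellerGapLevelOfAt{\sellerIndex}{\ts}$ strictly increases. Levels are capped at $\log\log\timeHorizon$, so each trader can cause at most $\log\log\timeHorizon$ such rounds. This costs $O(\traderNum\log\log\timeHorizon)$ per trader.

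\emph{Single-price rounds.} Here the midpoint of the wider interval $\traderSearchIndexAt{\ts}$ is posted. The crossing condition $\sellerPriceAtOf{\ts}{\sellerSearchIndex{\ts}}>\buyerPriceAtOf{\ts}{\buyerSearchIndex{\ts}}$ forces the two intervals to overlap, so the fictitious margin satisfies $\buyerValueOrdUbOfAt{\tradeNum^\star}{\ts}-\sellerValueOrdLbOfAt{\tradeNum^\star}{\ts}\le\buyerIntGapOfAt{\buyerSearchIndex{\ts}}{\ts}+\sellerIntGapOfAt{\sellerSearchIndex{\ts}}{\ts}\le2|\traderSearchIndexAt{\ts}|$. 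Hence $\regAt{\ts}\le\profitOpt\le 2\traderNum|\traderSearchIndexAt{\ts}|$. The posted midpoint halves $\traderSearchIndexAt{\ts}$'s interval, so the charges to a fixed trader form a geometric series summing to $O(\traderNum)$. We must also check that halving never pushes a trader to a level above $\log\log\timeHorizon$ in a way that breaks the per-level counting. A halving only raises the level, and the KL accounting per level is monotone, so this should be fine.

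The main obstacle is the two-price rejection case. A trader in $\ficTradingBuyersAt{\ts}$ other than $\buyerSearchIndex{\ts}$ faces the price $\buyerPriceAt{\ts}$, which is at most his own KL price. His rejection is therefore an even more informative cut, but his update is to $[\buyerValueLbOfAt{\buyerIndex}{\ts},\buyerPriceAt{\ts}]$, and it may happen that $\buyerPriceAt{\ts}<\buyerValueLbOfAt{\buyerIndex}{\ts}$. That cannot occur, because every buyer in $\ficTradingBuyersAt{\ts}$ would then accept; the relevant comparison is against the true value $\buyerValue_{\buyerIndex}\ge\buyerValueLbOfAt{\buyerIndex}{\ts}$. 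Even so, one still has to verify that the new width is at most $2^{-2^{\buyerGapLevelOfAt{\buyerIndex}{\ts}}}$, so that the level rises. That holds since $\buyerPriceAt{\ts}-\buyerValueLbOfAt{\buyerIndex}{\ts}\le 2^{-2^{\buyerGapLevelOfAt{\buyerIndex}{\ts}}}$.

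Non-fictitious traders can also accept. A symmetric care is needed for this case: their extra acceptances only add trading pairs at margin $\buyerPriceAt{\ts}-\sellerPriceAt{\ts}\ge0$, so they never hurt profit. Once these monotonicity facts are nailed down, summing the per-trader bounds over the $2\traderNum$ traders gives the theorem.
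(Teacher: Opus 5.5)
Your proposal is correct and follows the paper's argument. The paper proves this theorem as the non-contextual case of \Cref{thm: contextual two-sided market profit}, and it splits rounds exactly as you do: two-price rounds with a fictitious rejection (each forces a level increase, so there are $O(\traderNum\log\log\timeHorizon)$ of them, each with regret at most $\traderNum$), two-price rounds in which all fictitious traders accept (regret at most $\tradeNum^\star$ times the two bottleneck widths, charged through the KL potential), and crossing rounds (the benchmark is bounded by the bottleneck widths, and the single price halves the wider interval). The only difference is that you do the per-level accounting directly on the one-dimensional KL prices rather than through the padded-volume (Steiner) potential. This costs an extra $\log\log\timeHorizon$ factor in the all-accept rounds, which is harmless because the rejection rounds already contribute $O(\traderNum^2\log\log\timeHorizon)$.
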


\section{Extend to Contextual Setting}
\label{sec:contextual setting}
In this section we extend our algorithms to the contextual setting. In this extension, each seller $\sellerIndex\in\sellerSet$ (resp.\ buyer $\buyerIndex\in\buyerSet$) is associated with a private \emph{$\dimension$-dimensional feature vector} $\sellerVecAt{\sellerIndex}\in\ball_\dimension$ (resp.\ $\buyerVecAt{\buyerIndex}\in\ball_\dimension$) in the unit ball $\ball_\dimension$. The principal does not know the traders' feature vectors. In each round $t\in[T]$, an adversarially chosen \emph{product context} $\contextAt{\ts}\in \ball_\dimension$ is revealed to the principal, and each trader's cost (for sellers) or value (for buyers) is given by the inner product of their private feature vector and the product context, i.e., $\sellerValueAt{\sellerIndex,\ts} \triangleq  \innerproduct{\sellerVecAt{\sellerIndex},\contextAt{\ts}}$ and $\buyerValueAt{\buyerIndex,\ts} \triangleq  \innerproduct{\buyerVecAt{\buyerIndex},\contextAt{\ts}}$. Both the optimum profit and GFT benchmarks are defined based on round-dependent profiles $\{(\sellerValueAt{1,\ts}, \dots, \sellerValueAt{\sellerNum,\ts},\buyerValueAt{1,\ts}, \dots, \buyerValueAt{\buyerNum,\ts})\}_{t\in[T]}$. We focus on the bilateral trade setting in \Cref{sec: contextual bilateral trade}, and consider the general two-sided markets in \Cref{sec: contextual two-sided market,sec: contextual two-sided market profit}.

\subsection{Gains-from-Trade Maximization in Contextual Bilateral Trade Model}
\label{sec: contextual bilateral trade}
Since we are considering the bilateral trade setting, we drop the subscripts of seller and buyer for simplicity. That is, we denote the seller's feature vector as $\sellerVecAt{}$ and the buyer's feature vector as $\buyerVecAt{}$. The seller's value and buyer's value at round $\ts$ can be expressed as $\sellerValueAt{\ts} = \lrangle{\sellerVec, \context_\ts}$ and $\buyerValueAt{\ts} = \lrangle{\buyerVec, \context_\ts}$ respectively.

We generalize our results in \Cref{sec:bilateral trade:GFT} to the contextual setting in this section. Similar as the non-contextual setting, our algorithm maintains a convex hypothesis set $\traderVecsAt{\ts}$ at round $\ts$, which contains both $\sellerVec$ and $\buyerVec$. 
The hypothesis set is initialized as $\traderVecsAt{1} = \ball_\dimension$.
We define the width of a convex set $\mathcal{K}\subset \mathbb{R}^\dimension$ along direction $\context$ as $\width{\mathcal{K}}{\context} = \max_{\sellerVecAt{1},\sellerVecAt{2}\in \mathcal{K}} \lrangle{\sellerVecAt{1}- \sellerVecAt{2}, \context}$. 

A straightforward approach to extend \BTGFT~(\Cref{alg:bilateral trade:GFT}) to the contextual setting involves proposing a fixed price $\buyerPrice_\ts=\sellerPrice_\ts = y_\ts$ such that $\vol{\set{\hvec\in \traderVecsAt{\ts}  \condition \lrangle{\hvec, \context_\ts}\geq y_\ts}} = \frac{1}{2} \vol{\traderVecsAt{\ts} }$. This approach ensures that the volume of $\traderVecsAt{\ts}$ shrinks by constant fraction in each round. This is analogous to the non-contextual setting, where the length of the uncertainty interval is halved. However, the fundamental limitation is that a small volume does not imply a small directional width $\width{\traderVecsAt{\ts}}{\context_\ts}$, leading to an arbitrarily large regret. Several techniques are developed to address this issue, such as the ellipsoid-based technique by \citet{CLP-20} and the intrinsic volume (Steiner polynomial) based technique by \citet{LS-18} and \citet{LLS-21}. 

We adopt the Steiner polynomial-based technique to address this challenge. The key idea is to select a fixed price $\buyerPrice_\ts=\sellerPrice_\ts = y_\ts$ that bisects the volume of a padded version of the hypothesis set, rather than the original set. The padding scale is adapted to the width of the hypothesis set along the context direction. Specifically, we define a series of padding parameters $\padPara{\padIndex} = 2^{-\padIndex}/(8\dimension)$ for all integers $\padIndex \geq 0$. At round $\ts$, let $\padIndexAt{\ts}$ be the largest integer satisfying $\width{\traderVecsAt{\ts}}{\context_\ts}\leq 2^{-\padIndexAt{\ts}}$.
The algorithm then sets the fixed price $\buyerPrice_\ts=\sellerPrice_\ts = y_\ts$ to be the median of the padded hypothesis set $\traderVecsAt{\ts} + \padPara{\padIndexAt{\ts}} \ball$ along the direction $\context_\ts$. This ensures that the volume of the candidate set shrinks by constant fraction after observing the feedback.
The full procedure is detailed in~\Cref{alg: multi-scale steiner GFT}.

\begin{algorithm}[h]
\caption{\MSSGFTM}
\label[algorithm]{alg: multi-scale steiner GFT}
\SetAlgoLined
\SetNoFillComment
\KwIn{time horizon $\timeHorizon$}
\KwOut{{\SinglePriceMech} in contextual bilateral trade for $\ts \in [\timeHorizon]$}
\KwAux{$\feedbackSellerAt{\ts}$ and $\feedbackBuyerAt{\ts}$: the seller and buyer indicator variables representing each trader's accept/reject decision}

\vspace{4mm}

Initialize $\hvecs_1 = \ball_\dimension$ and $\padPara{\padIndex} = 2^{-\padIndex}/8\dimension$ for all $\padIndex$

\For{$\ts \in [\timeHorizon]$}{
    Let $\padIndexAt{\ts}$ be the largest integer such that $\width{\hvecs_\ts}{\context_\ts}\leq 2^{-\padIndexAt{\ts}}$

    Propose fixed price $\sellerPrice_\ts = \buyerPrice_\ts= y_\ts$ such that $\vol{\set{\hvec\in \hvecs_\ts + \padPara{\padIndexAt{\ts}} \ball_\dimension \condition \lrangle{\hvec, \context_\ts}\geq y_\ts}} = \frac{1}{2} \vol{\hvecs_\ts + \padPara{\padIndexAt{\ts}} \ball_\dimension}$

    Observe feedback $\feedbackAt{\ts} = (\feedbackSellerAt{\ts}, \feedbackBuyerAt{\ts})$

    \If{$\feedbackAt{\ts} = (1,1)$ or $\feedbackAt{\ts} = (0,0)$}{
        Update $\hvecs_{\ts+1} = \hvecs_\ts$
    }
    
    \If{$\feedbackAt{\ts} = (1,0)$}{
        Update $\hvecs_{\ts+1} = \set{\hvec\in \hvecs_\ts \condition \lrangle{\hvec, \context_\ts}\leq y_\ts}$
    }

    \If{$\feedbackAt{\ts} = (0,1)$}{
        Update $\hvecs_{\ts+1} = \set{\hvec\in \hvecs_\ts \condition \lrangle{\hvec, \context_\ts}\geq y_\ts}$
    }
}
\end{algorithm}

\begin{lemma}\label{lem:GFT valid hypothesis}
    In \Cref{alg: multi-scale steiner GFT},  we have $\sellerVec, \buyerVec \in \traderVecsAt{\ts}$ holds for any $\ts\in [\timeHorizon]$.
\end{lemma}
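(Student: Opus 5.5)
We argue by induction on $\ts$. The base case is immediate: $\traderVecsAt{1} = \ball_\dimension$, and by the model assumption both private feature vectors satisfy $\sellerVec, \buyerVec \in \ball_\dimension$. For the inductive step, we assume $\sellerVec, \buyerVec \in \traderVecsAt{\ts}$ and check each branch of the update in \Cref{alg: multi-scale steiner GFT}.

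\emph{Feedback $(1,1)$ or $(0,0)$.} The set is unchanged, $\traderVecsAt{\ts+1} = \traderVecsAt{\ts}$, so the claim carries over trivially.

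\emph{Feedback $(1,0)$.} The seller accepts and the buyer rejects the common price $y_\ts$. By the acceptance rule, the seller accepts iff $\lrangle{\sellerVec,\context_\ts} \le y_\ts$, and the buyer rejects iff $\lrangle{\buyerVec,\context_\ts} < y_\ts$. Both vectors therefore satisfy $\lrangle{\hvec,\context_\ts}\le y_\ts$. Since both also lie in $\traderVecsAt{\ts}$, both lie in the halfspace intersection that defines $\traderVecsAt{\ts+1}$.

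\emph{Feedback $(0,1)$.} This case is symmetric. The seller rejects, so $\lrangle{\sellerVec,\context_\ts} > y_\ts$, and the buyer accepts, so $\lrangle{\buyerVec,\context_\ts}\ge y_\ts$. Both vectors thus lie in $\set{\hvec\in\traderVecsAt{\ts} \condition \lrangle{\hvec,\context_\ts}\ge y_\ts} = \traderVecsAt{\ts+1}$.

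The only point needing care is the orientation and tie-breaking. Ties are broken in favor of acceptance, which gives the weak versus strict inequalities above, and the closed halfspaces used in the update contain both possibilities. The padded set $\traderVecsAt{\ts}+\padPara{\padIndexAt{\ts}}\ball_\dimension$ is used only to choose $y_\ts$, not in the update itself, so it plays no role in this validity argument. No step is a real obstacle; this is routine consistency checking.
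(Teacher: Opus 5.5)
Your proof is correct and follows the paper's argument exactly. Both use induction on $\ts$ with base case $\traderVecsAt{1}=\ball_\dimension$; the $(1,1)$ and $(0,0)$ cases leave the set unchanged, and the $(1,0)$ and $(0,1)$ cases follow from the acceptance rule: a weak inequality for the accepting trader and a strict one for the rejecting trader, both contained in the closed half-space used in the update.
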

\begin{proof}
    We prove by induction. Initially, $\sellerVec, \buyerVec  \in \traderVecsAt{1}$ holds by the assumption.
    Assume that $\sellerVec, \buyerVec  \in \traderVecsAt{\ts}$, if $\feedbackAt{\ts} = (1,1)$ or $(0,0)$, since $\traderVecsAt{\ts+1} = \traderVecsAt{\ts}$, we have $\sellerVec, \buyerVec  \in \traderVecsAt{\ts+1}$ by assumption. 
    If $\feedbackAt{\ts} = (1,0)$ , we have 
    $\lrangle{\sellerVec, \context_\ts}\leq y_\ts$ and $\lrangle{\buyerVec, \context_\ts}< y_\ts$ by the definition of $\feedbackAt{\ts}$. 
    Thus, $\sellerVec, \buyerVec\in \set{\hvec \condition \lrangle{\hvec, \context_\ts}\leq y_\ts}$. Due to $\traderVecsAt{\ts+1} = \traderVecsAt{\ts} \cap \set{\hvec\condition\lrangle{\hvec, \context_\ts}\leq y_\ts}$ and our induction assumption, we have $\sellerVec, \buyerVec\in \traderVecsAt{\ts+1}$. The case $\feedbackAt{\ts} = (0,1)$ is similar to the case $\feedbackAt{\ts} = (1,0)$.
\end{proof}

\begin{lemma}[Restatement of Lemma 2.1 in \citealp{LLS-21}]
    \label{lem:GFT potential decay}
    Let $\traderVecsAt{}$ be a convex set, $\context \in \ball_\dimension$ be a vector, and $\padIndex$ be the largest integer satisfying $\width{\traderVecsAt{}}{\context}\leq 2^{-\padIndex}$. Let $y \in \mathbb{R}$ be a threshold such that 
    $\vol{\set{\hvec\in \traderVecsAt{} + \padPara{\padIndex} \ball_\dimension \condition \lrangle{\hvec, \context}\geq y}} = \frac{1}{2} \vol{\traderVecsAt{} + \padPara{\padIndex} \ball_\dimension}$.
    Then, we have $\vol{\traderVecsAt{}^+ + \padPara{\padIndex} \ball_\dimension}\leq \frac{3}{4} \vol{\traderVecsAt{} + \padPara{\padIndex} \ball_\dimension}$ and $\vol{\traderVecsAt{}^- + \padPara{\padIndex} \ball_\dimension}\leq \frac{3}{4} \vol{\traderVecsAt{} + \padPara{\padIndex} \ball_\dimension}$, where $\traderVecsAt{}^+ \triangleq \set{\hvec\in \traderVecsAt{} \condition \lrangle{\hvec, \context}\leq y}$ and $\traderVecsAt{}^- \triangleq \set{\hvec\in \traderVecsAt{} \condition \lrangle{\hvec, \context}\geq y}$.
\end{lemma}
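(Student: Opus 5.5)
The plan is to compare the padded half $\traderVecsAt{}^+ + \padPara{\padIndex}\ball_\dimension$ with the half of the padded body $\mathcal{K}\triangleq\traderVecsAt{} + \padPara{\padIndex}\ball_\dimension$ that lies below the threshold. By the choice of $y$, that lower half has volume exactly $\frac12\vol{\mathcal{K}}$. The excess is a thin slab, and I will show its volume is at most $\frac14\vol{\mathcal{K}}$. The case of $\traderVecsAt{}^-$ is symmetric (reflect $\context\mapsto-\context$), so I only treat $\traderVecsAt{}^+$. If $\context=0$ the statement is degenerate, so assume $\context\neq 0$ and write $u=\context/\|\context\|$ and $z=\padPara{\padIndex}=2^{-\padIndex}/(8\dimension)$.

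\emph{Step 1 (containment).} Any point of $\traderVecsAt{}^+ + z\ball_\dimension$ has the form $\hvec+w$ with $\lrangle{\hvec,\context}\le y$ and $\|w\|\le z$. Hence $\lrangle{\hvec+w,\context}\le y+z\|\context\|$, and of course $\hvec+w\in\mathcal{K}$. Therefore
\[
\vol{\traderVecsAt{}^+ + z\ball_\dimension}\le \vol{\set{\hvec\in\mathcal{K}\condition\lrangle{\hvec,\context}\le y}} + \vol{S}=\tfrac12\vol{\mathcal{K}}+\vol{S},
\]
where $S\triangleq\set{\hvec\in\mathcal{K}\condition y<\lrangle{\hvec,u}\|\context\|\le y+z\|\context\|}$. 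In $u$-coordinates, $S$ is a slab of thickness exactly $z$.

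\emph{Step 2 (slab bound via Brunn--Minkowski).} Let $f(s)$ be the $(\dimension-1)$-dimensional volume of the cross-section $\mathcal{K}\cap\set{\lrangle{\hvec,u}=s}$, and let $L$ be the length of its support, i.e.\ the width of $\mathcal{K}$ along $u$. By Brunn--Minkowski, $f^{1/(\dimension-1)}$ is concave on its support. Comparing $f$ with the cone spanned by its maximal section and the endpoints of the support gives
\[
\vol{\mathcal{K}}=\int f\ \ge\ \max f\cdot L/\dimension .
\]
Consequently
\[
\vol{S}\le z\max f\le \frac{\dimension z}{L}\vol{\mathcal{K}} .
\]

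\emph{Step 3 (lower bound on $L$; this is where the choice of padding scale matters and is the main point).} By maximality of $\padIndex$ we have $W\triangleq\width{\traderVecsAt{}}{\context}>2^{-\padIndex-1}$. Since $\|\context\|\le 1$, the width of $\traderVecsAt{}$ along $u$ is $W/\|\context\|\ge W$. Padding only increases the width, so
\[
L\ge W>2^{-\padIndex-1}=4\dimension z .
\]
Plugging into Step 2 gives $\vol{S}\le\frac14\vol{\mathcal{K}}$, and with Step 1 this yields the claimed $\frac34$ bound.

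The only delicate points are the cone inequality in Step 2, which is the standard consequence of $(\dimension-1)$-concavity, and making sure the normalization by $\|\context\|$ goes the right way. The latter works because $\|\context\|\le1$ only enlarges the $u$-width relative to $z$.
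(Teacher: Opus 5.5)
Your proof is correct. The three ingredients combine to exactly $\tfrac12+\tfrac14$: the containment of $\Phi^+ + z\,\mathbb{B}_d$ in the lower half of $\mathcal{K}=\Phi+z\,\mathbb{B}_d$ together with a slab of $u$-thickness $z$; the Brunn cone bound $\sup f \le d\,\mathrm{Vol}(\mathcal{K})/L$; and $L \ge W > 2^{-\ell-1} = 4dz$, which follows from the maximality of $\ell$ and $\|x\|\le 1$.

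The paper gives no proof of its own here; it imports the statement from \citet{LLS-21}. Your slab-plus-Brunn--Minkowski argument therefore supplies a self-contained justification of what the paper only cites. It also makes transparent that the factor $8d$ in $z_\ell$ is exactly what limits the slab to at most a quarter of the padded volume.
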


\begin{restatable}{theorem}{thmContetualGFTBT}
    \label{thm: contextual GFT regret}
    In the contextual bilateral trade model, for gains-from-trade maximization, the cumulative regret of \MSSGFTM~(\Cref{alg: multi-scale steiner GFT}) is at most $O(\dimension \log \dimension)$ where $\dimension$ is the dimensionality of the context space. 
\end{restatable}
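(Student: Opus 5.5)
We prove \Cref{thm: contextual GFT regret} with a potential argument built on the padded volumes $\vol{\traderVecsAt{\ts}+\padPara{\padIndex}\ball_\dimension}$, one for each scale $\padIndex\ge 0$, in the spirit of the symmetric-loss analysis of \citet{LLS-21}. The plan has three steps: bound the immediate regret, show that every regretful round shrinks a padded volume, and count how many regretful rounds each scale can absorb.

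\textbf{Step 1 (immediate regret).} In round $\ts$, if both traders accept, a trade occurs and the regret is $0$. If both reject, we would have $\buyerValueAt{\ts}<y_\ts<\sellerValueAt{\ts}$, so the optimal GFT is $0$ and the regret is again $0$. If exactly one trader rejects, no trade occurs, and the regret is $\max\{\buyerValueAt{\ts}-\sellerValueAt{\ts},0\}$. Since $\sellerVec,\buyerVec\in\traderVecsAt{\ts}$ by \Cref{lem:GFT valid hypothesis}, this is at most $\width{\traderVecsAt{\ts}}{\contextAt{\ts}}\le 2^{-\padIndexAt{\ts}}$. So regret is incurred only in rounds with feedback $(1,0)$ or $(0,1)$, and in exactly those rounds the hypothesis set is cut.

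\textbf{Step 2 (volume shrinkage).} Fix a scale $\padIndex$ and consider the rounds with $\padIndexAt{\ts}=\padIndex$ and feedback $(1,0)$ or $(0,1)$. In each such round, \Cref{lem:GFT potential decay} shows that $\vol{\traderVecsAt{\ts}+\padPara{\padIndex}\ball_\dimension}$ shrinks by a factor of at least $3/4$. In all other rounds $\traderVecsAt{\ts}$ can only shrink, so this padded volume never increases. The potential therefore starts at most $\vol{(1+\padPara{\padIndex})\ball_\dimension}$ and decreases geometrically at each regretful round of scale $\padIndex$.

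\textbf{Step 3 (lower bound on the potential).} This is the main obstacle. In any round with $\padIndexAt{\ts}=\padIndex$ we have $\width{\traderVecsAt{\ts}}{\contextAt{\ts}}>2^{-\padIndex-1}$, so $\traderVecsAt{\ts}$ contains a segment of length at least $2^{-\padIndex-1}$. Hence $\traderVecsAt{\ts}+\padPara{\padIndex}\ball_\dimension$ contains that segment thickened by a $(\dimension-1)$-ball of radius $\padPara{\padIndex}$, giving volume at least roughly $2^{-\padIndex-1}\,\kappa_{\dimension-1}\padPara{\padIndex}^{\dimension-1}$. Comparing with the initial bound $\kappa_\dimension(1+\padPara{\padIndex})^\dimension$ via the Steiner polynomial, as in \citet{LLS-21}, the number of regretful rounds at scale $\padIndex$ is $N_\padIndex=O(\log(1/\padPara{\padIndex})^{\dimension-1}+\padIndex)=O(\dimension(\padIndex+\log\dimension))$, where ratios of unit-ball volumes contribute only $O(\log\dimension)$.

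Scale $\padIndex$ contributes at most $N_\padIndex 2^{-\padIndex}$ regret. Summing,
\begin{align*}
\sum\nolimits_{\padIndex\ge0} O\big(\dimension(\padIndex+\log\dimension)\big)2^{-\padIndex}=O(\dimension\log\dimension),
\end{align*}
which is independent of $\timeHorizon$.

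The delicate points are the exact constant in the decay lemma and the per-round cylinder lower bound. The cylinder bound needs the width segment to survive in $\traderVecsAt{\ts}$ at the current round, which holds because the bound is evaluated at that same round. If the cylinder estimate turns out to be too lossy, I would instead use the intrinsic-volume bound $\vol{K+z\ball}\ge V_1(K)\kappa_{\dimension-1}z^{\dimension-1}$ together with $V_1(K)\ge\width{K}{\contextAt{\ts}}$.
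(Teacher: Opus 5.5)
Your proposal is correct and follows the paper's proof. Regret arises only in rounds with feedback $(1,0)$ or $(0,1)$, each costing at most $2^{-\padIndexAt{\ts}}$; each such round shrinks the padded volume by a factor $3/4$ via \Cref{lem:GFT potential decay}; counting $O(\dimension(\padIndex+\log\dimension))$ such rounds per scale and summing against $2^{-\padIndex}$ gives the bound.

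The only difference is your Step~3, which is not actually an obstacle: the paper just uses nonemptiness of the hypothesis set, $\vol{\traderVecsAt{\ts}+\padPara{\padIndex}\ball_\dimension}\ge\vol{\padPara{\padIndex}\ball_\dimension}$, which already yields the same $O(\dimension\log(1/\padPara{\padIndex}))$ count per scale. Your cylinder and intrinsic-volume refinement is valid but unnecessary.
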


\begin{proof}
    First, observe that for any $\ts\in [\timeHorizon]$, if $\feedbackAt{\ts} = (1,1)$, a trade occurs, resulting in zero immediate regret of this round. Similarly, for rounds with $\feedbackAt{\ts} = (0,0)$, the {\SinglePriceMech} implies $\buyerValueAt{\ts} < \sellerValueAt{\ts}$. In this case, the GFT is $0$ in this round, leading to an immediate regret of $0$.
    
    Consequently, we restrict our analysis to the rounds where $\feedbackAt{\ts} \in \{(1,0),(0,1)\}$. By~\Cref{lem:GFT valid hypothesis}, the set $\traderVecsAt{\ts}$ is never empty. Thus, we have $\vol{\traderVecsAt{\timeHorizon}+\padPara{\padIndex} \cdot \ball_\dimension}\geq \vol{\padPara{\padIndex}\cdot\ball_\dimension}=\padPara{\padIndex}^\dimension \cdot \vol{\ball_\dimension},  \forall \padIndex$.  
    Due to \Cref{lem:GFT potential decay}, the volume of $\hvecs_\ts + \padPara{\padIndexAt{}} \ball_\dimension$ is cut by a factor of at least $3/4$ in each round with $\feedbackAt{\ts} = (1,0)$ or $(0,1)$.
    Therefore, for each $\padIndex$, it can only be picked in a round with $\feedbackAt{\ts}= (1,0)$ or $(0,1)$ for at most $O(\dimension \log(1/\padPara{\padIndex}))$ times. On the other hand, every time index $\padIndex$ is selected as $\padIndexAt{\ts}$, the GFT objective can be bounded by the directional width:
    $$\buyerValueAt{\ts}-\sellerValueAt{\ts} = \lrangle{\buyerVec,\contextAt{\ts}} - \lrangle{\sellerVec, \contextAt{\ts}}\overset{(a)}{\leq} \width{\traderVecsAt{\ts}}{\context_\ts}\leq 2^{-\padIndexAt{\ts}},$$ 
    where (a) follows from $\sellerVec,\buyerVec \in \traderVecsAt{\ts}$ by \Cref{lem:GFT valid hypothesis}. Combining these rounds, the cumulative regret of \Cref{alg: multi-scale steiner GFT} is at most
    \[O\left(\sum\nolimits_{\padIndex=0}^{\infty} 2^{-\padIndex}\dimension\log(1/\padPara{\padIndex})\right)=O\left(\dimension\log\dimension\right).\qedhere\]
\end{proof}

\subsection{Gains-from-Trade Maximization in Contextual Two-sided Market Model}
\label{sec: contextual two-sided market}

The core strategy of our algorithm in this section is to first explore the uncertainty sets of all traders to a sufficient precision, and then exploit the learned knowledge to deploy a {\SegPriceMech} for GFT maximization. 

While exploring uncertainty sets is straightforward in the non-contextual setting, where prices can directly bisect the uncertainty intervals, it becomes non-trivial in the contextual setting. The difficulty arises because posted prices are coupled with varying context vectors. Consequently, reducing the uncertainty width along one direction does not necessarily reduce it along others. 
To address this challenge, we utilize the power of the \emph{Ellipsoid Method}~\citep{CLP-20}. 

\xhdr{Ellipsoid search.} An ellipsoid $\ellipsoidMC{A}{c} \subset \R^\dimension$ centered at $c \in \R^\dimension$ with shape matrix $A$ is defined as:
\begin{equation}
    \ellipsoidMC{A}{c} \triangleq \set{ \theta \in \R^\dimension \condition (\theta - c)^\top A^{-1} (\theta - c) \le 1 }.
\end{equation}
The eigenvalues of $A$ correspond to the squares of the lengths of the semi-axes of the ellipsoid.

In the ellipsoid search algorithm, we initialize with $\ellipsoid_1 = \ball_\dimension$ and aim to locate an unknown vector in $\hvec \in \ellipsoid_1$. At each iteration $\ts$, the algorithm receives a direction vector $\contextAt{\ts}$ and selects a value $\cutValue_\ts$ to query whether $\lrangle{\hvec, \contextAt{\ts}} \geq \cutValue$. 
The ellipsoid method employs the following strategy: it first computes the width of the current ellipsoid $\ellipsoid_\ts$ along the direction $\contextAt{\ts}$. If the width exceeds a pre-determined threshold $\searchGap$, the algorithm queries $\cutValue_\ts = \left(\min_{\hvec\in \ellipsoid_\ts} \lrangle{\hvec, \contextAt{\ts}}+ \max_{\hvec\in \ellipsoid_\ts} \lrangle{\hvec, \contextAt{\ts}}\right)/2$ and observes the feedback $\feedbackAt{\ts} = \indicator{\lrangle{\hvec, \contextAt{\ts}} \geq \cutValue_\ts}$.
Then, instead of directly cutting the ellipsoid using the half-space defined by the query, the algorithm updates the ellipsoid to its \emph{Löwner-John ellipsoid} defined as the minimum volume ellipsoid containing the intersection of $\ellipsoid_\ts$ with the valid half-space: $\set{\hvec \condition \lrangle{\hvec, \contextAt{\ts}} \geq \cutValue_\ts}$ if $\feedbackAt{\ts}=1$, and $\set{\hvec \condition \lrangle{\hvec, \contextAt{\ts}} \le \cutValue_\ts}$ if $\feedbackAt{\ts}=0$. The new ellipsoid is denoted as $\ellipsoid_{\ts+1}$.
Conversely, if the width is smaller than $\searchGap$, the algorithm skips the query and maintains the current ellipsoid, i.e., $\ellipsoid_{\ts+1} = \ellipsoid_\ts$.

We refer to the iterations where the algorithm performs a query and updates the ellipsoid as \emph{exploration steps}. 
The follow theorem provides an upper bound on the number of exploration steps regardless of the sequence of direction vectors $\contextAt{1}, \contextAt{2}, \ldots$.

\begin{theorem}[Bound on Number of Exploration Steps, \citealp{CLP-20}]
\label{thm:exploration_bound} 
Let $N$ be the total number of steps where the above ellipsoid search algorithm performs a non-trivial query (i.e., steps where the width condition $\width{\ellipsoid_\ts}{\contextAt{\ts}}> \searchGap$ is met). If the initial uncertainty set is contained within a ball of radius $R$, then $N$ is bounded by: 
\begin{equation} N \le 2d^2 \ln\left(\frac{20R(d+1)}{\searchGap}\right). \end{equation}
\end{theorem}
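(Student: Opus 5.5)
We prove the bound on the number of exploration steps with a volume-potential argument on the sequence of ellipsoids, in the style of \citet{CLP-20}. We track the volume $\vol{\ellipsoid_\ts}$, which only changes in exploration steps. At the start, $\vol{\ellipsoid_1}\le R^\dimension\vol{\ball_\dimension}$. The argument has two parts. First, we show each exploration step shrinks the volume by a fixed multiplicative factor. Second, we lower bound the volume of any ellipsoid on which an exploration step can still occur, using the width condition $\width{\ellipsoid_\ts}{\contextAt{\ts}}>\searchGap$. Comparing the two gives the bound on $N$.

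\textbf{Volume decrease.} For a central cut through the center of an ellipsoid in $\R^\dimension$, the standard Löwner-John formula gives the new shape matrix explicitly. It also gives
\[
\vol{\ellipsoid_{\ts+1}}\le e^{-1/(2(\dimension+1))}\vol{\ellipsoid_\ts}.
\]
Our query $\cutValue_\ts$ is the midpoint of the support values along $\contextAt{\ts}$, so the cut passes through the center and this estimate applies directly. Taking logarithms, after $N$ exploration steps we have $\ln\vol{\ellipsoid}\le \dimension\ln R+\ln\vol{\ball_\dimension}-N/(2(\dimension+1))$.

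\textbf{Lower bound on volume.} This is the main obstacle. A large width in the queried direction does not by itself force large volume, since the ellipsoid may be thin in other directions. The plan is to argue about the eigenvalues of the shape matrix $A_\ts$ instead of the volume alone. We use two facts from the explicit central-cut update.
\begin{itemize}
\item Once the smallest eigenvalue of $A_\ts$ is below roughly $\searchGap^2/(\dimension^2)$, a cut along direction $\contextAt{\ts}$ never decreases it further. The width along $\contextAt{\ts}$ is $2\sqrt{\contextAt{\ts}^\top A_\ts\contextAt{\ts}}>\searchGap$, so $\contextAt{\ts}$ is far from the small eigendirections. An eigenvalue-interlacing argument for the rank-one update $A\mapsto \frac{\dimension^2}{\dimension^2-1}\bigl(A-\frac{2}{\dimension+1}bb^\top\bigr)$ then shows small eigenvalues can only grow.
\item The largest eigenvalue stays bounded by $R^2$ times a constant.
\end{itemize}
Together these give $\lambda_{\min}(A_\ts)\ge c\,\searchGap^2/((\dimension+1)^2 400)$ for all $\ts$. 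Since the volume is proportional to $\sqrt{\det A_\ts}$, this yields $\vol{\ellipsoid_\ts}\ge (\searchGap/(20(\dimension+1)))^\dimension\vol{\ball_\dimension}$.

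\textbf{Combining.} The upper bound from the volume-decrease step must stay above the lower bound at every exploration step. This gives
\[
N/(2(\dimension+1))\le \dimension\ln\bigl(20R(\dimension+1)/\searchGap\bigr),
\]
hence $N\le 2\dimension(\dimension+1)\ln(\cdot)$. Up to the $\dimension$ versus $\dimension+1$ slack, which can be absorbed by the sharper volume ratio $e^{-1/(2\dimension)}$ when needed, this is the stated bound. The delicate step is the interlacing argument keeping $\lambda_{\min}$ from collapsing. I would follow the corresponding lemma of \citet{CLP-20} closely there, applied with the constants above.
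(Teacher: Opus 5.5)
The paper gives no proof of its own here; it imports the bound from \citet{CLP-20}, and your outline is that argument: central-cut volume decay, a uniform lower bound on $\lambda_{\min}$ of the shape matrix, and a volume comparison. Your remark that the sharper ratio $e^{-1/(2d)}$ is what produces $2d^2$ rather than $2d(d+1)$ is also right (the central-cut ratio $\frac{d}{d+1}\bigl(\frac{d^2}{d^2-1}\bigr)^{(d-1)/2}$ is indeed below $e^{-1/(2d)}$ for $d\ge 2$).

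The gap is in the step you flag yourself: the two facts you list do not give $\lambda_{\min}(A_t)\ge \delta^2/(400(d+1)^2)$.
\begin{itemize}
\item \emph{Your second bullet is false, and irrelevant anyway.} The L\"owner--John ellipsoid inflates orthogonally to the cut. For $d=2$, $A=\mathrm{diag}(a_1,a_2)$ and $x=e_1$ give $A'=\mathrm{diag}(\tfrac49 a_1,\tfrac43 a_2)$. So an adversary repeating $x_t=e_1$ pushes $\lambda_{\max}$ up by a power of $1/\delta$ before the width drops below $\delta$. In any case, $\lambda_{\max}$ never enters a volume lower bound.
\item \emph{The missing partner for your first bullet is a per-step bound valid for every cut.} Let $u=A^{1/2}x/\|A^{1/2}x\|$, a unit vector. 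Then $A-\tfrac{2}{d+1}bb^\top=A^{1/2}\bigl(I-\tfrac{2}{d+1}uu^\top\bigr)A^{1/2}\succeq\tfrac{d-1}{d+1}A$, hence $\lambda_{\min}(A_{t+1})\ge\bigl(\tfrac{d}{d+1}\bigr)^2\lambda_{\min}(A_t)$.
\end{itemize}
Without the per-step bound, nothing stops one cut from taking $\lambda_{\min}$ from just above your threshold to far below it. With it, induction from $\lambda_{\min}(A_1)=R^2$ and threshold $\delta^2/(400d^2)$ gives exactly $\delta^2/(400(d+1)^2)$. That is where the $20(d+1)$ inside the logarithm comes from.

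Also, interlacing by itself does not yield the no-shrink property. For a negative rank-one update it only says that the smallest eigenvalue $\sigma_1$ of $A-\beta bb^\top$ (with $\beta=\tfrac{2}{d+1}$) is at most $\lambda_1$, while all the others are at least $\lambda_1$. The real content is the secular equation:
\begin{itemize}
\item In the eigenbasis of $A$, eigenvalues below $\lambda_1$ are exactly the roots of $g(\sigma)=\beta\sum_i b_i^2/(\lambda_i-\sigma)=1$. Since $g$ is increasing there, $\sigma_1\ge\sigma^\ast$ iff $g(\sigma^\ast)\le 1$.
\item Take $\sigma^\ast=(1-d^{-2})\lambda_1$ and use $b_i^2=\lambda_i^2x_i^2/(x^\top Ax)$. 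Split $\lambda_i^2/(\lambda_i-\sigma^\ast)=\lambda_i+\lambda_i\sigma^\ast/(\lambda_i-\sigma^\ast)\le\lambda_i+d^2\sigma^\ast$.
\item The width condition $x^\top Ax>\delta^2/4$ and $\|x\|\le1$ then give $g(\sigma^\ast)\le\beta\,(1+4d^2\lambda_1/\delta^2)$.
\item This is at most $1$ once $\lambda_1\le(d-1)\delta^2/(8d^2)$, which covers the threshold $\delta^2/(400d^2)$ for $d\ge2$. Then $\lambda_{\min}(A_{t+1})\ge\tfrac{d^2}{d^2-1}\sigma^\ast=\lambda_1$.
\end{itemize}
With these two pieces in place, and $d=1$ handled directly as bisection, your combination step goes through as written.
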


\xhdr{Algorithm overview.} We now describe our algorithm for GFT maximization in contextual two-sided markets. At a high level, our algorithm maintains an ellipsoidal uncertainty set for each trader and employs the ellipsoid search method to explore these sets, using
the context vectors revealed in each round as query directions.
Once the uncertainty sets of all traders are refined to a sufficient accuracy, the algorithm executes a subroutine to implement a {\SegPriceMech} based on the most pessimistic estimates of traders' values to achieve a conservative GFT objectives.

Formally, let $\sellerVecsOfAt{\sellerIndex}{\ts}$ and $\buyerVecsOfAt{\buyerIndex}{\ts}$ denote the ellipsoidal uncertainty sets for seller $\sellerIndex\in \sellerSet$ and buyer $\buyerIndex\in \buyerSet$ at the beginning of round $\ts$.
Let $\sellerValueUbOfAt{\sellerIndex}{\ts}$ and $\sellerValueLbOfAt{\sellerIndex}{\ts}$ be the upper and lower bounds of the cost of seller $\sellerIndex$ w.r.t. the current context and uncertainty set. Analogous bounds $\buyerValueUbOfAt{\buyerIndex}{\ts}$ and $\buyerValueLbOfAt{\buyerIndex}{\ts}$ are defined for each buyer.
In $\ts$-th round, the algorithm first evaluates the contextual width for all traders.
If there exists any buyer $\buyerIndex\in \buyerSet$ or seller $\sellerIndex \in \sellerSet$ with an uncertainty width $\sellerValueUbOfAt{\sellerIndex}{\ts}-\sellerValueLbOfAt{\sellerIndex}{\ts}$ or $\buyerValueUbOfAt{\buyerIndex}{\ts}-\buyerValueLbOfAt{\buyerIndex}{\ts}$ exceeding a pre-determined threshold $\searchGap$, we select one such trader and offer price $\frac{\sellerValueUbOfAt{\sellerIndex}{\ts}+\sellerValueLbOfAt{\sellerIndex}{\ts}}{2}$ to all sellers or offer price $\frac{\buyerValueUbOfAt{\buyerIndex}{\ts}+\buyerValueLbOfAt{\buyerIndex}{\ts}}{2}$ to all buyers. This query corresponds to an exploration step in the ellipsoid search method. Upon observing the feedback from the traders, we only update the uncertainty set of the selected trader $\sellerVecsOfAt{\sellerIndex}{\ts}$ or $\buyerVecsOfAt{\buyerIndex}{\ts}$ to the Löwner-John ellipsoid containing the half ellipsoid cut by the query. The uncertainty sets of other traders remain unchanged.

Conversely, if the contextual widths of all traders are within $\searchGap$, we invoke the subroutine \subroutine~that handles the case when all traders' values are known up to precision $\searchGap$. It operates by calculating the optimal GFT under the assumption that each trader possesses their most pessimistic value (i.e., $\sellerValueUbOfAt{\sellerIndex}{\ts}$ for sellers and $\buyerValueLbOfAt{\buyerIndex}{\ts}$ for buyers).
Let $\buyerSubsetAt{1}{\ts}$ and $\sellerSubsetAt{1}{\ts}$ denote the set of buyers and sellers involved in these fictitious trades respectively. Then, the subroutine offers segmented prices: \begin{itemize}
    \item For buyers: Offer $\buyerPriceAtOfseg{\ts}{1} = \min_{\buyerIndex\in \buyerSubsetAt{1}{\ts}} \buyerValueLbOfAt{\buyerIndex}{\ts}$ to those in $\buyerSubsetAt{1}{\ts}$ and a prohibitive price $\buyerPriceAtOfseg{\ts}{2} = 1$ to the remaining buyers $\buyerSubsetAt{2}{\ts} = \buyerSet \backslash \buyerSubsetAt{1}{\ts}$.
    \item For sellers: Symmetrically,  offer $\sellerPriceAtOfseg{\ts}{1} = \max_{\sellerIndex\in \sellerSubsetAt{1}{\ts}} \sellerValueUbOfAt{\sellerIndex}{\ts}$ to those in $\sellerSubsetAt{1}{\ts}$ and $\sellerPriceAtOfseg{\ts}{2} = 0$ to the remaining sellers $\sellerSubsetAt{2}{\ts} = \sellerSet \backslash \sellerSubsetAt{1}{\ts}$.
\end{itemize}
The pseudo-code for \subroutine~is provided in~\Cref{alg: Robust Two-Segmented Pricing in Two-Sided Market} and the full algorithm is presented in~\Cref{alg: two-sided market}.

\begin{algorithm}[ht]
\caption{\subroutine}
\label[algorithm]{alg: Robust Two-Segmented Pricing in Two-Sided Market}
\SetAlgoLined
\SetNoFillComment

\KwIn{value lower bounds $\buyerValueLbOfAt{\buyerIndex}{\ts}$ for all $\buyerIndex \in \buyerSet$,  and cost upper bound $\sellerValueUbOfAt{\sellerIndex}{\ts}$ for all $\sellerIndex \in \sellerSet$}

\KwOut{prices $\buyerPriceAtOfseg{\ts}{1}, \buyerPriceAtOfseg{\ts}{2}$ and $\sellerPriceAtOfseg{\ts}{1}, \sellerPriceAtOfseg{\ts}{2}$, and the partition of buyers $\buyerSubsetAt{1}{\ts}, \buyerSubsetAt{2}{\ts}$ and sellers $\sellerSubsetAt{1}{\ts}, \sellerSubsetAt{2}{\ts}$}

\vspace{4mm}

Compute optimal matching $\mathcal{M}$ maximizing total gains-from-trade using pessimistic types: seller values $\sellerValueUbOfAt{\sellerIndex}{\ts}$ and buyer values $\buyerValueLbOfAt{\buyerIndex}{\ts}$

Let $\buyerSubsetAt{1}{\ts}$ and $\sellerSubsetAt{1}{\ts}$ be the sets of matched buyers and sellers in $\mathcal{M}$

Set $\buyerPriceAtOfseg{\ts}{1} \gets \min_{\buyerIndex \in \buyerSubsetAt{1}{\ts}} \ubar{\buyerValue}_{\buyerIndex,\ts}$, $\buyerPriceAtOfseg{\ts}{2} \gets 1$

Set $\sellerPriceAtOfseg{\ts}{1} \gets \max_{\sellerIndex \in \sellerSubsetAt{1}{\ts}} \bar{\sellerValue}_{\sellerIndex,\ts}$, $\sellerPriceAtOfseg{\ts}{2} \gets 0$

\Return $(\buyerPriceAtOfseg{\ts}{1}, \buyerPriceAtOfseg{\ts}{2}, \sellerPriceAtOfseg{\ts}{1}, \sellerPriceAtOfseg{\ts}{2})$ and $(\buyerSubsetAt{1}{\ts}, \buyerSubsetAt{2}{\ts}, \sellerSubsetAt{1}{\ts}, \sellerSubsetAt{2}{\ts})$

\end{algorithm}

\begin{algorithm}[ht]
\caption{\ESTM}
\label[algorithm]{alg: two-sided market}
\KwIn{time horizon $\timeHorizon$, trader number $\traderNum$}
\KwOut{{\SegPriceMechs} in contextual two-sided markets for $\ts \in [\timeHorizon]$}

\vspace{2mm}
Initialize ellipsoid $\sellerVecsOfAt{\sellerIndex}{\ts} = \ball_\dimension$ and $\buyerVecsOfAt{\buyerIndex}{\ts}$ for each trader $\sellerIndex\in \sellerSet$ and $\buyerIndex \in \buyerSet$

Set search gap parameter $\searchGap = \frac{1}{\timeHorizon}$

\For{$\ts \in [\timeHorizon]$}{
    Observe context $\contextAt{\ts}$
                
        \For{each seller $\sellerIndex \in \sellerSet$}{
            $\sellerValueUbOfAt{\sellerIndex}{\ts} \gets \max_{\sellerVec \in \sellerVecsOfAt{\sellerIndex}{\ts}} \lrangle{\context_\ts, \sellerVec}$ , $\sellerValueLbOfAt{\sellerIndex}{\ts} = \min_{\sellerVec \in \sellerVecsOfAt{\sellerIndex}{\ts}} \lrangle{\contextAt{\ts}, \sellerVec}$
        }

        \For{each buyer $\buyerIndex \in \buyerSet$}{
            $\buyerValueUbOfAt{\buyerIndex}{\ts} \gets \max_{\buyerVec \in \buyerVecsOfAt{\buyerIndex}{\ts}} \lrangle{\contextAt{\ts}, \buyerVec}$ , $\buyerValueLbOfAt{\buyerIndex}{\ts} = \min_{\buyerVec \in \buyerVecsOfAt{\buyerIndex}{\ts}} \lrangle{\contextAt{\ts}, \buyerVec}$
        }

        \If{$\exists \sellerIndex \in \sellerSet : \sellerValueUbOfAt{\sellerIndex}{\ts} - \sellerValueLbOfAt{\sellerIndex}{\ts} > \searchGap$}{
            Pick any such $\sellerIndex$
            
            $\sellerPriceAt{\ts} \gets \frac{\sellerValueUbOfAt{\sellerIndex}{\ts} +\sellerValueLbOfAt{\sellerIndex}{\ts}}{2}$
        
            Post single price  $\sellerPriceAt{\ts}$ to all trades. Update the $\sellerVecsOfAt{\sellerIndex}{\ts+1}$ to be the Löwner-John ellipsoid of the intersection of $\sellerVecsOfAt{\sellerIndex}{\ts}$ and the half-space defined by the feedback of the seller. 
            Keep other ellipsoids unchanged
        }
        \ElseIf{$\exists \buyerIndex \in \buyerSet : \buyerValueUbOfAt{\buyerIndex}{\ts} - \buyerValueLbOfAt{\buyerIndex}{\ts} > \searchGap$}{
            Pick any such $\buyerIndex$
            
            $\buyerPriceAt{\ts} \gets \frac{\buyerValueUbOfAt{\buyerIndex}{\ts} + \buyerValueLbOfAt{\buyerIndex}{\ts} }{2}$
            
            Post single price  $\buyerPriceAt{\ts}$ to all trades. Update the $\buyerVecsOfAt{\buyerIndex}{\ts+1}$ to be the Löwner-John ellipsoid of the intersection of $\buyerVecsOfAt{\buyerIndex}{\ts}$ and the half-space defined by the feedback of the buyer. Keep other ellipsoids unchanged
        }
        \Else{
            Call \Cref{alg: Robust Two-Segmented Pricing in Two-Sided Market} with input $\sellerValueUbOfAt{\sellerIndex}{\ts}$ and $\buyerValueLbOfAt{\buyerIndex}{\ts}$ for each seller and buyer, receive prices and partition of traders

            Offer prices to all traders and observe feedback, keep all ellipsoids unchanged
        }
        
}
\end{algorithm}

We first present a regret analysis for the above subroutine, establishing that if all traders' values are known up to precision $\searchGap$, the \subroutine~implements a \SegPriceMech~that incurs an immediate regret of at most $O(\traderNum \searchGap)$. Given two vectors $u,v\in \R^\traderNum$, Let  $\gftOpt(u,v)$ denote the optimal GFT when the sellers' costs and buyers' values are given by the entries of $u$ and $v$ respectively. The following lemma shows that the function $\gftOpt(u,v)$ is Lipschitz continuous with respect to both $u$ and $v$.

\begin{lemma}[Lipschitz Continuity of GFT]
\label{lem:gft_lipschitz}
For any seller cost vectors $u, u' \in \R^\traderNum$ and buyer value vectors $v, v' \in \R^\traderNum$, we have
\begin{equation}
    |\gftOpt(u, v) - \gftOpt(u', v')| \leq \traderNum\left(\|u - u'\|_\infty + \|v - v'\|_\infty\right).
\end{equation}
\end{lemma}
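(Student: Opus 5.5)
The plan is to express $\gftOpt(u,v)$ as a maximum, over matchings, of sums of differences of order statistics, and then use two facts: each order statistic is $1$-Lipschitz in the sup norm, and the pointwise maximum of Lipschitz functions is Lipschitz with the same constant. Concretely, write
\[
\gftOpt(u,v)=\max_{k\in\{0,1,\dots,\traderNum\}} G_k(u,v),
\qquad
G_k(u,v)\triangleq \sum\nolimits_{\ell\in[k]}\bigl(v_{(\ell)}-u_{(\ell)}\bigr),
\]
where $v_{(\ell)}$ is the $\ell$-th largest entry of $v$ and $u_{(\ell)}$ the $\ell$-th smallest entry of $u$, and $G_0\equiv 0$.

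The first step is to justify this representation. Under the convention $\buyerValueAt{(\ell)}\ge \sellerValueAt{(\ell)}$ for $\ell\le k$, the paper's formula takes $k$ to be the efficient trade size. Because $v_{(\ell)}$ is non-increasing in $\ell$ and $u_{(\ell)}$ is non-decreasing in $\ell$, the summand $v_{(\ell)}-u_{(\ell)}$ is non-increasing in $\ell$. Hence the partial sums $G_k$ increase while the summands are nonnegative and decrease afterward, so the maximum over $k$ is attained exactly at the efficient trade size. Equivalently, $\gftOpt$ is the maximum-weight matching value, and the optimum is achieved by the sorted pairing of the top-$k$ buyers with the bottom-$k$ sellers. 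I would state this as a one-line rearrangement observation.

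The second step is to show that each order statistic is $1$-Lipschitz in $\|\cdot\|_\infty$. If $\|v-v'\|_\infty\le\epsilon$, then at least $\ell$ entries of $v$ are $\ge v_{(\ell)}$, and the corresponding entries of $v'$ are all $\ge v_{(\ell)}-\epsilon$. Therefore $v'_{(\ell)}\ge v_{(\ell)}-\epsilon$, and the reverse inequality follows by symmetry. The same argument applies to the sellers' order statistics $u_{(\ell)}$.

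The third step combines the two. For each $k$,
\[
|G_k(u,v)-G_k(u',v')|\le k\bigl(\|u-u'\|_\infty+\|v-v'\|_\infty\bigr)\le \traderNum\bigl(\|u-u'\|_\infty+\|v-v'\|_\infty\bigr).
\]
Since $|\max_k a_k-\max_k b_k|\le\max_k|a_k-b_k|$, the claimed bound follows.

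An alternative, avoiding order statistics altogether, is to take the optimal matching $\match$ for $(u,v)$ and evaluate it under $(u',v')$. This loses at most $|\match|\bigl(\|u-u'\|_\infty+\|v-v'\|_\infty\bigr)$, and exchanging the roles of $(u,v)$ and $(u',v')$ gives the other direction. The only subtlety is that $\gftOpt$ here must mean the maximum over all matchings, including pairs with negative surplus, which is never better than dropping those pairs. The main (minor) obstacle is precisely this identification: matching the paper's order-statistic definition of $\gftOpt$ with the maximum-weight-matching view, which the rearrangement argument in the first step handles.
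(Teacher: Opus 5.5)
Your proof is correct, and your main route uses the same two ingredients as the paper: an order-statistic formula for $\gftOpt$, and the fact that each order statistic is $1$-Lipschitz in $\|\cdot\|_\infty$. The only difference is the outer operation. The paper writes $\gftOpt(u,v)=\sum_{j=1}^{\traderNum}\max\{v_{(j)}-u_{(j)},0\}$ and applies the $1$-Lipschitz property of $x\mapsto\max\{x,0\}$ term by term. You instead write $\gftOpt=\max_k G_k$ over prefix sums and use $|\max_k a_k-\max_k b_k|\le\max_k|a_k-b_k|$. Both representations rely on $v_{(\ell)}-u_{(\ell)}$ being non-increasing in $\ell$, which you justify explicitly and the paper uses without comment.

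Your alternative matching argument is a genuinely different and more elementary route, and the obstacle you flag for it does not arise. The paper's primary definition of $\gftOpt$ is the maximum GFT over all feasible matchings; the order-statistic formula is derived from that. So evaluating an optimal matching for $(u,v)$ at $(u',v')$ and then symmetrizing proves the lemma directly, with no order statistics needed. It also shows that $\traderNum$ can be replaced by the larger of the two optimal matching sizes.
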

\begin{proof}
    \newcommand{\ord}[1]{\textsc{Ord}_{#1}}
    Let $\ord{i} : \R^{\traderNum} \rightarrow \R$ be an operator that returns the $i$-th largest element of the input vector. The optimal GFT can be expressed as
        $   \gftOpt(u,v) = \sum_{\buyerIndex=1}^\buyerNum \max\set{\ord{\buyerIndex}(v) + \ord{\buyerIndex}(-u),0}.
        $
        Then, we have
    \begin{align*}
 \abs{\gftOpt(u,v) - \gftOpt(u',v')} &\overset{(a)}{\leq} \sum_{\buyerIndex=1}^\buyerNum \abs{\max\set{\ord{\buyerIndex}(v) + \ord{\buyerIndex}(-u),0} - \max\set{\ord{\buyerIndex}(v') + \ord{\buyerIndex}(-u'),0}}\\
            &\overset{(b)}{\leq} \sum_{\buyerIndex=1}^\buyerNum \left(|\ord{\buyerIndex}(v) - \ord{\buyerIndex}(v')| + |\ord{\buyerIndex}(-u) - \ord{\buyerIndex}(-u')|\right) \\
            &\overset{(c)}{\leq} \sum_{\buyerIndex=1}^\buyerNum \left(\|v - v'\|_\infty + \|u - u'\|_\infty\right) \\
            &\overset{(d)}{=} \buyerNum \|v - v'\|_\infty + \buyerNum \|u - u'\|_\infty.
        \end{align*}
        
        Here, (a) follows from the triangle inequality; (b) uses the $1$-Lipschitz property of the ReLU function, i.e., $|\max\{a,0\} - \max\{b,0\}| \leq |a-b|$; (c) holds because the order statistic function $\ord{\buyerIndex}$ is $1$-Lipschitz continuous with respect to the $\ell_\infty$ norm; (d) is a straightforward algebraic simplification by collecting terms.
\end{proof}

\begin{lemma}
    Suppose for all sellers $\sellerIndex\in \sellerSet$ and buyers $\buyerIndex\in\buyerSet$, the contextual width is no larger than $\searchGap$ (i.e. $\abs{\sellerValueUbOfAt{\sellerIndex}{\ts} - \sellerValueLbOfAt{\sellerIndex}{\ts}}\leq \searchGap$ and $\abs{\buyerValueUbOfAt{\buyerIndex}{\ts} - \buyerValueLbOfAt{\buyerIndex}{\ts}}\leq \searchGap$). Then, the {\SegPriceMech} implemented by \subroutine~at round $\ts$ incurs an immediate regret of at most $2\buyerNum \searchGap$.
\end{lemma}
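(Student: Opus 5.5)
We bound the realized GFT of the segmented mechanism from below by the pessimistic optimum, and then compare that optimum with the true one using \Cref{lem:gft_lipschitz}.

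Let $\mathcal{M}$ be the matching computed by \subroutine{} from the pessimistic types $u' = (\sellerValueUbOfAt{\sellerIndex}{\ts})_{\sellerIndex}$ and $v' = (\buyerValueLbOfAt{\buyerIndex}{\ts})_{\buyerIndex}$. Write $k = |\mathcal{M}|$, so that $|\sellerSubsetAt{1}{\ts}| = |\buyerSubsetAt{1}{\ts}| = k$. Since $\mathcal{M}$ is a GFT-optimal matching for $(u',v')$, we may take it to consist of the $k$ highest pessimistic buyer values and the $k$ lowest pessimistic seller costs, with $\buyerValueLbOfAt{\buyerIndex}{\ts} \ge \sellerValueUbOfAt{\sellerIndex}{\ts}$ at the $k$-th order statistic. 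If ties or a non-sorted optimal matching cause trouble, we replace $\mathcal{M}$ by such a sorted one without loss. Consequently $\buyerPriceAtOfseg{\ts}{1} = \min_{\buyerIndex \in \buyerSubsetAt{1}{\ts}} \buyerValueLbOfAt{\buyerIndex}{\ts} \ge \max_{\sellerIndex \in \sellerSubsetAt{1}{\ts}} \sellerValueUbOfAt{\sellerIndex}{\ts} = \sellerPriceAtOfseg{\ts}{1}$, so the mechanism is weakly budget balanced. (Here we must also check that the prohibitive prices $0$ and $1$ are compatible with the constraint; the paper uses this convention, and we would note it.)

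Next we identify who accepts. Every buyer $\buyerIndex\in\buyerSubsetAt{1}{\ts}$ has true value $\buyerValueAt{\buyerIndex,\ts} \ge \buyerValueLbOfAt{\buyerIndex}{\ts} \ge \buyerPriceAtOfseg{\ts}{1}$ and therefore accepts. Every seller in $\sellerSubsetAt{1}{\ts}$ has cost at most $\sellerValueUbOfAt{\sellerIndex}{\ts} \le \sellerPriceAtOfseg{\ts}{1}$ and therefore accepts. The price $1$ is accepted only by buyers with value exactly $1$, and the price $0$ only by sellers with cost $0$. We would handle these boundary cases either by tie-breaking or by noting that any such extra traders only increase the maximum matching. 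The cleanest route is to argue that the accepted sets, restricted to the segment-1 traders, already give a perfect matching of size $k$. Any worst-case maximum matching then has size at least $k$, and its GFT is at least the sum over $k$ accepted buyers minus $k$ accepted sellers. Choosing the worst ones, this is at least $\sum_{\buyerIndex \in \buyerSubsetAt{1}{\ts}} \buyerValueLbOfAt{\buyerIndex}{\ts} - \sum_{\sellerIndex\in\sellerSubsetAt{1}{\ts}} \sellerValueUbOfAt{\sellerIndex}{\ts} = \gftOpt(u',v')$.

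This step is the main obstacle. If extra boundary traders accept, or the adversarial matching pairs different traders, we need the realized GFT to be at least the pessimistic value. Any matching of the accepted traders of size $\ge k$ can be analyzed by summing buyer values minus seller costs over matched pairs. Each accepted outside trader contributes a nonnegative surplus term: its value is $1$ or its cost is $0$. Each segment-1 trader contributes at least its pessimistic bound. So pairing arguments give GFT $\ge \gftOpt(u',v')$ whenever all $k$ segment-1 pairs are matched. If some segment-1 trader is left unmatched because an outside trader took its place, the contribution still only improves, since an outside buyer has value $1$ and an outside seller has cost $0$.

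Finally, the true types satisfy $\|u-u'\|_\infty \le \searchGap$ and $\|v-v'\|_\infty\le\searchGap$, by the assumption on contextual widths and because the true vectors lie in the ellipsoids. \Cref{lem:gft_lipschitz} then gives $\gftOpt(u,v) - \gftOpt(u',v') \le \buyerNum(\searchGap+\searchGap) = 2\buyerNum\searchGap$. Combining this with the previous paragraph, the immediate regret is at most $2\buyerNum\searchGap$.
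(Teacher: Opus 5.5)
Your proof is correct and follows the paper's route. You show that the realized GFT is at least $\gftOpt$ evaluated at the pessimistic types $(\sellerValueUbOfAt{\sellerIndex}{\ts})_{\sellerIndex}$, $(\buyerValueLbOfAt{\buyerIndex}{\ts})_{\buyerIndex}$, and then lose at most $2\buyerNum\searchGap$ via \Cref{lem:gft_lipschitz}. Your exchange argument for the worst-case maximum matching (extra acceptors of the prices $1$ and $0$ have value $1$ or cost $0$, so they can only displace segment-$1$ traders profitably) is more careful than the paper's bare claim that unmatched traders do not trade. Like the paper, though, it tacitly assumes nonnegative costs as in the base model: with unit-ball features a negative-cost seller would also accept the price $0$, and this step would break.
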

\begin{proof}
    Let $\sellerValueOfAt{:}{\ts}$ and $\buyerValueOfAt{:}{\ts}$ denote the vectors of true type for sellers and buyers at round $\ts$ respectively. 
    Similarly, let $\sellerValueUbOfAt{:}{\ts}$ and $\buyerValueLbOfAt{:}{\ts}$ denote the upper bound vector and lower bound vector of the seller and buyer values at round $\ts$ respectively. 
    By the Lipschitz continuity of the GFT function (Lemma~\ref{lem:gft_lipschitz}), we have
    \begin{align*}
        \gftOpt(\sellerValueUbOfAt{:}{\ts}, \buyerValueLbOfAt{:}{\ts}) &\geq \gftOpt(\sellerValueOfAt{:}{\ts}, \buyerValueOfAt{:}{\ts}) - \traderNum(\|\sellerValueUbOfAt{:}{\ts} - \sellerValueOfAt{:}{\ts}\|_\infty + \|\buyerValueLbOfAt{:}{\ts} - \buyerValueOfAt{:}{\ts}\|_\infty)\\
        &\geq \gftOpt(\sellerValueOfAt{:}{\ts}, \buyerValueOfAt{:}{\ts}) - 2\buyerNum \searchGap,
    \end{align*}
    where the last inequality holds because $\abs{\sellerValueUbOfAt{\sellerIndex}{\ts} - \sellerValueLbOfAt{\sellerIndex}{\ts}}\leq \searchGap$ and $\abs{\buyerValueUbOfAt{\buyerIndex}{\ts} - \buyerValueLbOfAt{\buyerIndex}{\ts}}\leq \searchGap$ for all traders by the width condition.

    The algorithm implements the matching $\mathcal{M}$ that achieves $\gftOpt(\sellerValueUbOfAt{:}{\ts}, \buyerValueLbOfAt{:}{\ts})$ by offering price $\buyerPriceAtOfseg{\ts}{1}$ to matched buyers and $\buyerPriceAtOfseg{\ts}{2}$ to unmatched buyers. Since $\buyerPriceAtOfseg{\ts}{1} = \min_{\buyerIndex \in \buyerSubsetAt{1}{\ts}} \buyerValueLbOfAt{\buyerIndex}{\ts}$, all matched buyers are willing to trade. Similarly, since $\sellerPriceAtOfseg{\ts}{1} = \max_{\sellerIndex \in \sellerSubsetAt{1}{\ts}} \sellerValueUbOfAt{\sellerIndex}{\ts}$, all matched sellers are willing to trade. Unmatched buyers and sellers will not trade due to the price constraint.

    Thus, the realized gains-from-trade at round $\ts$ is at least $\gftOpt(\sellerValueUbOfAt{:}{\ts}, \buyerValueLbOfAt{:}{\ts})$. Therefore, the immediate regret at round $\ts$ is at most
    \[ \regAt{\ts} \leq \gftOpt(\sellerValueOfAt{:}{\ts}, \buyerValueOfAt{:}{\ts}) - \gftOpt(\sellerValueUbOfAt{:}{\ts}, \buyerValueLbOfAt{:}{\ts}) 
        \leq 2\traderNum\searchGap. \qedhere\]
\end{proof}

\begin{restatable}{theorem}{thmContextualTSMGFT}
    \label{thm:contextual two-sided market GFT}
    In the contextual two-sided market model, for gains-from-trade maximization, the regret of {\ESTM} (\Cref{alg: two-sided market}) is at most $O(\traderNum^2 \dimension^2 \log \timeHorizon)$.
\end{restatable}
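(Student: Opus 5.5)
We split the rounds of {\ESTM} into \emph{exploration rounds}, in which some trader has contextual width above $\searchGap=1/\timeHorizon$ and a single price is posted, and \emph{exploitation rounds}, in which \subroutine~is called. We bound the two contributions separately and add them.

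\textbf{Exploitation rounds.} In every such round, all traders have contextual width at most $\searchGap$. Since each true feature vector stays inside its ellipsoid (see below), the preceding lemma gives immediate regret at most $2\traderNum\searchGap = 2\traderNum/\timeHorizon$. Summing over at most $\timeHorizon$ rounds gives $O(\traderNum)$.

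\textbf{Exploration rounds.} We first need \emph{validity}: $\sellerVecAt{\sellerIndex}\in\sellerVecsOfAt{\sellerIndex}{\ts}$ and $\buyerVecAt{\buyerIndex}\in\buyerVecsOfAt{\buyerIndex}{\ts}$ for all $\ts$. This holds by induction, as in \Cref{lem:GFT valid hypothesis}.
\begin{itemize}
\item The selected trader's feedback at the midpoint price certifies on which side of the hyperplane $\{\hvec:\lrangle{\hvec,\contextAt{\ts}}=\sellerPriceAt{\ts}\}$ its vector lies.
\item The Löwner-John ellipsoid of the resulting half-ellipsoid contains that half-ellipsoid, hence contains the true vector.
\item All other ellipsoids are unchanged.
\end{itemize}

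Next we count exploration rounds. Every exploration round is an exploration step of the ellipsoid search process run for exactly one trader, and the width condition $>\searchGap$ is checked with that trader's own ellipsoid. The subsequence of rounds in which a fixed trader is selected is therefore a valid run of the ellipsoid search algorithm, with directions $\contextAt{\ts}$ restricted to those rounds. In the rounds where that trader is not selected, its ellipsoid simply stays frozen, which is the same as skipping those directions. Applying \Cref{thm:exploration_bound} with $R=1$ and $\searchGap=1/\timeHorizon$, each trader is selected at most $2\dimension^2\ln(20(\dimension+1)\timeHorizon)=O(\dimension^2\log\timeHorizon)$ times. Here we use $\dimension\le\timeHorizon$ (otherwise the bound is trivial). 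With $2\traderNum$ traders, there are $O(\traderNum\dimension^2\log\timeHorizon)$ exploration rounds in total.

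Each exploration round has immediate regret at most $\gftOpt\le\traderNum$, since there are at most $\traderNum$ trading pairs and each contributes at most $1$ because values and costs lie in $[0,1]$. Exploration therefore contributes $O(\traderNum^2\dimension^2\log\timeHorizon)$.

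\textbf{Conclusion and main obstacle.} Adding the two parts gives total regret $O(\traderNum^2\dimension^2\log\timeHorizon)+O(\traderNum)=O(\traderNum^2\dimension^2\log\timeHorizon)$. The main obstacle is justifying the decoupling in the counting step: \Cref{thm:exploration_bound} must be applicable per trader even though the queries are interleaved across traders and adversarially chosen. This requires that the bound holds for an arbitrary direction sequence, and that non-query rounds leave the ellipsoid unchanged, which is exactly how the algorithm behaves.
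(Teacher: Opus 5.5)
Your proof is correct and follows the paper's own argument: you use the same split into exploration and exploitation rounds, apply \Cref{thm:exploration_bound} per trader with $O(\traderNum)$ regret per exploration round, and use the $2\traderNum\searchGap$ per-round bound for exploitation rounds; your validity induction and subsequence argument for applying the ellipsoid bound to each trader separately make explicit steps the paper leaves implicit. One minor slip, affecting only constants: in the contextual model costs and values are inner products of unit-ball vectors, so they lie in $[-1,1]$ rather than $[0,1]$. The per-round exploration regret is therefore at most $2\traderNum$, which follows because under a single posted price every realized pair has nonnegative gain.
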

\begin{proof}
    We decompose the time horizon into two phases: exploration rounds and exploitation rounds.
    Exploration rounds are those where some trader has width $> \searchGap$, and exploitation rounds are those where all traders have width $\leq \searchGap$. 
    
    For the exploration phase, by Theorem~\ref{thm:exploration_bound}, there are at most $O(\dimension^2 \log(\timeHorizon))$ exploration rounds per trader. In each round, the regret is at most $\traderNum$, so the total regret from these rounds is $O(\traderNum^2 \dimension^2 \log \timeHorizon)$.
    
    For the exploitation phase, by the subroutine analysis, the immediate regret of each exploitation round is at most $2\buyerNum\searchGap = 2\buyerNum/\timeHorizon$. Since there are at most $\timeHorizon$ rounds, the total regret from such rounds is at most $O(\buyerNum)$.
    
    Combining the regret from both phases, the total cumulative regret of \ESTM~is $O(\traderNum^2 \dimension^2 \log \timeHorizon)$.
\end{proof}

\subsection{Profit Maximization in Contextual Two-sided Market Model}
In this section, we present the general result on profit maximization in contextual two-sided markets. We introduce our algorithm \MSSPM~in \Cref{alg: contextual two-sided market profit} and establish an $O(\traderNum^2 \dimension \log\log \timeHorizon)$ regret bound in \Cref{thm: contextual two-sided market profit}.

\label{sec: contextual two-sided market profit}

\xhdr{Algorithm overview.} Analogous to the non-contextual setting, our algorithm maintains an uncertainty set of the cost or value parameters of each trader. Specifically, for each buyer $\buyerIndex$, we maintain a convex set $\buyerVecsOfAt{\buyerIndex}{\ts}$ containing the hidden vector $\buyerVecAt{\buyerIndex}$. Similarly, for each seller $\sellerIndex$, we maintain a convex set $\sellerVecsOfAt{\sellerIndex}{\ts}$ containing the feature vector $\sellerVecAt{\sellerIndex}$.
These sets are initialized as $\buyerVecsOfAt{\buyerIndex}{1} = \ball_\dimension$ and $\sellerVecsOfAt{\sellerIndex}{1} = \ball_\dimension$ for all $\buyerIndex\in \buyerSet$ and $\sellerIndex\in \sellerSet$.
Based on these uncertainty sets, we derive the upper and lower bounds for the possible values and costs of each trader at round $\ts$, adopting the same notation as in \Cref{sec:profit-two-sided market}. 
Then, we construct an optimistic fictitious market using these bounds and determine the profit-maximizing trade size $\tradeNum^\star$ in this fictitious market as 
\begin{align}
    \label{eq: optimal trade size two-sided}
    \tradeNum^\star \triangleq \argmax\nolimits_{k\in[\traderNum]}  k\cdot \parenfix{\buyerValueOrdUbOfAt{k}{\ts} - \sellerValueOrdLbOfAt{k}{\ts}}.
\end{align}
Accordingly, we define the sets of fictitious trading buyers $\ficTradingBuyersAt{\ts}$ and sellers $\ficTradingSellersAt{\ts}$ as 
\begin{align}
    \label{eq: fic trading two-sided}
    \ficTradingBuyersAt{\ts}\triangleq \setfix{\buyerIndex \in \buyerSet \condition \UB{\buyerValue}_{\buyerIndex,\ts}\geq \buyerValueOrdUbOfAt{k^\star}{\ts}} \mbox{ and } \ficTradingSellersAt{\ts}\triangleq \setfix{\sellerIndex \in \sellerSet \condition \LB{\sellerValue}_{\sellerIndex,\ts}\leq \sellerValueOrdLbOfAt{k^\star}{\ts}}   
\end{align}

Next, our algorithm calculates the KL search prices for each trader in $\ficTradingBuyersAt{\ts}$ and $\ficTradingSellersAt{\ts}$ respectively. This step diverges from the non-contextual setting. We first compute a temporary price $\tempPriceSellerOfAt{\sellerIndex}{\ts}$ for each seller $\sellerIndex \in \ficTradingSellersAt{\ts}$ such that 
\[\vol{\set{\sellerVec \in \sellerVecsOfAt{\sellerIndex}{\ts} + \padPara{\padIndexSellerOfAt{\sellerIndex}{\ts}} \cdot \ball_\dimension \condition \lrangle{\sellerVec, \contextAt{\ts}}\geq \tempPriceSellerOfAt{\sellerIndex}{\ts}}} = 2^{-2^{\padIndexSellerOfAt{\sellerIndex}{\ts}-1}}\vol{\sellerVecsOfAt{\sellerIndex}{\ts} + \padPara{\padIndexSellerOfAt{\sellerIndex}{\ts}} \cdot \ball_\dimension},\] 
where $\padIndexSellerOfAt{\sellerIndex}{\ts}$ is the largest integer satisfying $\width{\sellerVecsOfAt{\sellerIndex}{\ts}}{\contextAt{\ts}}\leq 2^{-2^{\padIndexSellerOfAt{\sellerIndex}{\ts}}}$. 
Here, $\tempPriceSellerOfAt{\sellerIndex}{\ts}$ serves as the contextual analogue to the KL search price defined in \eqref{eq: kl search price two-sided}. Instead of splitting the uncertainty interval based on length, we identify a hyperplane that partitions the padded uncertainty set based on the specific volume ratio defined above, adopting the Steiner polynomial technique from~\citep{LLS-21} as discussed in~\Cref{sec: contextual bilateral trade}. 
Finally, we determine the actual posted price $\sellerPriceAtOf{\ts}{\sellerIndex}$ based on the uncertainty width. 
\begin{itemize}
    \item If $\width{\sellerVecsOfAt{\sellerIndex}{\ts}}{\contextAt{\ts}}> 1/\timeHorizon$, we apply a small shift to the temporary price: $\sellerPriceAtOf{\ts}{\sellerIndex}$, i.e., $\sellerPriceAtOf{\ts}{\sellerIndex} = \tempPriceSellerOfAt{\sellerIndex}{\ts} + \padPara{\padIndexSellerOfAt{\sellerIndex}{\ts}}$ for seller $\sellerIndex$ at round $\ts$. This shift ensures that if the seller rejects the proposed price, the volume of the padded uncertainty set shrinks significantly. 
    \item If $\width{\sellerVecsOfAt{\sellerIndex}{\ts}}{\contextAt{\ts}}\leq 1/\timeHorizon$, we set $\sellerPriceAtOf{\ts}{\sellerIndex} = \sellerValueUbOfAt{\sellerIndex}{\ts}$. In this case, conservative pricing that ensures acceptance of price is sufficient to bound the regret when the uncertainty set is very small.
\end{itemize}
Analogous temporary prices $\tempPriceBuyerOfAt{\buyerIndex}{\ts}$ and $\buyerPriceAtOf{\ts}{\buyerIndex}$ are defined symmetrically for each buyer $\buyerIndex \in \ficTradingBuyersAt{\ts}$. 

After calculating the temporary prices for all traders in $\ficTradingSellersAt{\ts}$ and $\ficTradingBuyersAt{\ts}$, we identify the seller $\sellerSearchIndex{\ts}$ and buyer $\buyerSearchIndex{\ts}$ with the lowest temporary price and highest temporary price in $\ficTradingSellersAt{\ts}$ and $\ficTradingBuyersAt{\ts}$ respectively. Namely, $\sellerSearchIndex{\ts} = \argmax_{\sellerIndex \in \ficTradingSellersAt{\ts}} \sellerPriceAtOf{\ts}{\sellerIndex}$ and $\buyerSearchIndex{\ts} = \argmin_{\buyerIndex \in \ficTradingBuyersAt{\ts}} \buyerPriceAtOf{\ts}{\buyerIndex}$. 
By setting these prices, our algorithm tries to conduct KL search for both traders' costs and values.  
Unlike unconstrained dynamic pricing, we must ensure our proposed prices lead to a non-negative profit. 
Thus, we distinguish between two cases:
\begin{itemize}
    \item if $\sellerPriceAtOf{\ts}{\sellerSearchIndex{\ts}} \leq \buyerPriceAtOf{\ts}{\buyerSearchIndex{\ts}}$, we set the selling price as $\sellerPriceAt{\ts} = \sellerPriceAtOf{\ts}{\sellerSearchIndex{\ts}}$ and buying price as $\buyerPriceAt{\ts} = \buyerPriceAtOf{\ts}{\buyerSearchIndex{\ts}}$, effectively conducting a two-sided KL-search.
    \item Otherwise, if $\sellerPriceAtOf{\ts}{\sellerSearchIndex{\ts}} > \buyerPriceAtOf{\ts}{\buyerSearchIndex{\ts}}$, KL search is not feasible since it will lead to a negative profit. In this case, we turn to implement {\SinglePriceMech}. The goal is to perform a contextual binary search on the trader with the larger uncertainty width. Specifically, if $\width{\sellerVecsOfAt{\sellerSearchIndex{\ts}}{\ts}}{\contextAt{\ts}} \geq \width{\buyerVecsOfAt{\buyerSearchIndex{\ts}}{\ts}}{\contextAt{\ts}}$, we let the single price be $\tempMidPriceAt{\ts}$ which satisfies 
\[\vol{\set{\sellerVec\in \sellerVecsOfAt{\sellerSearchIndex{\ts}}{\ts} + \padMidPara{\padMidIndexAt{\ts}} \cdot \ball_\dimension \condition \lrangle{\sellerVec, \contextAt{\ts}}\geq \tempMidPriceAt{\ts}}} = \frac{1}{2} \vol{\sellerVecsOfAt{\sellerSearchIndex{\ts}}{\ts} + \padMidPara{\padMidIndexAt{\ts}}\cdot \ball_\dimension},\]
where $\padMidIndexAt{\ts}$ is the largest integer such that $\width{\sellerVecsOfAt{\sellerSearchIndex{\ts}}{\ts}}{\contextAt{\ts}}\leq 2^{-\padMidIndexAt{\ts}}$. 
Conversely, if the buyer's uncertainty is larger, we define the single price symmetrically to bisect the volume of the padded uncertainty set of buyer $\buyerSearchIndex{\ts}$. 
\end{itemize}

\begin{algorithm}[htbp]
\caption{\MSSPM}
\label[algorithm]{alg: contextual two-sided market profit}
\SetAlgoLined
\SetNoFillComment
\KwIn{time horizon $\timeHorizon$}
\KwOut{{\TwoPriceMechs} in contextual two-sided market for $\ts \in [\timeHorizon]$}
\vspace{2mm}
Initialize $\sellerVecsOfAt{\sellerIndex}{\ts} = \ball_\dimension, \buyerVecsOfAt{\buyerIndex}{\ts} = \ball_\dimension$ and $\padPara{\padIndex} = 2^{-3\cdot 2^\padIndex}/(16\dimension)$ for all $\padIndex$

\For{$\ts \in [\timeHorizon]$}{
$\tradeNum^\star \gets \argmax_{\tradeNum \in[\traderNum]} \tradeNum \cdot \left(\buyerValueOrdUbOfAt{k}{\ts} - \sellerValueOrdLbOfAt{k}{\ts}\right)$

    $\ficTradingBuyersAt{\ts} \gets \{\buyerIndex \in \buyerSet : \buyerValueUbOfAt{\buyerIndex}{\ts} \geq \buyerValueOrdUbOfAt{\tradeNum^\star}{\ts}\}$ and $\ficTradingSellersAt{\ts} \gets \{\sellerIndex \in \sellerSet : \sellerValueLbOfAt{\sellerIndex}{\ts} \leq \sellerValueOrdLbOfAt{\tradeNum^\star}{\ts}\}$, break ties to make $|\ficTradingBuyersAt{\ts}| = |\ficTradingSellersAt{\ts}| = k^\star$

    \For{each $\sellerIndex \in \ficTradingSellersAt{\ts}$}{
        Let $\padIndexSellerOfAt{\sellerIndex}{\ts}$ be the largest integer such that $\width{\sellerVecsOfAt{\sellerIndex}{\ts}}{\contextAt{\ts}}\leq 2^{-2^{\padIndexSellerOfAt{\sellerIndex}{\ts}}}$

        Let $\tempPriceSellerOfAt{\sellerIndex}{\ts}$ be the value satisfying  $\vol{\set{\sellerVec\in \sellerVecsOfAt{\sellerIndex}{\ts} + \padPara{\padIndexSellerOfAt{\sellerIndex}{\ts}} \cdot \ball_\dimension \condition \lrangle{\sellerVec, \contextAt{\ts}}\geq \tempPriceSellerOfAt{\sellerIndex}{\ts}}} = 2^{-2^{\padIndexSellerOfAt{\sellerIndex}{\ts}-1}} \vol{\sellerVecsOfAt{\sellerIndex}{\ts} + \padPara{\padIndexSellerOfAt{\sellerIndex}{\ts}}\cdot \ball_\dimension}$
        
        \textbf{If} $\width{\sellerVecsOfAt{\sellerIndex}{\ts}}{\contextAt{\ts}}\leq \frac{1}{\timeHorizon}$ \textbf{then} $\sellerPriceAtOf{\ts}{\sellerIndex} \gets \sellerValueUbOfAt{\sellerIndex}{\ts}$, \textbf{else} $\sellerPriceAtOf{\ts}{\sellerIndex} \gets \tempPriceSellerOfAt{\sellerIndex}{\ts} + \padPara{\padIndexSellerOfAt{\sellerIndex}{\ts}}$
    }

    \For{each $\buyerIndex \in \ficTradingBuyersAt{\ts}$}{
        Let $\padIndexBuyerOfAt{\buyerIndex}{\ts}$ be the largest integer such that $\width{\buyerVecsOfAt{\buyerIndex}{\ts}}{\contextAt{\ts}}\leq 2^{-2^{\padIndexBuyerOfAt{\buyerIndex}{\ts}}}$

        Let $\tempPriceBuyerOfAt{\buyerIndex}{\ts}$ be the value satisfying  $\vol{\set{\buyerVec\in \buyerVecsOfAt{\buyerIndex}{\ts} + \padPara{\padIndexBuyerOfAt{\buyerIndex}{\ts}} \cdot \ball_\dimension \condition \lrangle{\buyerVec, \contextAt{\ts}}\leq \tempPriceBuyerOfAt{\buyerIndex}{\ts}}} = 2^{-2^{\padIndexBuyerOfAt{\buyerIndex}{\ts}-1}} \vol{\buyerVecsOfAt{\buyerIndex}{\ts} + \padPara{\padIndexBuyerOfAt{\buyerIndex}{\ts}}\cdot \ball_\dimension}$
        
        \textbf{If} $\width{\buyerVecsOfAt{\buyerIndex}{\ts}}{\contextAt{\ts}}\leq \frac{1}{\timeHorizon}$ \textbf{then} $\buyerPriceAtOf{\ts}{\buyerIndex} \gets \buyerValueLbOfAt{\buyerIndex}{\ts} $, \textbf{else} $\buyerPriceAtOf{\ts}{\buyerIndex} \gets \tempPriceBuyerOfAt{\buyerIndex}{\ts} - \padPara{\padIndexBuyerOfAt{\buyerIndex}{\ts}}$
    }

    $\sellerSearchIndex{\ts} \gets \argmax_{\sellerIndex \in \ficTradingSellersAt{\ts}} \sellerPriceAtOf{\ts}{\sellerIndex}$, $\buyerSearchIndex{\ts} \gets \argmin_{\buyerIndex \in \ficTradingBuyersAt{\ts}} \buyerPriceAtOf{\ts}{\buyerIndex}$

    \If{$\sellerPriceAtOf{\ts}{\sellerSearchIndex{\ts}} \leq \buyerPriceAtOf{\ts}{\buyerSearchIndex{\ts}}$}{  
        \tcc{Try conducting a KL search pricing}

        Post prices $\sellerPriceAt{\ts} \gets \sellerPriceAtOf{\ts}{\sellerSearchIndex{\ts}}$ and $\buyerPriceAt{\ts} \gets \buyerPriceAtOf{\ts}{\buyerSearchIndex{\ts}}$
    }
    \Else{
        \tcc{If KL seach does not satisfy weak budget balance constraint, turn to binary search}    

        \If{$\width{\sellerVecsOfAt{\sellerSearchIndex{\ts}}{\ts}}{\contextAt{\ts}}\geq \width{\buyerVecsOfAt{\buyerSearchIndex{\ts}}{\ts}}{\contextAt{\ts}}$}{
            Let $\padMidIndexAt{\ts}$ be the largest integer such that $\width{\sellerVecsOfAt{\sellerSearchIndex{\ts}}{\ts}}{\contextAt{\ts}}\leq 2^{-\padMidIndexAt{\ts}}$

            Let $\tempMidPriceAt{\ts}$ be the value satisfying $\vol{\set{\sellerVec\in \sellerVecsOfAt{\sellerSearchIndex{\ts}}{\ts} + \padMidPara{\padMidIndexAt{\ts}} \cdot \ball_\dimension \condition \lrangle{\sellerVec, \contextAt{\ts}}\geq \tempMidPriceAt{\ts}}} = \frac{1}{2} \vol{\sellerVecsOfAt{\sellerSearchIndex{\ts}}{\ts} + \padMidPara{\padMidIndexAt{\ts}}\cdot \ball_\dimension}$

            $\sellerPriceAt{\ts}, \buyerPriceAt{\ts} \gets \tempMidPriceAt{\ts}$
        }
        \Else{ 
            Let $\padMidIndexAt{\ts}$ be the largest integer such that $\width{\buyerVecsOfAt{\buyerSearchIndex{\ts}}{\ts}}{\contextAt{\ts}}\leq 2^{-\padMidIndexAt{\ts}}$
            
            Let $\tempMidPriceAt{\ts}$ be the value satisfying $\vol{\set{\buyerVec\in \buyerVecsOfAt{\buyerSearchIndex{\ts}}{\ts} + \padMidPara{\padMidIndexAt{\ts}} \cdot \ball_\dimension \condition \lrangle{\buyerVec, \contextAt{\ts}}\leq \tempMidPriceAt{\ts}}} = \frac{1}{2} \vol{\buyerVecsOfAt{\buyerSearchIndex{\ts}}{\ts} + \padMidPara{\padMidIndexAt{\ts}}\cdot \ball_\dimension}$

            $\sellerPriceAt{\ts}, \buyerPriceAt{\ts} \gets \tempMidPriceAt{\ts}$
        }
    }
    }

    Propose prices $\sellerPriceAt{\ts}$ and $\buyerPriceAt{\ts}$ to the sellers and buyers respectively and update their uncertainty sets accordingly

\end{algorithm}

\xhdr{Proof sketch.} We partition the rounds into three types and bound the cumulative regret incurred in each type separately.

\noindent\underline{(i) Type-$1$ Rounds: $\sellerPriceAtOf{\ts}{\sellerSearchIndex{\ts}} \leq \buyerPriceAtOf{\ts}{\buyerSearchIndex{\ts}}$ and at least one trader in $\ficTradingSellersAt{\ts}\cup \ficTradingBuyersAt{\ts}$ rejects the posted prices.} 

Whenever a trader rejects the prices, our pricing rule $\sellerPriceAt{\ts}$ and $\buyerPriceAt{\ts}$ ensures that the volume of their padded uncertainty set is significantly reduced.
Leveraging this property, we show in~\Cref{lem:contextual type-1 round number} that the total number of Type-$1$ rounds is at most $O(2\traderNum \dimension  \log \log \timeHorizon)$. Since the instantaneous regret in any round is trivially bounded by $\traderNum$, the cumulative regret from Type-1 rounds is at most $O(\traderNum^2 \dimension \log \log \timeHorizon)$. 

\noindent\underline{(ii) Type-$2$ Rounds: $\sellerPriceAtOf{\ts}{\sellerSearchIndex{\ts}} \leq \buyerPriceAtOf{\ts}{\buyerSearchIndex{\ts}}$ and all traders in $\ficTradingSellersAt{\ts}\cup \ficTradingBuyersAt{\ts}$ accept the posted prices.} 

For these rounds, our proposed prices guarantee a reduction in the volume of the padded uncertainty sets for both the selected seller $\sellerSearchIndex{\ts}$ and buyer $\buyerSearchIndex{\ts}$. In~\Cref{lem:contextual type-2 round regret}, we bound the immediate regret by the sum of the directional widths: $\tradeNum^\star\left(\width{\sellerVecsOfAt{\sellerSearchIndex{\ts}}{\ts}}{\contextAt{\ts}} + \width{\buyerVecsOfAt{\buyerSearchIndex{\ts}}{\ts}}{\contextAt{\ts}} \right)$.
Crucially, this allows us to link the regret to the volume reduction via a potential function argument, analogous to the contextual pricing loss analysis in~\citep{LLS-21}. Furthermore, we decompose the regret into separate seller and buyer components. 
Overall, summing these terms yields a cumulative regret of $O(\traderNum^2 \dimension \log \dimension)$ for Type-$2$ rounds in \Cref{lem:contextual type-2 round cumulative regret}.

\noindent\underline{(iii) Type-$3$ Rounds: $\sellerPriceAtOf{\ts}{\sellerSearchIndex{\ts}} > \buyerPriceAtOf{\ts}{\buyerSearchIndex{\ts}}$.}

In this case, intuitively, the uncertainty sets of either $\sellerSearchIndex{\ts}$ or $\buyerSearchIndex{\ts}$ are very close to each other along the context direction. Consequently, we can directly bound the optimal profit of this round by the sum of their widths of the uncertainty sets of both $\sellerSearchIndex{\ts}$ and $\buyerSearchIndex{\ts}$ (\Cref{lem:contextual type-3 round regret}). By aiming to shrink the volume of the uncertainty set with larger width, we align the volume shrinkage with the immediate regret bound. This mirrors the GFT maximization strategy in~\Cref{sec:bilateral trade:GFT}, where the algorithm uses a \SinglePriceMech~and the volume shrinkage is aligned with the optimal objective rather than merely the immediate regret. Applying a similar potential function analysis in~\Cref{thm:bilateral trade:GFT}, we establish a cumulative regret bound of $O(\traderNum^2 \dimension \log \dimension )$ in~\Cref{lem:contextual type-3 round cumulative regret}.

\begin{lemma}[Restatement of Lemma 2.3 and Lemma 2.4 in \citealp{LLS-21}\footnote{
Here we use a symmetric version of the original statement in \citealp{LLS-21}, i.e. we let the plane $\lrangle{\theta, x} = y$ cut a small volume from the right rather than the left, which is employed in the original statement. It can be verified that this similar statement also holds.
}]
    \label{lem:profit potential decay}
    Let $\sellerVecs$ be a convex body, $\context\in \ball_\dimension$ be a context vector, and $\padIndex$ be the largest integer satisfying $\width{\traderVecsAt{}}{\context}\leq 2^{-2^{\padIndex}}$. Suppose $y \in \mathbb{R}$ is a threshold satisfying
    \[\vol{\set{\hvec\in \sellerVecs + \padPara{\padIndex} \cdot \ball_\dimension \condition \lrangle{\hvec, \context}\geq y}} = 2^{-2^{\padIndex-1}} \vol{\sellerVecs + \padPara{\padIndex} \cdot \ball_\dimension}\]
    where $\padPara{\padIndex}$ is defined in~\Cref{alg: contextual two-sided market profit}. Define $\sellerVecs^+\triangleq \sellerVecs\cap \set{\hvec\condition \lrangle{\hvec, \context}\leq y+\padPara{\padIndex}}$ and $\sellerVecs^-\triangleq \sellerVecs\cap \set{\hvec\condition \lrangle{\hvec, \context}\geq y+\padPara{\padIndex}}$, then we have
    \[\vol{\sellerVecs^- + \padPara{\padIndex} \cdot \ball_\dimension} \leq 2^{-2^{\padIndex-1}} \vol{\sellerVecs + \padPara{\padIndex} \cdot \ball_\dimension},\]
    and
    \[\vol{\sellerVecs^+ + \padPara{\padIndex} \cdot \ball_\dimension} \leq \left(1-\frac{1}{10\cdot 2^{2^{\padIndex-1}}}\right)\vol{\sellerVecs + \padPara{\padIndex} \cdot \ball_\dimension}.\]
\end{lemma}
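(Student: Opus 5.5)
We prove the two inequalities separately. Throughout, write $K \triangleq \sellerVecs + \padPara{\padIndex}\cdot\ball_\dimension$ and $\varepsilon \triangleq 2^{-2^{\padIndex-1}}$. By hypothesis, the cap $K\cap\set{\hvec\condition\lrangle{\hvec,\context}\ge y}$ has volume exactly $\varepsilon\,\vol{K}$. We may assume $\context\neq 0$, since otherwise the statement is vacuous.

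\emph{First inequality.} This one is a containment argument. Let $\hvec\in\sellerVecs^-$ and $u\in\padPara{\padIndex}\ball_\dimension$. Then
$$\lrangle{\hvec+u,\context}\ge y+\padPara{\padIndex}-\padPara{\padIndex}\|\context\|\ge y,$$
using $\|\context\|\le 1$. Hence
$$\sellerVecs^-+\padPara{\padIndex}\ball_\dimension\subseteq K\cap\set{\hvec\condition\lrangle{\hvec,\context}\ge y},$$
and the volume bound $\varepsilon\,\vol{K}$ follows immediately.

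\emph{Second inequality.} The same argument in the other direction gives
$$\sellerVecs^+ + \padPara{\padIndex}\ball_\dimension\subseteq K\cap\set{\hvec\condition\lrangle{\hvec,\context}\le y+2\padPara{\padIndex}}.$$
So it suffices to show that the part of $K$ lying strictly above $y+2\padPara{\padIndex}$ has volume at least $\frac{1}{10}\varepsilon\,\vol{K}$. Equivalently, the slab $S\triangleq K\cap\set{\hvec\condition y\le \lrangle{\hvec,\context}\le y+2\padPara{\padIndex}}$ should carry at most $\frac{9}{10}$ of the cap's mass $\varepsilon\,\vol{K}$.

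To bound the slab, pass to the one-dimensional marginal $f(s)=\mathrm{vol}_{\dimension-1}\bigl(K\cap\set{\hvec\condition\lrangle{\hvec,\context}=s}\bigr)$, measured in the coordinate $s=\lrangle{\hvec,\context}$ (the factor $\|\context\|$ from changing variables appears uniformly and cancels). By Brunn's principle, $f^{1/(\dimension-1)}$ is concave on the support of $f$, an interval of length $D\ge\width{\sellerVecs}{\context}\triangleq w$. A standard consequence is
$$f(s)\ge \max f\cdot\bigl(1-|s-s^\star|/D\bigr)^{\dimension-1},$$
where $s^\star$ is a maximizer. Integrating this lower bound gives $\vol{K}\ge \max f\cdot D/\dimension$, hence $\max f\le \dimension\,\vol{K}/w$. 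Therefore
$$\vol{S}\le 2\padPara{\padIndex}\max f\le 2\dimension\,\padPara{\padIndex}\,\vol{K}/w.$$

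\emph{Main step.} The crux is comparing $\padPara{\padIndex}$ with $w$, and this is where the choice of $\padPara{\padIndex}$ matters. Since $\padIndex$ is the \emph{largest} integer with $w\le 2^{-2^{\padIndex}}$, we get the lower bound $w>2^{-2^{\padIndex+1}}=2^{-2\cdot 2^{\padIndex}}$. Recall from \Cref{alg: contextual two-sided market profit} that $\padPara{\padIndex}=2^{-3\cdot2^{\padIndex}}/(16\dimension)$. Combining,
$$\padPara{\padIndex}< w\,2^{-2^{\padIndex}}/(16\dimension)=w\,\varepsilon^2/(16\dimension).$$
Substituting into the slab bound yields $\vol{S}\le \varepsilon^2\vol{K}/8\le\frac{9}{10}\varepsilon\,\vol{K}$, since $\varepsilon\le 1$. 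This proves the second inequality. The remaining care points are that the boundary case $\padIndex=0$ is still covered ($\varepsilon=1/\sqrt2$) and that the Brunn max-density estimate is applied with $D\ge w$ rather than to the padded width.
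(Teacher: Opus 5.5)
Your proof is correct. The paper does not prove this lemma itself. It cites Lemmas 2.3--2.4 of \citet{LLS-21} and asserts in a footnote that the mirrored version ``can be verified.'' Your write-up therefore supplies exactly the missing verification, using the standard padding mechanism:
\begin{itemize}
\item for the first bound, the containment $\Phi^-+z_\ell\mathbb{B}_d\subseteq K\cap\{\langle\theta,x\rangle\ge y\}$;
\item for the second bound, the containment $\Phi^++z_\ell\mathbb{B}_d\subseteq K\cap\{\langle\theta,x\rangle\le y+2z_\ell\}$ together with a slab estimate from the maximal-section bound.
\end{itemize}
Compared with the bare citation, your route has several advantages. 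It checks the mirrored orientation directly. It handles $\|x\|<1$ correctly by working in the coordinate $s=\langle\theta,x\rangle$. It makes explicit that the second inequality only needs $2d z_\ell/w\le\frac{9}{10}\varepsilon$; your bound $z_\ell<w\varepsilon^2/(16d)$, from the maximality of $\ell$, gives this with room to spare. It yields the stronger factor $1-\varepsilon+\varepsilon^2/8$ in place of $1-\varepsilon/10$. Finally, since nothing beyond $\varepsilon\le 1$ is used, it covers every integer $\ell$, including the negative values that arise when $1/2<w<1$, not just the $\ell=0$ edge case you flag.

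One intermediate claim should be corrected, although your conclusion survives. The pointwise bound $f(s)\ge\max f\cdot(1-|s-s^\star|/D)^{d-1}$ is not what concavity of $f^{1/(d-1)}$ gives, and it is false in general. Take a double cone whose largest section sits at the midpoint of the support $[a,b]$: then $f(a)=0$, while your right-hand side equals $2^{-(d-1)}\max f>0$ there. Concavity only gives the bound with $D$ replaced by $s^\star-a$ for $s\le s^\star$ and by $b-s^\star$ for $s\ge s^\star$. Integrating this corrected version gives exactly $\mathrm{Vol}(K)\ge\max f\cdot D/d$; equivalently, the two cones over the maximal section with apexes at the extreme points lie in $K$. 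Hence $\max f\le d\,\mathrm{Vol}(K)/w$ still holds, and the rest of your proof goes through unchanged.
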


\begin{lemma}
    \label{lem:contextual type-1 round number}
In~\Cref{alg: contextual two-sided market profit}, the number of Type-$1$ rounds is at most $O(\traderNum \dimension \log \dimension + \traderNum \dimension \log \log \timeHorizon)$.
\end{lemma}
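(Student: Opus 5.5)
The approach is a charging argument: every Type-$1$ round is charged to one rejecting trader, and that trader's padded uncertainty volume is shown to shrink sharply at its current scale. In a Type-$1$ round the posted prices are $\sellerPriceAt{\ts}=\sellerPriceAtOf{\ts}{\sellerSearchIndex{\ts}}$ and $\buyerPriceAt{\ts}=\buyerPriceAtOf{\ts}{\buyerSearchIndex{\ts}}$, and some $\sellerIndex\in\ficTradingSellersAt{\ts}$ or $\buyerIndex\in\ficTradingBuyersAt{\ts}$ rejects. Consider a rejecting seller $\sellerIndex$; buyers are symmetric. Since $\sellerSearchIndex{\ts}$ maximizes the individual prices, $\sellerPriceAt{\ts}\ge \sellerPriceAtOf{\ts}{\sellerIndex}$. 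Rejection therefore gives $\lrangle{\sellerVecAt{\sellerIndex},\contextAt{\ts}}>\sellerPriceAt{\ts}\ge \sellerPriceAtOf{\ts}{\sellerIndex}$.

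We first rule out the small-width case. If $\width{\sellerVecsOfAt{\sellerIndex}{\ts}}{\contextAt{\ts}}\le 1/\timeHorizon$, then $\sellerPriceAtOf{\ts}{\sellerIndex}=\sellerValueUbOfAt{\sellerIndex}{\ts}$, so seller $\sellerIndex$ cannot reject. We will verify as an invariant, exactly as in \Cref{lem:GFT valid hypothesis}, that each true vector remains in its uncertainty set.

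Otherwise, $\sellerPriceAtOf{\ts}{\sellerIndex}=\tempPriceSellerOfAt{\sellerIndex}{\ts}+\padPara{\padIndexSellerOfAt{\sellerIndex}{\ts}}$. The updated set is then contained in $\sellerVecs^-$ from \Cref{lem:profit potential decay}, with $\padIndex=\padIndexSellerOfAt{\sellerIndex}{\ts}$. That lemma gives
\[
\vol{\sellerVecsOfAt{\sellerIndex}{\ts+1}+\padPara{\padIndex}\ball_\dimension}\le 2^{-2^{\padIndex-1}}\vol{\sellerVecsOfAt{\sellerIndex}{\ts}+\padPara{\padIndex}\ball_\dimension}.
\]
The sets are monotonically non-increasing, so this padded volume never increases at any other round.

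Next we count rejections per trader and per scale $\padIndex$. The padded volume at scale $\padIndex$ starts at most $\vol{(1+\padPara{\padIndex})\ball_\dimension}$ and never drops below $\padPara{\padIndex}^\dimension\vol{\ball_\dimension}$. Hence the number of rejections charged to trader $\sellerIndex$ at scale $\padIndex$ is at most
\[
\frac{\dimension\log\bigl((1+\padPara{\padIndex})/\padPara{\padIndex}\bigr)}{2^{\padIndex-1}}.
\]
With $\padPara{\padIndex}=2^{-3\cdot 2^\padIndex}/(16\dimension)$ we have $\log(1/\padPara{\padIndex})=3\cdot 2^{\padIndex}+\log(16\dimension)$, so this count is $O(\dimension+\dimension\log\dimension/2^{\padIndex})$. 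Rejections are possible only while the width exceeds $1/\timeHorizon$, which forces $2^{-2^{\padIndex+1}}<1/\timeHorizon$ and hence $\padIndex\le\log\log\timeHorizon$. Summing over $\padIndex=0,\dots,\log\log\timeHorizon$ gives $O(\dimension\log\log\timeHorizon+\dimension\log\dimension)$ rejections per trader. Charging each Type-$1$ round to one rejecting trader and summing over the $2\traderNum$ traders yields the claimed $O(\traderNum\dimension\log\dimension+\traderNum\dimension\log\log\timeHorizon)$.

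The main obstacle is matching the algorithm's cut to the lemma's hypotheses. We must check that \Cref{lem:profit potential decay} applies with the same index $\padIndex$ that defines the price. In particular, its cut at $y+\padPara{\padIndex}$ must be exactly the algorithm's price, or sit below the actual price $\sellerPriceAt{\ts}$, so that the true post-rejection set lies inside $\sellerVecs^-$. Here $\sellerPriceAt{\ts}\ge\sellerPriceAtOf{\ts}{\sellerIndex}$ helps, because a larger threshold only shrinks $\sellerVecs^-$. A secondary subtlety is the boundary case $\padIndex=0$, where the factor $2^{-2^{\padIndex-1}}$ is still a constant $<1$. In that case we bound the count by $O(\dimension\log\dimension)$ directly.
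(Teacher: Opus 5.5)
Your proposal is correct and follows essentially the same route as the paper's proof. You charge each Type-$1$ round to a rejecting fictitious trader, use $p_t \ge p_{t,i}$ to place the post-rejection set inside $\Phi^-$ of \Cref{lem:profit potential decay}, and count rejections per trader and per scale $\ell \le \log\log T$ against the padded-volume floor $\mathrm{vol}(z_\ell \mathbb{B}_d)$. One small slip: from width $> 1/T$ and width $\le 2^{-2^{\ell}}$ the correct inequality is $2^{-2^{\ell}} > 1/T$, not $2^{-2^{\ell+1}} < 1/T$ (that one bounds $\ell$ from below), though your conclusion $\ell \le \log\log T$ is right.
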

\begin{proof}
    Fix a seller $\sellerIndex$ and a padding width parameter $\padIndex$. 
    Consider a Type-$1$ round $\ts$, where seller $\sellerIndex$ in $\ficTradingSellersAt{\ts}$ is selected and rejects the proposed price $\sellerPriceAt{\ts}$, with $\padIndexSellerOfAt{\sellerIndex}{\ts} = \padIndex$.
    By our selecting rule of $\sellerPriceAt{\ts}$, we have $\sellerPriceAt{\ts} \geq \sellerPriceAtOf{\ts}{\sellerIndex}$. Consequently, $\sellerVecsOfAt{\sellerIndex}{\ts+1} = \sellerVecsOfAt{\sellerIndex}{\ts} \cap \set{\sellerVec \condition \lrangle{\sellerVec, \contextAt{\ts}} \geq \sellerPriceAt{\ts}}\subseteq \sellerVecsOfAt{\sellerIndex}{\ts} \cap \set{\sellerVec \condition \lrangle{\sellerVec, \contextAt{\ts}} \geq \sellerPriceAtOf{\ts}{\sellerIndex}}$.
    
    Applying~\Cref{lem:profit potential decay} with $\sellerVecs = \sellerVecsOfAt{\sellerIndex}{\ts}$, we obtain
    \[\vol{\sellerVecsOfAt{\sellerIndex}{\ts+1} + \padPara{\padIndex}\cdot \ball_\dimension}\leq \vol{\sellerVecsOfAt{\sellerIndex}{\ts} \cap \set{\sellerVec \condition \lrangle{\sellerVec, \contextAt{\ts}} \geq \sellerPriceAtOf{\ts}{\sellerIndex}} + \padPara{\padIndex}\cdot \ball_\dimension } \leq 2^{-2^{\padIndex-1}} \cdot \vol{\sellerVecsOfAt{\sellerIndex}{\ts} + \padPara{\padIndex}\cdot \ball_\dimension}.\]
On the other hand, we strictly have the lower bound $\vol{\sellerVecsOfAt{\sellerIndex}{\ts} + \padPara{\padIndex}\cdot \ball_\dimension}\geq \vol{\padPara{\padIndex}\cdot \ball_\dimension}$.
    Therefore, for each seller $\sellerIndex$ and padding width parameter $\padIndex$, the total number of Type-$1$ rounds $\ts$ where seller $\sellerIndex$ rejects the price and $\padIndexSellerOfAt{\sellerIndex}{\ts} = \padIndex$ is at most 
    \[\frac{\log \left(\vol{\sellerVecsOfAt{\sellerIndex}{1} + \padPara{\padIndex}\cdot \ball_\dimension}/ \vol{\padPara{\padIndex}\cdot \ball_\dimension}\right)}{\log 2^{-2^{\padIndex-1}}}\leq \frac{\dimension \log (1/\padPara{\padIndex}+1)}{2^{\padIndex-1}} = O\left(\frac{\dimension \log \dimension}{2^{\padIndex-1}}+\dimension\right)\]
    The analysis for the buyer rejecting the price is symmetric.
    Furthermore, observe that when $\padIndex > \log\log \timeHorizon$, the width satisfies $\width{\sellerVecsOfAt{\sellerIndex}{\ts}}{\contextAt{\ts}}\leq 1/\timeHorizon$, in which case the seller will never reject the price since $\sellerPriceAt{\ts} \geq \sellerPriceAtOf{\ts}{\sellerIndex} = \argmax_{\sellerVec\in \sellerVecsOfAt{\sellerIndex}{\ts} \lrangle{\sellerVec, \contextAt{\ts}}}$.  Thus, we only need to consider $\padIndex$ up to $\log\log \timeHorizon$.
    Summing over all traders and all padding width parameters yields the desired bound on the total number of Type-$1$ rounds.
\end{proof}

\begin{lemma}
    \label{lem:contextual type-2 round regret}
    For any Type-$2$ round $\ts$ in~\Cref{alg: contextual two-sided market profit}, the immediate regret is bounded by
    \[\regAt{\ts} \leq \tradeNum^\star \cdot \left(\width{\sellerVecsOfAt{\sellerSearchIndex{\ts}}{\ts}}{\contextAt{\ts}} + \width{\buyerVecsOfAt{\buyerSearchIndex{\ts}}{\ts}}{\contextAt{\ts}} \right).\]
\end{lemma}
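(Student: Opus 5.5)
We bound $\regAt{\ts} = \profitOpt_\ts - \profit_\ts[\sellerPriceAt{\ts},\buyerPriceAt{\ts}]$ by sandwiching both terms between quantities of the optimistic fictitious market.

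\emph{Upper bound on the benchmark.} By \Cref{lem: profit two-sided optimal profit upper bound}, and since every true cost (value) lies in $[\sellerValueLbOfAt{\sellerIndex}{\ts},\sellerValueUbOfAt{\sellerIndex}{\ts}]$ (resp.\ $[\buyerValueLbOfAt{\buyerIndex}{\ts},\buyerValueUbOfAt{\buyerIndex}{\ts}]$), we have
\[\profitOpt_\ts \le \optProfitUpperBoundAt{\ts} = \tradeNum^\star\bigl(\buyerValueOrdUbOfAt{\tradeNum^\star}{\ts}-\sellerValueOrdLbOfAt{\tradeNum^\star}{\ts}\bigr).\]
Validity of the uncertainty sets must be checked by the same induction as \Cref{lem:GFT valid hypothesis}, using the half-space updates.

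\emph{Lower bound on realized profit.} In a Type-$2$ round every trader in $\ficTradingSellersAt{\ts}\cup\ficTradingBuyersAt{\ts}$ accepts. These sets each have size $\tradeNum^\star$, so at least $\tradeNum^\star$ pairs trade and the realized profit is at least $\tradeNum^\star(\buyerPriceAt{\ts}-\sellerPriceAt{\ts})$, which is nonnegative because $\sellerPriceAt{\ts}\le\buyerPriceAt{\ts}$. Subtracting,
\[\regAt{\ts}\le \tradeNum^\star\bigl[(\buyerValueOrdUbOfAt{\tradeNum^\star}{\ts}-\buyerPriceAt{\ts}) + (\sellerPriceAt{\ts}-\sellerValueOrdLbOfAt{\tradeNum^\star}{\ts})\bigr].\]

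\emph{Buyer term.} It remains to show $\buyerValueOrdUbOfAt{\tradeNum^\star}{\ts}-\buyerPriceAt{\ts}\le \width{\buyerVecsOfAt{\buyerSearchIndex{\ts}}{\ts}}{\contextAt{\ts}}$. Buyer $\buyerSearchIndex{\ts}$ belongs to $\ficTradingBuyersAt{\ts}$, so $\buyerValueUbOfAt{\buyerSearchIndex{\ts}}{\ts}\ge \buyerValueOrdUbOfAt{\tradeNum^\star}{\ts}$, and $\buyerPriceAt{\ts}=\buyerPriceAtOf{\ts}{\buyerSearchIndex{\ts}}$. Hence it suffices that $\buyerPriceAtOf{\ts}{\buyerIndex}\ge \buyerValueLbOfAt{\buyerIndex}{\ts}$, since then $\buyerValueUbOfAt{\buyerIndex}{\ts}-\buyerPriceAtOf{\ts}{\buyerIndex}\le$ width. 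We check this in two cases.
\begin{itemize}
\item If the width is at most $1/\timeHorizon$, then $\buyerPriceAtOf{\ts}{\buyerIndex}=\buyerValueLbOfAt{\buyerIndex}{\ts}$ by definition.
\item Otherwise $\buyerPriceAtOf{\ts}{\buyerIndex}=\tempPriceBuyerOfAt{\buyerIndex}{\ts}-\padPara{\padIndex}$. The padded set $\buyerVecsOfAt{\buyerIndex}{\ts}+\padPara{\padIndex}\ball_\dimension$ has directional minimum $\buyerValueLbOfAt{\buyerIndex}{\ts}-\padPara{\padIndex}$ along $\contextAt{\ts}$. The cut at $\tempPriceBuyerOfAt{\buyerIndex}{\ts}$ carries positive volume below it, so $\tempPriceBuyerOfAt{\buyerIndex}{\ts}>\buyerValueLbOfAt{\buyerIndex}{\ts}-\padPara{\padIndex}$. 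This only yields $\buyerPriceAtOf{\ts}{\buyerIndex}\ge \buyerValueLbOfAt{\buyerIndex}{\ts}-2\padPara{\padIndex}$.
\end{itemize}
The second case is the main obstacle: the price could dip slightly below the lower bound. I would first try an alternative route: in a Type-$2$ round buyer $\buyerSearchIndex{\ts}$ accepts, so its true value is at least $\buyerPriceAt{\ts}$, and its true value is at most $\buyerValueUbOfAt{\buyerSearchIndex{\ts}}{\ts}$. This gives
\[\buyerValueUbOfAt{\buyerSearchIndex{\ts}}{\ts}-\buyerPriceAt{\ts} \le \buyerValueUbOfAt{\buyerSearchIndex{\ts}}{\ts}-\buyerValueLbOfAt{\buyerSearchIndex{\ts}}{\ts} + (\text{slack}).\]
The clean way to kill the slack is to note the shortfall is at most $2\padPara{\padIndex}$, which is tiny relative to the width $>2^{-2^{\padIndex+1}}$ since $\padPara{\padIndex}=2^{-3\cdot2^\padIndex}/(16\dimension)$. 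A constant-factor loss is harmless downstream, so I would accept $\regAt{\ts}\le 2\tradeNum^\star(\cdots)$ if the exact constant fails.

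\emph{Seller term.} The seller term $\sellerPriceAt{\ts}-\sellerValueOrdLbOfAt{\tradeNum^\star}{\ts}\le \width{\sellerVecsOfAt{\sellerSearchIndex{\ts}}{\ts}}{\contextAt{\ts}}$ follows by the symmetric argument. Here $\sellerSearchIndex{\ts}\in\ficTradingSellersAt{\ts}$ gives $\sellerValueLbOfAt{\sellerSearchIndex{\ts}}{\ts}\le\sellerValueOrdLbOfAt{\tradeNum^\star}{\ts}$, and we need $\sellerPriceAtOf{\ts}{\sellerIndex}\le \sellerValueUbOfAt{\sellerIndex}{\ts}$ (up to the same $\padPara{}$ slack). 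Combining the buyer and seller terms yields the stated bound.
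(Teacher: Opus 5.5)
\noindent Your skeleton matches the paper's. You bound the benchmark above by the optimistic fictitious profit $\tradeNum^\star(\buyerValueOrdUbOfAt{\tradeNum^\star}{\ts}-\sellerValueOrdLbOfAt{\tradeNum^\star}{\ts})$, bound the realized profit below by $\tradeNum^\star(\buyerPriceAt{\ts}-\sellerPriceAt{\ts})$, and use $\buyerSearchIndex{\ts}\in\ficTradingBuyersAt{\ts}$ and $\sellerSearchIndex{\ts}\in\ficTradingSellersAt{\ts}$. Everything then rests on two inequalities: $\buyerPriceAtOf{\ts}{\buyerSearchIndex{\ts}}\ge\buyerValueLbOfAt{\buyerSearchIndex{\ts}}{\ts}$ and $\sellerPriceAtOf{\ts}{\sellerSearchIndex{\ts}}\le\sellerValueUbOfAt{\sellerSearchIndex{\ts}}{\ts}$. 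This is where your proposal does not reach the statement.

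Your fallback $\buyerPriceAtOf{\ts}{\buyerIndex}\ge\buyerValueLbOfAt{\buyerIndex}{\ts}-2\padPara{\padIndex}$ only yields $\regAt{\ts}\le 2\tradeNum^\star(\cdots)$, which is a weaker statement than the lemma. Your ``alternative route'' through acceptance runs the wrong way. Acceptance gives $\buyerValueAt{\buyerSearchIndex{\ts},\ts}\ge\buyerPriceAt{\ts}$, which is an \emph{upper} bound on the price, so it cannot bound $\buyerValueUbOfAt{\buyerSearchIndex{\ts}}{\ts}-\buyerPriceAt{\ts}$ from above. Likewise, a seller's acceptance gives $\sellerPriceAt{\ts}\ge$ cost, not $\sellerPriceAt{\ts}\le\sellerValueUbOfAt{\sellerSearchIndex{\ts}}{\ts}$. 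The paper's own proof asserts these two inequalities ``since $\sellerSearchIndex{\ts}$ and $\buyerSearchIndex{\ts}$ accept,'' which has the same defect. The inequalities are nevertheless true, but the reason is how the prices are constructed.

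The missing idea is that $\padPara{\padIndex}=2^{-3\cdot 2^{\padIndex}}/(16\dimension)$ is small enough that the shifted cut never leaves the unpadded range. Fix a buyer $\buyerIndex$ in the branch $w\deq\width{\buyerVecsOfAt{\buyerIndex}{\ts}}{\contextAt{\ts}}>1/\timeHorizon$, where $\buyerPriceAtOf{\ts}{\buyerIndex}=\tempPriceBuyerOfAt{\buyerIndex}{\ts}-\padPara{\padIndex}$. Maximality of $\padIndex$ gives $w>2^{-2^{\padIndex+1}}$. Let $K\deq\buyerVecsOfAt{\buyerIndex}{\ts}+\padPara{\padIndex}\ball_\dimension$.

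\begin{itemize}
\item The part of $K$ with $\lrangle{\buyerVec,\contextAt{\ts}}\le\buyerValueLbOfAt{\buyerIndex}{\ts}+\padPara{\padIndex}$ is a slab of thickness at most $2\padPara{\padIndex}$ (in units of $\lrangle{\cdot,\contextAt{\ts}}$). It touches a supporting hyperplane of $K$, and $K$ has width at least $w$ along $\contextAt{\ts}$.
\item By Brunn's theorem, the $(\dimension-1)$-th root of the section area is concave, which makes the cone extremal. Hence this slab carries at most a $2\dimension\padPara{\padIndex}/w<2^{-2^{\padIndex}}/8<2^{-2^{\padIndex-1}}$ fraction of $\vol{K}$.
\item The level $\tempPriceBuyerOfAt{\buyerIndex}{\ts}$ leaves exactly a $2^{-2^{\padIndex-1}}$ fraction below it, so $\tempPriceBuyerOfAt{\buyerIndex}{\ts}>\buyerValueLbOfAt{\buyerIndex}{\ts}+\padPara{\padIndex}$, i.e.\ $\buyerPriceAtOf{\ts}{\buyerIndex}\ge\buyerValueLbOfAt{\buyerIndex}{\ts}$.
\end{itemize}

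Symmetrically, $\sellerPriceAtOf{\ts}{\sellerIndex}\le\sellerValueUbOfAt{\sellerIndex}{\ts}$. The width-$\le 1/\timeHorizon$ case is immediate, as you noted. Plugging these into your decomposition gives the stated bound with constant $1$. Your factor-$2$ version would be harmless for \Cref{lem:contextual type-2 round cumulative regret}, but it is not the lemma as stated.
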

\begin{proof}
     In a Type-$2$ round $\ts$, all traders in $\ficTradingSellersAt{\ts}\cup \ficTradingBuyersAt{\ts}$ accept the posted prices. Consequently, the realized profit satisfies $\profitAt{\ts} \geq \tradeNum^\star \cdot (\buyerPriceAt{\ts} - \sellerPriceAt{\ts}) =  \tradeNum^\star \cdot (\buyerPriceAtOf{\buyerSearchIndex{\ts}}{\ts} - \sellerPriceAtOf{\sellerSearchIndex{\ts}}{\ts})$. Since $\sellerSearchIndex{\ts}$ and $\buyerSearchIndex{\ts}$ accept the prices, we have $\sellerPriceAtOf{\sellerSearchIndex{\ts}}{\ts} \leq   \sellerValueUbOfAt{\sellerSearchIndex{\ts}}{\ts}$ and $\buyerPriceAtOf{\buyerSearchIndex{\ts}}{\ts} \geq \buyerValueLbOfAt{\buyerSearchIndex{\ts}}{\ts}$. 
    Thus, $\profitAt{\ts} \geq \tradeNum^\star \cdot (\buyerValueLbOfAt{\buyerSearchIndex{\ts}}{\ts} - \sellerValueUbOfAt{\sellerSearchIndex{\ts}}{\ts}$).

    Then, we have
    \begin{align*}
        \regAt{\ts} & = \profitOptAt{\ts} - \profitAt{\ts} \leq \optProfitUpperBoundAt{\ts} - \profitAt{\ts} \\
        & \leq \tradeNum^\star \cdot \left(\buyerValueOrdUbOfAt{\tradeNum^\star}{\ts} - \sellerValueOrdLbOfAt{\tradeNum^\star}{\ts}\right) - \tradeNum^\star \cdot \left(\buyerValueLbOfAt{\buyerSearchIndex{\ts}}{\ts} - \sellerValueUbOfAt{\sellerSearchIndex{\ts}}{\ts}\right)\\
        &\overset{(a)}{\leq} \tradeNum^\star \cdot \left(\buyerValueUbOfAt{\buyerSearchIndex{\ts}}{\ts} - \sellerValueLbOfAt{\sellerSearchIndex{\ts}}{\ts}\right) - \tradeNum^\star \cdot \left(\buyerValueLbOfAt{\buyerSearchIndex{\ts}}{\ts} - \sellerValueUbOfAt{\sellerSearchIndex{\ts}}{\ts}\right)\\
        &\leq \tradeNum^\star \cdot \left(\width{\sellerVecsOfAt{\sellerSearchIndex{\ts}}{\ts}}{\contextAt{\ts}} + \width{\buyerVecsOfAt{\buyerSearchIndex{\ts}}{\ts}}{\contextAt{\ts}} \right).
    \end{align*}
    Here, inequality (a) holds because $\sellerSearchIndex{\ts} \in \ficTradingSellersAt{\ts} $ and $\buyerSearchIndex{\ts} \in \ficTradingBuyersAt{\ts}$.
\end{proof}

\begin{lemma}
    \label{lem:contextual type-2 round cumulative regret}
    In \Cref{alg: contextual two-sided market profit}, the cumulative regret from Type-$2$ rounds is at most $O(\traderNum^2 \dimension \log \dimension )$.
\end{lemma}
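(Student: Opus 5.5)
Our plan is to combine the per-round bound of \Cref{lem:contextual type-2 round regret} with the volume-decay estimates of \Cref{lem:profit potential decay}, in the style of the pricing-loss analysis of \citet{LLS-21}. First, since $\tradeNum^\star\le\traderNum$, the regret of a Type-$2$ round is at most $\traderNum\,\width{\sellerVecsOfAt{\sellerSearchIndex{\ts}}{\ts}}{\contextAt{\ts}}+\traderNum\,\width{\buyerVecsOfAt{\buyerSearchIndex{\ts}}{\ts}}{\contextAt{\ts}}$. We charge the first term to seller $\sellerSearchIndex{\ts}$ and the second to buyer $\buyerSearchIndex{\ts}$. It then suffices to show that, for each fixed trader (say seller $\sellerIndex$), the total width charged to it over Type-$2$ rounds where it is selected is $O(\dimension\log\dimension)$, plus the $O(1)$ accumulated from rounds with width at most $1/\timeHorizon$. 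Summing over the $2\traderNum$ traders and multiplying by $\traderNum$ gives $O(\traderNum^2\dimension\log\dimension)$.

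Fix seller $\sellerIndex$ and group its charged rounds by the level $\padIndex=\padIndexSellerOfAt{\sellerIndex}{\ts}$. In such a round the width is at most $2^{-2^{\padIndex}}$. Rounds with width $\le 1/\timeHorizon$ (conservative price $\sellerValueUbOfAt{\sellerIndex}{\ts}$) contribute at most $\timeHorizon\cdot 1/\timeHorizon=1$ in total. Otherwise, the seller is $\sellerSearchIndex{\ts}$ and accepts $\sellerPriceAt{\ts}=\sellerPriceAtOf{\ts}{\sellerIndex}=\tempPriceSellerOfAt{\sellerIndex}{\ts}+\padPara{\padIndex}$. Hence the update is $\sellerVecsOfAt{\sellerIndex}{\ts+1}=\sellerVecsOfAt{\sellerIndex}{\ts}\cap\{\lrangle{\sellerVec,\contextAt{\ts}}\le y+\padPara{\padIndex}\}$, and the second inequality of \Cref{lem:profit potential decay} shows that $\vol{\sellerVecsOfAt{\sellerIndex}{\ts}+\padPara{\padIndex}\ball_\dimension}$ shrinks by a factor $1-\frac{1}{10\cdot2^{2^{\padIndex-1}}}$. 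One subtlety is that the seller's uncertainty set may also shrink in rounds where it is not selected (it always gets feedback). This can only help, since the sets are nested and volumes are monotone. Since the padded volume is at most $\vol{(1+\padPara{\padIndex})\ball_\dimension}$ and at least $\padPara{\padIndex}^\dimension\vol{\ball_\dimension}$, the number of such rounds at level $\padIndex$ is at most $10\cdot 2^{2^{\padIndex-1}}\cdot\dimension\log(1/\padPara{\padIndex}+1)$. With $\padPara{\padIndex}=2^{-3\cdot2^{\padIndex}}/(16\dimension)$, this is $O(2^{2^{\padIndex-1}}\dimension(2^{\padIndex}+\log\dimension))$.

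Multiplying by the per-round width $2^{-2^{\padIndex}}$ gives a level-$\padIndex$ contribution of $O\big(2^{-2^{\padIndex-1}}\dimension(2^{\padIndex}+\log\dimension)\big)$. This series is summable over $\padIndex\ge0$: the terms $2^{\padIndex}2^{-2^{\padIndex-1}}$ decay doubly exponentially. The sum is therefore $O(\dimension\log\dimension)$ per trader, independent of $\timeHorizon$. The buyer side is symmetric, using the mirrored version of \Cref{lem:profit potential decay} for $\tempPriceBuyerOfAt{\buyerIndex}{\ts}-\padPara{\padIndex}$.

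The main obstacle is the bookkeeping that makes the charging legitimate. The level $\padIndex$ of a given trader is not monotone across rounds, because contexts change. The potential argument therefore has to be run separately for each level, using only that padded volumes at a fixed scale $\padPara{\padIndex}$ are nonincreasing over time. We must also verify that the accepting update really is the cut $\le y+\padPara{\padIndex}$ with exactly the $y$ of \Cref{lem:profit potential decay}. This holds because, in a Type-$2$ round, the posted seller price equals $\sellerSearchIndex{\ts}$'s own candidate price. For the buyer, the posted price likewise equals $\buyerSearchIndex{\ts}$'s own candidate price.
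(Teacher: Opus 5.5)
Your proposal is correct and follows essentially the same route as the paper. You split the bound of \Cref{lem:contextual type-2 round regret} into seller-side and buyer-side widths, fix a trader and a level $\ell$, and use the second inequality of \Cref{lem:profit potential decay} for the accepting cut at $y+z_\ell$ to count such rounds against the padded-volume potential at scale $z_\ell$; you then multiply by the width bound $2^{-2^{\ell}}$ and sum the doubly-exponentially decaying series over levels and traders, exactly as the paper does. Your explicit handling of the rounds with width at most $1/T$ (which the paper sidesteps by restricting to $\ell\le\log\log T$), and your remark that the sets also shrink in rounds where the trader is not selected, are minor clarifications rather than a different argument.
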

\begin{proof}
By~\Cref{lem:contextual type-2 round regret}, the immediate regret in any Type-$2$ round $\ts$ can be decomposed into seller-side and buyer-side components. Specifically, the seller-side regret is $\tradeNum^\star \cdot \width{\sellerVecsOfAt{\sellerSearchIndex{\ts}}{\ts}}{\contextAt{\ts}}$ and the buyer-side regret is $\tradeNum^\star \cdot \width{\buyerVecsOfAt{\buyerSearchIndex{\ts}}{\ts}}{\contextAt{\ts}}$.
    In the following analysis, we focus on the cumulative seller-side regret; the buyer-side regret follows symmetrically.

    Consider any fixed seller $\sellerIndex$ and padding width parameter $\padIndex\leq \log\log\timeHorizon$. Suppose in a Type-$2$ round $\ts$ with $\sellerSearchIndex{\ts} = \sellerIndex$ and $\padIndexSellerOfAt{\sellerIndex}{\ts} = \padIndex$, we have 
    \[\vol{\sellerVecsOfAt{\sellerIndex}{\ts+1} + \padPara{\padIndex}\cdot \ball_\dimension}\leq \left(1-\frac{1}{10\cdot 2^{2^{\padIndex-1}}}\right) \cdot \vol{\sellerVecsOfAt{\sellerIndex}{\ts} + \padPara{\padIndex}\cdot \ball_\dimension}.\]
This is because in this case, we have $\sellerPriceAt{\ts} = \tempPriceSellerOfAt{\sellerIndex}{\ts} + \padPara{\padIndex}$ where $\tempPriceSellerOfAt{\sellerIndex}{\ts}$ is defined in \Cref{alg: contextual two-sided market profit}. Thus, by \Cref{lem:profit potential decay} and the updating rule of $\sellerVecsOfAt{\sellerIndex}{\ts}$, we have the desired volume shrinkage. Since $\frac{\vol{\sellerVecsOfAt{\sellerIndex}{\timeHorizon+1} + \padPara{\padIndex}\cdot \ball_\dimension}}{\vol{\sellerVecsOfAt{\sellerIndex}{1} + \padPara{\padIndex}\cdot \ball_\dimension}}\geq \frac{\vol{\padPara{\padIndex}\cdot \ball_\dimension}}{\vol{(1+\padPara{\padIndex})\ball_\dimension}} \geq \left(\frac{\padPara{\padIndex}}{1+\padPara{\padIndex}}\right)^\dimension$. Consequently,  given any fixed seller $\sellerIndex$ and padding width parameter $\padIndex$, the number of type-$2$ rounds $\ts$ where $\sellerSearchIndex{\ts} = \sellerIndex$ and $\padIndexSellerOfAt{\sellerIndex}{\ts} = \padIndex$ is at most
    $\frac{\log \left(\frac{\padPara{\padIndex}}{1+\padPara{\padIndex}}\right)^\dimension}{\log \left(1-\frac{1}{10\cdot 2^{2^{\padIndex-1}}}\right)} = O\left(\dimension \log \dimension \cdot 2^{2^{\padIndex-1}}+2^{\padIndex+2^{\padIndex-1}}\right).$ On the other hand, the immediate seller-side regret of such a round is bounded by $\tradeNum^\star \cdot \width{\sellerVecsOfAt{\sellerSearchIndex{\ts}}{\ts}}{\contextAt{\ts}}\leq \tradeNum^\star \cdot 2^{-2^{\padIndex}}$ where the inequality follows from our selection rule of $\padIndexSellerOfAt{\sellerIndex}{\ts}$. 
    Thus, the cumulative seller-side regret from all type-$2$ rounds with $\sellerSearchIndex{\ts} = \sellerIndex$ and $\padIndexSellerOfAt{\sellerIndex}{\ts} = \padIndex$ is at most $O\left(\frac{\tradeNum^\star \dimension \log \dimension}{2^{2^{\padIndex-1}}}+ \tradeNum^\star 2^{\padIndex - 2^{\padIndex-1}}\right)$.

    Finally, the total cumulative seller-side regret from all type-$2$ rounds is bounded by $O\left(\traderNum^2 \dimension \log\dimension \right)$
    by summing over all sellers and all padding width parameters. The symmetric analysis for buyers completes the proof.\end{proof}

\begin{lemma}\label{lem:contextual type-3 round regret}
    For any Type-$3$ round $\ts$ in~\Cref{alg: contextual two-sided market profit}, the optimal profit  is bounded by
    \[\profitOptAt{\ts} \leq 2 \tradeNum^\star \cdot \left(\width{\sellerVecsOfAt{\sellerSearchIndex{\ts}}{\ts}}{\contextAt{\ts}} + \width{\buyerVecsOfAt{\buyerSearchIndex{\ts}}{\ts}}{\contextAt{\ts}} \right).\]
\end{lemma}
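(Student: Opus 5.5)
We bound $\profitOptAt{\ts}$ by the profit of the optimistic fictitious market and then use the Type-$3$ condition $\sellerPriceAtOf{\ts}{\sellerSearchIndex{\ts}} > \buyerPriceAtOf{\ts}{\buyerSearchIndex{\ts}}$ to show that the fictitious margin is small. Write $w^{\mathrm s}\triangleq\width{\sellerVecsOfAt{\sellerSearchIndex{\ts}}{\ts}}{\contextAt{\ts}}$ and $w^{\mathrm b}\triangleq\width{\buyerVecsOfAt{\buyerSearchIndex{\ts}}{\ts}}{\contextAt{\ts}}$.

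The first step is the monotonicity of \Cref{lem: profit two-sided optimal profit upper bound}. Since every trader's true cost or value at round $\ts$ lies in $[\sellerValueLbOfAt{\sellerIndex}{\ts},\sellerValueUbOfAt{\sellerIndex}{\ts}]$ or $[\buyerValueLbOfAt{\buyerIndex}{\ts},\buyerValueUbOfAt{\buyerIndex}{\ts}]$, we get
\[
\profitOptAt{\ts}\le\optProfitUpperBoundAt{\ts}=\tradeNum^\star\bigl(\buyerValueOrdUbOfAt{\tradeNum^\star}{\ts}-\sellerValueOrdLbOfAt{\tradeNum^\star}{\ts}\bigr).
\]
Next, $\sellerSearchIndex{\ts}\in\ficTradingSellersAt{\ts}$ and $\buyerSearchIndex{\ts}\in\ficTradingBuyersAt{\ts}$. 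After tie-breaking, $\sellerValueOrdLbOfAt{\tradeNum^\star}{\ts}$ is the largest lower bound in $\ficTradingSellersAt{\ts}$ and $\buyerValueOrdUbOfAt{\tradeNum^\star}{\ts}$ is the smallest upper bound in $\ficTradingBuyersAt{\ts}$. Hence
\[
\sellerValueOrdLbOfAt{\tradeNum^\star}{\ts}\ge\sellerValueLbOfAt{\sellerSearchIndex{\ts}}{\ts}
\quad\text{and}\quad
\buyerValueOrdUbOfAt{\tradeNum^\star}{\ts}\le\buyerValueUbOfAt{\buyerSearchIndex{\ts}}{\ts},
\]
which gives $\profitOptAt{\ts}\le\tradeNum^\star\bigl(\buyerValueUbOfAt{\buyerSearchIndex{\ts}}{\ts}-\sellerValueLbOfAt{\sellerSearchIndex{\ts}}{\ts}\bigr)$. (If this bracket is negative then $\profitOptAt{\ts}\le 0$, since $\profitOptAt{\ts}\le\optProfitUpperBoundAt{\ts}$, and the claim is trivial.)

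The main step is to relate this margin to the two search prices. I will show that every seller search price lies in the seller's uncertainty range, $\sellerPriceAtOf{\ts}{\sellerIndex}\le\sellerValueUbOfAt{\sellerIndex}{\ts}+\padPara{\cdot}$, and that it sits close to the top, $\sellerPriceAtOf{\ts}{\sellerIndex}\ge\sellerValueUbOfAt{\sellerIndex}{\ts}-w^{\mathrm s}$, up to the padding. Symmetrically, I will show $\buyerPriceAtOf{\ts}{\buyerIndex}\le\buyerValueLbOfAt{\buyerIndex}{\ts}+w^{\mathrm b}$, up to padding.

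In the small-width branch these hold with equality: $\sellerPriceAtOf{\ts}{\sellerIndex}=\sellerValueUbOfAt{\sellerIndex}{\ts}$ and $\buyerPriceAtOf{\ts}{\buyerIndex}=\buyerValueLbOfAt{\buyerIndex}{\ts}$. In the other branch, $\tempPriceSellerOfAt{\sellerIndex}{\ts}$ is a level at which the padded body $\sellerVecsOfAt{\sellerIndex}{\ts}+\padPara{\padIndex}\ball_\dimension$ has positive mass on both sides. It therefore lies in $[\sellerValueLbOfAt{\sellerIndex}{\ts}-\padPara{\padIndex},\,\sellerValueUbOfAt{\sellerIndex}{\ts}+\padPara{\padIndex}]$, and after adding the shift $\padPara{\padIndex}$ we get $\sellerPriceAtOf{\ts}{\sellerIndex}\in[\sellerValueLbOfAt{\sellerIndex}{\ts},\,\sellerValueUbOfAt{\sellerIndex}{\ts}+2\padPara{\padIndex}]$.

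Chaining these with the Type-$3$ inequality gives
\[
\sellerValueLbOfAt{\sellerSearchIndex{\ts}}{\ts}
\ge\sellerValueUbOfAt{\sellerSearchIndex{\ts}}{\ts}-w^{\mathrm s}
\ge\sellerPriceAtOf{\ts}{\sellerSearchIndex{\ts}}-2\padPara{}-w^{\mathrm s}
>\buyerPriceAtOf{\ts}{\buyerSearchIndex{\ts}}-2\padPara{}-w^{\mathrm s}
\ge\buyerValueLbOfAt{\buyerSearchIndex{\ts}}{\ts}-2\padPara{}-w^{\mathrm s}
=\buyerValueUbOfAt{\buyerSearchIndex{\ts}}{\ts}-w^{\mathrm b}-2\padPara{}-w^{\mathrm s}.
\]
So the margin is at most $w^{\mathrm s}+w^{\mathrm b}$ plus padding terms.

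The obstacle is absorbing the padding terms $\padPara{\padIndex}$ into the factor $2$. Since $\padPara{\padIndex}=2^{-3\cdot2^\padIndex}/(16\dimension)$ and $w^{\mathrm s}>2^{-2^{\padIndex+1}}$ by maximality of $\padIndex$, we get $\padPara{\padIndex}\le w^{\mathrm s}/16$, and likewise for the buyer. The slack is therefore at most a small constant multiple of $w^{\mathrm s}+w^{\mathrm b}$, which yields
\[
\profitOptAt{\ts}\le 2\tradeNum^\star\bigl(w^{\mathrm s}+w^{\mathrm b}\bigr).
\]
One point needs care: the padding parameter is indexed by trader-specific $\padIndex$ values. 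I would handle this case by case, according to which of the two traders falls in the small-width branch.
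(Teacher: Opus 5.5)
Your proposal is correct and follows the same route as the paper. You bound $\profitOptAt{\ts}$ by the fictitious optimum and then by $\tradeNum^\star(\buyerValueUbOfAt{\buyerSearchIndex{\ts}}{\ts}-\sellerValueLbOfAt{\sellerSearchIndex{\ts}}{\ts})$, use the Type-$3$ inequality to show $\buyerValueLbOfAt{\buyerSearchIndex{\ts}}{\ts}-\sellerValueUbOfAt{\sellerSearchIndex{\ts}}{\ts}$ is at most the padding, and absorb the padding into the widths; your bound $\sellerPriceAtOf{\ts}{\sellerIndex}\le\sellerValueUbOfAt{\sellerIndex}{\ts}+2\padPara{\padIndex}$ and separate handling of the small-width branch are, if anything, more careful than the paper's one-line claim. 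There is one bookkeeping slip: the step $\buyerPriceAtOf{\ts}{\buyerSearchIndex{\ts}}\ge\buyerValueLbOfAt{\buyerSearchIndex{\ts}}{\ts}$ in your chain holds only up to the buyer's padding, since the symmetric fact you need is $\buyerPriceAtOf{\ts}{\buyerIndex}\ge\buyerValueLbOfAt{\buyerIndex}{\ts}-2\padPara{\padIndexBuyerOfAt{\buyerIndex}{\ts}}$ rather than the upper bound you announced, but this extra term is absorbed exactly as on the seller side.
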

\begin{proof}
    First, observe that $\sellerPriceAtOf{\ts}{\sellerSearchIndex{\ts}} \leq \sellerValueUbOfAt{\sellerSearchIndex{\ts}}{\ts} +  \padPara{\padIndexSellerOfAt{\sellerSearchIndex{\ts}}{\ts}}$ regardless of whether $ \width{\sellerVecsOfAt{\sellerSearchIndex{\ts}}{\ts}}{\contextAt{\ts}}\leq \frac{1}{\timeHorizon}$ or not. Similarly, we have $\buyerPriceAtOf{\ts}{\buyerSearchIndex{\ts}} \geq \buyerValueLbOfAt{\buyerSearchIndex{\ts}}{\ts} -  \padPara{\padIndexBuyerOfAt{\buyerSearchIndex{\ts}}{\ts}}$.
    By the definition of Type-$3$ round, we have $\sellerPriceAtOf{\ts}{\sellerSearchIndex{\ts}} > \buyerPriceAtOf{\ts}{\buyerSearchIndex{\ts}}$, so $\buyerValueLbOfAt{\buyerSearchIndex{\ts}}{\ts}-\sellerValueUbOfAt{\sellerSearchIndex{\ts}}{\ts} \leq \padPara{\padIndexSellerOfAt{\sellerSearchIndex{\ts}}{\ts}}+ \padPara{\padIndexBuyerOfAt{\buyerSearchIndex{\ts}}{\ts}}$. Thus, we have
    \begin{align*}
  \profitOptAt{\ts} & \leq \optProfitUpperBoundAt{\ts} \leq \tradeNum^\star \cdot \left(\buyerValueOrdUbOfAt{\tradeNum^\star}{\ts} - \sellerValueOrdLbOfAt{\tradeNum^\star}{\ts}\right) 
         \leq \tradeNum^\star \cdot \left(\buyerValueUbOfAt{\buyerSearchIndex{\ts}}{\ts} - \sellerValueLbOfAt{\sellerSearchIndex{\ts}}{\ts}\right) \\
        & = \tradeNum^\star \cdot \left(\buyerValueLbOfAt{\buyerSearchIndex{\ts}}{\ts} + \width{\sellerVecsOfAt{\sellerSearchIndex{\ts}}{\ts}}{\contextAt{\ts}}  - \sellerValueUbOfAt{\sellerSearchIndex{\ts} }{\ts}+\width{\buyerVecsOfAt{\buyerSearchIndex{\ts}}{\ts}}{\contextAt{\ts}}\right)
        \\& \leq \tradeNum^\star \cdot \left(\width{\sellerVecsOfAt{\sellerSearchIndex{\ts}}{\ts}}{\contextAt{\ts}} + \width{\buyerVecsOfAt{\buyerSearchIndex{\ts}}{\ts}}{\contextAt{\ts}} + \padPara{\padIndexSellerOfAt{\sellerSearchIndex{\ts}}{\ts}} + \padPara{\padIndexBuyerOfAt{\buyerSearchIndex{\ts}}{\ts}}\right).
    \end{align*}
    Since $\width{\sellerVecsOfAt{\sellerSearchIndex{\ts}}{\ts}}{\contextAt{\ts}} \geq 2^{-2^{\padIndexSellerOfAt{\sellerSearchIndex{\ts}}{\ts}+1}} $ by our definition of $\padIndexSellerOfAt{\sellerSearchIndex{\ts}}{\ts}$, thus it's easy to check $\padPara{\padIndexSellerOfAt{\sellerSearchIndex{\ts}}{\ts}} \leq \width{\sellerVecsOfAt{\sellerSearchIndex{\ts}}{\ts}}{\contextAt{\ts}}$. Symmetrically, we have $\padPara{\padIndexBuyerOfAt{\buyerSearchIndex{\ts}}{\ts}} \leq \width{\buyerVecsOfAt{\buyerSearchIndex{\ts}}{\ts}}{\contextAt{\ts}}$. Plugging these two inequalities into the above yields the desired result.
\end{proof}

\begin{lemma}\label{lem:contextual type-3 round cumulative regret}
    In~\Cref{alg: contextual two-sided market profit}, the cumulative regret from Type-$3$ rounds is at most $O(\traderNum^2 \dimension \log \dimension )$.
\end{lemma}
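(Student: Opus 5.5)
We bound the regret of each Type-$3$ round by its optimal profit and then run a potential argument on padded volumes, following the bilateral GFT analysis of \Cref{thm: contextual GFT regret}. In a Type-$3$ round the algorithm posts a single price $\tempMidPriceAt{\ts}$, so its realized profit is zero and $\regAt{\ts}=\profitOptAt{\ts}$. By \Cref{lem:contextual type-3 round regret} and $\tradeNum^\star\le\traderNum$,
\[
\regAt{\ts}\le 2\traderNum\left(\width{\sellerVecsOfAt{\sellerSearchIndex{\ts}}{\ts}}{\contextAt{\ts}}+\width{\buyerVecsOfAt{\buyerSearchIndex{\ts}}{\ts}}{\contextAt{\ts}}\right)\le 4\traderNum\cdot W_\ts ,
\]
where $W_\ts$ is the larger of the two widths. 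By construction, $W_\ts$ is the width of the trader $h_\ts$ whose padded set the algorithm bisects. We charge the whole round to $h_\ts$ at padding scale $\padMidIndexAt{\ts}$, for which $W_\ts\le 2^{-\padMidIndexAt{\ts}}$.

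The next step is to show that each such round shrinks a potential of $h_\ts$. Say $h_\ts$ is a seller. The price $\tempMidPriceAt{\ts}$ bisects the volume of $\sellerVecsOfAt{h_\ts}{\ts}+\padMidPara{\padMidIndexAt{\ts}}\ball_\dimension$ along $\contextAt{\ts}$, and the seller's accept/reject feedback cuts $\sellerVecsOfAt{h_\ts}{\ts}$ by the halfspace $\lrangle{\sellerVec,\contextAt{\ts}}\le\tempMidPriceAt{\ts}$ or its complement. I would instantiate $\padMidPara{\padIndex}=2^{-\padIndex}/(8\dimension)$, the same parameters as in \Cref{alg: multi-scale steiner GFT}. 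Then \Cref{lem:GFT potential decay} applies verbatim and gives
\[
\vol{\sellerVecsOfAt{h_\ts}{\ts+1}+\padMidPara{\padMidIndexAt{\ts}}\ball_\dimension}\le\tfrac34\,\vol{\sellerVecsOfAt{h_\ts}{\ts}+\padMidPara{\padMidIndexAt{\ts}}\ball_\dimension}.
\]
Uncertainty sets only shrink over time, including in rounds of other types and in rounds where the trader is updated at other scales. Hence for each fixed trader and fixed scale $\padIndex$, the padded volume is non-increasing in $\ts$. It starts at most $\vol{(1+\padMidPara{\padIndex})\ball_\dimension}$ and never falls below $\vol{\padMidPara{\padIndex}\ball_\dimension}$, because the true feature vector stays in the set. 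So the number of Type-$3$ rounds charged to a given (trader, $\padIndex$) pair is $O(\dimension\log(1/\padMidPara{\padIndex}))=O(\dimension(\padIndex+\log\dimension))$.

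To conclude, sum over traders and scales:
\[
\sum_{\ts\in\text{Type-}3}\regAt{\ts}\le\sum_{h}\sum_{\padIndex\ge0}4\traderNum\,2^{-\padIndex}\cdot O\big(\dimension(\padIndex+\log\dimension)\big)=O(\traderNum\cdot 2\traderNum\cdot\dimension\log\dimension)=O(\traderNum^2\dimension\log\dimension).
\]
The main point to check carefully is the validity of the charging. The volume-decay lemma requires $\padMidIndexAt{\ts}$ to be the scale matching $h_\ts$'s own width, and it does, since it is defined as the largest $\padIndex$ with width at most $2^{-\padIndex}$. We also need to confirm that the feedback of $h_\ts$ to the single price is consistent with its true parameter, which holds since the cut is a valid halfspace, and that the bisection is well defined even when the width is tiny. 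No lower cutoff on the width is needed: rounds at large scales $\padIndex$ contribute geometrically small amounts, so the sum converges without a $\log\log\timeHorizon$ term.
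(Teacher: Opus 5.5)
Your proposal is correct and follows the paper's proof essentially step for step. You bound the regret by $\profitOptAt{\ts}\le 4\tradeNum^\star$ times the larger width via \Cref{lem:contextual type-3 round regret}, and charge the round to the bisected trader at scale $\padMidIndexAt{\ts}$. You then use a $3/4$ padded-volume shrinkage to get $O(\dimension\log(1/\padMidPara{\padMidIndex}))$ such rounds per (trader, scale) pair, and sum the geometric series. Your explicit choice $\padMidPara{\padIndex}=2^{-\padIndex}/(8\dimension)$ and your appeal to the bisection lemma \Cref{lem:GFT potential decay}, rather than \Cref{lem:profit potential decay} which the paper cites, are the right justification for the $3/4$ step that the paper leaves slightly underspecified.
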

\begin{proof}
     We first focus on the subset of Type-$3$ rounds with $\width{\sellerVecsOfAt{\sellerSearchIndex{\ts}}{\ts}}{\contextAt{\ts}}\geq \width{\buyerVecsOfAt{\buyerSearchIndex{\ts}}{\ts}}{\contextAt{\ts}}$.
    Consider a fixed seller $\sellerIndex$ and a padding width parameter $\padMidIndex$. If $\sellerSearchIndex{\ts} = \sellerIndex$ and $\padMidIndexAt{\ts} = \padMidIndex$, the algorithm sets a single price $\sellerPriceAt{\ts}= \buyerPriceAt{\ts} = \tempMidPriceAt{\ts}$. By~\Cref{lem:profit potential decay} and the updating rule of $\sellerVecsOfAt{\sellerIndex}{\ts}$, regardless of the observed feedback, the volume of the padded uncertainty set shrinks significantly:
    \[\vol{\sellerVecsOfAt{\sellerIndex}{\ts+1} + \padMidPara{\padMidIndex}\cdot \ball_\dimension}\leq \frac{3}{4} \cdot \vol{\sellerVecsOfAt{\sellerIndex}{\ts} + \padMidPara{\padMidIndex}\cdot \ball_\dimension}.\]
    Analogous to the proof of~\Cref{thm: contextual GFT regret}, this volume reduction implies that for any fixed seller $\sellerIndex$ and padding width parameter $\padMidIndex$, the number of such rounds where $\sellerSearchIndex{\ts} = \sellerIndex$ and $\padMidIndexAt{\ts} = \padMidIndex$ is at most $O(\dimension \log (1/\padMidPara{\padMidIndex}))$. 

    By~\Cref{lem:contextual type-3 round regret}, the immediate regret of such a round is bounded by 
    $\regAt{\ts} \leq \profitOptAt{\ts} \leq 4\tradeNum^\star\cdot \width{\sellerVecsOfAt{\sellerSearchIndex{\ts}}{\ts}}{\contextAt{\ts}}\leq 4\tradeNum^\star\cdot  2^{-\padMidIndex}$,
    since $\width{\sellerVecsOfAt{\sellerSearchIndex{\ts}}{\ts}}{\contextAt{\ts}}\geq \width{\buyerVecsOfAt{\buyerSearchIndex{\ts}}{\ts}}{\contextAt{\ts}}$. Then the cumulative regret from all such Type-$3$ rounds that the specific seller $\sellerIndex$ is selected is
    \[4\tradeNum^\star \sum\nolimits_{\padMidIndex=0}^{\infty} O\left(  2^{-\padMidIndex}\dimension \log (1/\padMidPara{\padMidIndex}) \right)\leq  O \left(\traderNum \dimension \log \dimension \right).\]
  The case where $\width{\sellerVecsOfAt{\sellerSearchIndex{\ts}}{\ts}}{\contextAt{\ts}}< \width{\buyerVecsOfAt{\buyerSearchIndex{\ts}}{\ts}}{\contextAt{\ts}}$, is handled for each buyer symmetrically. Finally, summing over all traders yields the desired cumulative regret bound from Type-$3$ rounds.
\end{proof}

\begin{restatable}{theorem}{thmContextualProfit}
    \label{thm: contextual two-sided market profit}
    In the contextual two-sided market model, for profit maximization, the regret of {\MSSPM} (\Cref{alg: contextual two-sided market profit}) is $O(\traderNum^2 \dimension \log \log \timeHorizon + \traderNum^2 \dimension \log \dimension)$ where $\dimension$ is the dimensionality of the context space.
\end{restatable}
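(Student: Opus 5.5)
The proof will combine the three round-type lemmas already established. Every round of \MSSPM\ falls into exactly one of three types. If $\sellerPriceAtOf{\ts}{\sellerSearchIndex{\ts}} > \buyerPriceAtOf{\ts}{\buyerSearchIndex{\ts}}$, the round is Type-$3$. Otherwise the posted Two-Price Mechanism uses $\sellerPriceAt{\ts}=\sellerPriceAtOf{\ts}{\sellerSearchIndex{\ts}}$ and $\buyerPriceAt{\ts}=\buyerPriceAtOf{\ts}{\buyerSearchIndex{\ts}}$, and the round is Type-$1$ or Type-$2$ according to whether some trader in $\ficTradingSellersAt{\ts}\cup\ficTradingBuyersAt{\ts}$ rejects. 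Summing the per-type bounds will give the theorem.

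\textbf{Step 1 (validity).} I will first record an invariant, analogous to \Cref{lem:GFT valid hypothesis}: each true vector $\sellerVecAt{\sellerIndex}$ (resp.\ $\buyerVecAt{\buyerIndex}$) stays in its set $\sellerVecsOfAt{\sellerIndex}{\ts}$ (resp.\ $\buyerVecsOfAt{\buyerIndex}{\ts}$). This holds because each update intersects the set with a halfspace consistent with the trader's own accept/reject decision. It implies $\sellerValueLbOfAt{\sellerIndex}{\ts}\le \sellerValueOfAt{\sellerIndex}{\ts}\le \sellerValueUbOfAt{\sellerIndex}{\ts}$ and the analogous bounds for buyers. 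Together with \Cref{lem: profit two-sided optimal profit upper bound}, it justifies $\profitOptAt{\ts}\le \optProfitUpperBoundAt{\ts}$, which \Cref{lem:contextual type-2 round regret,lem:contextual type-3 round regret} use. It also ensures that every posted pair satisfies weak budget balance, so every round incurs nonnegative profit and its regret is at most $\profitOptAt{\ts}\le \traderNum$.

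\textbf{Step 2 (summing).} For Type-$1$ rounds, \Cref{lem:contextual type-1 round number} bounds their number by $O(\traderNum\dimension\log\dimension+\traderNum\dimension\log\log\timeHorizon)$. Multiplying by the per-round bound $\traderNum$ gives $O(\traderNum^2\dimension\log\log\timeHorizon+\traderNum^2\dimension\log\dimension)$. Type-$2$ rounds contribute $O(\traderNum^2\dimension\log\dimension)$ by \Cref{lem:contextual type-2 round cumulative regret}, and Type-$3$ rounds contribute the same by \Cref{lem:contextual type-3 round cumulative regret}. Adding the three bounds yields the claim.

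\textbf{Main obstacle.} The delicate point is Type-$2$ rounds in which the selected trader has width at most $1/\timeHorizon$. There the posted price is the conservative endpoint $\sellerValueUbOfAt{\sellerIndex}{\ts}$ or $\buyerValueLbOfAt{\buyerIndex}{\ts}$ rather than a KL cut, so no volume shrinks and the potential argument of \Cref{lem:contextual type-2 round cumulative regret} (which only ranges over $\padIndex\le\log\log\timeHorizon$) does not cover these rounds. I would handle them separately. By \Cref{lem:contextual type-2 round regret}, the regret of such a round is at most $\tradeNum^\star$ times the sum of the two widths. Whichever side has width at most $1/\timeHorizon$ contributes at most $\traderNum/\timeHorizon$, and summed over $\timeHorizon$ rounds that is $O(\traderNum)$. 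The other side, if its width exceeds $1/\timeHorizon$, is a genuine KL cut and is already covered by the potential argument. This adds only a lower-order term.

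A minor check is that when a rejecting trader is not the argmax trader, its price still dominates its own KL price, as used in \Cref{lem:contextual type-1 round number}. I would verify this directly from the choice $\sellerPriceAt{\ts}=\max_{\sellerIndex\in\ficTradingSellersAt{\ts}}\sellerPriceAtOf{\ts}{\sellerIndex}$, and symmetrically for buyers.
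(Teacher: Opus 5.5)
Your proposal is correct and matches the paper's proof. The paper likewise multiplies the Type-$1$ round count by the trivial per-round regret bound $n$ and adds the Type-$2$ and Type-$3$ cumulative bounds.

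Your separate treatment of Type-$2$ rounds in which the selected trader's width is at most $1/T$ is a worthwhile patch rather than a detour. In those rounds the conservative endpoint is posted and no padded volume shrinks, so they contribute only $O(n)$ in total. The paper's Type-$2$ potential argument, which ranges over $\ell \le \log\log T$, does not explicitly account for them.
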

\begin{proof}
    According to \Cref{lem:contextual type-1 round number}, the cumulative regret from Type-$1$ rounds is at most $O(\traderNum^2 \dimension \log \dimension +\traderNum^2 \dimension \log \log \timeHorizon)$ since the immediate regret in any round is trivially bounded by $\traderNum$. 
    Combining this with the bounds for Type-$2$ and Type-$3$ rounds established in~\Cref{lem:contextual type-2 round cumulative regret} and \Cref{lem:contextual type-3 round cumulative regret}, and summing the cumulative regrets from all three types, we obtain the stated result.
\end{proof}
 
\bibliographystyle{plainnat}
\bibliography{mybibfile.bib}

\end{document}